\renewcommand{\theequation}{\arabic{section}.\arabic{equation}}
\newtheorem{proposition}{Proposition}[section]
\newtheorem{lemma}{Lemma}[section]
\newtheorem{theorem}{Theorem}[section]
\newtheorem{corollary}{Corollary}[section]
\theoremstyle{remark}
\newtheorem{remark}{Remark}[section]
\begin{document}
\begin{CJK*}{GBK}{song}
\title{Dispersive Limit of the Euler-Poisson System\\ in Higher Dimensions}
\author
         {{Xueke Pu}\\
         {\small Department of Mathematics, Chongqing University, Chongqing 400044, P.R.China;}\\
         {\small Mathematical Sciences Research Institute in Chongqing}\\
         {\small xuekepu@cqu.edu.cn}\\
         \date{}
         }
\maketitle

\begin{abstract}
In this paper, we consider the dispersive limit of the Euler-Poisson system for ion-acoustic waves. We establish that under the Gardner-Morikawa type transformations, the solutions of the Euler-Poisson system converge globally to the Kadomtsev-Petviashvili II equation in $\Bbb R^2$ and the Zakharov-Kuznetsov equation in $\Bbb R^3$ for well-prepared initial data, under different scalings. This justifies rigorously the KP-II limit and the ZKE limit of the Euler-Poisson equation.
\end{abstract}
\begin{center}
 \begin{minipage}{120mm}
   { \small {\bf AMS Subject Classification:} 35Q53; 35Q35 }
\end{minipage}
\end{center}
\begin{center}
\begin{minipage}{120mm}
{\small{\bf Key Words: }
{Euler-Poisson equation;  Kadomtsev-Petviashvili-II equation; Zakharov-Kuznetsov equation; long wavelength limit} }
\end{minipage}
\end{center}


\section{Introduction}
\setcounter{section}{1}\setcounter{equation}{0}
Consider the Euler-Poisson (EP) system in 2D,
\begin{subequations}\label{ep2}
\begin{numcases}{}
\partial_tn+\nabla\cdot(n\textbf{u})=0,\label{ep2-n}\\
\partial_t\textbf{u}+\textbf{u}\cdot\nabla\textbf{u}+T_i\frac{\nabla n}{n}=-\nabla\phi,\label{ep2-u}\\
\Delta\phi=e^{\phi}-n,\label{ep2-p}
\end{numcases}
\end{subequations}
where $n(t,x),\textbf{u}(t,x)=(u_1(t,x),u_2(t,x))$ and $\phi(t,x)$ are respectively the density, velocity of the ions and the electric potential at time $t\geq 0$ and position $x=(x_1,x_2)\in\Bbb R^2$. Here $T_i\geq0$ denotes the ion temperature, and all the other physical parameters are set to be 1.  

The Euler-Poisson system \eqref{ep2} is a fundamental two-fluid model describing the dynamics of a plasma, in which compressible ion and electron fluids interact with their self-consistent electrostatic force. Here, the hot isothermal electrons are described by the Boltzmann distribution. Such an Euler-Poisson system was widely investigated in the past years. The interested readers may refer to \cite{CG00, ELT01, GGP11, Guo98, GJ10, GP11, GP12, LMZ10, LT02,LT03} and the references therein. Many important nonlinear dispersive PDEs, such as the Kadomtsev-Petviashvili II (KP-II) equations \cite{KP70} and the Zakharov-Kuznetsov equations (ZKE) \cite{ZK74} can be formally derived from the Euler-Poisson system, and they are widely used as approximate models of the Euler-Poisson system in some limit sense, in many physical contexts. These equations are higher dimensional generalizations of the Korteweg-de Vries (KdV) equation \cite{KdV} and were extensively studied in the past decades. In particular, the KP-II equation describes the propagation of long nonlinear waves along the $x_1$-axis on the surface of a media when the variation along the $x_2$-axis proceeds slowly. However, there is up to now no rigorous mathematical justifications of such dispersive limits. The purpose of this paper is to justify rigorously these formal limits at least for well-prepared initial data. We also remark that in recent years, the KdV and the KP-I limits are justified from different interesting models, such as the water wave problem, the Schr\"odinger equation and the Euler-Poisson equation \cite{CR10,SW00,GP12}. 

%

This paper is organized as follows. In Section 2, we formally derive the 2D KP-II and state the main Theorem \ref{th}. For clarity, the formal derivation of the ZKE from a slightly different 3D Euler-Poisson system \eqref{equ1} is postponed to Appendix B. To give a unified treatment for the KP-II limit and the ZKE limit, we write the remainder equations into a unified system \eqref{re} by introducing some new differential notations. Section 3 and 4 are dedicated to the proof of Theorem \ref{th}. In Section 3, we prove Theorem \ref{th} for the case of $T_i>0$ for both the KP-II limit in 2D and the ZKE limit in 3D. In Section 4, we prove Theorem \ref{th} for the case of $T_i=0$ for the ZKE limit in 3D.

Throughout this paper, we use $[A,B]=AB-BA$ to denote the commutator of $A$ and $B$ and $\|\cdot\|_X$ to denote an $X$-norm. When $X=L^2$, the subscript of $\|\cdot\|_{L^2}$ is usually omitted. We also use $\langle f,g\rangle_0=\int fgdx$ to denote the inner product of two $L^2$ functions.




\section{Formal derivation and the main results}
\setcounter{section}{2}\setcounter{equation}{0}
\subsection{Formal derivation of the KPE}
In this subsection, we derive the KP equation from the 2D Euler-Poisson equation \eqref{ep2}. Consider the following Gardner-Morikawa type of transformation in \eqref{ep2}
\begin{equation}\label{e99}
\begin{split}
\varepsilon^{1/2}(x_1-Vt)\to x_1,\ \ \varepsilon x_2\to x_2,\ \ \varepsilon^{3/2}t\to t,
\end{split}
\end{equation}
where $\varepsilon$ stands for the amplitude of the initial disturbance and is assumed to be small compared with unity and $V$ is the wave speed to be determined. Then we obtain the parameterized system
\begin{subequations}\label{k1}
\begin{numcases}{}
\varepsilon\partial_tn-V\partial_{x_1}n+\partial_{x_1}(nu_1) +\varepsilon^{1/2}\partial_{x_2}(nu_2)=0,\label{k1-n}\\
\varepsilon\partial_tu_1-V\partial_{x_1}u_1+u_1\partial_{x_1}u_1 +\varepsilon^{1/2}u_2\partial_{x_2}u_1+T_i\frac{\partial_{x_1}n}{n}= -\partial_{x_1}\phi,\label{k1-1}\\
\varepsilon\partial_tu_2-V\partial_{x_1}u_2+u_1\partial_{x_1}u_2 +\varepsilon^{1/2}u_2\partial_{x_2}u_2+T_i\frac{\varepsilon^{1/2}\partial_{x_2}n}{n}= -\varepsilon^{1/2}\partial_{x_2}\phi,\label{k1-2}\\
\varepsilon\partial_{x_1}^2\phi+\varepsilon^2\partial_{x_2}^2\phi=e^{\phi}-n.\label{k1-p}
\end{numcases}
\end{subequations}
Consider the following formal expansion
\begin{subequations}\label{formal-kp}
\begin{numcases}{}
n=1+\varepsilon n^{(1)}+\varepsilon^2n^{(2)}+\varepsilon^3n^{(3)}+\cdots,\label{formal-kp-n}\\
u_1=\varepsilon u_1^{(1)}+\varepsilon^2u_1^{(2)}+\varepsilon^3u_1^{(3)}+\cdots,\label{formal-kp-1}\\
u_2=\varepsilon^{3/2}u_2^{(1)} +\varepsilon^{5/2}u_2^{(2)}+\varepsilon^{7/2}u_2^{(3)}+\cdots,\label{formal-kp-2}\\
\phi=\varepsilon\phi^{(1)}+\varepsilon^2\phi^{(2)}+\varepsilon^3\phi^{(3)} +\cdots.\label{formal-kp-p}
\end{numcases}
\end{subequations}
Plugging this formal expansion into the system \eqref{k1}, we get a power series of $\varepsilon$, whose coefficients depend on $(n^{(k)},\textbf{u}^{(k)},\phi^{(k)})$ for $k\geq1$, where $\textbf{u}^{(k)}=(u_1^{(k)},u_2^{(k)})^T$.

\subsubsection{Derivation of the KPE for $n^{(1)}$}
From the power series of $\varepsilon$ we thus obtained, we get at the order of $\varepsilon$:\\
\emph{Coefficients of $\varepsilon^1$:}
\begin{subequations}\label{k-order1}
\begin{numcases}{}
-V\partial_{x_1}n^{(1)}+\partial_{x_1}u_1^{(1)}=0,\label{k-order1-n}\\
-V\partial_{x_1}u_1^{(1)}+T_i\partial_{x_1}n^{(1)}=-\partial_{x_1}\phi^{(1)},\label{k-order1-1}\\
0=\phi^{(1)}-n^{(1)}.\label{k-order1-p}
\end{numcases}
\end{subequations}
To get a nontrivial solution of $(n^{(1)},u_1^{(1)},\phi^{(1)})$, we let the determinant of the coefficient matrix of \eqref{k-order1} to vanish to obtain
\begin{equation}\label{k2}
\begin{split}
V^2=T_i+1.
\end{split}
\end{equation}
For definiteness, we set $V=\sqrt{T_i+1}$ in the following.

At higher orders, we obtain\\
\emph{Coefficients of $\varepsilon^{3/2}$:}
\begin{equation}\label{k-order32}
\begin{split}
-V\partial_{x_1}u_2^{(1)}+T_i\partial_{x_2}n^{(1)}=-\partial_{x_2}\phi^{(1)}.
\end{split}
\end{equation}
\\
\emph{Coefficients of $\varepsilon^2$:}
\begin{subequations}\label{k-order2}
\begin{numcases}{}
\partial_tn^{(1)}-V\partial_{x_1}n^{(2)}+\partial_{x_1}u_1^{(2)} +\partial_{x_1}(n^{(1)}u_1^{(1)})+\partial_{x_2}u_2^{(1)}=0,\label{k-order2-n}\\
\partial_tu_1^{(1)}-V\partial_{x_1}u_1^{(2)}+u_1^{(1)}\partial_{x_1}u_1^{(1)} +T_i\{\partial_{x_1}n^{(2)}-n^{(1)}\partial_{x_1}n^{(1)}\}=-\partial_{x_1}\phi^{(1)},\label{k-order2-1}\\
\partial_{x_1}^2\phi^{(1)}=\phi^{(2)}+\frac12(\phi^{(1)})^2-n^{(2)}.\label{k-order2-p}
\end{numcases}
\end{subequations}

From \eqref{k-order1}, we may assume that
\begin{equation}\label{k3}
\begin{split}
u^{(1)}_1=Vn^{(1)},\ \ \phi^{(1)}=n^{(1)},
\end{split}
\end{equation}
which also make \eqref{k-order1} valid, thanks to \eqref{k2}. Then from \eqref{k-order32}, we have
\begin{equation}\label{k4}
\begin{split}
\partial_{x_1}u_2^{(1)}=V\partial_{x_2}n^{(1)},
\end{split}
\end{equation}
thanks to \eqref{k3}. Therefore, to solve $n^{(1)}, \textbf{u}^{(1)}$ and $\phi^{(1)}$, we need only to solve $n^{(1)}$.

To find out the equation satisfied by $n^{(1)}$, we take $\partial_{x_1}$ of \eqref{k-order2-p}, multiply \eqref{k-order2-n} by $V$, and then add them to \eqref{k-order2-1}. We obtain
\begin{equation}
\begin{split}
\partial_{t}n^{(1)}+Vn^{(1)}\partial_{x_1}n^{(1)}+\frac{1}{2V}\partial_{x_1}^3n^{(1)} +\frac12\partial_{x_2}u_2^{(1)}=0.
\end{split}
\end{equation}
Differentiating this equation with respect to $x_1$, and using \eqref{k4}, we obtain
\begin{equation}\label{KPE}
\begin{split}
\partial_{x_1}\{\partial_{t}n^{(1)}+Vn^{(1)}\partial_{x_1}n^{(1)} +\frac{1}{2V}\partial_{x_1}^3n^{(1)}\} +\frac{V}{2}\partial_{x_2}^2n^{(1)}=0.
\end{split}
\end{equation}
This is the Kadomtsev-Petviashvili II equation satisfied by the first order profile $n^{(1)}$.

\begin{proposition}\label{prop-kp1}
The Cauchy problem for the KP-II equation is well-posed in $H^s({\Bbb R^2})$ for $s\geq 0$.
\end{proposition}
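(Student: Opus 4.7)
The plan is to invoke Bourgain's Fourier restriction method, which yields global well-posedness of KP-II in $L^2(\mathbb{R}^2)$ and hence in $H^s$ for all $s \geq 0$ by persistence of regularity. First, I would rewrite the equation \eqref{KPE} in semigroup form by formally inverting $\partial_{x_1}$:
\begin{equation*}
\partial_t n^{(1)} + V n^{(1)} \partial_{x_1} n^{(1)} + \frac{1}{2V}\partial_{x_1}^3 n^{(1)} + \frac{V}{2}\partial_{x_1}^{-1}\partial_{x_2}^2 n^{(1)} = 0,
\end{equation*}
whose linear dispersion relation is $\omega(\xi_1,\xi_2) = \frac{1}{2V}\xi_1^3 - \frac{V}{2}\xi_2^2/\xi_1$. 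To make $\partial_{x_1}^{-1}$ meaningful, one restricts to the closed subspace of $H^s$ whose Fourier transform vanishes on $\{\xi_1=0\}$ in the appropriate sense; this is a standard reduction, compatible with the data produced by the Euler-Poisson expansion of Section 2.

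The analytic core is to set up the Bourgain spaces $X^{s,b}$ adapted to $\omega$, namely those with norm
\begin{equation*}
\|u\|_{X^{s,b}} = \bigl\|\langle \xi\rangle^s \langle \tau-\omega(\xi)\rangle^b \widehat{u}(\tau,\xi)\bigr\|_{L^2_{\tau,\xi}},
\end{equation*}
and prove the bilinear estimate
\begin{equation*}
\|\partial_{x_1}(uv)\|_{X^{s,b-1}} \lesssim \|u\|_{X^{s,b}}\|v\|_{X^{s,b}}
\end{equation*}
for some $b>1/2$ and every $s\geq 0$. Combined with the standard linear and Duhamel estimates in $X^{s,b}$ and a smooth time cutoff, this gives a contraction-mapping argument in a ball of $X^{s,b}_T$ over a short interval $[0,T]$. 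The local solution then extends globally by the $L^2$ conservation law $\frac{d}{dt}\|n^{(1)}\|_{L^2}^2=0$, which follows because the dispersive operator is skew-adjoint and the nonlinearity $V n^{(1)}\partial_{x_1} n^{(1)} = \frac{V}{2}\partial_{x_1}((n^{(1)})^2)$ integrates to zero. Persistence of regularity then upgrades global existence from $L^2$ to arbitrary $H^s$, $s\geq 0$.

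The main obstacle is the bilinear estimate at the endpoint $s=0$. The argument proceeds by dyadic decomposition in both frequency and modulation and exploits the KP-II resonance identity, which bounds $\tau-\omega(\xi)$ below in terms of the frequency interactions with \emph{definite} sign; it is precisely the failure of this sign-definiteness for KP-I that rules out analogous low-regularity theory there. The technical nuisance caused by the $\xi_1^{-1}$ singularity is absorbed by the spectral restriction mentioned above, which is preserved under the nonlinear flow because $n^{(1)}\partial_{x_1} n^{(1)}$ is itself an $x_1$-derivative. Since the KP-II bilinear estimate and resulting well-posedness are by now classical, I would simply invoke this body of results and note that any $s\geq 0$ suffices for our later use.
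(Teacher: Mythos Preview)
Your proposal is correct and aligns with the paper's treatment: the paper simply cites Bourgain's seminal result \cite{Bo93} for this proposition, and your sketch accurately outlines the $X^{s,b}$-space machinery, bilinear estimate, $L^2$ conservation, and persistence of regularity that constitute that proof. In effect you have expanded the citation into a faithful summary of the underlying argument.
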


This theorem is proved in the seminal paper of Bourgain \cite{Bo93}. Actually, it is known to be well-posed in spaces of much lower regularity \cite{HHK09,MST11,ILM07}. However, Proposition \ref{prop-kp1} is enough for our purpose. 

\begin{remark}
The system of \eqref{k3}, \eqref{k4} and \eqref{KPE} is a closed system. Once $n^{(1)}$ is solved from \eqref{KPE}, we have all the other first order profiles $\textbf{u}^{(1)}$ and $\phi^{(1)}$ from \eqref{k3} and \eqref{k4}.
\end{remark}
From \eqref{k-order2-n} and \eqref{k-order2-p}, we may assume
\begin{subequations}\label{k5}
\begin{numcases}{}
u_1^{(2)}=Vn^{(2)}+\underline{u_1}_{KP}^{(1)},\\
\phi^{(2)}=n^{(2)}+\underline{\phi}_{KP}^{(1)},
\end{numcases}
\end{subequations}
where $\underline{u_1}_{KP}^{(1)}$ and $\underline{\phi}_{KP}^{(1)}$ depend only on $n^{(1)}$, which is smooth in some time interval $[0,\tau_*)$ thanks to Proposition \ref{prop-kp1}. 

At the order $\varepsilon^{5/2}$, we obtain\\
\emph{Coefficients of $\varepsilon^{5/2}$:}
\begin{equation}\label{k-order52}
\begin{split}
-V\partial_{x_1}u_2^{(2)}+u_1^{(1)}\partial_{x_1}u_2^{(1)} +T_i\{\partial_{x_2}n^{(2)}-n^{(1)}\partial_{x_2}n^{(1)}\}=-\partial_{x_2}\phi^{(2)}.
\end{split}
\end{equation}
By using \eqref{k5} and rearranging, we have
\begin{equation}\label{k6}
\begin{split}
V\partial_{x_1}u_2^{(2)}=(T_i+1)\partial_{x_2}n^{(2)}+u_1^{(1)}\partial_{x_1}u_2^{(1)} -T_in^{(1)}\partial_{x_2}n^{(1)}+\partial_{x_2}\underline{\phi}_{KP}^{(1)}.
\end{split}
\end{equation}
\subsubsection{Derivation of the Linearized KPE for $n^{(k)}$}
From \eqref{k5} and \eqref{k6}, we see that to determine $(n^{(2)},\textbf{u}^{(2)},\phi^{(2)})$, we need only to determine $n^{(2)}$.

At the order of $\varepsilon^3$, we obtain\\
\emph{Coefficients of $\varepsilon^3$:}
\begin{subequations}\label{k-order3}
\begin{numcases}{}
\partial_tn^{(2)}-V\partial_{x_1}n^{(3)}+\partial_{x_1}u_1^{(3)}\nonumber\\
\ \ \ \ \ \ \ \ \ \ \ +\partial_{x_1}(n^{(1)}u_1^{(2)}+n^{(2)}u_1^{(1)}) +\partial_{x_2}(u_2^{(2)}+n^{(1)}u_2^{(1)})=0,\label{k-order3-n}\\
\partial_tu_1^{(2)}-V\partial_{x_1}u_1^{(3)}+\partial_{x_1}(u_1^{(1)}u_1^{(2)}) +T_i\partial_{x_1}n^{(3)}\nonumber\\
\ \ \ \ \ \ \ \ \ \ \ +T_i\{\partial_{x_1}n^{(1)}(\frac12(n^{(1)})^2-n^{(2)}) -\partial_{x_1}n^{(2)}n^{(1)}\}= -\partial_{x_1}\phi^{(3)},\label{k-order3-1}\\
\partial_{x_1}^2\phi^{(2)}+\partial_{x_2}^2\phi^{(1)} =\phi^{(3)}+\phi^{(1)}\phi^{(2)}+\frac1{3!}(\phi^{(1)})^3-n^{(3)}.\label{k-order3-p}
\end{numcases}
\end{subequations}

Taking $\partial_{x_1}$ of \eqref{k-order3-p}, multiplying \eqref{k-order3-n} with $V$, and then adding them to \eqref{k-order3-1}, we obtain the linearized inhomogeneous KP equation
\begin{equation}\label{lin-KP2}
\begin{split}
\partial_{x_1}\{\partial_{t}n^{(2)}+2V\partial_{x_1}(n^{(1)}n^{(2)}) +\frac{1}{2V}\partial_{x_1}^3n^{(2)}\} +\frac{V}{2}\partial_{x_2}^2n^{(2)}=\underline{G}_{KP}^{(1)},
\end{split}
\end{equation}
where we have used \eqref{k5} and \eqref{k6}. Here $\underline{G}_{KP}^{(1)}$ depends only on $n^{(1)}$ and comes form the inhomogeneous dependence of $u_1^{(2)}$ and $\phi^{(2)}$ on $n^{(2)}$ in \eqref{k5}.

At the order $\varepsilon^{7/2}$, we obtain
\\
\emph{Coefficients of $\varepsilon^{7/2}$:}
\begin{equation}\label{k-order72}
\begin{split}
\partial_tu_2^{(2)}-&V\partial_{x_1}u_2^{(3)}+u_1^{(1)}\partial_{x_1}u_2^{(2)} +u_1^{(2)}\partial_{x_1}u_2^{(1)} +u_2^{(1)}\partial_{x_2}u_2^{(1)}  +T_i\partial_{x_1}n^{(3)}\\
&+T_i\{\partial_{x_2}n^{(1)}(\frac12(n^{(1)})^2-n^{(2)}) -\partial_{x_2}n^{(2)}n^{(1)}\}=-\partial_{x_2}\phi^{(3)}.
\end{split}
\end{equation}

Inductively, we can derive all the profiles $(n^{(k)},\textbf{u}^{(k)},\phi^{(k)})$ for $k\geq3$. Proceeding as above, we obtain the following linearized inhomogeneous KP equation for $n^{(k)}$ for $k\geq 3$:
\begin{equation}\label{lin-KPk}
\begin{split}
\partial_{x_1}\{\partial_{t}n^{(k)}+2V\partial_{x_1}(n^{(1)}n^{(k)}) +\frac{1}{2V}\partial_{x_1}^3n^{(k)}\} +\frac{V}{2}\partial_{x_2}^2n^{(k)}=\underline{G}_{KP}^{(k-1)},
\end{split}
\end{equation}
where the inhomogeneous term $\underline{G}_{KP}^{(k-1)}$ depends only on $(n^{(j)},\textbf{u}^{(j)},\phi^{(j)})$ for $1\leq j\leq k-1$.

Since \eqref{lin-KPk} is linear in $n^{(k)}$, we easily obtain
\begin{proposition}\label{prop-kp4}
The Cauchy problem of the linearized inhomogeneous KPE \eqref{lin-KPk} ($k\geq 2$) is well-posed in $H^s(\Bbb R^3)$ for $s\geq 0$.
\end{proposition}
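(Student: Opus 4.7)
The plan is to view \eqref{lin-KPk} as a linear inhomogeneous evolution equation for $n^{(k)}$ whose coefficient $n^{(1)}$ and source $\underline{G}_{KP}^{(k-1)}$ are built inductively from the lower-order profiles and are smooth on the lifespan $[0,\tau_*)$ produced by Proposition \ref{prop-kp1}. Well-posedness therefore reduces to standard linear theory, and my strategy is to combine the $H^s$ theory of the free KP-II semigroup (which underlies Proposition \ref{prop-kp1}) with a Gronwall-type energy estimate that absorbs the transport perturbation $2V\partial_{x_1}(n^{(1)}n^{(k)})$.

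First, I would put \eqref{lin-KPk} into evolutionary form by inverting $\partial_{x_1}$,
\begin{equation*}
\partial_t n^{(k)} + \tfrac{1}{2V}\partial_{x_1}^3 n^{(k)} + \tfrac{V}{2}\partial_{x_1}^{-1}\partial_{x_2}^2 n^{(k)} = -2V\partial_{x_1}(n^{(1)} n^{(k)}) + H^{(k-1)},
\end{equation*}
where $H^{(k-1)} := \partial_{x_1}^{-1}\underline{G}_{KP}^{(k-1)}$. The operator $L := \tfrac{1}{2V}\partial_{x_1}^3 + \tfrac{V}{2}\partial_{x_1}^{-1}\partial_{x_2}^2$ is formally skew-adjoint on $L^2$ and generates the free KP-II unitary group, whose $H^s$ well-posedness for $s\geq 0$ is contained in Bourgain's analysis \cite{Bo93}. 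Next I would close $H^s$ bounds: applying $\partial^\alpha$ for $|\alpha|\leq s$, pairing with $\partial^\alpha n^{(k)}$ in $L^2$, and using skew-adjointness of $L$, the dispersive terms vanish while the transport contribution is controlled by a Kato--Ponce commutator estimate,
\begin{equation*}
\bigl|\langle \partial^\alpha \partial_{x_1}(n^{(1)} n^{(k)}),\partial^\alpha n^{(k)}\rangle_0\bigr| \leq C\|n^{(1)}\|_{H^{s+1}} \|n^{(k)}\|_{H^s}^2.
\end{equation*}
Gronwall's inequality then yields
\begin{equation*}
\|n^{(k)}(t)\|_{H^s} \leq e^{Ct}\Bigl(\|n^{(k)}(0)\|_{H^s} + \int_0^t \|H^{(k-1)}(\tau)\|_{H^s}\, d\tau\Bigr)
\end{equation*}
on $[0,\tau_*)$. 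Existence then follows from a Friedrichs mollification or Galerkin approximation scheme, and uniqueness from the same inequality applied to the difference of two solutions.

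The only delicate point is the nonlocal operator $\partial_{x_1}^{-1}$: in the nonlinear KP-II theory this is the source of well-known low-frequency subtleties. Here it is handled by the structure of the asymptotic expansion itself, since $\underline{G}_{KP}^{(k-1)}$ is, by its very derivation, a sum of $\partial_{x_1}$ and $\partial_{x_2}$ derivatives of expressions formed from the lower-order profiles (arising from $\partial_{x_1}$ applied to the Poisson-type equation and from derivatives of the nonlinear interactions, cf.\ the formation of $\underline{G}_{KP}^{(1)}$ in \eqref{lin-KP2}). Consequently $\partial_{x_1}^{-1}$ acts on an already-differentiated quantity, producing a genuine $H^s$ function under the inductive hypothesis that $n^{(j)},\textbf{u}^{(j)},\phi^{(j)}$ are regular for $j\leq k-1$. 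Once this structural verification is in place, the energy argument above yields well-posedness on the full interval $[0,\tau_*)$.
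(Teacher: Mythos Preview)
Your outline is essentially correct and considerably more detailed than what the paper offers. The paper provides no proof of Proposition~\ref{prop-kp4} at all: it simply prefaces the statement with ``Since \eqref{lin-KPk} is linear in $n^{(k)}$, we easily obtain'' and moves on. So your sketch---Duhamel on the unitary KP-II group, an energy/commutator estimate for the transport term exploiting the smoothness of $n^{(1)}$, then Gronwall---is exactly the kind of argument the paper is implicitly invoking, just made explicit.

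Two small remarks. First, note that the $\Bbb R^3$ in the statement is a typo in the paper; the linearized KP-II equation here is posed on $\Bbb R^2$, consistent with Proposition~\ref{prop-kp1}. Second, your handling of $\partial_{x_1}^{-1}$ is the right idea, but be aware that it requires verifying a structural fact (that $\underline{G}_{KP}^{(k-1)}$ lies in the range of $\partial_{x_1}$, or equivalently satisfies a zero-$x_1$-mean condition) which the paper never checks explicitly. In practice this is routinely absorbed into the well-prepared initial data assumption and the inductive regularity hypothesis on the lower-order profiles, which is how the paper uses Proposition~\ref{prop-kp4} downstream (see the assumption \eqref{assump}).
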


\subsection{Remainder system}
To make the above procedure rigorous, we need to cut off and consider the remainder terms. For this, we consider the following expansion
\begin{subequations}\label{exp-kp}
\begin{numcases}{}
n=1+\varepsilon n^{(1)}+\varepsilon^2n^{(2)}+\varepsilon^3n^{(3)} +\varepsilon^2n^{\varepsilon}_R,\label{exp-kp-n}\\
u_1=\varepsilon u_1^{(1)}+\varepsilon^2u_1^{(2)}+\varepsilon^3u_1^{(3)} +\varepsilon^2{u_1}^{\varepsilon}_R,\label{exp-kp-1}\\
u_2=\varepsilon^{3/2}u_2^{(1)} +\varepsilon^{5/2}u_2^{(2)}+\varepsilon^{7/2}u_2^{(3)} +\varepsilon^2{u_2}^{\varepsilon}_R,\label{exp-kp-2}\\
\phi=\varepsilon\phi^{(1)}+\varepsilon^2\phi^{(2)}+\varepsilon^3\phi^{(3)} +\varepsilon^2\phi^{\varepsilon}_R,\label{exp-kp-p}
\end{numcases}
\end{subequations}
where $(n^{(i)},\textbf{u}^{(i)}, \phi^{(i)})$ for $1\leq i\leq 3$ are the first few profiles constructed above and $(n^{\varepsilon}_R,\textbf{u}^{\varepsilon}_R, \phi^{\varepsilon}_R)$ are the remainder terms that may depend on $\varepsilon$ and $(n^{(i)},\textbf{u}^{(i)}, \phi^{(i)})$ for $1\leq i\leq 3$. More precisely, $n^{(1)}$ satisfies \eqref{KPE}, $u^{(1)}_1,\phi^{(1)}$ satisfy \eqref{k3} and $u_2^{(1)}$ satisfies \eqref{k4}; $n^{(2)}$ satisfies \eqref{lin-KP2}, $u^{(2)}_1,\phi^{(2)}$ satisfy \eqref{k5} and $u_2^{(2)}$ satisfies \eqref{k6}; and similarly for the third order profile $(n^{(3)},\textbf{u}^{(3)},\phi^{3})$.

Inserting \eqref{exp-kp} into \eqref{k1}, and then subtracting the systems of the coefficient up to order $\varepsilon^{7/2}$, we obtain the remainder system of $(n^{\varepsilon}_R,\textbf{u}^{\varepsilon}_R, \phi^{\varepsilon}_R)$. For notational convenience, we denote $\textbf{u}=({u_1},{u_2})^T$, $\widetilde{\textbf{u}}=(\widetilde u_1,\widetilde u_2)^T$, $\textbf{u}^{\varepsilon}_R=({u_1}^{\varepsilon}_R, {u_2}^{\varepsilon}_R)^T$ and
\begin{equation}\label{e2}
\begin{split}
\widetilde n=&n^{(1)}+\varepsilon n^{(2)}+\varepsilon^2n^{(3)},\ \ \ \ \ \widetilde{\phi}=\phi^{(1)}+\varepsilon\phi^{(2)}+\varepsilon^2\phi^{(3)}\\
\widetilde u_1=&u_1^{(1)}+\varepsilon u_1^{(2)}+\varepsilon^2u_1^{(3)},\ \ \ \
\widetilde u_2=\varepsilon^{1/2}u_2^{(1)}+\varepsilon^{3/2} u_2^{(2)}+\varepsilon^{5/2}u_2^{(3)}.
\end{split}
\end{equation}
\begin{proposition}\label{prop-kp3}
Let $(n,\textbf{u},\phi)$ in \eqref{exp-kp} be a solution of the Euler-Poisson equation \eqref{ep2}, then the remainders $n^{\varepsilon}_R,\textbf{u}^{\varepsilon}_R$ and $\phi^{\varepsilon}_R$ in \eqref{exp-kp} satisfy
\begin{subequations}\label{rem-kp}
\begin{numcases}{}
\partial_tn^{\varepsilon}_R -\frac{V-u_1}{\varepsilon}\partial_{x_1}n^{\varepsilon}_R +\frac{\varepsilon^{1/2}u_2}{\varepsilon} \partial_{x_2}n^{\varepsilon}_R +\frac{n}{\varepsilon} \partial_{x_1}{u_1}^{\varepsilon}_R +\frac{\varepsilon^{1/2}n}{\varepsilon} \partial_{x_2}{u_2}^{\varepsilon}_R\nonumber\\
\ \ \ \ \ \ \ \ \ \ +n^{\varepsilon}_R\partial_{x_1}\widetilde{u}_1 +\varepsilon^{1/2}n^{\varepsilon}_R\partial_{x_2}\widetilde{u}_2 +{u_1}^{\varepsilon}_R\partial_{x_1}\widetilde{n} +\varepsilon^{1/2}{u_2}^{\varepsilon}_R\partial_{x_2}\widetilde{n} +\varepsilon R_n=0,\label{rem-kp-n}\\
\partial_t{u_1}^{\varepsilon}_R-\frac{V-{u_1}}{\varepsilon} \partial_{x_1}{u_1}^{\varepsilon}_R +\frac{\varepsilon^{1/2}{u_2}}{\varepsilon} \partial_{x_2}{u_1}^{\varepsilon}_R +{u_1}^{\varepsilon}_R\partial_{x_1}\widetilde{u}_1 +\varepsilon^{1/2}{u_2}^{\varepsilon}_R\partial_{x_2} \widetilde{u}_1\nonumber\\
\ \ \ \ \ \ \ \ \ \ +T_i\frac{\partial_{x_1}n^{\varepsilon}_R}{\varepsilon n} -T_i\frac{p_1}{n}n^{\varepsilon}_R -T_i\frac{\varepsilon R_{T1}}{n}+\varepsilon R_1=-\frac{1}{\varepsilon} \partial_{x_1}\phi^{\varepsilon}_R,\label{rem-kp-1}\\
\partial_t{u_2}^{\varepsilon}_R-\frac{V-{u_1}}{\varepsilon} \partial_{x_1}{u_2}^{\varepsilon}_R +\frac{\varepsilon^{1/2}{u_2}}{\varepsilon} \partial_{x_2}{u_2}^{\varepsilon}_R +{u_1}^{\varepsilon}_R\partial_{x_1}\widetilde{u}_2 +\varepsilon^{1/2}{u_2}^{\varepsilon}_R\partial_{x_2} \widetilde{u}_2\nonumber\\
\ \ \ \ \ \ \ \ \ \ +T_i\frac{\varepsilon^{1/2}\partial_{x_2}n^{\varepsilon}_R}  {\varepsilon n} -T_i\frac{\varepsilon^{1/2}p_2}{n}n^{\varepsilon}_R -T_i\frac{\varepsilon^{1/2}\varepsilon R_{T2}}{n}+\varepsilon R_2=-\frac{\varepsilon^{1/2}}{\varepsilon} \partial_{x_2}\phi^{\varepsilon}_R,\label{rem-kp-2}\\
\varepsilon\partial_{x_1}^2\phi^{\varepsilon}_R  +\varepsilon^2\partial_{x_2}^2\phi^{\varepsilon}_R =\phi^{\varepsilon}_R-n^{\varepsilon}_R +\varepsilon\phi^{(1)}\phi^{\varepsilon}_R +\varepsilon^{3/2}R_{\phi},\label{rem-kp-p}
\end{numcases}
\end{subequations}
where $(\widetilde{n}, \widetilde{u}_1, \widetilde{u}_2)$ are given in \eqref{e2} and $(n^{(i)},{u_1}^{(i)},{u_2}^{(i)},\phi^{(i)})$ for $1\leq i\leq 3$ satisfy the systems \eqref{k-order1}, \eqref{k-order32}, \eqref{k-order2}, \eqref{k-order52}, \eqref{k-order3} and \eqref{k-order72}. Here, $R_n$, $\textbf{R}_{\textbf{u}}=(R_1,R_2)$ and the coefficients $p_i$ and $R_{Ti}$ for $i=1,2$ depend only on $(\widetilde{n}, \widetilde{u}_1, \widetilde{u}_2, \widetilde{\phi})$. In \eqref{rem-kp-n}, $\textbf{e}_1=(1,0,0)'$ is a constant vector.
\end{proposition}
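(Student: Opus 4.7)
The plan is to prove Proposition~\ref{prop-kp3} by direct substitution: plug the ansatz \eqref{exp-kp} into each equation of the rescaled system \eqref{k1}, separate the contributions into a ``profile part'' (depending only on $(n^{(i)},\textbf{u}^{(i)},\phi^{(i)})$) and a ``remainder part'' (depending on $(n^\varepsilon_R,\textbf{u}^\varepsilon_R,\phi^\varepsilon_R)$), and then use the cascade equations \eqref{k-order1}--\eqref{k-order72} to cancel the profile part up through order $\varepsilon^{7/2}$. What is left produces the stated remainder system \eqref{rem-kp}.

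First I would handle the continuity equation \eqref{k1-n}. Substituting $n=1+\varepsilon\widetilde n+\varepsilon^2 n^\varepsilon_R$ and the analogous expansions for $u_1,u_2$, I would collect the terms linear in $(n^\varepsilon_R,\textbf{u}^\varepsilon_R)$. Those whose coefficients are the full fields $n,\textbf{u}$ (not the profiles) give the transport and divergence terms on the first line of \eqref{rem-kp-n}, while the cross-terms between remainders and profiles give $n^\varepsilon_R\partial_{x_1}\widetilde u_1+\cdots+\varepsilon^{1/2}{u_2}^\varepsilon_R\partial_{x_2}\widetilde n$. The purely profile-dependent residue, organized by powers of $\varepsilon$, vanishes at orders $\varepsilon,\varepsilon^{3/2},\varepsilon^2,\varepsilon^{5/2},\varepsilon^3,\varepsilon^{7/2}$ by \eqref{k-order1-n}, \eqref{k-order2-n}, \eqref{k-order3-n} and their $x_2$-counterparts; what remains starts at order $\varepsilon^{9/2}$ and is absorbed, after dividing through by the appropriate factor, into the term $\varepsilon R_n$ which depends only on $(\widetilde n,\widetilde{\textbf{u}})$.

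For the momentum equations \eqref{k1-1}--\eqref{k1-2} the same strategy applies, but the term $T_i\nabla n/n$ requires splitting
\begin{equation*}
\frac{\partial_{x_j}n}{n}=\frac{\varepsilon\partial_{x_j}\widetilde n+\varepsilon^2\partial_{x_j}n^\varepsilon_R}{1+\varepsilon\widetilde n+\varepsilon^2 n^\varepsilon_R}=\frac{\varepsilon\partial_{x_j}n^\varepsilon_R}{n}+\left[\text{profile-only part}\right]+\left[\text{terms linear in }n^\varepsilon_R\right].
\end{equation*}
The first piece gives the leading term $T_i\partial_{x_j}n^\varepsilon_R/(\varepsilon n)$ in \eqref{rem-kp-1}--\eqref{rem-kp-2}; the last piece, obtained by expanding the denominator $(1+\varepsilon\widetilde n+\varepsilon^2 n^\varepsilon_R)^{-1}$ and keeping the $n^\varepsilon_R$-linear correction arising from the profile numerator, defines the coefficient $p_j$; and the profile-only part contains a residue that defines $R_{Tj}$. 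The remaining profile contributions cancel against the Euler-type profile equations \eqref{k-order1-1}, \eqref{k-order32}, \eqref{k-order2-1}, \eqref{k-order52}, \eqref{k-order3-1}, \eqref{k-order72}, leaving an $O(\varepsilon)$ residue $R_j$.

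The Poisson equation \eqref{k1-p} is handled by Taylor-expanding $e^\phi=1+\phi+\tfrac12\phi^2+\tfrac1{3!}\phi^3+\cdots$ about $\phi=0$ and carefully collecting terms that are linear in $\phi^\varepsilon_R$. Since $\phi$ starts at order $\varepsilon$, one finds $e^\phi-e^{\widetilde\phi_{\mathrm{tot}}}=\varepsilon^2\phi^\varepsilon_R(1+\varepsilon\phi^{(1)}+O(\varepsilon^2))+O(\varepsilon^4(\phi^\varepsilon_R)^2)$, which together with the profile cancellations via \eqref{k-order1-p}, \eqref{k-order2-p}, \eqref{k-order3-p} produces the right-hand side of \eqref{rem-kp-p}; the higher-order $\phi^\varepsilon_R$-quadratic terms and the $O(\varepsilon^4)$ profile residue are collected into $\varepsilon^{3/2}R_\phi$.

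The only real obstacle is the combinatorial bookkeeping: one must verify that every term produced by the substitution is correctly classified either as remainder-linear (and therefore explicitly displayed in \eqref{rem-kp}), profile-only (and thus falling into $R_n$, $R_j$, $R_{Tj}$, or $R_\phi$), or nonlinear in the remainder (here not needed, since we are only deriving the equations, not estimating them). In particular one must check that the profile cancellations really exhaust \emph{all} terms up to and including order $\varepsilon^{7/2}$, so that every $R$-symbol carries the claimed factor $\varepsilon$ or $\varepsilon^{3/2}$. Since the profiles are smooth on $[0,\tau_*)$ by Propositions~\ref{prop-kp1} and \ref{prop-kp4}, no analytic argument is required beyond this algebraic verification.
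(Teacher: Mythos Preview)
Your proposal is correct and follows essentially the same approach as the paper: direct substitution of the ansatz \eqref{exp-kp} into \eqref{k1}, subtraction of the profile equations \eqref{k-order1}--\eqref{k-order72} to cancel the low-order terms, Taylor expansion of $e^{\phi}$ (the paper uses the integral remainder form), and collection of the leftover profile-only residues into $R_n,\textbf{R}_\textbf{u},\textbf{R}_T,R_\phi$ after division by the appropriate power of $\varepsilon$. The paper's proof in Appendix~A is exactly this bookkeeping, with the same splitting of the pressure term $T_i\partial_{x_j}n/n$ that you describe.
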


We also give some basic estimates for the remainder term $R_{\phi}$ in the following
\begin{lemma}\label{lem-A}
Let $k\geq 0$ be an integer, then there exists a constant $1\leq C_1=C_1(\sqrt{\varepsilon}\|\phi^{\varepsilon}_R\|_{H^{\delta}})$, such that
\begin{align}
\|R_{\phi}\|_{H^{k}}\leq &C_1(\sqrt{\varepsilon}\|\phi^{\varepsilon}_R\|_{H^{\delta}}) (1+\|\phi^{\varepsilon}_R\|_{H^{k}}), \ \ \text{ and}\label{e52-1}\\
\|\partial_tR_{\phi}\|_{H^{k}}\leq &C_1(\sqrt{\varepsilon}\|\phi^{\varepsilon}_R\|_{H^{\delta}}) (1+\|\partial_t\phi^{\varepsilon}_R\|_{H^{k}}),\label{e52-2}
\end{align}
where $\delta=\max\{2,k-1\}$. Furthermore, the constant $C_1(\cdot)$ can be chosen to be nondecreasing.
\end{lemma}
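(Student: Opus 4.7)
The remainder $R_\phi$ in \eqref{rem-kp-p} is extracted by substituting the ansatz \eqref{exp-kp} into the Poisson equation $\varepsilon\partial_{x_1}^2\phi+\varepsilon^2\partial_{x_2}^2\phi=e^\phi-n$, subtracting the three profile identities \eqref{k-order1-p}, \eqref{k-order2-p} and \eqref{k-order3-p}, dividing by $\varepsilon^2$, and pulling out the factor $\varepsilon^{3/2}$ displayed on the right-hand side. Writing $\Phi=\varepsilon\phi^{(1)}+\varepsilon^2\phi^{(2)}+\varepsilon^3\phi^{(3)}$ and using
\begin{equation*}
e^\phi=e^{\Phi+\varepsilon^2\phi^\varepsilon_R}=e^\Phi\sum_{m\geq 0}\frac{(\varepsilon^2\phi^\varepsilon_R)^m}{m!},
\end{equation*}
I would read off $R_\phi$ as a finite polynomial
\begin{equation*}
R_\phi=\sum_{m=0}^{M}\varepsilon^{\alpha_m}A_m(\Phi)\,(\phi^\varepsilon_R)^m,
\end{equation*}
whose coefficients $A_m(\Phi)$ are smooth functions of the profiles and whose $\varepsilon$-exponents satisfy $\alpha_m\geq 1/2$ for every $m$. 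Indeed $\alpha_m\geq 2m-7/2\geq 1/2$ for $m\geq 2$, while the $m\in\{0,1\}$ residues carry at least $\varepsilon^{1/2}$ because the leading-order matching and the explicit $\phi^\varepsilon_R-n^\varepsilon_R+\varepsilon\phi^{(1)}\phi^\varepsilon_R$ on the right-hand side of \eqref{rem-kp-p} have already absorbed everything of lower order. By Propositions \ref{prop-kp1} and \ref{prop-kp4} the profiles are uniformly smooth on $[0,\tau_*)$, so every $A_m(\Phi)$ is uniformly bounded together with all its $x$- and $t$-derivatives.

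Next I would estimate each summand by the Sobolev embedding $H^2(\mathbb{R}^d)\hookrightarrow L^\infty$ (valid in $d=2,3$) combined with the standard Moser/tame inequality
\begin{equation*}
\|fg\|_{H^k}\lesssim\|f\|_{H^{k-1}}\|g\|_{W^{1,\infty}}+\|g\|_{H^{k-1}}\|f\|_{W^{1,\infty}},
\end{equation*}
obtaining $\|A_m(\Phi)(\phi^\varepsilon_R)^m\|_{H^k}\leq C\|\phi^\varepsilon_R\|_{H^\delta}^{m-1}\|\phi^\varepsilon_R\|_{H^k}$ for $m\geq 1$ and $\|A_0(\Phi)\|_{H^k}\leq C$ for $m=0$; the choice $\delta=\max\{2,k-1\}$ covers both small and large $k$. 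Multiplying by $\varepsilon^{\alpha_m}$ and rewriting
\begin{equation*}
\varepsilon^{\alpha_m}\|\phi^\varepsilon_R\|_{H^\delta}^{m-1}=\varepsilon^{\alpha_m-(m-1)/2}\bigl(\sqrt{\varepsilon}\,\|\phi^\varepsilon_R\|_{H^\delta}\bigr)^{m-1},
\end{equation*}
a routine exponent check confirms $\alpha_m-(m-1)/2\geq 0$, so the finite sum collapses to $\|R_\phi\|_{H^k}\leq P(\sqrt{\varepsilon}\,\|\phi^\varepsilon_R\|_{H^\delta})(1+\|\phi^\varepsilon_R\|_{H^k})$ with $P$ a polynomial with nonnegative coefficients; taking $C_1:=P$ proves \eqref{e52-1} and gives monotonicity automatically.

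For \eqref{e52-2} I would differentiate the same finite expansion of $R_\phi$ in $t$: by Leibniz, each term produces analogues in which either some $A_m(\Phi)$ or one copy of $\phi^\varepsilon_R$ is replaced by $\partial_tA_m(\Phi)$ or $\partial_t\phi^\varepsilon_R$. Since $\partial_t\Phi$ is uniformly smooth on $[0,\tau_*)$ (the profiles solve the KP-II and linearised KP-II equations together with the closures \eqref{k3}--\eqref{k6}, which propagate $t$-regularity), the $\partial_tA_m(\Phi)$ terms are controlled exactly as the $A_m(\Phi)$ ones, while the $\partial_t\phi^\varepsilon_R$ terms are handled by the same Moser estimate with one occurrence of $\phi^\varepsilon_R$ replaced by $\partial_t\phi^\varepsilon_R$. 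The only real obstacle is the combinatorial bookkeeping of the first paragraph---verifying that every term surviving the matching carries at least $\sqrt{\varepsilon}$ and that the leftover $\varepsilon$-powers suffice to absorb the $\|\phi^\varepsilon_R\|_{L^\infty}^{m-1}$ factors generated by Moser---once this algebraic check is done, the rest of the proof is routine.
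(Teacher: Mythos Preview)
Your overall strategy—writing $R_\phi$ in powers of $\phi^\varepsilon_R$, tracking the $\varepsilon$-exponents so that every superlinear factor comes paired with $\sqrt{\varepsilon}\,\|\phi^\varepsilon_R\|_{H^\delta}$, and then applying product estimates—is exactly the mechanism behind the lemma. But there is one genuine gap: $R_\phi$ is \emph{not} a finite polynomial in $\phi^\varepsilon_R$. Your own expansion $e^\phi=e^\Phi\sum_{m\geq 0}(\varepsilon^2\phi^\varepsilon_R)^m/m!$ is an infinite series, and nothing in the profile subtractions makes it truncate; the assertion $R_\phi=\sum_{m=0}^{M}\varepsilon^{\alpha_m}A_m(\Phi)(\phi^\varepsilon_R)^m$ is simply false for any fixed $M$. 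If you keep the infinite sum you must track the $m$-dependence of the constants produced by iterating the product estimate on $(\phi^\varepsilon_R)^m$ and then show that the resulting series $\sum_m c_m(\sqrt{\varepsilon}\,\|\phi^\varepsilon_R\|_{H^\delta})^{m-1}$ converges. This can be done (the $1/m!$ from $A_m$ saves you), but it is not the ``routine bookkeeping'' you describe, and without it the bound \eqref{e52-1} is unproved.

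The paper avoids this summation entirely by using the \emph{integral} form of Taylor's remainder for $e^\phi$ at third order,
\[
e^\phi=1+\phi+\tfrac12\phi^2+\tfrac16\phi^3+\tfrac{1}{6}\,\phi^4\!\int_0^1 e^{\theta\phi}(1-\theta)^3\,d\theta,
\]
so that $R_\phi$ consists of finitely many polynomial terms in $\phi^\varepsilon_R$ (each carrying an adequate $\varepsilon$-weight, as your exponent check correctly verifies) plus a single term with the factor $\int_0^1 e^{\theta\phi}(1-\theta)^3\,d\theta$. That factor is bounded in $L^\infty$ by $e^{\|\phi\|_{L^\infty}}\leq C(\varepsilon\|\phi^\varepsilon_R\|_{H^2})$, while the accompanying $\phi^4=(\varepsilon\widetilde\phi+\varepsilon^2\phi^\varepsilon_R)^4$ supplies enough $\varepsilon$-powers; $H^k$-norms follow since $H^2$ is an algebra in $d\leq 3$. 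If you replace your ``finite polynomial'' by this finite-plus-integral-remainder decomposition, the rest of your argument (exponent count, product estimates, Leibniz rule for $\partial_t$, monotonicity of $C_1$) goes through as written.
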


The proof of Proposition \ref{prop-kp3} and Lemma \ref{lem-A} is given in the Appendix A. To prove the KPE limit rigorously, we need only to derive some uniform estimates for the remainder $(n^{\varepsilon}_R,\textbf{u}^{\varepsilon}_R, \phi^{\varepsilon}_R)$.

From a different scaling, we can derive the Zakharov-Kuznetsov equation from the Euler-Poisson system with static magnetic field. The derivations of such a ZKE and its remainder equation are detailed in Appendix \ref{app-b}. See Proposition \ref{prop-rem-zk}.

\subsection{Unified remainder system}
We introduce a unified form the remainder system \eqref{rem-kp} for the KP-II limit in 2D and \eqref{rem} for the ZKE limit in 3D. This will substantially simplify the presentation of this paper. For this purpose, we define
\begin{equation}\label{e1}
\begin{split}
\overline{\nabla}=
\begin{cases}
(\partial_{x_1},\sqrt{\varepsilon}\partial_{x_2})^T\ \ \ &d=2;\\
(\partial_{x_1},\partial_{x_2},\partial_{x_3})^T\ \ \ &d=3,
\end{cases}
\end{split}
\end{equation}
where $\cdot^T$ means transpose and $\overline{\Delta}=\overline{\nabla}\cdot\overline{\nabla}$.
We also denote
\begin{equation}
\begin{split}
\textbf{p}=
\begin{cases}
(p_1,\sqrt{\varepsilon}p_2)^T\ \ \ &d=2,\\
(p_1,p_2,p_3)^T\ \ \ &d=3,
\end{cases}
\end{split}
\end{equation}
and
\begin{equation}
\begin{split}
\textbf{R}_T=
\begin{cases}
(R_{T1},\sqrt{\varepsilon}R_{T2})^T\ \ \ &d=2,\\
(R_{T1},R_{T2},R_{T3})^T\ \ \ &d=3,
\end{cases}
\end{split}
\end{equation}
where $p_i$ and $R_{Ti}$ are those in \eqref{rem-kp}.
Let $(\widetilde n,\widetilde{\textbf{u}},\widetilde{\phi})$ denote either the expressions in \eqref{e2} for the 2D case or the expressions in \eqref{equ95} for the 3D case. Under these notations, we have
\begin{equation}\label{e3}
\begin{split}
(n,\textbf{u},\phi)=(1+\varepsilon\widetilde n,\varepsilon\widetilde{\textbf{u}},\varepsilon\widetilde{\phi})+(\varepsilon^2n^{\varepsilon}_R, \varepsilon^2\textbf{u}^{\varepsilon}_R,\varepsilon^2\phi^{\varepsilon}_R)
\end{split}
\end{equation}
for both the 2D case and the 3D case.

\begin{proposition}\label{prop-kp5}
Under these notations of \eqref{e1}-\eqref{e3}, the remainder equations for \eqref{rem-kp} and \eqref{rem} can be unified into
\begin{subequations}\label{re}
\begin{numcases}{}
\partial_tn^{\varepsilon}_R -\frac{V\textbf{e}_1-\textbf{u}}{\varepsilon}\cdot\overline{\nabla} n^{\varepsilon}_R +\frac{n}{\varepsilon}\overline{\nabla}\cdot\textbf{u}^{\varepsilon}_R +n^{\varepsilon}_R\overline{\nabla}\cdot\widetilde{\textbf{u}} +\textbf{u}^{\varepsilon}_R\cdot\overline{\nabla}\widetilde{n}+\varepsilon R_n=0,\label{re-1}\\
\partial_t\textbf{u}^{\varepsilon}_R-\frac{V\textbf{e}_1-\textbf{u}}{\varepsilon} \cdot\overline{\nabla}\textbf{u}^{\varepsilon}_R +\textbf{u}^{\varepsilon}_R\cdot\overline{\nabla}\widetilde{\textbf{u}} +\frac{T_i}{\varepsilon n}\overline{\nabla}n^{\varepsilon}_R\nonumber\\
\ \ \ \ \ \ \ \ \ \ \ -\frac{T_i \textbf{p}}{\varepsilon n}n^{\varepsilon}_R+\varepsilon \textbf{R}_\textbf{u}-\frac{T_i \varepsilon}{n}\textbf{R}_{T} =-\frac{1}{\varepsilon}\overline{\nabla}\phi^{\varepsilon}_R +\frac{b}{\varepsilon^{3/2}}\textbf{u}^{\varepsilon}_R\times \textbf{e}_1,\label{re-2}\\
\varepsilon\overline{\Delta}\phi^{\varepsilon}_R=\phi^{\varepsilon}_R-n^{\varepsilon}_R +\varepsilon\phi^{(1)}\phi^{\varepsilon}_R +\varepsilon^{3/2}R_{\phi},\label{re-3}
\end{numcases}
\end{subequations}
where the constant $b=0$ in 2D and $b=1$ in 3D to indicate the presence of a static magnetic field. Furthermore, \eqref{re-3} is equivalent to the following
\begin{equation}\label{e35}
\begin{split}
\varepsilon\overline{\Delta}\phi^{\varepsilon}_R=\phi^{\varepsilon}_R-n^{\varepsilon}_R +\varepsilon\phi^{(1)}\phi^{\varepsilon}_R +\frac{\varepsilon^2}{2}(\phi^{\varepsilon}_R)^2 +\varepsilon^{2}\overline{R}_{\phi}.
\end{split}
\end{equation}
Here, $R_{\phi}$ and $\overline{R}_{\phi}$ satisfy the estimates in Lemma \ref{lem-A}.
\end{proposition}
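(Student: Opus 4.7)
The plan is to treat Proposition \ref{prop-kp5} as essentially a bookkeeping exercise: the unified system \eqref{re} is not a new derivation but a compact rewriting of the already-established remainder systems \eqref{rem-kp} (for $d=2$) and \eqref{rem} (for $d=3$, constructed in Appendix B), while the equivalence between \eqref{re-3} and \eqref{e35} is a further Taylor expansion of $e^\phi$ in the Poisson equation. I would split the proof into three parts: the 2D vectorization, the 3D vectorization, and the Poisson rewriting.

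For the 2D case I would read \eqref{rem-kp} component by component against \eqref{re}. The continuity equation \eqref{rem-kp-n} is already in the form \eqref{re-1} once one notes
\[
\partial_{x_1}{u_1}^{\varepsilon}_R + \varepsilon^{1/2}\partial_{x_2}{u_2}^{\varepsilon}_R \;=\; \overline{\nabla}\cdot\textbf{u}^{\varepsilon}_R, \qquad {u_1}^{\varepsilon}_R\partial_{x_1}\widetilde n + \varepsilon^{1/2}{u_2}^{\varepsilon}_R\partial_{x_2}\widetilde n \;=\; \textbf{u}^{\varepsilon}_R\cdot\overline{\nabla}\widetilde n,
\]
with $\overline{\nabla}$ as in \eqref{e1}. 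For the momentum system, the first scalar equation \eqref{rem-kp-1} is literally the first component of \eqref{re-2} (with $b=0$). The second scalar equation \eqref{rem-kp-2} carries an overall factor $\varepsilon^{1/2}$ in front of the $\partial_{x_2}$ derivatives of $n^{\varepsilon}_R$ and $\phi^{\varepsilon}_R$ as well as in front of $p_2$ and $R_{T2}$; dividing through by $\varepsilon^{1/2}$ turns it into the second component of \eqref{re-2}, provided the second slots of $\textbf{p}$ and $\textbf{R}_T$ are defined with the prefactor $\sqrt{\varepsilon}$ as in the statement. No Lorentz force is present, which is why $b=0$.

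For the 3D case, the same checking applies against the system \eqref{rem} derived in Appendix B, now with $\overline{\nabla}$ the ordinary gradient and no $\varepsilon^{1/2}$ asymmetry between the slots of $\textbf{p}$, $\textbf{R}_T$. The new feature is the static background magnetic field $\textbf{B}=\textbf{e}_1$, which contributes exactly the Lorentz term $\textbf{u}^{\varepsilon}_R\times\textbf{e}_1/\varepsilon^{3/2}$; hence $b=1$. In both cases the Poisson equation immediately gives \eqref{re-3}, since $\varepsilon\partial_{x_1}^2\phi+\varepsilon^2\partial_{x_2}^2\phi=\varepsilon\overline{\Delta}\phi$ in 2D and $\varepsilon\Delta\phi=\varepsilon\overline{\Delta}\phi$ in 3D.

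For the equivalence between \eqref{re-3} and \eqref{e35}, I would substitute \eqref{e3} into the original (scaled) Poisson equation $\varepsilon\overline{\Delta}\phi=e^{\phi}-n$, expand $e^{\varepsilon\widetilde\phi+\varepsilon^2\phi^{\varepsilon}_R}$ around $0$, subtract off the profile identities already solved at orders $\varepsilon,\varepsilon^2,\varepsilon^3$, and keep the remainder. Truncating the Taylor expansion of $e^{\varepsilon^2\phi^{\varepsilon}_R}\cdot e^{\varepsilon\widetilde\phi}$ at first order in $\phi^{\varepsilon}_R$ sweeps all quadratic and higher powers into $\varepsilon^{3/2}R_\phi$, giving \eqref{re-3}; truncating instead at second order extracts $\tfrac{\varepsilon^2}{2}(\phi^{\varepsilon}_R)^2$ explicitly and leaves only the cubic tail inside $\varepsilon^{2}\overline{R}_\phi$, giving \eqref{e35}. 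Both $R_\phi$ and $\overline{R}_\phi$ are smooth profile-dependent functions multiplied by Taylor remainders in $\phi^{\varepsilon}_R$, so Moser-type product estimates in Sobolev spaces yield \eqref{e52-1}--\eqref{e52-2}. The only real obstacle in the whole argument is making sure the $\varepsilon^{1/2}$ factors in the 2D system are correctly absorbed into the new vector notation so that \eqref{rem-kp-1}--\eqref{rem-kp-2} combine as a single vector equation; once this alignment is written out, no new analysis is required.
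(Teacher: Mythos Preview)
Your proposal is correct and matches the paper's treatment: the paper gives no separate proof of Proposition~\ref{prop-kp5} beyond pointing to the derivations in Appendix~A (for \eqref{rem-kp} and the Taylor expansion yielding the equivalent form \eqref{e34}) and Appendix~B (for \eqref{rem}), so the proposition is indeed pure bookkeeping under the notations \eqref{e1}--\eqref{e3}. One small slip: in the 2D momentum step you should not ``divide through by $\varepsilon^{1/2}$''---that would wreck the $\partial_t$ and $\partial_{x_1}$ terms---but rather absorb the $\varepsilon^{1/2}$ prefactors on the $\partial_{x_2}$, $p_2$, $R_{T2}$ terms directly into the second components of $\overline{\nabla}$, $\textbf{p}$, $\textbf{R}_T$ (which your parenthetical remark already gets right).
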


From Proposition \ref{prop-kp1} and \ref{prop-kp4}, we may assume that the known profiles $(\widetilde n, \widetilde{\textbf{u}}, \widetilde{\phi})$ are smooth enough 
such that there exist some $C>0$ and some $s\geq 4$,
\begin{equation}\label{assump}
\begin{split}
\sup_{[0,\tau_*]}\|(\widetilde n, \widetilde{\textbf{u}}, \widetilde{\phi}),R_n,\textbf{R}_\textbf{u},\textbf{R}_T\|_{H^s}\leq C,
\end{split}
\end{equation}
where $\tau_*$ is the existence time in Proposition \ref{prop-kp1} or Proposition \eqref{prop-ZKE}. 

\subsection{Main results}
Now, we are in a good position to state the main results of this paper. We first introduce the following $\varepsilon$-dependent norms. We denote (the triple norm)
\begin{equation}\label{tri-norm}
\begin{cases}
|\!|\!|n^{\varepsilon}_R|\!|\!|_{s'}^2 =\|\textbf{u}^{\varepsilon}_R\|_{H^{s'}}^2, 
\\
|\!|\!|\textbf{u}^{\varepsilon}_R|\!|\!|_{s'}^2 =\|\textbf{u}^{\varepsilon}_R\|_{H^{s'}}^2 +\varepsilon \|\overline{\nabla}\textbf{u}^{\varepsilon}_R\|_{H^{s'}}^2,
\\
|\!|\!|\phi^{\varepsilon}_R|\!|\!|_{s'}^2=\|\phi^{\varepsilon}_R\|_{H^{s'}}^2 +\varepsilon\|\overline{\nabla}\phi^{\varepsilon}_R\|_{H^{s'}}^2 +\varepsilon^2\|\overline{\Delta}\phi^{\varepsilon}_R\|_{H^{s'}}^2, 
\end{cases}
\end{equation}
where $\|\cdot\|_{H^{s'}}$ is the standard Sobolev norm.
\begin{theorem}\label{th}
Let $s\geq 4$ be such that \eqref{assump} holds and $(n^{(i)},\textbf{u}^{(i)},\phi^{(i)})\in H^{s}$ for $1\leq i\leq3$ (resp. $1\leq i\leq 6$) be solutions constructed on the interval $[0,\tau_*)$ in Proposition \ref{prop-kp1} (resp. Proposition \ref{prop-ZKE}) with initial data $(n^{(i)}_0, \textbf{u}_{0}^{(i)},\phi^{(i)}_0)\in H^{s}$
. Let $4\leq s'\leq s$ and assume that the initial data $(n_0, \textbf{u}_0, \phi_0)$ for the EP system \eqref{ep2} (resp. EP system \eqref{equ1}) has the expansion of the form \eqref{exp-kp} (resp. \eqref{expan}) and $({n^{\varepsilon}_R}, {u^{\varepsilon}_R}, {\phi^{\varepsilon}_R})|_{t=0}=({n^{\varepsilon}_R}_0, {u^{\varepsilon}_R}_0, {\phi^{\varepsilon}_R}_0)$ satisfy \eqref{re}. Then for any $0<\tau_0<\tau_*$, there exist $\varepsilon_0>0$ and $C_{\tau_0}>0$ such that when $0<\varepsilon<\varepsilon_0$, the solutions of the EP system \eqref{ep2} (resp. \eqref{equ1}) with initial data $(n_0,\textbf{u}_0,\phi_0)$ can be expressed in the expansion \eqref{exp-kp} (resp. \eqref{expan}), such that the solutions $({n^{\varepsilon}_R}, {u^{\varepsilon}_R}, {\phi^{\varepsilon}_R})$ of \eqref{re} satisfy
\par 1) when $T_i>0$, for either $d=2$ or $d=3$,
  \begin{equation}\label{estimate1}
  \sup_{[0,\tau_0]}\|(n^{\varepsilon}_R,\textbf{u}^{\varepsilon}_R, \phi^{\varepsilon}_R)\|_{H^{s'}}^2\leq C_{\tau_0}(1+\|({n^{\varepsilon}_R}_0, {u^{\varepsilon}_R}_0, {\phi^{\varepsilon}_R}_0)\|_{H^{s'}}^2),
  \end{equation}
\par 2) when $T_i=0$, for $d=3$
  \begin{equation}\label{estimate2}
  \sup_{[0,\tau_0]}|\!|\!|(n^{\varepsilon}_R,\textbf{u}^{\varepsilon}_R, \phi^{\varepsilon}_R)|\!|\!|_{s'}^2\leq C_{\tau_0}(1+|\!|\!|({n^{\varepsilon}_R}_0, {u^{\varepsilon}_R}_0, {\phi^{\varepsilon}_R}_0)|\!|\!|_{{s'}}^2),
  \end{equation}
where $|\!|\!|(n^{\varepsilon}_R,\textbf{u}^{\varepsilon}_R, \phi^{\varepsilon}_R)|\!|\!|_{s'}^2 =|\!|\!|n^{\varepsilon}_R|\!|\!|_{s'}^2 +|\!|\!|\textbf{u}^{\varepsilon}_R|\!|\!|_{s'}^2 +|\!|\!|\phi^{\varepsilon}_R|\!|\!|_{s'}^2$ is defined in \eqref{tri-norm}. 
\end{theorem}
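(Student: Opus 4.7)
The plan is to derive uniform-in-$\varepsilon$ energy estimates on the remainder system \eqref{re} and close them by Gronwall on $[0,\tau_0]$. The cases $T_i>0$ and $T_i=0$ are treated in parallel but differently: in the former, the pressure provides direct $H^{s'}$ control on $n^\varepsilon_R$, while in the latter that control can only be recovered from the elliptic Poisson equation \eqref{re-3}, which is exactly the role played by the weighted triple norm \eqref{tri-norm}. In both situations the crux is to exhibit the algebraic cancellation that removes the singular $\varepsilon^{-1}$ (and, in 3D, $\varepsilon^{-3/2}$) terms from the energy identity.

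For the case $T_i>0$, I would fix $|\alpha|\leq s'$, apply $\overline{\nabla}^\alpha$ to \eqref{re}, and form the weighted energy $E_\alpha=\int[(T_i/n)|\overline{\nabla}^\alpha n^\varepsilon_R|^2+n|\overline{\nabla}^\alpha\textbf{u}^\varepsilon_R|^2]dx$ supplemented by a $\phi^\varepsilon_R$ contribution to be identified below. Testing \eqref{re-1} against $(T_i/n)\overline{\nabla}^\alpha n^\varepsilon_R$ and \eqref{re-2} against $n\,\overline{\nabla}^\alpha\textbf{u}^\varepsilon_R$, an integration by parts cancels the pair of singular pressure terms $\varepsilon^{-1}T_i\overline{\nabla}n^\varepsilon_R/n$ and $\varepsilon^{-1}n\overline{\nabla}\cdot\textbf{u}^\varepsilon_R$ modulo commutators, and the transport operator $(V\textbf{e}_1-\textbf{u})/\varepsilon\cdot\overline{\nabla}$ is antisymmetric up to a bounded divergence, since $V\textbf{e}_1/\varepsilon$ is constant and $\textbf{u}/\varepsilon=\widetilde{\textbf{u}}+\varepsilon\textbf{u}^\varepsilon_R$ is $O(1)$. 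The remaining singular force $-\overline{\nabla}\phi^\varepsilon_R/\varepsilon$ is turned into a time derivative by integrating by parts and then replacing $n\overline{\nabla}\cdot\textbf{u}^\varepsilon_R/\varepsilon$ by $-\partial_t n^\varepsilon_R$ plus bounded terms from \eqref{re-1}; together with \eqref{re-3} this produces $-\frac{d}{dt}\int\overline{\nabla}^\alpha n^\varepsilon_R\,\overline{\nabla}^\alpha\phi^\varepsilon_R\,dx$, which by Poisson equals, up to lower-order perturbations, $-\frac{d}{dt}\int(|\overline{\nabla}^\alpha\phi^\varepsilon_R|^2+\varepsilon|\overline{\nabla}\,\overline{\nabla}^\alpha\phi^\varepsilon_R|^2)dx$ and thereby identifies the missing $\phi$-contribution to the energy. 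Commutator terms are handled by Moser / Kato--Ponce estimates using the bound \eqref{assump} on the profiles and Lemma \ref{lem-A} on $R_\phi$, and Gronwall yields \eqref{estimate1}.

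The case $T_i=0$ in 3D is the main obstacle, because without pressure the quantity $n^\varepsilon_R$ cannot be controlled directly from the momentum equation. Instead one reads it off \eqref{re-3}: testing that equation against $\overline{\nabla}^\alpha\phi^\varepsilon_R$ and against $-\overline{\Delta}\,\overline{\nabla}^\alpha\phi^\varepsilon_R$ yields $\|n^\varepsilon_R\|_{H^{s'}}^2\lesssim|\!|\!|\phi^\varepsilon_R|\!|\!|_{s'}^2+\text{lower}$, which is precisely why the triple norm \eqref{tri-norm} carries its $\varepsilon$- and $\varepsilon^2$-weighted pieces. The dispersive regularity $\varepsilon\|\overline{\nabla}\textbf{u}^\varepsilon_R\|_{H^{s'}}^2$ encoded in $|\!|\!|\textbf{u}^\varepsilon_R|\!|\!|_{s'}^2$ is produced by additionally testing an $\overline{\nabla}$-differentiated form of \eqref{re-2} against $-\varepsilon\overline{\Delta}\,\overline{\nabla}^\alpha\textbf{u}^\varepsilon_R$; the same force/continuity cancellation mechanism as above operates, and the highly singular magnetic term $b\,\textbf{u}^\varepsilon_R\times\textbf{e}_1/\varepsilon^{3/2}$, despite its $\varepsilon^{-3/2}$ factor, contributes nothing to either energy because $\textbf{e}_1$ is constant and $(\overline{\nabla}^\alpha\textbf{u}^\varepsilon_R)\cdot(\overline{\nabla}^\alpha\textbf{u}^\varepsilon_R\times\textbf{e}_1)=0$ pointwise. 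Closing by Gronwall delivers \eqref{estimate2}. A final continuity argument, using the Sobolev embedding $H^{s'}\hookrightarrow L^\infty$ available since $s'\geq 4$, enforces $\varepsilon^2\|n^\varepsilon_R\|_{L^\infty}<1/2$ throughout $[0,\tau_0]$ for $\varepsilon$ small enough, so that $n$ stays bounded away from zero and every step above is legitimate.
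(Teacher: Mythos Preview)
Your overall strategy matches the paper's: weighted $H^{s'}$ energy with symmetrizing weights so that the singular pressure pair cancels, antisymmetry of the transport operator, pointwise orthogonality of the magnetic term, and the force--continuity--Poisson substitution to convert $-\varepsilon^{-1}\overline{\nabla}\phi^{\varepsilon}_R$ into a $\phi$-energy contribution. The $T_i=0$ treatment via the triple norm and the additional test against $-\varepsilon\overline{\Delta}\,\overline{\nabla}^\alpha\textbf{u}^\varepsilon_R$ is also exactly what the paper does (Propositions 4.1--4.2). Your symmetrizer $(T_i/n,\,n)$ differs from the paper's $(T_i/n^2,\,1)$, but both produce the same cancellation of $I_{71}+J_{31}$.

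There is one genuine gloss you should fix. When you replace $n\overline{\nabla}\cdot\textbf{u}^\varepsilon_R/\varepsilon$ via \eqref{re-1}, the result is \emph{not} ``$-\partial_t n^\varepsilon_R$ plus bounded terms'': the transport piece $\frac{V\textbf{e}_1-\textbf{u}}{\varepsilon}\cdot\overline{\nabla}\partial_x^{\alpha}n^\varepsilon_R$ is still singular of order $\varepsilon^{-1}$. In the paper this is the term $I_{61}$ (Lemma 3.4), and it is only neutralized after substituting the Poisson equation \eqref{re-3} for $\partial_x^\alpha n^\varepsilon_R$ and then invoking the same antisymmetry argument on the resulting $\phi^\varepsilon_R$ terms (plus a careful integration by parts on $\varepsilon\overline{\Delta}\phi^\varepsilon_R$ that generates the $\varepsilon|\overline{\nabla}\partial_x^\alpha\phi^\varepsilon_R|^2$ piece of the energy). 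You clearly know this mechanism, since you invoke it for the direct transport terms, but it must be applied a second time here; without it the $\varepsilon^{-1}$ does not disappear. A second, smaller point: controlling the time-derivative terms $\partial_t(1/n)$, $\partial_t\phi^\varepsilon_R$, etc.\ that arise from pulling out the $d/dt$ requires the auxiliary bounds $\|\varepsilon\partial_t n^\varepsilon_R\|_{H^{k-1}}\lesssim 1+\|(n^\varepsilon_R,\textbf{u}^\varepsilon_R)\|_{H^{\delta+1}}$ and the corresponding estimate on $\varepsilon\partial_t\phi^\varepsilon_R$ (the paper isolates these as Lemmas 3.2--3.3); you should make explicit that these follow by reading \eqref{re-1} and the $\partial_t$ of \eqref{re-3} as equations for $\varepsilon\partial_t n^\varepsilon_R$ and $\partial_t\phi^\varepsilon_R$.
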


\begin{remark}
When $T_i>0$, the result for the KP-II limit in 2D can be generalized to any dimensions $d\geq 2$, following the same lines of the proof in Section 3. But the index $s'$ will be replaced by a greater one depending on the dimension $d\geq 2$.
\end{remark}

This theorem provides a rigorous justification of the  Kadomtsev-Petviashvili equation in 2D and the Zakharov-Kuznetsov equation in 3D from the Euler-Poisson system in the long wavelength limit (in the ZKE limit, the Euler-Poisson we used is \eqref{equ1}, see Appendix B). To prove this result, we need to derive a uniform bound for the remainder $(n^{\varepsilon}_R, \textbf{u}^{\varepsilon}_R, \phi^{\varepsilon}_R)$ in \eqref{re}. However, this is not starightforward, especially when $T_i=0$. When $T_i=0$, the system of \eqref{re-1} and \eqref{re-2} for $(n^{\varepsilon}_R, \textbf{u}^{\varepsilon}_R)$ does not match the common structure of Friedrich's symmetric systems. Because of this, the approach by Grenier \cite{Gre97} cannot be applied, which depends heavily on the symmetrizability of the underlying system. To overcome this difficulty, we need to combine the energy estimates with the delicate structure of the Poisson equation carefully. This is why we introduce the norm $|\!|\!|\cdot|\!|\!|_{s'}$ in the case $T_i=0$.

\section{Proof of Theorem \ref{th} for $T_i>0$}
\setcounter{section}{3}\setcounter{equation}{0}
This section is dedicated to the proof of Theorem \ref{th} for the case of $T_i>0$. For this purpose, we need only to derive a uniform bound for the remainder equation \eqref{re}. To slightly simplify the presentation, we assume that \eqref{re} has smooth solutions in a small time $\tau_{\varepsilon}$ dependent on $\varepsilon$. Let $\tilde C$ be a constant, which will be determined later, much larger than the bound of $\|(n^{\varepsilon}_R,\textbf{u}^{\varepsilon}_R,\phi^{\varepsilon}_R)\|_{s'}$, such that on $[0,\tau_{\varepsilon}]$
\begin{equation}\label{assumption}
\begin{split}
\sup_{[0,\tau_{\varepsilon}]}\|(n^{\varepsilon}_R,\textbf{u}^{\varepsilon}_R, \phi^{\varepsilon}_R)\|_{s'}\leq \tilde C.
\end{split}
\end{equation}
We will prove that $\tau_{\varepsilon}>\tau_0$ as $\varepsilon\to 0$ for any $0<\tau_0<\tau_*$, where $\tau_*$ is the existence time of the limit equation \eqref{KPE} or \eqref{ZKE}. Recalling the expressions for $n$ and $\textbf{u}$ in \eqref{e3}, we immediately know that there exists some $\varepsilon_1=\varepsilon_1(\tilde C)>0$ such that on $[0,\tau_{\varepsilon}]$,
\begin{equation}\label{assumption1}
\begin{split}
1/2<n<3/2,\ \ \ |\textbf{u}|\leq 1/2,
\end{split}
\end{equation}
for all $0<\varepsilon<\varepsilon_1$.

\begin{lemma}\label{L1}
Let $(n^{\varepsilon}_R,\textbf{u}^{\varepsilon}_R,\phi^{\varepsilon}_R)$ be a solution to \eqref{re} and $0\leq k\leq s'\leq s$ be an integer. There exist some $0<\varepsilon_1<1$ and $C,C_0>0$ such that for every $0<\varepsilon<\varepsilon_1$,
\begin{equation}\label{equ90}
\begin{split}
\|n^{\varepsilon}_R\|_{H^{k}}^2\leq C\|\phi^{\varepsilon}_R\|_{H^{k}}^2 +C\varepsilon\|\overline{\nabla}\phi^{\varepsilon}_R\|_{H^{k}}^2 +C\varepsilon^2\|\overline{\Delta}\phi^{\varepsilon}_R\|_{H^{k}}^2 +CC_0^2\varepsilon^2,
\end{split}
\end{equation}
\begin{equation}\label{equ91}
\begin{split}
\|\phi^{\varepsilon}_R\|_{H^{k}}^2 +\varepsilon\|\overline{\nabla}\phi^{\varepsilon}_R\|_{H^{k}}^2 +\varepsilon^2\|\overline{\Delta}\phi^{\varepsilon}_R\|_{H^{k}}^2\leq C\|n^{\varepsilon}_R\|_{H^{k}}^2+CC_0^2\varepsilon^2.
\end{split}
\end{equation}
\end{lemma}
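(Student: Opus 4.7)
The plan is to treat the Poisson equation \eqref{re-3} as a singular elliptic relation between $n^{\varepsilon}_R$ and $\phi^{\varepsilon}_R$, with the semiclassical operator $1 - \varepsilon\overline{\Delta}$ supplying the coercivity that naturally produces the $\varepsilon$-weighted mixed norm on the right of \eqref{equ91}. Throughout, the constant $C_0$ will absorb the value of the nondecreasing function $C_1(\cdot)$ from Lemma~\ref{lem-A} evaluated at $\sqrt{\varepsilon}\|\phi^{\varepsilon}_R\|_{H^{\delta}}$, which is uniformly $O(1)$ once the a priori bound \eqref{assumption} is in force and $\varepsilon<\varepsilon_1$ is small enough.

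For \eqref{equ90}, I would simply solve \eqref{re-3} algebraically for $n^{\varepsilon}_R$, namely
\[
n^{\varepsilon}_R = \phi^{\varepsilon}_R - \varepsilon\overline{\Delta}\phi^{\varepsilon}_R + \varepsilon\phi^{(1)}\phi^{\varepsilon}_R + \varepsilon^{3/2}R_{\phi},
\]
and then estimate the $H^k$ norm of the right-hand side term by term. The first two pieces land exactly on the weights $1$ and $\varepsilon^2$ appearing in \eqref{equ90}. A Moser product bound (valid for $k \leq s' \leq s$ with $s \geq 4$), together with \eqref{assump}, reduces $\varepsilon\|\phi^{(1)}\phi^{\varepsilon}_R\|_{H^k}$ to $\varepsilon\|\phi^{\varepsilon}_R\|_{H^k}$, whose square is dominated by $\|\phi^{\varepsilon}_R\|_{H^k}^2$. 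The residual $\varepsilon^{3/2}R_{\phi}$ is controlled by \eqref{e52-1} and contributes at most $C C_0^2 \varepsilon^3 + C\varepsilon^3\|\phi^{\varepsilon}_R\|_{H^k}^2 \leq C C_0^2 \varepsilon^2$ after using $\varepsilon\leq 1$ and absorbing the last piece. The nonnegative $\varepsilon\|\overline{\nabla}\phi^{\varepsilon}_R\|_{H^k}^2$ appearing on the right of \eqref{equ90} is then free to include.

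The substantive estimate is \eqref{equ91}. I would rewrite \eqref{re-3} in the form
\[
(-\varepsilon\overline{\Delta} + 1)\phi^{\varepsilon}_R = n^{\varepsilon}_R - \varepsilon\phi^{(1)}\phi^{\varepsilon}_R - \varepsilon^{3/2}R_{\phi},
\]
apply $\partial^\alpha$ for each multi-index $|\alpha|\leq k$, test against $\partial^\alpha\phi^{\varepsilon}_R$, and integrate by parts. The principal part produces $\varepsilon\|\overline{\nabla}\partial^\alpha\phi^{\varepsilon}_R\|_{L^2}^2 + \|\partial^\alpha\phi^{\varepsilon}_R\|_{L^2}^2$ on the left. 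On the right, $\int \partial^\alpha n^{\varepsilon}_R\,\partial^\alpha\phi^{\varepsilon}_R$ is split by Young's inequality into an absorbable $\tfrac{1}{2}\|\partial^\alpha\phi^{\varepsilon}_R\|^2$ and $C\|\partial^\alpha n^{\varepsilon}_R\|^2$; the residual contribution from $\varepsilon^{3/2}R_{\phi}$ is bounded via \eqref{e52-1} by $C C_0^2 \varepsilon^2$ plus terms that can be absorbed; and the commutator and product errors involving $\varepsilon\phi^{(1)}$ are of size $O(\varepsilon)\|\phi^{\varepsilon}_R\|_{H^k}^2$ and harmlessly absorbed once $\varepsilon_1$ is small. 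Summing over $|\alpha|\leq k$ yields the bound on $\|\phi^{\varepsilon}_R\|_{H^k}^2 + \varepsilon\|\overline{\nabla}\phi^{\varepsilon}_R\|_{H^k}^2$. To tack on the $\varepsilon^2\|\overline{\Delta}\phi^{\varepsilon}_R\|_{H^k}^2$ piece, I would read off $\varepsilon\overline{\Delta}\phi^{\varepsilon}_R$ directly from \eqref{re-3}, take $H^k$-norms, and substitute the bound just obtained for $\|\phi^{\varepsilon}_R\|_{H^k}$.

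The step requiring the most vigilance is keeping the constant from Lemma~\ref{lem-A} genuinely $\varepsilon$-independent: since it depends on $\sqrt{\varepsilon}\|\phi^{\varepsilon}_R\|_{H^{\delta}}$ with $\delta=\max\{2,k-1\}\leq s'$, the monotonicity of $C_1(\cdot)$ and the a priori bound \eqref{assumption} together give $C_1 \leq C_1(\sqrt{\varepsilon_1}\,\tilde C)$, a uniform constant once $\varepsilon_1$ is chosen small. Beyond this, everything reduces to standard Sobolev product estimates and integration-by-parts bookkeeping, with no structural difficulty beyond the classical elliptic regularity theory for $-\varepsilon\overline{\Delta}+1$.
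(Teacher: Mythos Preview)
Your proposal is correct and follows essentially the same strategy as the paper: both arguments obtain \eqref{equ90} by reading $n^{\varepsilon}_R$ off from \eqref{re-3} and taking $H^k$ norms, and both obtain the $\|\phi^{\varepsilon}_R\|_{H^k}^2 + \varepsilon\|\overline{\nabla}\phi^{\varepsilon}_R\|_{H^k}^2$ part of \eqref{equ91} by testing \eqref{re-3} against $\partial^{\alpha}\phi^{\varepsilon}_R$ and integrating by parts. The one minor methodological difference is how the $\varepsilon^2\|\overline{\Delta}\phi^{\varepsilon}_R\|_{H^k}^2$ piece is obtained: the paper tests \eqref{re-3} against $\varepsilon\overline{\Delta}\phi^{\varepsilon}_R$ and integrates by parts again, whereas you isolate $\varepsilon\overline{\Delta}\phi^{\varepsilon}_R$ algebraically from \eqref{re-3} and take the $H^k$ norm directly; both routes are valid and yield the same bound after substituting the already-established control on $\|\phi^{\varepsilon}_R\|_{H^k}$.
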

\begin{proof}
Taking $H^{k}$ inner product of \eqref{re-3} with $\phi^{\varepsilon}_R$, and integrating by parts, we have
\begin{equation}\label{equ13}
\begin{split}
\varepsilon\|\overline{\nabla}\phi^{\varepsilon}_R\|_{H^{k}}^2 +\|\phi^{\varepsilon}_R\|_{H^{k}}^2= & \langle n^{\varepsilon}_R,\phi^{\varepsilon}_R\rangle_{H^{k}} -\langle \varepsilon\phi^{(1)}\phi^{\varepsilon}_R, \phi^{\varepsilon}_R\rangle_{H^{k}} -\langle \varepsilon^{3/2}R_{\phi}, \phi^{\varepsilon}_R\rangle_{H^{k}}\\
\leq & 2\|n^{\varepsilon}_R\|_{H^{k}}^2+\frac14\|\phi^{\varepsilon}_R\|_{H^{k}}^2 +C\varepsilon\|\phi^{\varepsilon}_R\|_{H^{k}}^2 +2\varepsilon^3\|R_{\phi}\|_{H^{k}}^2.
\end{split}
\end{equation}
From \eqref{assumption} and Lemma \ref{lem-A}, there exist some constant $\varepsilon_1=\varepsilon_1(\tilde C)>0$ and $C_0:=C_1(1)>0$, such that $C_1(\sqrt{\varepsilon}\|\phi^{\varepsilon}_R\|_{H^{\alpha}})\leq C_0$ when $0<\varepsilon<\varepsilon_1$. This enables us to get $\|R_{\phi}\|_{H^{k}}^2\leq 2C_0(1+\|\phi^{\varepsilon}_R\|_{H^{k}}^2)$.
Therefore, there exists some $\varepsilon_1=\varepsilon_1(\tilde C)>0$ (still denoted as $\varepsilon_1$) such that when $\varepsilon<\varepsilon_1$, we have
\begin{equation}\label{equ14}
\begin{split}
C\varepsilon\|\phi^{\varepsilon}_R\|_{H^{k}}^2 +2\varepsilon^3\|R_{\phi}\|_{H^{k}}^2\leq \frac14\|\phi^{\varepsilon}_R\|_{H^{k}}^2+4C_0^2\varepsilon^2.
\end{split}
\end{equation}
Combining \eqref{equ13} and \eqref{equ14}, we have
\begin{equation}\label{equ15}
\begin{split}
\varepsilon\|\overline{\nabla}\phi^{\varepsilon}_R\|_{H^{k}}^2 +\|\phi^{\varepsilon}_R\|_{H^{k}}^2 \leq & C\|n^{\varepsilon}_R\|_{H^{k}}^2+CC_0^2\varepsilon^2,
\end{split}
\end{equation}
for some universal constant $C>0$.

Taking $H^{k}$ inner product of \eqref{rem-3} with $\varepsilon\overline{\Delta}\phi^{\varepsilon}_R$ and integrating by parts, we have similarly
\begin{equation*}
\begin{split}
\varepsilon^2\|\overline{\Delta}&\phi^{\varepsilon}_R\|_{H^{k}}^2 +\varepsilon\|\overline{\nabla}\phi^{\varepsilon}_R\|_{H^{k}}^2\\
=& \varepsilon\langle n^{\varepsilon}_R,\overline{\Delta}\phi^{\varepsilon}_R\rangle_{H^{k}} -\varepsilon^2\langle \phi^{(1)}\phi^{\varepsilon}_R,\overline{\Delta}\phi^{\varepsilon}_R\rangle_{H^{k}} +\varepsilon^{5/2}\langle \overline{\nabla}R_{\phi},\overline{\nabla}\phi^{\varepsilon}_R\rangle_{H^{k}}\\
\leq & 2\|n^{\varepsilon}_R\|_{H^{k}}^2 +\frac{\varepsilon^2}{8}\|\overline{\Delta}\phi^{\varepsilon}_R\|_{H^{k}}^2 +C\varepsilon^2\|\overline{\nabla}\phi^{\varepsilon}_R\|_{H^{k}}^2 +\frac{\varepsilon}{8}\|\overline{\nabla}\phi^{\varepsilon}_R\|_{H^{k}}^2 +2\varepsilon^4\|\overline{\nabla}R_{\phi}\|_{H^{k}}^2.
\end{split}
\end{equation*}
Similarly, from Lemma \ref{lem-A}, there exists some $\varepsilon_1=\varepsilon_1(\tilde C)>0$ such that when $\varepsilon<\varepsilon_1$, we have
\begin{equation}
\begin{split}
\|\overline{\nabla}R_{\phi}\|_{H^{k}}\leq 2C_0 (1+\|\overline{\nabla}\phi^{\varepsilon}_R\|_{H^k}).
\end{split}
\end{equation}
It then follows that
\begin{equation}\label{equ16}
\begin{split}
\varepsilon^2\|\overline{\Delta}&\phi^{\varepsilon}_R\|_{H^{k}}^2 +\varepsilon\|\overline{\nabla}\phi^{\varepsilon}_R\|_{H^{k}}^2\leq C\|n^{\varepsilon}_R\|_{H^{k}}^2+CC_0^2\varepsilon^2.
\end{split}
\end{equation}
By adding \eqref{equ15} and \eqref{equ16} together, we know there exist some constants $\varepsilon_1>0$, $C$ and $C_0$ such that
\begin{equation}\label{equ17}
\begin{split}
\varepsilon^2\|\overline{\Delta}&\phi^{\varepsilon}_R\|_{H^{k}}^2 +\varepsilon\|\overline{\nabla}\phi^{\varepsilon}_R\|_{H^{k}}^2 +\|\phi^{\varepsilon}_R\|_{H^{k}}^2 \leq C\|n^{\varepsilon}_R\|_{H^{k}}^2+CC_0^2\varepsilon^2.
\end{split}
\end{equation}

On the other hand, by taking $H^{\alpha}$ norm of \eqref{rem-3}, we have
\begin{equation}\label{equ18}
\begin{split}
\|n^{\varepsilon}_R\|_{H^{k}}^2\leq & \varepsilon^2\|\overline{\Delta}\phi^{\varepsilon}_R\|_{H^{k}}^2 +C\|\phi^{\varepsilon}_R\|_{H^{k}}^2 +C\varepsilon^3\|R_\phi\|_{H^{k}}^2\\
\leq & \varepsilon^2\|\overline{\Delta}\phi^{\varepsilon}_R\|_{H^{k}}^2 +C\|\phi^{\varepsilon}_R\|_{H^{k}}^2 +CC_0^2\varepsilon^2.
\end{split}
\end{equation}
Combining \eqref{equ17} and \eqref{equ18}, we complete the proof.
\end{proof}


\begin{lemma}\label{L2}
Let $(n^{\varepsilon}_R, \textbf{u}^{\varepsilon}_R, \phi^{\varepsilon}_R)$ be a solution to \eqref{rem} and $s'\leq s$  be an integer. Then for any $1\leq k\leq s'$, there exist some $0<\varepsilon_1<1$ and $C>0$ such that for every $0<\varepsilon<\varepsilon_1$,
\begin{equation}\label{equ19}
\begin{split}
\|\varepsilon\partial_tn^{\varepsilon}_R\|_{H^{k-1}}^2\leq C \{1+\|\textbf{u}^{\varepsilon}_R\|_{H^{\delta+1}}^2 +\|n^{\varepsilon}_R\|_{H^{\delta+1}}^2\},
\end{split}
\end{equation}
or equivalently,
\begin{equation}\label{e78}
\begin{split}
\|\varepsilon\partial_tn^{\varepsilon}_R\|_{H^{k-1}}^2\leq C \{1+\|\textbf{u}^{\varepsilon}_R\|_{H^{\delta+1}}^2 +|\!|\!|\phi^{\varepsilon}_R|\!|\!|_{{\delta+1}}^2\},
\end{split}
\end{equation}
where $\delta=\max\{2,k-1\}$.
\end{lemma}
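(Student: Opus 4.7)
The plan is to solve equation \eqref{re-1} algebraically for $\partial_t n^\varepsilon_R$, then multiply through by $\varepsilon$ so that the potentially singular $1/\varepsilon$ factors in front of the convection and divergence terms disappear, and finally estimate the resulting expression in $H^{k-1}$ by Moser-type product inequalities. Concretely, from \eqref{re-1} one obtains
\begin{equation*}
\varepsilon\partial_t n^\varepsilon_R = (V\textbf{e}_1-\textbf{u})\cdot\overline{\nabla}n^\varepsilon_R - n\,\overline{\nabla}\cdot\textbf{u}^\varepsilon_R - \varepsilon\,n^\varepsilon_R\,\overline{\nabla}\cdot\widetilde{\textbf{u}} - \varepsilon\,\textbf{u}^\varepsilon_R\cdot\overline{\nabla}\widetilde{n} - \varepsilon^2 R_n.
\end{equation*}
After substituting the expansions $n=1+\varepsilon\widetilde{n}+\varepsilon^2 n^\varepsilon_R$ and $\textbf{u}=\varepsilon\widetilde{\textbf{u}}+\varepsilon^2\textbf{u}^\varepsilon_R$ from \eqref{e3}, every coefficient on the right is either $O(1)$ (the constant $V\textbf{e}_1$ and $n\approx 1$) or carries an explicit power of $\varepsilon$, so no negative powers of $\varepsilon$ survive.

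Taking the $H^{k-1}$ norm, the linear main terms $V\partial_{x_1}n^\varepsilon_R$ and $\overline{\nabla}\cdot\textbf{u}^\varepsilon_R$ are controlled by $\|n^\varepsilon_R\|_{H^{k}}$ and $\|\textbf{u}^\varepsilon_R\|_{H^{k}}$, both of which are dominated by the $H^{\delta+1}$ norms appearing in \eqref{equ19} since $\delta+1=\max\{3,k\}\ge k$. The profile factors $\widetilde{n}$, $\widetilde{\textbf{u}}$, $\overline{\nabla}\widetilde{\textbf{u}}$, $\overline{\nabla}\widetilde{n}$ and $R_n$ are uniformly bounded in $H^s$ on $[0,\tau_*]$ by \eqref{assump}, so the terms they multiply contribute only $O(\varepsilon)$ times a Sobolev norm of the remainder, plus the harmless $O(\varepsilon^2)$ piece coming from $\varepsilon^2 R_n$. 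For the genuinely nonlinear contributions $\varepsilon^2\,\textbf{u}^\varepsilon_R\cdot\overline{\nabla}n^\varepsilon_R$ and $\varepsilon^2\, n^\varepsilon_R\,\overline{\nabla}\cdot\textbf{u}^\varepsilon_R$ hidden inside $\textbf{u}\cdot\overline{\nabla}n^\varepsilon_R$ and $n\,\overline{\nabla}\cdot\textbf{u}^\varepsilon_R$, I will apply the tame estimate $\|fg\|_{H^{k-1}}\lesssim \|f\|_{L^\infty}\|g\|_{H^{k-1}}+\|f\|_{H^{k-1}}\|g\|_{L^\infty}$ combined with the Sobolev embedding $H^2\hookrightarrow L^\infty$, which is valid in dimensions $d\le 3$. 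This embedding is precisely what forces the index $\delta+1$ rather than $k$ on the right-hand side of \eqref{equ19}: one factor must be placed in $L^\infty$ via the $H^{\max\{k-1,2\}}=H^{\delta}$ norm, while the other derivative factor lives in $H^{k}\subset H^{\delta+1}$.

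The equivalent form \eqref{e78} is deduced immediately from \eqref{equ19} by invoking estimate \eqref{equ91} of Lemma \ref{L1} to replace $\|n^\varepsilon_R\|_{H^{\delta+1}}^2$ by $C|\!|\!|\phi^\varepsilon_R|\!|\!|_{\delta+1}^2+CC_0^2\varepsilon^2$, with the $O(\varepsilon^2)$ constant absorbed into the ``$1+$'' on the right. There is no substantive obstacle in this lemma beyond careful bookkeeping of $\varepsilon$-powers; the decisive structural observation is that every $1/\varepsilon$ in \eqref{re-1} sits in front of a gradient of a quantity that is itself formally $O(\varepsilon)$, so multiplying the whole equation by $\varepsilon$ converts it into an $\varepsilon$-uniform identity for $\varepsilon\partial_t n^\varepsilon_R$.
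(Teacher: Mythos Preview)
Your proposal is correct and follows essentially the same approach as the paper: multiply \eqref{re-1} by $\varepsilon$, take the $H^{k-1}$ norm, expand $n$ and $\textbf{u}$ via \eqref{e3}, apply the Kato--Ponce product estimate (Lemma~\ref{Le-inequ}) together with $H^2\hookrightarrow L^\infty$ to the nonlinear pieces, and then invoke Lemma~\ref{L1} for the equivalent form \eqref{e78}. The only point you leave implicit that the paper makes explicit is the use of the bootstrap assumption \eqref{assumption} to absorb the quadratic factor $\varepsilon^2\|\textbf{u}^\varepsilon_R\|_{H^{\delta+1}}^2\le 1$ arising from the nonlinear terms, which is what fixes the threshold $\varepsilon_1$.
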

\begin{proof}
Multiply \eqref{re-1} by $\varepsilon$, and then take $H^{k-1}$ norm to obtain
\begin{equation}\label{equ20}
\begin{split}
\|\varepsilon\partial_tn^{\varepsilon}_R\|_{H^{k-1}}^2\leq & C\|\overline{\nabla} n^{\varepsilon}_R\|_{H^{k-1}}^2 +C\|\textbf{u}\cdot\overline{\nabla} n^{\varepsilon}_R\|_{H^{k-1}}^2 +C\|n\overline{\nabla}\cdot\textbf{u}^{\varepsilon}_R\|_{H^{k-1}}^2\\
&+C\varepsilon^2\|n^{\varepsilon}_R \cdot\overline{\nabla}\widetilde{\textbf{u}}\|_{H^{k-1}}^2 +C\varepsilon^2\|\textbf{u}^{\varepsilon}_R \cdot\overline{\nabla}\widetilde{n}\|_{H^{k-1}}^2 +C\varepsilon^4\|R_n\|_{H^{k-1}}^2.
\end{split}
\end{equation}
Recall that $n=1+\varepsilon\widetilde n+\varepsilon^2n^{\varepsilon}_R$ and $\textbf{u}=\varepsilon\widetilde{\textbf{u}} +\varepsilon^2\textbf{u}^{\varepsilon}_R$ in \eqref{e3}. By Lemma \ref{Le-inequ}, we have
\begin{equation}\label{equ21}
\begin{split}
\|\textbf{u}\cdot\overline{\nabla}n^{\varepsilon}_R\|_{H^{k-1}}^2\leq & C\varepsilon^2\|(\widetilde{\textbf{u}} +\varepsilon\textbf{u}^{\varepsilon}_R) \cdot\overline{\nabla}n^{\varepsilon}_R\|_{H^{k-1}}^2\\
\leq & C\varepsilon^2\|\overline{\nabla}n^{\varepsilon}_R\|_{H^{k-1}}^2 +C\varepsilon^4\{\|\textbf{u}^{\varepsilon}_R\|_{H^{k-1}}^2\|\overline{\nabla} n^{\varepsilon}_R\|_{L^{\infty}}^2+\|\overline{\nabla} n^{\varepsilon}_R\|_{H^{k-1}}^2 \|\textbf{u}^{\varepsilon}_R\|_{L^{\infty}}^2\}\\
\leq & C\varepsilon^2 \{1+C(\varepsilon^2\|\textbf{u}^{\varepsilon}_R\|_{H^{\delta}}^2)\} \|\overline{\nabla} n^{\varepsilon}_R\|_{H^{\delta}}^2,
\end{split}
\end{equation}
where $\delta=\max\{2,k-1\}$. Similarly, we have
\begin{equation}\label{equ22}
\begin{split}
\|n\overline{\nabla}\cdot\textbf{u}^{\varepsilon}_R\|_{H^{k-1}}^2\leq & \|(1+\varepsilon\widetilde n +\varepsilon^2n^{\varepsilon}_R) \overline{\nabla}\cdot\textbf{u}^{\varepsilon}_R\|_{H^{k-1}}^2\\
\leq & C\|\overline{\nabla}\textbf{u}^{\varepsilon}_R\|_{H^{k-1}}^2 +C\varepsilon^4\{\|\overline{\nabla}\textbf{u}^{\varepsilon}_R\|_{H^{k-1}}^2 \|n^{\varepsilon}_R\|_{L^{\infty}}^2 +\|n^{\varepsilon}_R\|_{H^{k-1}}^2\|\overline{\nabla} \textbf{u}^{\varepsilon}_R\|_{L^{\infty}}^2\}\\
\leq & C\{1+\varepsilon^2\|\overline{\nabla} \textbf{u}^{\varepsilon}_R\|_{H^{\delta}}^2\} \{\|n^{\varepsilon}_R\|_{H^{\delta}}^2 +\|\textbf{u}^{\varepsilon}_R\|_{H^{\delta+1}}^2\},
\end{split}
\end{equation}
where $\delta=\max\{2,k-1\}$. On the other hand, since $\widetilde{\textbf{u}},\widetilde n\in H^{s}$ and $R_n$ depends only on $\widetilde{\textbf{u}}$ and $\widetilde n$, the last three terms on the RHS of \eqref{equ20} are easily bounded by
\begin{equation}\label{equ23}
\begin{split}
C\varepsilon^2\{1+\|\textbf{u}^{\varepsilon}_R\|_{H^{k-1}}^2 +\| n^{\varepsilon}_R\|_{H^{k-1}}^2\}.
\end{split}
\end{equation}
Inserting \eqref{equ21}-\eqref{equ23} into \eqref{equ20}, we obtain
\begin{equation}\label{equ24}
\begin{split}
\|\varepsilon\partial_tn^{\varepsilon}_R\|_{H^{k-1}}^2\leq C\{1+\varepsilon^2\|\textbf{u}^{\varepsilon}_R\|_{H^{\delta+1}}^2\} \{1+\|n^{\varepsilon}_R\|_{H^{\delta+1}}^2 +\|\textbf{u}^{\varepsilon}_R\|_{H^{\delta}}^2\},
\end{split}
\end{equation}
where $\delta=\max\{2,k-1\}$. Since $\|\textbf{u}^{\varepsilon}_R\|_{H^{\delta+1}}^2\leq |\!|\!|(u^{\varepsilon}_R,\phi^{\varepsilon}_R)|\!|\!|^2\leq \tilde C$ by assumption \eqref{assumption}, there exists some $\varepsilon_1\in (0,1)$ depending on $\tilde C$ such that $\varepsilon^2\|\textbf{u}^{\varepsilon}_R\|_{H^{\delta+1}}^2\leq 1$ when $0<\varepsilon<\varepsilon_1$. Therefore \eqref{equ19} is proved. Invoking Lemma \ref{L1}, we obtain
\begin{equation}\label{equ24}
\begin{split}
\|\varepsilon\partial_tn^{\varepsilon}_R\|_{H^{k-1}}^2\leq C \{1+\|\textbf{u}^{\varepsilon}_R\|_{H^{\delta+1}}^2 +|\!|\!|\phi^{\varepsilon}_R|\!|\!|_{{\delta+1}}^2\},
\end{split}
\end{equation}
for any $1\leq k\leq s'$, where $\delta=\max\{2,k-1\}$ and $|\!|\!|\phi^{\varepsilon}_R|\!|\!|_{{\delta+1}}^2$ is given by \eqref{tri-norm}. The proof is complete.
\end{proof}


\begin{lemma}\label{L3}
Let $(n^{\varepsilon}_R,\textbf{u}^{\varepsilon}_R,\phi^{\varepsilon}_R)$ be a solution to \eqref{rem} and $s'\leq s$  be an integer. Then for any $1\leq k\leq s'$, there exist some $0<\varepsilon_1<1$ and $C,C_0>0$ such that for every $0<\varepsilon<\varepsilon_1$,
\begin{equation}\label{e79}
\begin{split}
|\!|\!|\partial_t\phi^{\varepsilon}_R|\!|\!|_{{k-1}}^2 \leq  C\|\partial_tn^{\varepsilon}_R\|_{H^{k-1}}^2+ C\varepsilon\|\phi^{\varepsilon}_R\|_{H^{k-1}}^2+CC_0^2\varepsilon.
\end{split}
\end{equation}
where $\delta=\max\{2,k-1\}$.
\end{lemma}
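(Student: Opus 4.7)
\medskip

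\textbf{Proof proposal for Lemma \ref{L3}.} The plan is to mimic the strategy of Lemma \ref{L1}, but applied to the time-differentiated Poisson equation. Starting from \eqref{re-3}, differentiating in $t$ gives
\begin{equation*}
\varepsilon\overline{\Delta}\partial_t\phi^{\varepsilon}_R
= \partial_t\phi^{\varepsilon}_R - \partial_t n^{\varepsilon}_R
+ \varepsilon(\partial_t\phi^{(1)})\phi^{\varepsilon}_R
+ \varepsilon\phi^{(1)}\partial_t\phi^{\varepsilon}_R
+ \varepsilon^{3/2}\partial_t R_{\phi},
\end{equation*}
an elliptic identity for $\partial_t\phi^{\varepsilon}_R$ with forcing $\partial_t n^{\varepsilon}_R$ plus lower-order perturbations. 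Since the known profile $\phi^{(1)}$ and its time derivative are bounded in any $H^s$ by \eqref{assump}, and $\partial_t R_{\phi}$ is controlled by \eqref{e52-2} of Lemma \ref{lem-A}, this becomes a coercive equation in $\partial_t\phi^{\varepsilon}_R$ once $\varepsilon$ is small.

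First, to control $\|\partial_t\phi^{\varepsilon}_R\|_{H^{k-1}}^2 + \varepsilon\|\overline{\nabla}\partial_t\phi^{\varepsilon}_R\|_{H^{k-1}}^2$, I would take the $H^{k-1}$ inner product of the above identity with $\partial_t\phi^{\varepsilon}_R$ and integrate by parts on the $\overline{\Delta}$ term, which yields
\begin{equation*}
\varepsilon\|\overline{\nabla}\partial_t\phi^{\varepsilon}_R\|_{H^{k-1}}^2
+ \|\partial_t\phi^{\varepsilon}_R\|_{H^{k-1}}^2
= \langle \partial_t n^{\varepsilon}_R,\partial_t\phi^{\varepsilon}_R\rangle_{H^{k-1}}
- \varepsilon\langle (\partial_t\phi^{(1)})\phi^{\varepsilon}_R + \phi^{(1)}\partial_t\phi^{\varepsilon}_R,\partial_t\phi^{\varepsilon}_R\rangle_{H^{k-1}}
- \varepsilon^{3/2}\langle \partial_t R_{\phi},\partial_t\phi^{\varepsilon}_R\rangle_{H^{k-1}}.
\end{equation*}
Each term on the right is then estimated by Cauchy--Schwarz and Young's inequality: the first contributes $C\|\partial_t n^{\varepsilon}_R\|_{H^{k-1}}^2 + \tfrac18\|\partial_t\phi^{\varepsilon}_R\|_{H^{k-1}}^2$; the $\phi^{(1)}$-terms contribute at most $C\varepsilon\|\phi^{\varepsilon}_R\|_{H^{k-1}}^2 + C\varepsilon\|\partial_t\phi^{\varepsilon}_R\|_{H^{k-1}}^2$, with the second factor absorbed for small $\varepsilon$; and the $R_\phi$ term is bounded by $\tfrac18\|\partial_t\phi^{\varepsilon}_R\|_{H^{k-1}}^2 + C\varepsilon^3\|\partial_t R_{\phi}\|_{H^{k-1}}^2 \leq \tfrac18\|\partial_t\phi^{\varepsilon}_R\|_{H^{k-1}}^2 + CC_0^2\varepsilon^3(1 + \|\partial_t\phi^{\varepsilon}_R\|_{H^{k-1}}^2)$ thanks to Lemma \ref{lem-A} (and the standing smallness assumption \eqref{assumption}). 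Choosing $\varepsilon_1=\varepsilon_1(\tilde C)$ small enough, the $\|\partial_t\phi^{\varepsilon}_R\|_{H^{k-1}}^2$ terms on the right are absorbed by the left-hand side, yielding
\begin{equation*}
\|\partial_t\phi^{\varepsilon}_R\|_{H^{k-1}}^2 + \varepsilon\|\overline{\nabla}\partial_t\phi^{\varepsilon}_R\|_{H^{k-1}}^2
\leq C\|\partial_t n^{\varepsilon}_R\|_{H^{k-1}}^2 + C\varepsilon\|\phi^{\varepsilon}_R\|_{H^{k-1}}^2 + CC_0^2\varepsilon.
\end{equation*}

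Next, to capture the $\varepsilon^2\|\overline{\Delta}\partial_t\phi^{\varepsilon}_R\|_{H^{k-1}}^2$ piece of the triple norm, I would pair the same time-differentiated identity in $H^{k-1}$ with $\varepsilon\overline{\Delta}\partial_t\phi^{\varepsilon}_R$, integrating by parts the $R_\phi$ inner product as in the proof of Lemma \ref{L1}. The structure is identical to \eqref{equ16}: the forcing term $\varepsilon\langle\partial_t n^{\varepsilon}_R,\overline{\Delta}\partial_t\phi^{\varepsilon}_R\rangle$ produces $C\|\partial_t n^{\varepsilon}_R\|_{H^{k-1}}^2 + \tfrac{\varepsilon^2}{8}\|\overline{\Delta}\partial_t\phi^{\varepsilon}_R\|_{H^{k-1}}^2$, the $\phi^{(1)}$-contributions produce $C\varepsilon^2\|\overline{\nabla}\partial_t\phi^{\varepsilon}_R\|_{H^{k-1}}^2 + C\varepsilon\|\phi^{\varepsilon}_R\|_{H^{k-1}}^2$ (after distributing one $\overline{\nabla}$ and using \eqref{assump}), and the $\partial_t R_\phi$ contribution yields $\tfrac{\varepsilon}{8}\|\overline{\nabla}\partial_t\phi^{\varepsilon}_R\|_{H^{k-1}}^2 + C\varepsilon^3\|\overline{\nabla}\partial_t R_\phi\|_{H^{k-1}}^2$, the latter absorbed into $CC_0^2\varepsilon$ by Lemma \ref{lem-A} for $\varepsilon$ small. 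Adding the two resulting estimates gives the desired bound on $|\!|\!|\partial_t\phi^{\varepsilon}_R|\!|\!|_{k-1}^2$.

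The main bookkeeping obstacle is tracking the powers of $\varepsilon$ so that all $\|\partial_t\phi^{\varepsilon}_R\|$ and $\|\overline{\nabla}\partial_t\phi^{\varepsilon}_R\|$ terms generated by $\partial_t R_\phi$ via Lemma \ref{lem-A} are indeed absorbable into the left-hand side, leaving only the harmless residue $CC_0^2\varepsilon$; this is exactly where \eqref{assumption} is invoked to guarantee $C_1(\sqrt{\varepsilon}\|\phi^{\varepsilon}_R\|_{H^{\delta}})\leq C_0$ uniformly on $[0,\tau_\varepsilon]$ for $\varepsilon<\varepsilon_1(\tilde C)$.
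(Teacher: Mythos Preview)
Your proposal is correct and follows essentially the same approach as the paper: differentiate \eqref{re-3} in time, pair first with $\partial_t\phi^{\varepsilon}_R$ in $H^{k-1}$ to control $\|\partial_t\phi^{\varepsilon}_R\|_{H^{k-1}}^2+\varepsilon\|\overline{\nabla}\partial_t\phi^{\varepsilon}_R\|_{H^{k-1}}^2$, then pair with $\varepsilon\overline{\Delta}\partial_t\phi^{\varepsilon}_R$ to pick up $\varepsilon^2\|\overline{\Delta}\partial_t\phi^{\varepsilon}_R\|_{H^{k-1}}^2$, and add. Your handling of the $\phi^{(1)}$-terms (splitting $\partial_t(\phi^{(1)}\phi^{\varepsilon}_R)$ to isolate the $C\varepsilon\|\phi^{\varepsilon}_R\|_{H^{k-1}}^2$ contribution) and of the $\partial_t R_\phi$ remainder via Lemma~\ref{lem-A} and the a~priori bound \eqref{assumption} matches the paper's argument line for line.
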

\begin{proof}
Let $1\leq k\leq s'$. We take $\partial_t$ of \eqref{re-3} and then take $H^{k-1}$ inner product with $\partial_t\phi^{\varepsilon}_R$ to obtain
\begin{equation*}
\begin{split}
\varepsilon\|\overline{\nabla}\partial_t&\phi^{\varepsilon}_R\|_{H^{k-1}}^2 +\|\partial_t\phi^{\varepsilon}_R\|_{H^{k-1}}^2\\
= & \langle \partial_tn^{\varepsilon}_R,\partial_t\phi^{\varepsilon}_R\rangle_{H^{k-1}} -\varepsilon\langle\partial_t(\phi^{(1)}\phi^{\varepsilon}_R), \partial_t\phi^{\varepsilon}_R\rangle_{H^{k-1}} -\varepsilon^{3/2}\langle\partial_tR_{\phi}, \partial_t\phi^{\varepsilon}_R\rangle_{H^{k-1}}\\
\leq & 2\|\partial_tn^{\varepsilon}_R\|_{H^{k-1}}^2 +\frac14\|\partial_t\phi^{\varepsilon}_R\|_{H^{k-1}}^2 +C\varepsilon\|\partial_t\phi^{\varepsilon}_R\|_{H^{k-1}}^2 +C\varepsilon\|\phi^{\varepsilon}_R\|_{H^{k-1}}^2 +2\varepsilon^3\|\partial_tR_{\phi}\|_{H^{k-1}}^2.
\end{split}
\end{equation*}
The fourth term in the last line comes from the term $\partial_t(\phi^{(1)}\phi^{\varepsilon}_R)$ when $\partial_t$ acts on $\phi^{(1)}$. As in the proof of Lemma \ref{L1}, by \eqref{e52-2} in Lemma \ref{lem-A}, there exists some $\varepsilon_1=\varepsilon_1(\tilde C)>0$ such that when $\varepsilon<\varepsilon_1$, we have
\begin{equation}\label{equ25}
\begin{split}
\varepsilon\|\overline{\nabla}\partial_t&\phi^{\varepsilon}_R\|_{H^{k-1}}^2 +\|\partial_t\phi^{\varepsilon}_R\|_{H^{k-1}}^2\leq C\|\partial_tn^{\varepsilon}_R\|_{H^{k-1}}^2+ C\varepsilon\|\phi^{\varepsilon}_R\|_{H^{k-1}}^2+CC_0^2\varepsilon^2
\end{split}
\end{equation}
for some universal constant $C>0$.

On the other hand, by taking $\partial_t$ of \eqref{rem-3} and then taking $H^{k-1}$ inner product with $\varepsilon\partial_t\overline{\Delta}\phi^{\varepsilon}_R$, we obtain
\begin{equation}\label{equ26}
\begin{split}
\varepsilon^2\|\partial_t\overline{\Delta}\phi^{\varepsilon}_R\|_{H^{k-1}}^2 &+\varepsilon\|\partial_t\overline{\nabla} \phi^{\varepsilon}_R\|_{H^{k-1}}^2 \leq  C\|\partial_tn^{\varepsilon}_R\|_{H^{k-1}}^2+ C\varepsilon^2(\|\partial_t\phi^{\varepsilon}_R\|_{H^{k-1}}^2 +\|\phi^{\varepsilon}_R\|_{H^{k-1}}^2) +CC_0^2\varepsilon,
\end{split}
\end{equation}
for $\varepsilon<\varepsilon_1$ for some $0<\varepsilon_1<1$. Adding \eqref{equ25} and \eqref{equ26}, by choosing $\varepsilon_1$ sufficiently small such that $C\varepsilon_1^2\leq 1/2$, we obtain
\begin{equation}\label{equ27}
\begin{split}
\varepsilon^2\|\partial_t\overline{\Delta}\phi^{\varepsilon}_R\|_{H^{k-1}}^2  &+\varepsilon\|\partial_t\overline{\nabla}\phi^{\varepsilon}_R\|_{H^{k-1}}^2 +\|\partial_t\phi^{\varepsilon}_R\|_{H^{k-1}}^2
\leq  C\|\partial_tn^{\varepsilon}_R\|_{H^{k-1}}^2+ C\varepsilon\|\phi^{\varepsilon}_R\|_{H^{k-1}}^2+CC_0^2\varepsilon.
\end{split}
\end{equation}
The proof is complete.
\end{proof}
\begin{corollary}\label{cor2}
Under the same assumptions of Lemma \ref{L3}, we have
\begin{equation}\label{e22}
\begin{split}
|\!|\!|\varepsilon\partial_t\phi^{\varepsilon}_R|\!|\!|_{{k-1}}^2 \leq C\{1+\|(\textbf{u}^{\varepsilon}_R, n^{\varepsilon}_R, \phi^{\varepsilon}_R)\|_{H^{\delta+1}}^2\},
\end{split}
\end{equation}
or equivalently,
\begin{equation}\label{e75}
\begin{split}
|\!|\!|\varepsilon\partial_t\phi^{\varepsilon}_R|\!|\!|_{{k-1}}^2 \leq C\{1+\|\textbf{u}^{\varepsilon}_R\|_{H^{\delta+1}}^2 +|\!|\!|\phi^{\varepsilon}_R|\!|\!|_{{\delta+1}}^2\},
\end{split}
\end{equation}
where $\delta=\max\{2,k-1\}$.
\end{corollary}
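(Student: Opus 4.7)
The plan is to derive \eqref{e22} by chaining together Lemmas \ref{L2} and \ref{L3}, and then to obtain the equivalent form \eqref{e75} from \eqref{e22} by invoking Lemma \ref{L1} to convert the $\|n^{\varepsilon}_R\|_{H^{\delta+1}}$ term into the triple norm of $\phi^{\varepsilon}_R$. No new energy estimate is needed here; all three preceding results do the work.

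First, I observe that the triple norm $|\!|\!|\cdot|\!|\!|_{k-1}$ in \eqref{tri-norm} is homogeneous of degree one in its argument, so
\[
|\!|\!|\varepsilon\partial_t\phi^{\varepsilon}_R|\!|\!|_{{k-1}}^2 = \varepsilon^2 |\!|\!|\partial_t\phi^{\varepsilon}_R|\!|\!|_{{k-1}}^2.
\]
Multiplying the conclusion of Lemma \ref{L3} through by $\varepsilon^2$ therefore gives
\[
|\!|\!|\varepsilon\partial_t\phi^{\varepsilon}_R|\!|\!|_{{k-1}}^2 \leq C\|\varepsilon\partial_tn^{\varepsilon}_R\|_{H^{k-1}}^2 + C\varepsilon^3\|\phi^{\varepsilon}_R\|_{H^{k-1}}^2 + CC_0^2\varepsilon^3.
\]

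Next, I would substitute the bound \eqref{equ19} from Lemma \ref{L2} for the first term on the right, yielding
\[
|\!|\!|\varepsilon\partial_t\phi^{\varepsilon}_R|\!|\!|_{{k-1}}^2 \leq C\bigl\{1 + \|\textbf{u}^{\varepsilon}_R\|_{H^{\delta+1}}^2 + \|n^{\varepsilon}_R\|_{H^{\delta+1}}^2\bigr\} + C\varepsilon^3\|\phi^{\varepsilon}_R\|_{H^{k-1}}^2 + CC_0^2\varepsilon^3,
\]
with $\delta = \max\{2, k-1\}$. Since $0<\varepsilon<\varepsilon_1<1$ and $k-1\leq \delta+1$, the $\varepsilon^3$ terms are absorbed into the leading constant plus the $\|\phi^{\varepsilon}_R\|_{H^{\delta+1}}^2$ term, and \eqref{e22} follows immediately.

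Finally, to pass from \eqref{e22} to \eqref{e75}, I would apply \eqref{equ90} of Lemma \ref{L1} at regularity level $k = \delta+1$, which bounds $\|n^{\varepsilon}_R\|_{H^{\delta+1}}^2$ by $|\!|\!|\phi^{\varepsilon}_R|\!|\!|_{\delta+1}^2 + CC_0^2\varepsilon^2$ (recognizing that $\|\phi^{\varepsilon}_R\|_{H^{\delta+1}}^2$ together with the appropriate $\varepsilon$-weighted derivatives of $\phi^{\varepsilon}_R$ comprise precisely $|\!|\!|\phi^{\varepsilon}_R|\!|\!|_{\delta+1}^2$). The $\varepsilon^2$ term is harmless, and the resulting inequality is \eqref{e75}. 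I do not anticipate any substantive obstacle; the only item requiring care is tracking the homogeneity of the triple norm under the scalar multiplication by $\varepsilon$ and verifying that the index $\delta+1$ is consistent between the application of Lemmas \ref{L1} and \ref{L2}.
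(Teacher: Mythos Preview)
Your proposal is correct and follows essentially the same route as the paper: the paper's proof is the one-line remark ``multiply \eqref{e79} by $\varepsilon^2$ and then use Lemma \ref{L2},'' which is exactly your chaining of Lemmas \ref{L3} and \ref{L2}. Your explicit invocation of Lemma \ref{L1} to pass from \eqref{e22} to \eqref{e75} is already packaged into the equivalent form \eqref{e78} of Lemma \ref{L2}, so there is no substantive difference.
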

\begin{proof}
The proof is complete by multiplying \eqref{e79} with $\varepsilon^2$ and then using Lemma \ref{L2}.
\end{proof}

\begin{corollary}\label{rmk5}
Let $\alpha$ be a multi-index with $|\alpha|=k$, then
\begin{equation*}
\begin{cases}
\varepsilon^3\|\varepsilon\partial_t\overline{\nabla} \partial_x^{\alpha}\phi^{\varepsilon}_R\|_{L^2}^2\leq C\varepsilon^2\|\varepsilon\partial_t\overline{\Delta} \phi^{\varepsilon}_R\|_{H^{k-1}}^2,\ \ \ d=2;\\
\varepsilon^2\|\varepsilon\partial_t\overline{\nabla} \partial_x^{\alpha}\phi^{\varepsilon}_R\|_{L^2}^2\leq C\varepsilon^2\|\varepsilon\partial_t\overline{\Delta} \phi^{\varepsilon}_R\|_{H^{k-1}}^2,\ \ \ d=3,
\end{cases}
\end{equation*}
where $x\in \Bbb R^d$ and $\overline{\nabla}$ is defined in \eqref{e1} and $\overline{\Delta}=\overline{\nabla}\cdot\overline{\nabla}$.
\end{corollary}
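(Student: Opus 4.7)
My plan is to reduce both inequalities to pointwise multiplier estimates in Fourier space via Plancherel's theorem. Set $g=\varepsilon\partial_t\phi^{\varepsilon}_R$ so that after cancelling the common factors $\varepsilon^2$ and $\varepsilon\partial_t$, the 3D statement becomes
\[
\|\overline{\nabla}\partial_x^{\alpha}g\|_{L^2}^2 \leq C\|\overline{\Delta}g\|_{H^{k-1}}^2,
\]
and the 2D statement becomes
\[
\varepsilon\|\overline{\nabla}\partial_x^{\alpha}g\|_{L^2}^2 \leq C\|\overline{\Delta}g\|_{H^{k-1}}^2,
\]
with $\overline{\nabla}=(\partial_{x_1},\partial_{x_2},\partial_{x_3})^T$ in 3D and $\overline{\nabla}=(\partial_{x_1},\sqrt{\varepsilon}\partial_{x_2})^T$ in 2D.

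The 3D case is essentially immediate: by Plancherel,
\[
\|\overline{\nabla}\partial_x^{\alpha}g\|_{L^2}^2 = \int|\xi|^2|\xi^{\alpha}|^2|\hat g|^2\,d\xi \leq \int|\xi|^{2k+2}|\hat g|^2\,d\xi,
\]
while for the multi-index $\beta_0=(k-1,0,0)$ (or any $|\beta|=k-1$),
\[
\|\overline{\Delta}g\|_{H^{k-1}}^2 \geq \|\partial_x^{\beta_0}\overline{\Delta}g\|_{L^2}^2 = \int|\xi|^4|\xi^{\beta_0}|^2|\hat g|^2\,d\xi = \int\xi_1^{2(k-1)}|\xi|^4|\hat g|^2\,d\xi,
\]
which after summing over a spanning family of $|\beta|=k-1$ (or simply by using $|\xi^{\beta_0}|^2 \cdot \text{symmetrization}$) bounds $\int|\xi|^{2k+2}|\hat g|^2\,d\xi$ from below.

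The 2D case is the only one requiring care, because of the $\sqrt{\varepsilon}$ in $\overline{\nabla}$. Here $\overline{\Delta}$ has Fourier symbol $\xi_1^2+\varepsilon\xi_2^2$, and I plan to expand
\[
\varepsilon(\xi_1^2+\varepsilon\xi_2^2)\xi_1^{2\alpha_1}\xi_2^{2\alpha_2} = \varepsilon\xi_1^{2\alpha_1+2}\xi_2^{2\alpha_2} + \varepsilon^2\xi_1^{2\alpha_1}\xi_2^{2\alpha_2+2}
\]
and dominate each term by $(\xi_1^2+\varepsilon\xi_2^2)^2|\xi^{\beta}|^2$ for a well chosen $|\beta|\leq k-1$. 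Specifically, if $\alpha_1\geq 1$ choose $\beta=(\alpha_1-1,\alpha_2)$ to catch the first term via the $\xi_1^4$ piece of $(\xi_1^2+\varepsilon\xi_2^2)^2$; if $\alpha_1=0$, use $\beta=(0,k-1)$ and catch $\varepsilon\xi_1^2\xi_2^{2k}$ via the cross term $2\varepsilon\xi_1^2\xi_2^2\cdot\xi_2^{2(k-1)}$. An analogous two-case split handles the second term ($\varepsilon^2\xi_1^{2\alpha_1}\xi_2^{2\alpha_2+2}$) via the $\varepsilon^2\xi_2^4$ piece or the cross term as appropriate. Summing these pointwise Fourier bounds and applying Plancherel gives the claim.

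The only subtlety, and thus the main (mild) obstacle, is the bookkeeping in the 2D anisotropic case where the $\sqrt{\varepsilon}$ weighting forces the correct pairing of $\alpha$ with a subordinate multi-index $\beta$; once this is done correctly, the constant $C$ is absolute and independent of $\varepsilon$, completing the proof.
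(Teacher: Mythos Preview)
Your proposal is correct and follows essentially the same approach as the paper: both arguments boil down to the Fourier multiplier bound that second-order mixed (anisotropic) derivatives are controlled in $L^2$ by the (anisotropic) Laplacian. The paper packages this as the single inequality $\|\sqrt{\varepsilon}\,\partial_{x_i}\overline{\nabla}f\|_{L^2}\le C\|\overline{\Delta}f\|_{L^2}$, invokes the Riesz transform theorem from Stein, and then writes $\partial_x^{\alpha}=\partial_{x_i}\partial_x^{\beta}$ with $|\beta|=k-1$ and $f=\varepsilon^2\partial_t\phi^{\varepsilon}_R$; your explicit Plancherel expansion and the case split on $\alpha_1$ in 2D is just the hands-on verification of that same multiplier inequality.
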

\begin{proof}
By Reisz theorem \cite{Stein70}, we have%
\begin{equation*}
\begin{split}
\|\sqrt{\varepsilon}\partial_{x_i}\overline{\nabla}f\|_{L^2}\leq C\|\overline{\Delta}f\|_{L^2},
\end{split}
\end{equation*}
for those $f$ that makes sense. The proof is complete by letting $f=\varepsilon^2\partial_t\phi^{\varepsilon}_R$.
\end{proof}

\begin{proposition}\label{prop-kp2}
Let $s'\geq 4$ be an integer and $(n^{\varepsilon}_R,\textbf{u}^{\varepsilon}_R,\phi^{\varepsilon}_R)$ be a solution to \eqref{re}. Then for any integer $0\leq k\leq s'$, there holds
\begin{equation}\label{e-prop1}
\begin{split}
\frac12\frac{d}{dt}\|\partial_x^{\alpha}\textbf{u}^{\varepsilon}_R\|_{L^2}^2 &+\frac12\frac{d}{dt}\int\frac{1+\varepsilon\phi^{(1)} +\varepsilon^2\phi^{\varepsilon}_R}{n} |\partial_x^{\alpha}\phi^{\varepsilon}_R|^2\\
&+\frac12\frac{d}{dt}\int\frac{\varepsilon}{n} |\overline{\nabla}\partial_x^{\alpha}\phi^{\varepsilon}_R|^2 +\frac12\frac{d}{dt}\int\frac{T_i}{n^2} |\partial_x^{\alpha}n^{\varepsilon}_R|^2dx\\
\leq & CC_1(C_1+\varepsilon\|(\textbf{u}^{\varepsilon}_R, n^{\varepsilon}_R, \phi^{\varepsilon}_R)\|_{H^{s'}}^2)\{1 +\|(\textbf{u}^{\varepsilon}_R, n^{\varepsilon}_R, \phi^{\varepsilon}_R)\|_{H^{s'}}^2\},
\end{split}
\end{equation}
where $\alpha$ is any multi-index with $|\alpha|=k$.
\end{proposition}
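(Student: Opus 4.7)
The plan is to derive \eqref{e-prop1} by applying $\partial_x^\alpha$ to each component of \eqref{re}, testing against carefully weighted derivatives, and then using the Poisson equation in the form \eqref{e35} to convert the $\phi^\varepsilon_R$--pressure into a total time derivative involving the exact weights displayed on the LHS.

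First, I would apply $\partial_x^\alpha$ to \eqref{re-2} and take the $L^2$ inner product with $\partial_x^\alpha\textbf{u}^\varepsilon_R$. The magnetic term $\frac{b}{\varepsilon^{3/2}}\partial_x^\alpha\textbf{u}^\varepsilon_R\times\textbf{e}_1\cdot\partial_x^\alpha\textbf{u}^\varepsilon_R$ vanishes pointwise by antisymmetry (and the constant coefficient contributes no commutator), so the formidable $\varepsilon^{-3/2}$ is harmless. What remains of singular order are $\int\frac{T_i}{\varepsilon n}\partial_x^\alpha\overline{\nabla}n^\varepsilon_R\cdot\partial_x^\alpha\textbf{u}^\varepsilon_R\,dx$ and $\int\frac{1}{\varepsilon}\partial_x^\alpha\overline{\nabla}\phi^\varepsilon_R\cdot\partial_x^\alpha\textbf{u}^\varepsilon_R\,dx$. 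In parallel I would apply $\partial_x^\alpha$ to \eqref{re-1} and test against the weighted quantity $\frac{T_i}{n^2}\partial_x^\alpha n^\varepsilon_R$; the time-derivative piece produces $\tfrac12\tfrac{d}{dt}\int\frac{T_i}{n^2}|\partial_x^\alpha n^\varepsilon_R|^2\,dx$ (plus a $\partial_t n$ weight term), and the $\frac{n}{\varepsilon}\overline{\nabla}\cdot\textbf{u}^\varepsilon_R$ piece yields $\int\frac{T_i}{\varepsilon n}\partial_x^\alpha n^\varepsilon_R\,\partial_x^\alpha\overline{\nabla}\cdot\textbf{u}^\varepsilon_R\,dx$. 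Integration by parts of the $T_i$-term from the momentum estimate shows it cancels this contribution up to commutators on the coefficient $\frac{1}{n}$.

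For the remaining $\varepsilon^{-1}$ pressure term, I would integrate by parts so that the gradient falls on $\textbf{u}^\varepsilon_R$, then invoke the continuity equation \eqref{re-1} to substitute $\frac{1}{\varepsilon}\overline{\nabla}\cdot\textbf{u}^\varepsilon_R=-\frac{1}{n}\partial_t n^\varepsilon_R+(\text{lower order in }\varepsilon)$, and finally substitute for $\partial_t n^\varepsilon_R$ via \eqref{e35}, i.e.\ $n^\varepsilon_R=\phi^\varepsilon_R-\varepsilon\overline{\Delta}\phi^\varepsilon_R+\varepsilon\phi^{(1)}\phi^\varepsilon_R+\tfrac{\varepsilon^2}{2}(\phi^\varepsilon_R)^2+\varepsilon^2\overline{R}_\phi$. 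The $\phi^\varepsilon_R$ piece gives $\tfrac12\tfrac{d}{dt}\int\frac{1}{n}|\partial_x^\alpha\phi^\varepsilon_R|^2\,dx$; the $\varepsilon\overline{\Delta}\phi^\varepsilon_R$ piece, after one more integration by parts, gives $\tfrac12\tfrac{d}{dt}\int\frac{\varepsilon}{n}|\overline{\nabla}\partial_x^\alpha\phi^\varepsilon_R|^2\,dx$; and the two quadratic terms in $\phi^\varepsilon_R$ produce precisely $\tfrac12\tfrac{d}{dt}\int\frac{\varepsilon\phi^{(1)}+\varepsilon^2\phi^\varepsilon_R}{n}|\partial_x^\alpha\phi^\varepsilon_R|^2\,dx$. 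Summing with $\tfrac12\tfrac{d}{dt}\|\partial_x^\alpha\textbf{u}^\varepsilon_R\|^2$ and the $T_i$ term from the continuity test, one recovers the full LHS of \eqref{e-prop1}.

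The remaining work is bookkeeping for the RHS. Every error term falls into one of four classes: (i) commutators such as $[\partial_x^\alpha,(V\textbf{e}_1-\textbf{u})/\varepsilon\cdot\overline{\nabla}]$, $[\partial_x^\alpha,n\overline{\nabla}\cdot]$, and $[\partial_x^\alpha,1/n]$, controlled by Moser-type inequalities using that $s'\geq 4>d/2+1$ makes $H^{s'}$ an algebra and that $\textbf{u}=O(\varepsilon)$ and $n-1=O(\varepsilon)$ kill one power of $\varepsilon^{-1}$; (ii) time-derivative weights $\partial_t(1/n)$, $\partial_t(1/n^2)$, handled via \eqref{e3} and Lemma \ref{L2} to bound $\varepsilon\partial_t n^\varepsilon_R$; (iii) profile-source terms involving $R_n,\textbf{R}_\textbf{u},\textbf{R}_T$, uniformly bounded by \eqref{assump}; and (iv) the Poisson remainder $\overline{R}_\phi$, whose norm is controlled by Lemma \ref{lem-A} and produces the prefactor $CC_1(C_1+\varepsilon\|\cdot\|_{H^{s'}}^2)$. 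The main obstacle will be organizing the two independent $\varepsilon^{-1}$ cancellations (pressure $\overline{\nabla}\phi^\varepsilon_R$ via continuity+Poisson, and pressure $T_i\overline{\nabla}n^\varepsilon_R$ via the symmetric test $\frac{T_i}{n^2}\partial_x^\alpha n^\varepsilon_R$) simultaneously while making sure that each commutator expression loses no more than one power of $\varepsilon^{-1}$, so that the residual terms are genuinely absorbable into $C(1+\|(\textbf{u}^\varepsilon_R,n^\varepsilon_R,\phi^\varepsilon_R)\|_{H^{s'}}^2)$.
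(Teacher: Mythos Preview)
Your overall architecture matches the paper's exactly: test \eqref{re-2} with $\partial_x^\alpha\textbf{u}^\varepsilon_R$, test \eqref{re-1} with $\frac{T_i}{n^2}\partial_x^\alpha n^\varepsilon_R$, cancel the two $T_i/\varepsilon$ terms by integration by parts, and convert the $\phi$-pressure via continuity plus the Poisson relation \eqref{e35} into the weighted time derivatives. The paper formalizes this in Lemmas~\ref{L8}, \ref{L4}, \ref{L5}, \ref{Le-1}.

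There is, however, one genuine gap. When you write
\[
\tfrac{1}{\varepsilon}\overline{\nabla}\cdot\textbf{u}^\varepsilon_R=-\tfrac{1}{n}\partial_t n^\varepsilon_R+(\text{lower order in }\varepsilon),
\]
this is false: the continuity equation \eqref{re-1} also produces the transport term $\frac{V\textbf{e}_1-\textbf{u}}{\varepsilon n}\cdot\overline{\nabla}\partial_x^\alpha n^\varepsilon_R$, which carries an honest $1/\varepsilon$ (the leading part is $\frac{V}{\varepsilon n}\partial_{x_1}\partial_x^\alpha n^\varepsilon_R$). Paired with $\partial_x^\alpha\phi^\varepsilon_R$ this gives the term the paper calls $I_{61}$, and it is \emph{not} absorbable as a remainder. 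The paper devotes Lemma~\ref{L4} to it: one substitutes the Poisson equation \eqref{equ39} for $\partial_x^\alpha n^\varepsilon_R$ in this transport term as well, obtaining pieces like $\langle\partial_x^\alpha\phi^\varepsilon_R,\frac{V\textbf{e}_1-\textbf{u}}{\varepsilon n}\cdot\overline{\nabla}\partial_x^\alpha\phi^\varepsilon_R\rangle_0$ and $\varepsilon\langle\partial_x^\alpha\phi^\varepsilon_R,\frac{V\textbf{e}_1-\textbf{u}}{\varepsilon n}\cdot\overline{\nabla}\overline{\Delta}\partial_x^\alpha\phi^\varepsilon_R\rangle_0$, which then integrate by parts to $O(1)$ because $\overline{\nabla}\cdot\bigl(\frac{V\textbf{e}_1-\textbf{u}}{\varepsilon n}\bigr)=O(1)$ (cf.\ \eqref{equ33}). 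Your proposal only invokes the Poisson substitution for the $\partial_t n^\varepsilon_R$ piece; you must do it a second time for this transport piece, otherwise an uncontrolled $\varepsilon^{-1}$ survives. Once you add this step, your outline is complete and coincides with the paper's argument.
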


For clarity, we divide the proof of this proposition into the following four lemmas.

\begin{lemma}\label{L8}
Let $s'\geq 4$, $k\leq s'$ be two non-negative integers and $\alpha$ be any multi-index with $|\alpha|=k$. Then for any solution  $(n^{\varepsilon}_R, \textbf{u}^{\varepsilon}_R, \phi^{\varepsilon}_R)$ of \eqref{re}, we have
\begin{equation}\label{e93}
\begin{split}
\frac12\frac{d}{dt}\|\partial_x^{\alpha}\textbf{u}^{\varepsilon}_R\|_{L^2}^2 &+\frac12\frac{d}{dt}\int\frac{1+\varepsilon\phi^{(1)} +\varepsilon^2\phi^{\varepsilon}_R}{n} |\partial_x^{\alpha}\phi^{\varepsilon}_R|^2 +\frac12\frac{d}{dt}\int\frac{\varepsilon}{n} |\overline{\nabla}\partial_x^{\alpha}\phi^{\varepsilon}_R|^2\\
\leq & CC_1(C_1+\varepsilon\|(\textbf{u}^{\varepsilon}_R, n^{\varepsilon}_R, \phi^{\varepsilon}_R)\|_{H^{s'}}^2)\{1 +\|(\textbf{u}^{\varepsilon}_R, n^{\varepsilon}_R, \phi^{\varepsilon}_R)\|_{H^{s'}}^2\}+I_{71},
\end{split}
\end{equation}
where
\begin{equation}\label{e94}
\begin{split}
I_{71}=-T_i\langle\frac{1}{\varepsilon n}\partial_x^{\alpha}\overline{\nabla}n^{\varepsilon}_R, \partial_x^{\alpha}\textbf{u}^{\varepsilon}_R\rangle_0.
\end{split}
\end{equation}
\end{lemma}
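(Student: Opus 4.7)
I plan to differentiate \eqref{re-2} by $\partial_x^\alpha$, pair in $L^2$ with $\partial_x^\alpha \textbf{u}^\varepsilon_R$, and identify the pieces that build up the time-derivative terms in \eqref{e93}, leaving everything else as an admissible error.

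First I dispatch the ``easy'' terms. The $\partial_t$ term produces $\tfrac{1}{2}\tfrac{d}{dt}\|\partial_x^\alpha \textbf{u}^\varepsilon_R\|_{L^2}^2$. For the convective term, integrating by parts is safe because $(V\textbf{e}_1-\textbf{u})/\varepsilon = V\textbf{e}_1/\varepsilon-\widetilde{\textbf{u}}-\varepsilon\textbf{u}^\varepsilon_R$, where the constant drift $V\textbf{e}_1/\varepsilon$ is divergence-free and the remainder has bounded divergence; this yields $\tfrac{1}{2}\int(\overline{\nabla}\cdot\widetilde{\textbf{u}}+\varepsilon\overline{\nabla}\cdot\textbf{u}^\varepsilon_R)|\partial_x^\alpha\textbf{u}^\varepsilon_R|^2$ plus Moser-type commutator corrections. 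The Lorentz term $\tfrac{b}{\varepsilon^{3/2}}\textbf{u}^\varepsilon_R\times\textbf{e}_1$ is pointwise orthogonal to $\textbf{u}^\varepsilon_R$ and drops out. The subprincipal contributions $\textbf{u}^\varepsilon_R\cdot\overline{\nabla}\widetilde{\textbf{u}}$, $\varepsilon\textbf{R}_\textbf{u}$, $-\tfrac{T_i\textbf{p}}{\varepsilon n}n^\varepsilon_R$, and $-\tfrac{T_i\varepsilon}{n}\textbf{R}_T$ are absorbed into the RHS of \eqref{e93} using \eqref{assump} together with Lemma~\ref{L1} to trade bare factors of $n^\varepsilon_R$ for $\phi^\varepsilon_R$-quantities whenever a $1/\varepsilon$ needs compensation. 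The $T_i$ pressure term $\tfrac{T_i}{\varepsilon n}\overline{\nabla}n^\varepsilon_R$ is deliberately not absorbed and is set aside as $I_{71}$.

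The crux is the electrostatic term $-\tfrac{1}{\varepsilon}\partial_x^\alpha\overline{\nabla}\phi^\varepsilon_R$. Integrating by parts converts this to $\tfrac{1}{\varepsilon}\langle\partial_x^\alpha\phi^\varepsilon_R, \partial_x^\alpha\overline{\nabla}\cdot\textbf{u}^\varepsilon_R\rangle_0$. I then solve \eqref{re-1} for $\overline{\nabla}\cdot\textbf{u}^\varepsilon_R$, retaining only the leading piece $-\tfrac{\varepsilon}{n}\partial_t n^\varepsilon_R$; the remaining pieces contribute errors controlled as in the previous paragraph. The modified Poisson equation \eqref{e35} then gives
\[
\partial_t n^\varepsilon_R \;=\; \partial_t\phi^\varepsilon_R \,+\, \varepsilon\partial_t(\phi^{(1)}\phi^\varepsilon_R) \,+\, \tfrac{\varepsilon^2}{2}\partial_t(\phi^\varepsilon_R)^2 \,-\, \varepsilon\partial_t\overline{\Delta}\phi^\varepsilon_R \,+\, \varepsilon^2\partial_t\overline{R}_\phi.
\]
Using $2f\partial_t f = \partial_t(f^2)$ on the first three summands and an integration by parts in $x$ on the fourth (so that $\overline{\Delta}$ becomes an $\overline{\nabla}$-pairing weighted by $1/n$), the leading contributions collapse into exactly the three $-\tfrac{1}{2}\tfrac{d}{dt}$-integrals that appear on the LHS of \eqref{e93}, plus a residue coming from $\partial_t$ falling on $1/n$, $\phi^{(1)}$, and $\phi^\varepsilon_R/n$.

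The remaining error terms split into three families: (i) spatial commutators of $\partial_x^\alpha$ with products such as $\phi^{(1)}\phi^\varepsilon_R$, $(\phi^\varepsilon_R)^2$, $\tfrac{1}{n}\partial_t n^\varepsilon_R$, handled by standard Moser estimates leveraging $s'\ge 4$ and Sobolev embedding; (ii) coefficient time-derivatives such as $\partial_t(1/n)$ and $\partial_t(\phi^\varepsilon_R/n)$, which carry an extra $\varepsilon$ or $\varepsilon^2$ factor and whose dangerous $\partial_t\phi^\varepsilon_R$ and $\partial_t n^\varepsilon_R$ ingredients are controlled by Lemma~\ref{L2}, Lemma~\ref{L3}, and Corollaries~\ref{cor2}--\ref{rmk5}; and (iii) the $R_\phi$ and $\overline{R}_\phi$ contributions, absorbed via Lemma~\ref{lem-A}, which is where the $C_1$ factor in \eqref{e93} originates. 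The main obstacle is the $\varepsilon$-bookkeeping: the singular coefficients $1/\varepsilon$ and $1/\varepsilon^{3/2}$ in \eqref{re-2} are only tamed by the simultaneous use of \eqref{re-1}, \eqref{e35}, \eqref{assumption}, and Lemma~\ref{L1}, and the specific density $(1+\varepsilon\phi^{(1)}+\varepsilon^2\phi^\varepsilon_R)/n$ appearing in the energy is not ad hoc but is dictated by the quadratic term $\tfrac{\varepsilon^2}{2}(\phi^\varepsilon_R)^2$ in the Poisson equation \eqref{e35}.
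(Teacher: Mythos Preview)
Your overall architecture matches the paper's proof exactly: differentiate \eqref{re-2}, pair with $\partial_x^\alpha\textbf{u}^\varepsilon_R$, dispose of the antisymmetric and lower-order pieces, set aside the $T_i$-pressure principal part as $I_{71}$, and for the electrostatic term integrate by parts, substitute the continuity equation \eqref{re-1}, and then substitute the Poisson equation \eqref{e35} into the resulting $\partial_t n^\varepsilon_R$-piece to generate the weighted time-derivative terms in \eqref{e93}. Your treatment of that $\partial_t n^\varepsilon_R$-piece (the paper's $I_{62}$, handled in Lemma~\ref{L5}) is correct.

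There is, however, a genuine gap. When you substitute \eqref{re-1} you write ``retaining only the leading piece $-\tfrac{\varepsilon}{n}\partial_t n^\varepsilon_R$; the remaining pieces contribute errors controlled as in the previous paragraph.'' This is false for the convective contribution
\[
I_{61}\;=\;\Big\langle \partial_x^\alpha\phi^\varepsilon_R,\ \frac{V\textbf{e}_1-\textbf{u}}{\varepsilon n}\cdot\overline{\nabla}\partial_x^\alpha n^\varepsilon_R\Big\rangle_0,
\]
which carries a genuine $1/\varepsilon$ and one derivative too many on $n^\varepsilon_R$. The integration-by-parts trick from ``the previous paragraph'' worked for $\langle\partial_x^\alpha\textbf{u}^\varepsilon_R,\,\tfrac{V\textbf{e}_1-\textbf{u}}{\varepsilon}\cdot\overline{\nabla}\partial_x^\alpha\textbf{u}^\varepsilon_R\rangle_0$ only because the same function appears on both sides, turning the integrand into a total divergence; here you have $\phi^\varepsilon_R$ against $n^\varepsilon_R$, so that symmetry is lost and neither direct bounds nor Lemma~\ref{L1} norm estimates suffice. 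The paper devotes a separate lemma (Lemma~\ref{L4}) to $I_{61}$: one substitutes the \emph{pointwise} Poisson relation \eqref{equ39} for $\partial_x^\alpha n^\varepsilon_R$, which replaces $n^\varepsilon_R$ by $\phi^\varepsilon_R-\varepsilon\overline{\Delta}\phi^\varepsilon_R+\cdots$, restoring enough symmetry that each resulting piece can be integrated by parts and bounded by $\|\phi^\varepsilon_R\|_{H^{s'}}^2+\varepsilon\|\overline{\nabla}\phi^\varepsilon_R\|_{H^{s'}}^2$. In short, $I_{61}$ needs exactly the same Poisson-substitution mechanism you correctly applied to $I_{62}$; it is not a routine error term.
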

\begin{proof}
Let $\alpha$ be any multi-index with $|\alpha|=k$. We take $\partial_x^{\alpha}$ of \eqref{re-2} and then take $L^2$ inner product with $\partial_x^{\alpha}\textbf{u}^{\varepsilon}_R$, to obtain
\begin{equation}\label{e4}
\begin{split}
\frac12\frac{d}{dt}\|\partial_x^{\alpha}\textbf{u}^{\varepsilon}_R\|^2_{L^2}=& \frac{V}{\varepsilon}\langle\partial_{x_1}\partial_x^{\alpha} \textbf{u}^{\varepsilon}_R, \partial_x^{\alpha} \textbf{u}^{\varepsilon}_R\rangle_{0} -\frac{1}{\varepsilon}\langle \partial_x^{\alpha} (\textbf{u}\cdot\overline{\nabla}\textbf{u}^{\varepsilon}_R), \partial_x^{\alpha}\textbf{u}^{\varepsilon}_R\rangle_{0} -\langle\partial_x^{\alpha} (\textbf{u}^{\varepsilon}_R\cdot\overline{\nabla}\widetilde{\textbf{u}}), \partial_x^{\alpha}\textbf{u}^{\varepsilon}_R\rangle_{0}\\
&-\varepsilon\langle \partial_x^{\alpha}\textbf{R}_\textbf{u}, \partial_x^{\alpha}\textbf{u}^{\varepsilon}_R\rangle_{0} +\frac{b}{\varepsilon^{3/2}}\langle \partial_x^{\alpha}\textbf{u}^{\varepsilon}_R\times \textbf{e}_1, \partial_x^{\alpha}\textbf{u}^{\varepsilon}_R\rangle_{0} -\frac{1}{\varepsilon}\langle\partial_x^{\alpha}\overline{\nabla}\phi^{\varepsilon}_R, \partial_x^{\alpha}\textbf{u}^{\varepsilon}_R\rangle_{0}\\
&-T_i\langle\partial_x^{\alpha} (\frac{\overline{\nabla}n^{\varepsilon}_R}{\varepsilon n}), \partial_x^{\alpha}\textbf{u}^{\varepsilon}_R\rangle_{0} +T_i\langle\partial_x^{\alpha}(\frac{\textbf{p}}{n}n^{\varepsilon}_R), \partial_x^{\alpha}\textbf{u}^{\varepsilon}_R\rangle_{0} +T_i\langle\partial_x^{\alpha}(\frac{\varepsilon \textbf{R}_T}{n}),\partial_x^{\alpha}\textbf{u}^{\varepsilon}_R \rangle_{0}\\
=&:I_1+\cdots+I_9.
\end{split}
\end{equation}

\emph{Estimate of $I_1$.} By integrating by parts, we have $I_1=0.$

\emph{Estimate of $I_2$.} Using the commutator, we have
\begin{equation}\label{equ28}
\begin{split}
I_2=-\frac{1}{\varepsilon}\langle {\textbf{u}}\cdot\overline{\nabla} \partial_x^{\alpha}\textbf{u}^{\varepsilon}_R, \partial_x^{\alpha}\textbf{u}^{\varepsilon}_R\rangle_0 -\frac{1}{\varepsilon}\langle[\partial_x^{\alpha},{\textbf{u}}] \cdot\overline{\nabla}\textbf{u}^{\varepsilon}_R, \partial_x^{\alpha}\textbf{u}^{\varepsilon}_R\rangle_0=I_{21}+I_{22}.
\end{split}
\end{equation}
Since
\begin{equation*}
\begin{split}
-\langle {\textbf{u}}\cdot\overline{\nabla}\partial_x^{\alpha} \textbf{u}^{\varepsilon}_R, \partial_x^{\alpha}\textbf{u}^{\varepsilon}_R\rangle_0 =&\langle\overline{\nabla} \cdot{\textbf{u}}\textbf{u}^{\varepsilon}_R, \textbf{u}^{\varepsilon}_R\rangle_0 +\langle{\textbf{u}}\otimes\partial_x^{\alpha}\textbf{u}^{\varepsilon}_R, \overline{\nabla}\partial_x^{\alpha}\textbf{u}^{\varepsilon}_R\rangle_0\\
=&\langle\overline{\nabla}\cdot{\textbf{u}} \partial_x^{\alpha}\textbf{u}^{\varepsilon}_R, \partial_x^{\alpha}\textbf{u}^{\varepsilon}_R\rangle_0 +\langle{\textbf{u}}\cdot\overline{\nabla} \partial_x^{\alpha}\textbf{u}^{\varepsilon}_R, \partial_x^{\alpha}\textbf{u}^{\varepsilon}_R\rangle_0,
\end{split}
\end{equation*}
we have
\begin{equation*}
\begin{split}
-\langle {\textbf{u}}\cdot\overline{\nabla}\partial_x^{\alpha} \textbf{u}^{\varepsilon}_R, \partial_x^{\alpha}\textbf{u}^{\varepsilon}_R\rangle_0 =\frac12\langle\overline{\nabla}\cdot{\textbf{u}} \partial_x^{\alpha}\textbf{u}^{\varepsilon}_R, \partial_x^{\alpha}\textbf{u}^{\varepsilon}_R\rangle_0.
\end{split}
\end{equation*}
Recalling \eqref{e3}, we have by Sobolev embedding,
\begin{equation}\label{equ29}
\begin{split}
|I_{21}| \leq & C\|\overline{\nabla}(\widetilde{\textbf{u}} +\varepsilon\textbf{u}^{\varepsilon}_R)\|_{L^{\infty}} \|\partial_x^{\alpha}\textbf{u}^{\varepsilon}_R\|_{L^2}^2\\
\leq & C(1+\varepsilon\|\textbf{u}^{\varepsilon}_R\|_{H^3}) \|\partial_x^{\alpha}\textbf{u}^{\varepsilon}_R\|_{L^2}^2.
\end{split}
\end{equation}
On the other hand, recalling \eqref{e3}, by commutator estimates in \cite{KP88}, we have
\begin{equation*}
\begin{split}
\|[\partial_x^{\alpha},{\textbf{u}}] \cdot\overline{\nabla}\textbf{u}^{\varepsilon}_R\|_{L^2} \leq & C(\|\overline{\nabla}\textbf{u}^{\varepsilon}_R\|_{\dot H^{k-1}} \|{\nabla}\textbf{u}\|_{L^{\infty}} +\|\textbf{u}\|_{\dot H^{k}} \|\overline{\nabla}\textbf{u}^{\varepsilon}_R\|_{L^{\infty}})\\
\leq & C\varepsilon(\|\textbf{u}^{\varepsilon}_R\|_{\dot H^{k}}(1+\varepsilon\|{\nabla}\textbf{u}^{\varepsilon}_R\|_{L^{\infty}}) + \|\nabla\textbf{u}^{\varepsilon}_R\|_{L^{\infty}} (1+\varepsilon\|\textbf{u}^{\varepsilon}_R\|_{H^k}))\\
\leq & C\varepsilon(1+\varepsilon\|\textbf{u}^{\varepsilon}_R\|_{H^{s'}}) \|\textbf{u}^{\varepsilon}_R\|_{H^{s'}},
\end{split}
\end{equation*}
where $k\leq s'$ and $s'\geq 3$. For $I_{22}$ in \eqref{equ28}, we then have
\begin{equation}\label{e15}
\begin{split}
|I_{22}|\leq C(1+\|\textbf{u}^{\varepsilon}_R\|_{H^{s'}}) \|\textbf{u}^{\varepsilon}_R\|_{H^{s'}}^2.
\end{split}
\end{equation}
Combining \eqref{equ29} and \eqref{e15}, we have
\begin{equation}\label{e16}
\begin{split}
|I_{2}|\leq C(1+\|\textbf{u}^{\varepsilon}_R\|_{H^{s'}}) \|\textbf{u}^{\varepsilon}_R\|_{H^{s'}}^2.
\end{split}
\end{equation}

\emph{Estimate of $I_3$.} By multiplicative estimates in \cite{KP88} and \eqref{assump}, we have
\begin{equation*}
\begin{split}
\|\partial_x^{\alpha} (\textbf{u}^{\varepsilon}_R \cdot\overline{\nabla}\widetilde{\textbf{u}})\|_{L^2}\leq & C(\|\textbf{u}^{\varepsilon}_R\|_{\dot H^k} \|\overline{\nabla}\widetilde{\textbf{u}}\|_{L^{\infty}} +\|\textbf{u}^{\varepsilon}_R\|_{L^{\infty}} \|\overline{\nabla}\widetilde{\textbf{u}}\|_{\dot H^k})\\
\leq & C\|\textbf{u}^{\varepsilon}_R\|_{H^{s'}},
\end{split}
\end{equation*}
which follows that
\begin{equation*}
\begin{split}
|I_{3}| \leq  C\|\textbf{u}^{\varepsilon}_R\|_{H^{s'}}^2.
\end{split}
\end{equation*}

\emph{Estimate of $I_4$.} Similarly, from \eqref{assump}, we have
\begin{equation*}
\begin{split}
|I_{4}| \leq  C\|\textbf{u}^{\varepsilon}_R\|_{H^{k}}^2+C\varepsilon^2.
\end{split}
\end{equation*}

\emph{Estimate of $I_5$.} It is easy to see $I_5=0$.

\emph{Estimate of $I_7$.} Recalling $I_7$ in \eqref{e4}, we have
\begin{equation}\label{e14}
\begin{split}
I_7=& -T_i\langle\partial_x^{\alpha}(\frac{\overline{\nabla} n^{\varepsilon}_R}{\varepsilon n}), \partial_x^{\alpha}\textbf{u}^{\varepsilon}_R\rangle_0\\
=& -T_i\langle\frac{1}{\varepsilon n}\partial_x^{\alpha}\overline{\nabla}n^{\varepsilon}_R, \partial_x^{\alpha}\textbf{u}^{\varepsilon}_R\rangle_0 -T_i\langle[\partial^{\alpha},\frac{1}{\varepsilon n}]\overline{\nabla}n^{\varepsilon}_R, \partial_x^{\alpha}\textbf{u}^{\varepsilon}_R\rangle_0\\
=&:I_{71}+I_{72}.
\end{split}
\end{equation}
We note that when $k=0$, there is no such commutator term. From \eqref{assumption1}, we have
\begin{equation}\label{e12}
\begin{split}
\|\nabla(\frac{1}{\varepsilon n})\|_{L^{\infty}}\leq & C\|\nabla\widetilde n\|_{L^{\infty}}+C\varepsilon\|\nabla n^{\varepsilon}_R\|_{L^{\infty}}\\
\leq & C+C\varepsilon\|n^{\varepsilon}_R\|_{H^3},
\end{split}
\end{equation}
where we have used the expression \eqref{e3}. Using \eqref{e3}, \eqref{assumption} and \eqref{assumption1}, we have
\begin{equation}\label{e13}
\begin{split}
\|\frac{1}{\varepsilon n}\|_{\dot H^k}\leq C+\varepsilon C_1\|n^{\varepsilon}_R\|_{H^k},
\end{split}
\end{equation}
for $k\geq 1$, where $C_1=C_1(\varepsilon\tilde C)$ is some constant depending on $\varepsilon\tilde C$. Since $s'\geq 3$ and $k\leq s'$, we have
\begin{equation*}
\begin{split}
\|[\partial_x^{\alpha},\frac{1}{\varepsilon n}] \overline{\nabla}n^{\varepsilon}_R\|_{L^2} \leq & C(\|\overline{\nabla}n^{\varepsilon}_R\|_{\dot H^{k-1}} \|\nabla(\frac{1}{\varepsilon n})\|_{L^{\infty}} +\|\overline{\nabla}n^{\varepsilon}_R\|_{L^{\infty}} \|\frac{1}{\varepsilon n}\|_{\dot H^k})\\
\leq & C(1+\varepsilon C_1\|n^{\varepsilon}_R\|_{H^{s'}}) \|n^{\varepsilon}_R\|_{H^{s'}},
\end{split}
\end{equation*}
where we have used \eqref{e12} and \eqref{e13}. Therefore, we have
\begin{equation*}
\begin{split}
|I_{72}|\leq C(1+\varepsilon C_1\|n^{\varepsilon}_R\|_{H^{s'}}) (\|n^{\varepsilon}_R\|_{H^{s'}}^2 +\|\textbf{u}^{\varepsilon}_R\|_{H^{s'}}^2).
\end{split}
\end{equation*}

\emph{Estimate of $I_8$.} Recall
\begin{equation*}
\begin{split}
I_8=T_i\sum_{|\alpha|=k}\langle\partial_{x}^{\alpha} (\frac{\textbf{p}}{n}n^{\varepsilon}_R), \partial_{x}^{\alpha}\textbf{u}^{\varepsilon}_R\rangle_0.
\end{split}
\end{equation*}
From \eqref{assumption} and \eqref{assumption1}, we have
\begin{equation}\label{e5}
\begin{split}
\|\partial_{x}^{\alpha}(\frac{\textbf{p}}{n})\|\leq C\varepsilon +CC_1\varepsilon^2\|n^{\varepsilon}_R\|_{H^k},
\end{split}
\end{equation}
where $C_1=C_1(\varepsilon\tilde C)$ is a constant depending on $\varepsilon\tilde C$.
By multiplicative estimate, we then have
\begin{equation*}
\begin{split}
\|\partial_{x}^{\alpha}(\frac{\textbf{p}}{n}n^{\varepsilon}_R)\|\leq & C(\|\frac{\textbf{p}}{n}\|_{L^{\infty}}\|n^{\varepsilon}_R\|_{\dot H^k} +\|n^{\varepsilon}_R\|_{L^{\infty}}\|\frac{\textbf{p}}{n}\|_{\dot H^k})\\
\leq & C\|n^{\varepsilon}_R\|_{\dot H^k} +C\varepsilon\|n^{\varepsilon}_R\|_{H^{3}} +CC_1\varepsilon^2\|n^{\varepsilon}_R\|_{H^{3}}\|n^{\varepsilon}_R\|_{H^k}\\
\leq & C(1+\varepsilon C_1)\|n^{\varepsilon}_R\|_{H^k} +C\varepsilon\|n^{\varepsilon}_R\|_{H^{3}},
\end{split}
\end{equation*}
where we have used $\|n^{\varepsilon}_R\|_{H^{3}}\leq \tilde C$ by \eqref{assumption} and $s'\geq 3$. Therefore, from \eqref{e5}, we have
\begin{equation*}
\begin{split}
|I_8|\leq & C(1+\varepsilon C_1)\|n^{\varepsilon}_R\|_{H^k}\|\textbf{u}^{\varepsilon}_R\|_{H^k} +C\varepsilon\|n^{\varepsilon}_R\|_{H^3}\|\textbf{u}^{\varepsilon}_R\|_{H^k}\\
\leq & C(1+\varepsilon C_1)\|n^{\varepsilon}_R\|_{H^{s'}}^2+C\|\textbf{u}^{\varepsilon}_R\|_{H^k}^2,
\end{split}
\end{equation*}
where $C_1=C_1(\varepsilon\tilde C)$ is a constant depending on $\varepsilon\tilde C$.

\emph{Estimate of $I_9$.} The estimate of $I_9$ in \eqref{e4} is similar to $I_8$. Recalling \eqref{assump}, we have
\begin{equation*}
\begin{split}
|I_9|\leq C\varepsilon^2+CC_1\varepsilon^2\|n^{\varepsilon}_R\|_{H^k}^2 +\|\textbf{u}^{\varepsilon}_R\|_{H^k}^2.
\end{split}
\end{equation*}

Summarizing, we have
\begin{equation}\label{equ57}
\begin{split}
\sum_{i=1}^5|I_{i}|+|I_8|+|I_9|
\leq  CC_1(1+\varepsilon\|(\textbf{u}^{\varepsilon}_R, n^{\varepsilon}_R)\|_{H^{s'}}) (1+\|(\textbf{u}^{\varepsilon}_R,n^{\varepsilon}_R)\|_{H^{s'}}^2).
\end{split}
\end{equation}

\emph{Estimate of $I_6$.} The estimate for the $I_6$ is not straightforward, and is very delicate since we need to use the structure of the remainder system \eqref{rem} very carefully. By integration by parts, we have
\begin{equation*}
\begin{split}
I_{6}=\frac{1}{\varepsilon}\langle\partial_x^{\alpha}\phi^{\varepsilon}_R, \overline{\nabla}\cdot\partial_x^{\alpha}\textbf{u}^{\varepsilon}_R\rangle_{0},
\end{split}
\end{equation*}
where $|\alpha|=k$. Taking $\partial_x^{\alpha}$ of \eqref{re-1} with $|\alpha|=k$, we have
\begin{equation}\label{equ62}
\begin{split}
\frac{1}{\varepsilon}\overline{\nabla}\cdot\partial_x^{\alpha} \textbf{u}^{\varepsilon}_R=& \frac{1}{n}\{\frac{V\textbf{e}_1-\textbf{u}}{\varepsilon} \cdot\overline{\nabla} \partial_x^{\alpha}n^{\varepsilon}_R -\partial_t\partial_x^{\alpha}n^{\varepsilon}_R -[\partial_x^{\alpha},\frac{\textbf{u}}{\varepsilon}] \cdot\overline{\nabla}n^{\varepsilon}_R\\
&-[\partial_x^{\alpha}, \frac{n}{\varepsilon}]\overline{\nabla}\cdot\textbf{u}^{\varepsilon}_R -\partial_x^{\alpha}(n^{\varepsilon}_R\nabla\cdot\widetilde{\textbf{u}} +\textbf{u}^{\varepsilon}_R\cdot\nabla\widetilde{n}+\varepsilon R_n)\}.
\end{split}
\end{equation}
Accordingly, $I_6$ is divided into five parts
\begin{equation}\label{equ31}
\begin{split}
I_6=&\langle\partial_x^{\alpha}\phi^{\varepsilon}_R,\frac{V\textbf{e}_1-\textbf{u}}{\varepsilon n}\cdot\overline{\nabla}\partial_x^{\alpha}n^{\varepsilon}_R\rangle_{0} -\langle\partial_x^{\alpha}\phi^{\varepsilon}_R,\frac{1}{n}\partial_t\partial_x^{\alpha}n^{\varepsilon}_R \rangle_{0}\\
&-\langle\partial_x^{\alpha}\phi^{\varepsilon}_R, \frac{1}{\varepsilon n}[\partial_x^{\alpha},{\textbf{u}}] \cdot\overline{\nabla}n^{\varepsilon}_R\rangle_{0} -\langle\partial_x^{\alpha}\phi^{\varepsilon}_R,\frac{1}{\varepsilon n}[\partial_x^{\alpha},{n}]\overline{\nabla} \cdot\textbf{u}^{\varepsilon}_R\rangle_{0}\\
& -\langle\partial_x^{\alpha}\phi^{\varepsilon}_R, \frac{1}{n}\partial_x^{\alpha} (n^{\varepsilon}_R\nabla\cdot\widetilde{\textbf{u}} +\textbf{u}^{\varepsilon}_R\cdot\nabla\widetilde{n}+\varepsilon R_n)\rangle_{0}\\
=&:I_{61}+\cdots+I_{65}.
\end{split}
\end{equation}

We first estimate $I_{63}-I_{65}$.

On the other hand, recalling \eqref{e3}, by commutator estimates in \cite{KP88}, we have
\begin{equation}\label{e24}
\begin{split}
\|[\partial_x^{\alpha},{\textbf{u}}] \cdot\overline{\nabla}n^{\varepsilon}_R\|_{L^2} \leq & C(\|\overline{\nabla}n^{\varepsilon}_R\|_{\dot H^{k-1}} \|{\nabla}\textbf{u}\|_{L^{\infty}} +\|\textbf{u}\|_{\dot H^{k}} \|\overline{\nabla}n^{\varepsilon}_R\|_{L^{\infty}})\\
\leq & C\varepsilon(\|n^{\varepsilon}_R\|_{\dot H^{k}}(1+\varepsilon\|{\nabla}\textbf{u}^{\varepsilon}_R\|_{L^{\infty}}) + \|\nabla\textbf{u}^{\varepsilon}_R\|_{L^{\infty}} (1+\varepsilon\|n^{\varepsilon}_R\|_{H^k}))\\
\leq & C\varepsilon(1+\varepsilon\|\textbf{u}^{\varepsilon}_R\|_{H^{s'}}) (\|\textbf{u}^{\varepsilon}_R\|_{H^{s'}} +\|n^{\varepsilon}_R\|_{H^{s'}}),
\end{split}
\end{equation}
where $k\leq s'$ and $s'\geq 3$. Therefore, from \eqref{assumption1}, we have
\begin{equation}\label{e17}
\begin{split}
|I_{63}|\leq & C\|\partial_x^{\alpha}\phi^{\varepsilon}_R\|_{L^2} \|[\partial_x^{\alpha},{\textbf{u}}] \cdot\overline{\nabla}n^{\varepsilon}_R\|_{L^2}\\
\leq & C(1+\varepsilon^2\|\textbf{u}^{\varepsilon}_R\|_{H^{s'}}^2) \|(\textbf{u}^{\varepsilon}_R,n^{\varepsilon}_R)\|_{H^{s'}}^2 +C\|\phi^{\varepsilon}_R\|_{H^{s'}}^2.
\end{split}
\end{equation}
Similarly, we have
\begin{equation}\label{e27}
\begin{split}
\|[\partial_x^{\alpha},n] \overline{\nabla}\cdot\textbf{u}^{\varepsilon}_R\|_{L^2} \leq & C\varepsilon(1+\varepsilon\|\textbf{u}^{\varepsilon}_R\|_{H^{s'}}) (\|\textbf{u}^{\varepsilon}_R\|_{H^{s'}} +\|n^{\varepsilon}_R\|_{H^{s'}}),
\end{split}
\end{equation}
from which it follows that
\begin{equation}\label{e18}
\begin{split}
|I_{64}|\leq & C(1+\varepsilon^2\|\textbf{u}^{\varepsilon}_R\|_{H^{s'}}^2) \|(\textbf{u}^{\varepsilon}_R,n^{\varepsilon}_R)\|_{H^{s'}}^2 +C\|\phi^{\varepsilon}_R\|_{H^{s'}}^2.
\end{split}
\end{equation}
By multiplicative estimates, we have
\begin{equation}\label{e28}
\begin{split}
\|\partial_x^{\alpha} (n^{\varepsilon}_R\nabla\cdot\widetilde{\textbf{u}} +\textbf{u}^{\varepsilon}_R\cdot\nabla\widetilde{n}+\varepsilon R_n)\| \leq & C\|(\textbf{u}^{\varepsilon}_R,n^{\varepsilon}_R)\|_{H^{s'}} +C\varepsilon^2.
\end{split}
\end{equation}
It follows that
\begin{equation}\label{e19}
\begin{split}
|I_{65}|\leq & C\varepsilon(1+\varepsilon^2\|\textbf{u}^{\varepsilon}_R\|_{H^{s'}}^2) \|(\textbf{u}^{\varepsilon}_R,n^{\varepsilon}_R)\|_{H^{s'}}^2 +C\varepsilon\|\phi^{\varepsilon}_R\|_{H^{s'}}^2.
\end{split}
\end{equation}
Summarizing \eqref{e17}, \eqref{e18} and \eqref{e19}, we have
\begin{equation*}
\begin{split}
|I_{63}+I_{64}+I_{65}|\leq & C\|\phi^{\varepsilon}_R\|_{H^{s'}}^2 +C(1+\varepsilon^2\|\textbf{u}^{\varepsilon}_R\|_{H^{s'}}^2) \|(n^{\varepsilon}_R,\textbf{u}^{\varepsilon}_R)\|_{H^{s'}}^2.
\end{split}
\end{equation*}

For clarity, we estimate $I_{61}$ and $I_{62}$ in Lemma \ref{L4} and \ref{L5} respectively.

We complete the proof of Lemma \ref{L8} by the following Lemma \ref{L4} and \ref{L5}.
\end{proof}

\begin{lemma}\label{L4}
Let $(n^{\varepsilon}_R,\textbf{u}^{\varepsilon}_R,\phi^{\varepsilon}_R)$ be a solution to \eqref{re} and $0\leq k\leq s'$ be an integer, then 
\begin{equation}\label{e84}
\begin{split}
\left|I_{61}\right|\leq C(1+C_1(\sqrt{\varepsilon}\|\phi^{\varepsilon}_R\|_{H^{s'}}) +\varepsilon^2\|(n^{\varepsilon}_R,\textbf{u}^{\varepsilon}_R)\|_{H^{4}}^2) (1+\|\phi^{\varepsilon}_R\|_{H^{s'}}^2 +\varepsilon\|\overline{\nabla}\phi^{\varepsilon}_R\|_{H^{s'}}^2),
\end{split}
\end{equation}
where $I_{61}$ is given in \eqref{equ31} and $|\alpha|=k\leq s'$.
\end{lemma}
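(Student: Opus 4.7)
\textbf{Plan for Lemma \ref{L4}.} The main difficulty is the prefactor $\frac{V\textbf{e}_1-\textbf{u}}{\varepsilon n}$, which is singular of order $1/\varepsilon$. The bound we must achieve is remarkable in that the right-hand side contains only $\|\phi^{\varepsilon}_R\|_{H^{s'}}^2$ and $\varepsilon\|\overline{\nabla}\phi^{\varepsilon}_R\|_{H^{s'}}^2$, but \emph{not} the highest-order piece $\varepsilon^2\|\overline{\Delta}\phi^{\varepsilon}_R\|_{H^{s'}}^2$ of the triple norm; consequently, applying Lemma \ref{L1} to trade $\|n^{\varepsilon}_R\|_{H^{s'}}^2$ for $|\!|\!|\phi^{\varepsilon}_R|\!|\!|_{s'}^2$ is not affordable on the principal term. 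I will instead substitute $n^{\varepsilon}_R$ directly by the Poisson equation \eqref{re-3}, namely
\begin{equation*}
n^{\varepsilon}_R=\phi^{\varepsilon}_R-\varepsilon\overline{\Delta}\phi^{\varepsilon}_R +\varepsilon\phi^{(1)}\phi^{\varepsilon}_R+\varepsilon^{3/2}R_{\phi},
\end{equation*}
and simultaneously split the coefficient as $\frac{V\textbf{e}_1-\textbf{u}}{\varepsilon n}=\frac{V}{\varepsilon n}\textbf{e}_1-\frac{\widetilde{\textbf{u}}+\varepsilon\textbf{u}^{\varepsilon}_R}{n}$, noting the second piece is $O(1)$ thanks to \eqref{e3}.

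\textbf{Main singular terms.} After substitution, the two most dangerous contributions from the $V/\varepsilon$ part are
\begin{equation*}
\mathcal{A}_1:=\frac{V}{\varepsilon}\!\left\langle\partial_x^{\alpha}\phi^{\varepsilon}_R,\tfrac{1}{n}\partial_{x_1}\partial_x^{\alpha}\phi^{\varepsilon}_R\right\rangle_0,\qquad
\mathcal{A}_2:=-V\!\left\langle\partial_x^{\alpha}\phi^{\varepsilon}_R,\tfrac{1}{n}\partial_{x_1}\partial_x^{\alpha}\overline{\Delta}\phi^{\varepsilon}_R\right\rangle_0.
\end{equation*}
For $\mathcal{A}_1$, I integrate by parts in $x_1$, obtaining $\mathcal{A}_1=-\tfrac{V}{2}\int\partial_{x_1}\!\big(\tfrac{1}{\varepsilon n}\big)|\partial_x^{\alpha}\phi^{\varepsilon}_R|^2\,dx$; the key observation is that since $n=1+\varepsilon\widetilde n+\varepsilon^2 n^{\varepsilon}_R$ by \eqref{e3}, one has $\partial_{x_1}n=O(\varepsilon)$, so $\partial_{x_1}(\tfrac{1}{\varepsilon n})$ is bounded uniformly in $\varepsilon$, yielding control by $C(1+\varepsilon^2\|n^{\varepsilon}_R\|_{H^4}^2)\|\phi^{\varepsilon}_R\|_{H^{s'}}^2$. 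For $\mathcal{A}_2$, write $\overline{\Delta}=\overline{\nabla}\cdot\overline{\nabla}$ and integrate by parts to move one $\overline{\nabla}$ onto the left factor:
\begin{equation*}
\mathcal{A}_2=V\!\left\langle\overline{\nabla}\!\left(\tfrac{\partial_x^{\alpha}\phi^{\varepsilon}_R}{n}\right),\partial_{x_1}\partial_x^{\alpha}\overline{\nabla}\phi^{\varepsilon}_R\right\rangle_0
=V\!\left\langle\tfrac{1}{n}\overline{\nabla}\partial_x^{\alpha}\phi^{\varepsilon}_R,\partial_{x_1}\partial_x^{\alpha}\overline{\nabla}\phi^{\varepsilon}_R\right\rangle_0+(\text{comm.}).
\end{equation*}
A further integration by parts in $x_1$ converts the main term into $-\tfrac{V}{2}\int\partial_{x_1}(\tfrac{1}{n})|\overline{\nabla}\partial_x^{\alpha}\phi^{\varepsilon}_R|^2dx$, which is now $O(\varepsilon)\|\overline{\nabla}\phi^{\varepsilon}_R\|_{H^{s'}}^2$ because $\partial_{x_1}(\tfrac{1}{n})=O(\varepsilon)$; this is precisely the term $\varepsilon\|\overline{\nabla}\phi^{\varepsilon}_R\|_{H^{s'}}^2$ in \eqref{e84}. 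The commutator involves $\overline{\nabla}(1/n)=O(\varepsilon)$ times $\partial_x^{\alpha}\phi^{\varepsilon}_R$ and a third-order quantity, which after one more integration by parts (in $\partial_{x_1}$ or one component of $\overline{\nabla}$) is of the same size.

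\textbf{Lower-order and regular pieces.} The Poisson substitution produces three further terms: $\varepsilon\overline{\nabla}\partial_x^{\alpha}(\phi^{(1)}\phi^{\varepsilon}_R)$ and $\varepsilon^{3/2}\overline{\nabla}\partial_x^{\alpha}R_{\phi}$, both paired with $\tfrac{V}{\varepsilon n}\textbf{e}_1$ (so the $\varepsilon^{-1}$ is offset by $\varepsilon$), and the regular piece coming from $-\tfrac{\widetilde{\textbf{u}}+\varepsilon\textbf{u}^{\varepsilon}_R}{n}\cdot\overline{\nabla}\partial_x^{\alpha}n^{\varepsilon}_R$, where the coefficient is $O(1)$. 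For the lower-order pieces I apply Kato--Ponce multiplicative inequalities together with Lemma \ref{lem-A} to bound $R_{\phi}$ by $C_1(1+\|\phi^{\varepsilon}_R\|_{H^{s'}})$, absorbing the extra derivative by integration by parts if needed. For the regular piece, I substitute Poisson again and integrate by parts analogously; since the coefficient is no longer singular, even the $\overline{\Delta}\phi^{\varepsilon}_R$ contribution produces an overall factor $\varepsilon$, which multiplied against $|\overline{\nabla}\partial_x^{\alpha}\phi^{\varepsilon}_R|^2$ gives the $\varepsilon\|\overline{\nabla}\phi^{\varepsilon}_R\|_{H^{s'}}^2$ contribution. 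Throughout I use \eqref{assumption1} to bound $1/n$ in $L^{\infty}$ and its Sobolev seminorms by constants of the form $1+\varepsilon C_1\|n^{\varepsilon}_R\|_{H^{s'}}$, matching the form of the prefactor in \eqref{e84}.

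\textbf{Main obstacle.} The key technical subtlety is the $\mathcal{A}_2$ computation: the third-order derivative $\partial_{x_1}\partial_x^{\alpha}\overline{\Delta}\phi^{\varepsilon}_R$ cannot be controlled by $\varepsilon\|\overline{\nabla}\phi^{\varepsilon}_R\|_{H^{s'}}^2$ by brute force, and one must arrange the integrations by parts so that the cubic derivative is redistributed into two quadratic $\overline{\nabla}\partial_x^{\alpha}\phi^{\varepsilon}_R$ factors, with all coefficient factors like $\partial_{x_1}(1/n)$ gaining the crucial $\varepsilon$-smallness from \eqref{e3}. In the 2D case one must additionally track the $\sqrt{\varepsilon}$ hidden in $\overline{\nabla}=(\partial_{x_1},\sqrt{\varepsilon}\partial_{x_2})^T$ through each integration by parts, which is precisely the reason the definition \eqref{e1} has been adopted.
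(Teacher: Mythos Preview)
Your proposal is correct and follows essentially the same route as the paper: substitute the Poisson relation \eqref{re-3} for $\partial_x^{\alpha}n^{\varepsilon}_R$, then integrate by parts (once for the $\phi^{\varepsilon}_R$ and $\phi^{(1)}\phi^{\varepsilon}_R$ pieces, twice for the $\varepsilon\overline{\Delta}\phi^{\varepsilon}_R$ piece) so that the derivatives land on the transport coefficient, whose $\overline{\nabla}$ gains an $\varepsilon$ from \eqref{e3}. The only organizational difference is that the paper keeps the full coefficient $\frac{V\textbf{e}_1-\textbf{u}}{\varepsilon n}$ intact throughout---using directly that $\overline{\nabla}\cdot\big(\frac{V\textbf{e}_1-\textbf{u}}{\varepsilon n}\big)=O(1)$ and $\overline{\nabla}\,\overline{\nabla}\cdot\big(\frac{V\textbf{e}_1-\textbf{u}}{\varepsilon n}\big)=O(1)$---whereas you first split off the $V/(\varepsilon n)\,\textbf{e}_1$ part from the $O(1)$ part $\textbf{u}/(\varepsilon n)$; the resulting estimates are identical.
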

\begin{proof}
Let $\alpha$ be a multi-index such that $|\alpha|=k$. Taking $\partial_x^{\alpha}$ of \eqref{re-3}, we have
\begin{equation}\label{equ39}
\begin{split}
\partial_x^{\alpha}n^{\varepsilon}_R =\partial_x^{\alpha}\phi^{\varepsilon}_R -\varepsilon\overline{\Delta}\partial_x^{\alpha}\phi^{\varepsilon}_R +\varepsilon\partial_x^{\alpha}(\phi^{(1)}\phi^{\varepsilon}_R) +\varepsilon^{3/2}\partial_x^{\alpha}R_{\phi}.
\end{split}
\end{equation}
Then $I_{61}$ in \eqref{equ31} is divided into
\begin{equation*}\label{e85}
\begin{split}
I_{61}=&\langle\partial_x^{\alpha}\phi^{\varepsilon}_R, \frac{V\textbf{e}_1-\textbf{u}}{\varepsilon n}\cdot\overline{\nabla}\partial_x^{\alpha}\phi^{\varepsilon}_R\rangle_0 -\varepsilon\langle\partial_x^{\alpha}\phi^{\varepsilon}_R, \frac{V\textbf{e}_1-\textbf{u}}{\varepsilon n}\cdot\overline{\nabla}\overline{\Delta}\partial_x^{\alpha}\phi^{\varepsilon}_R\rangle_0\\
&+\varepsilon\langle\partial_x^{\alpha}\phi^{\varepsilon}_R, \frac{V\textbf{e}_1-\textbf{u}}{\varepsilon n}\cdot\overline{\nabla} \partial_x^{\alpha}(\phi^{(1)}\phi^{\varepsilon}_R)\rangle_0 +\varepsilon^{3/2}\langle\partial_x^{\alpha}\phi^{\varepsilon}_R, \frac{V\textbf{e}_1-\textbf{u}}{\varepsilon n}\cdot\overline{\nabla}\partial_x^{\alpha}R_{\phi}\rangle_0\\
=&:I_{611}+I_{612}+I_{613}+I_{614}.
\end{split}
\end{equation*}

\emph{Estimate of $I_{611}$.} By integrating by parts, we have
\begin{equation}\label{equ32}
\begin{split}
I_{611}=&-\frac12\langle\partial_x^{\alpha}\phi^{\varepsilon}_R, \overline{\nabla}\cdot(\frac{V\textbf{e}_1-\textbf{u}}{\varepsilon n})\partial_x^{\alpha}\phi^{\varepsilon}_R\rangle_0.
\end{split}
\end{equation}
By direct computation, we have
\begin{equation*}
\begin{split}
\overline{\nabla}\cdot(\frac{V\textbf{e}_1-\textbf{u}}{\varepsilon n})=&-\frac{(V\textbf{e}_1-\textbf{u})}{n^2} \cdot(\overline{\nabla}\widetilde n+\varepsilon\overline{\nabla}n^{\varepsilon}_R) -\frac{1}{n}(\overline{\nabla}\cdot\widetilde{\textbf{u}} +\varepsilon\overline{\nabla}\cdot\textbf{u}^{\varepsilon}_R).
\end{split}
\end{equation*}
By assumption \eqref{assumption}, \eqref{assumption1} and Sobolev embedding, we know that
\begin{equation}\label{equ33}
\begin{split}
\|\overline{\nabla}\cdot(\frac{V\textbf{e}_1-\textbf{u}}{\varepsilon n})\|_{L^{\infty}}\leq & C(1+\varepsilon\|\overline{\nabla} n^{\varepsilon}_R\|_{L^{\infty}} +\varepsilon\|\overline{\nabla} \textbf{u}^{\varepsilon}_R\|_{L^{\infty}})\\
\leq & C(1+\varepsilon\|(n^{\varepsilon}_R, \textbf{u}^{\varepsilon}_R)\|_{H^3}).
\end{split}
\end{equation}
Using H\"older inequality in \eqref{equ32}, we obtain
\begin{equation}\label{e82}
\begin{split}
|I_{611}|\leq C(1+\varepsilon^2\|(n^{\varepsilon}_R, \textbf{u}^{\varepsilon}_R)\|_{H^{3}}^2) \|\partial_x^{\alpha}\phi^{\varepsilon}_R\|^2.
\end{split}
\end{equation}

\emph{Estimate of $I_{612}$.} By integration by parts twice, we have
\begin{equation}\label{equ34}
\begin{split}
I_{612}=& \varepsilon\langle\overline{\nabla}\partial_x^{\alpha}\phi^{\varepsilon}_R, \frac{V\textbf{e}_1-\textbf{u}}{\varepsilon n}\overline{\Delta}\partial_x^{\alpha}\phi^{\varepsilon}_R\rangle_0 +\varepsilon\langle\partial_x^{\alpha}\phi^{\varepsilon}_R, \overline{\nabla}\cdot(\frac{V\textbf{e}_1-\textbf{u}}{\varepsilon n})\overline{\Delta}\partial_x^{\alpha}\phi^{\varepsilon}_R\rangle_0\\
=&-\varepsilon\langle\overline{\nabla} \partial_x^{\alpha}\phi^{\varepsilon}_R, \overline{\nabla}(\frac{V\textbf{e}_1-\textbf{u}}{\varepsilon n})\overline{\nabla}\partial_x^{\alpha}\phi^{\varepsilon}_R\rangle_0 -\frac{\varepsilon}{2}\langle\overline{\nabla} \partial_x^{\alpha}\phi^{\varepsilon}_R, \overline{\nabla}\cdot(\frac{V\textbf{e}_1-\textbf{u}}{\varepsilon n})\overline{\nabla}\partial_x^{\alpha}\phi^{\varepsilon}_R\rangle_0\\
&-\varepsilon\langle\partial_x^{\alpha}\phi^{\varepsilon}_R, \overline{\nabla}\overline{\nabla}\cdot(\frac{V\textbf{e}_1-\textbf{u}}{\varepsilon n})\overline{\nabla}\partial_x^{\alpha}\phi^{\varepsilon}_R\rangle_0\\
=&:I_{6121}+I_{6122}+I_{6123}.
\end{split}
\end{equation}
From \eqref{equ33}, the first two terms on the RHS can be bounded by
\begin{equation}\label{equ35}
\begin{split}
I_{6121},I_{6122}\leq & C\varepsilon(1+\varepsilon^2\|(n^{\varepsilon}_R, \textbf{u}^{\varepsilon}_R)\|_{H^{3}}^2) \|\overline{\nabla}\partial_x^{\alpha}\phi^{\varepsilon}_R\|^2.
\end{split}
\end{equation}
Similar to \eqref{equ33}, we have
\begin{equation}\label{equ69}
\begin{split}
\left|\overline{\nabla}\overline{\nabla}\cdot(\frac{V\textbf{e}_1-\textbf{u}}{\varepsilon n})\right|\leq & C(1+\varepsilon|\overline{\nabla}^2n^{\varepsilon}_R| +\varepsilon|\overline{\nabla}\overline{\nabla}\cdot\textbf{u}^{\varepsilon}_R| +\varepsilon^2|\overline{\nabla} n^{\varepsilon}_R|\\
&+\varepsilon^2|\overline{\nabla}\cdot\textbf{u}^{\varepsilon}_R| +\varepsilon^3|\overline{\nabla} n^{\varepsilon}_R|^2 +\varepsilon^3|\overline{\nabla}\cdot\textbf{u}^{\varepsilon}_R|^2),
\end{split}
\end{equation}
where we have used \eqref{assump}. Therefore, we have
\begin{equation}\label{equ36}
\begin{split}
|I_{6123}|\leq & C\varepsilon\|\partial_x^{\alpha}\phi^{\varepsilon}_R\| \|\overline{\nabla}\partial_x^{\alpha}\phi^{\varepsilon}_R\| +C\varepsilon^2\|\partial_x^{\alpha}\phi^{\varepsilon}_R\|_{L^2} (\|\overline{\nabla}^{2}n^{\varepsilon}_R\|_{L^{\infty}} +\|\overline{\nabla}^{2}\textbf{u}^{\varepsilon}_R\|_{L^{\infty}}) \|\overline{\nabla}\partial_x^{\alpha}\phi^{\varepsilon}_R\|\\
&+C\varepsilon\|\partial_x^{\alpha}\phi^{\varepsilon}_R\|_{L^2} (1+\varepsilon^3\|\overline{\nabla}n^{\varepsilon}_R\|_{L^{\infty}}^2 +\varepsilon^3\|\overline{\nabla} \cdot\textbf{u}^{\varepsilon}_R\|_{L^{\infty}}^2) \|\overline{\nabla}\partial_x^{\alpha}\phi^{\varepsilon}_R\|\\
\leq & C\{\|\partial_x^{\alpha}\phi^{\varepsilon}_R\|^2 +\varepsilon \|\overline{\nabla}\partial_x^{\alpha}\phi^{\varepsilon}_R\|^2\}\\
&+C\{\|\partial_x^{\alpha}\phi^{\varepsilon}_R\|_{L^2}^2 +\varepsilon^2\|(n^{\varepsilon}_R,\textbf{u}^{\varepsilon}_R)\|_{H^4}^2 (\varepsilon\|\overline{\nabla} \partial_x^{\alpha}\phi^{\varepsilon}_R\|_{L^2}^2)\}\\
& +C\{(1+\varepsilon^2\|(n^{\varepsilon}_R, \textbf{u}^{\varepsilon}_R)\|_{H^{3}}^2) (\|\partial_x^{\alpha}\phi^{\varepsilon}_R\|_{L^2}^2 +\varepsilon\|\overline{\nabla} \partial_x^{\alpha}\phi^{\varepsilon}_R\|^2)\}\\
\leq & C(1+\varepsilon^2\|(n^{\varepsilon}_R, \textbf{u}^{\varepsilon}_R)\|_{H^4}^2) (\|\partial_x^{\alpha}\phi^{\varepsilon}_R\|^2 +\varepsilon\|\overline{\nabla}\partial_x^{\alpha}\phi^{\varepsilon}_R\|^2).
\end{split}
\end{equation}
Therefore, combining \eqref{equ35}, \eqref{equ36} and \eqref{equ34}, we obtain
\begin{equation}\label{equ70}
\begin{split}
|I_{612}|\leq & C(1+\varepsilon^2\|(n^{\varepsilon}_R, \textbf{u}^{\varepsilon}_R)\|_{H^4}^2) (\|\partial_x^{\alpha}\phi^{\varepsilon}_R\|^2 +\varepsilon\|\overline{\nabla}\partial_x^{\alpha}\phi^{\varepsilon}_R\|^2).
\end{split}
\end{equation}

\emph{Estimate of $I_{613}$.} By integrating by parts, we have
\begin{equation}\label{equ37}
\begin{split}
I_{613}=&\langle\partial_x^{\alpha}\phi^{\varepsilon}_R, (\frac{(V\textbf{e}_1-\textbf{u})\phi^{(1)}}{n}) \cdot\overline{\nabla}\partial_x^{\alpha}\phi^{\varepsilon}_R\rangle_0 +\langle\partial_x^{\alpha}\phi^{\varepsilon}_R, \frac{(V\textbf{e}_1-\textbf{u})}{n}[\overline{\nabla}\partial_x^{\alpha}, \phi^{(1)}]\phi^{\varepsilon}_R\rangle_0\\
=&:I_{6131}+I_{6132}.
\end{split}
\end{equation}
By integrating by parts, we know
\begin{equation*}
\begin{split}
I_{6131}=&-\frac12\langle\partial_x^{\alpha}\phi^{\varepsilon}_R, \overline{\nabla}\cdot(\frac{(V\textbf{e}_1-\textbf{u})\phi^{(1)}}{n}) \partial_x^{\alpha}\phi^{\varepsilon}_R\rangle_0.
\end{split}
\end{equation*}
Similar to \eqref{equ33}, from \eqref{assumption}, \eqref{assumption1} and Sobolev embedding, we know that
\begin{equation}\label{equ38}
\begin{split}
\|\overline{\nabla}\cdot(\frac{(V\textbf{e}_1-\textbf{u})\phi^{(1)}}{n}) \|_{L^{\infty}}\leq C(1+\varepsilon\|(n^{\varepsilon}_R, \textbf{u}^{\varepsilon}_R\|_{H^3}).
\end{split}
\end{equation}
Hence, using H\"older inequality, we obtain
\begin{equation*}
\begin{split}
|I_{6131}| \leq & C(1+\varepsilon^2\|(n^{\varepsilon}_R, \textbf{u}^{\varepsilon}_R)\|_{H^{3}}^2) \|\partial_x^{\alpha}\phi^{\varepsilon}_R\|^2.
\end{split}
\end{equation*}
On the other hand, by commutator estimate, we know
\begin{equation}\label{e31}
\begin{split}
\|[\overline{\nabla}\partial_x^{\alpha}, \phi^{(1)}]\phi^{\varepsilon}_R\|_{L^2}\leq & C(\|\phi^{\varepsilon}_R\|_{H^k}\|\phi^{(1)}\|_{L^{\infty}} +\|\phi^{(1)}\|_{H^{k}}\|\phi^{\varepsilon}_R\|_{L^{\infty}}\\
\leq & C\|\phi^{\varepsilon}_R\|_{H^{s'}}.
\end{split}
\end{equation}
It follows that
\begin{equation*}
\begin{split}
|I_{6132}| \leq & C\|\phi^{\varepsilon}_R\|_{H^{s'}}^2.
\end{split}
\end{equation*}
Therefore, we have
\begin{equation}\label{equ72}
\begin{split}
|I_{613}| \leq & C(1+\varepsilon^2\|n^{\varepsilon}_R\|_{H^3}^2 +\varepsilon^2\|\textbf{u}^{\varepsilon}_R\|_{H^{3}}^2) \|\phi^{\varepsilon}_R\|_{H^k}^2.
\end{split}
\end{equation}

\emph{Estimate of $I_{614}$.} By H\"older inequality and \eqref{e81-1} in Corollary \ref{cor-a}, we have
\begin{equation}\label{e83}
\begin{split}
|I_{614}|\leq & \varepsilon^{1/2}\langle\partial_x^{\alpha}\phi^{\varepsilon}_R, \frac{V\textbf{e}_1-\textbf{u}}{n}\cdot\overline{\nabla} \partial_x^{\alpha}R_{\phi}\rangle_0\\
\leq & C\|\partial_x^{\alpha}\phi^{\varepsilon}_R\|^2 + C_1(\sqrt{\varepsilon}\|\phi^{\varepsilon}_R\|_{H^{s'}}) (1+\varepsilon\|\overline{\nabla}\phi^{\varepsilon}_R\|_{H^{s'}}^2),
\end{split}
\end{equation}
where $|\alpha|=k\leq s'$. From \eqref{e85}, adding \eqref{e82}, \eqref{equ70}, \eqref{equ72} and \eqref{e83} together, we obtain \eqref{e84}.
\end{proof}

\begin{lemma}\label{L5}
Let $(n^{\varepsilon}_R,\textbf{u}^{\varepsilon}_R,\phi^{\varepsilon}_R)$ be a solution to \eqref{rem} and $0\leq k\leq s'$, then
\begin{equation}\label{equ52}
\begin{split}
I_{62}
\leq & -\frac12\frac{d}{dt}\int\frac{1+\varepsilon\phi^{(1)} +\varepsilon^2\phi^{\varepsilon}_R}{n} |\partial_x^{\alpha}\phi^{\varepsilon}_R|^2 -\frac12\frac{d}{dt}\int\frac{\varepsilon}{n} |\overline{\nabla}\partial_x^{\alpha}\phi^{\varepsilon}_R|^2\\
&+ CC_1(C_1+\varepsilon\|(\textbf{u}^{\varepsilon}_R, n^{\varepsilon}_R, \phi^{\varepsilon}_R)\|_{H^{s'}}^2)\{1 +\|(\textbf{u}^{\varepsilon}_R, n^{\varepsilon}_R, \phi^{\varepsilon}_R)\|_{H^{s'}}^2\}.
\end{split}
\end{equation}
where $|\alpha|=k\leq s'$ and $I_{62}$ is given in \eqref{equ31}.
\end{lemma}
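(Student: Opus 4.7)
The plan is to turn the time derivative on $n^\varepsilon_R$ inside $I_{62}$ into one on $\phi^\varepsilon_R$ by means of the alternative Poisson relation \eqref{e35}, and then to recognize the resulting inner products as total time derivatives of the weighted quadratic forms that appear on the right-hand side of \eqref{equ52}, with all residual errors absorbed into the claimed upper bound.

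First, I will differentiate \eqref{e35} once in $t$, apply $\partial_x^\alpha$, and substitute the resulting identity
$$\partial_t\partial_x^\alpha n^\varepsilon_R=\partial_t\partial_x^\alpha\phi^\varepsilon_R-\varepsilon\overline{\Delta}\partial_t\partial_x^\alpha\phi^\varepsilon_R+\varepsilon\partial_t\partial_x^\alpha(\phi^{(1)}\phi^\varepsilon_R)+\tfrac{\varepsilon^2}{2}\partial_t\partial_x^\alpha(\phi^\varepsilon_R)^2+\varepsilon^2\partial_t\partial_x^\alpha\overline{R}_\phi$$
into $I_{62}=-\langle\partial_x^\alpha\phi^\varepsilon_R,\tfrac{1}{n}\partial_t\partial_x^\alpha n^\varepsilon_R\rangle_0$. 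The first summand yields $-\tfrac12\tfrac{d}{dt}\int\tfrac{1}{n}|\partial_x^\alpha\phi^\varepsilon_R|^2$ after the symmetrization $\phi^\varepsilon_R\partial_t\phi^\varepsilon_R=\tfrac12\partial_t(\phi^\varepsilon_R)^2$. The second, after one integration by parts in $\overline{\nabla}$, contributes $-\tfrac{\varepsilon}{2}\tfrac{d}{dt}\int\tfrac{1}{n}|\overline{\nabla}\partial_x^\alpha\phi^\varepsilon_R|^2$ as its principal piece. The leading-order parts of the third and fourth summands similarly give $-\tfrac{\varepsilon}{2}\tfrac{d}{dt}\int\tfrac{\phi^{(1)}}{n}|\partial_x^\alpha\phi^\varepsilon_R|^2$ and $-\tfrac{\varepsilon^2}{2}\tfrac{d}{dt}\int\tfrac{\phi^\varepsilon_R}{n}|\partial_x^\alpha\phi^\varepsilon_R|^2$, and together with the first these reassemble into the desired weight $(1+\varepsilon\phi^{(1)}+\varepsilon^2\phi^\varepsilon_R)/n$.

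The residuals I must then estimate are $\partial_t(1/n)$, $\overline{\nabla}(1/n)$, commutators of $\partial_x^\alpha$ with $\phi^{(1)}$, $\phi^\varepsilon_R$, and $1/n$, together with the $\varepsilon^2\partial_t\partial_x^\alpha\overline{R}_\phi$ piece. For these I will use the Kato--Ponce commutator and multiplicative estimates \cite{KP88}, the profile bound \eqref{assump} and the uniform positivity $1/2<n<3/2$ from \eqref{assumption1}; Lemma \ref{L2} and \eqref{e3} to bound the $\varepsilon\partial_t n$ factor in $\partial_t(1/n)$; Corollary \ref{cor2} to control $|\!|\!|\varepsilon\partial_t\phi^\varepsilon_R|\!|\!|_{k-1}$; Lemma \ref{L1} to trade $\|n^\varepsilon_R\|_{H^{s'}}$ for the triple norm of $\phi^\varepsilon_R$ whenever needed; and \eqref{e52-2} of Lemma \ref{lem-A} for $\partial_t\overline{R}_\phi$. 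Each term produced this way carries an explicit factor of $\varepsilon$ or of the nondecreasing constant $C_1$, matching the structure on the right of \eqref{equ52}.

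The main obstacle is the boundary term produced by the integration by parts on the Laplacian piece, namely $-\varepsilon\langle\partial_x^\alpha\phi^\varepsilon_R,\overline{\nabla}(1/n)\cdot\partial_t\overline{\nabla}\partial_x^\alpha\phi^\varepsilon_R\rangle_0$: a naive estimate would require uniform control of $\varepsilon\partial_t\overline{\nabla}\partial_x^\alpha\phi^\varepsilon_R$ in $L^2$, which the triple norm does not provide directly. This is precisely where Corollary \ref{rmk5} comes in, applied with $f=\varepsilon^2\partial_t\phi^\varepsilon_R$: it converts the problematic $\overline{\nabla}$ factor into an $\overline{\Delta}$ at the cost of a half power of $\varepsilon$, producing an expression that lives inside $|\!|\!|\varepsilon\partial_t\phi^\varepsilon_R|\!|\!|_{k-1}$ and is therefore controlled by Corollary \ref{cor2}. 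Once this is done, every residual is absorbed into $CC_1(C_1+\varepsilon\|(\textbf{u}^\varepsilon_R,n^\varepsilon_R,\phi^\varepsilon_R)\|_{H^{s'}}^2)(1+\|(\textbf{u}^\varepsilon_R,n^\varepsilon_R,\phi^\varepsilon_R)\|_{H^{s'}}^2)$, yielding \eqref{equ52}.
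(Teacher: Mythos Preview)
Your plan is correct and follows essentially the same route as the paper: substitute the alternative Poisson relation \eqref{e35} into $I_{62}$, split into the five pieces $I_{621},\dots,I_{625}$, extract the total time derivatives of the three weighted quadratic forms via symmetrization and integration by parts, and control the residuals with Lemmas~\ref{L1}, \ref{L2}, \ref{L3}, \ref{lem-A} and Corollaries~\ref{cor2}, \ref{rmk5}. In particular, you have correctly identified the one delicate step---the cross term $-\varepsilon\langle\partial_x^\alpha\phi^\varepsilon_R,\overline{\nabla}(1/n)\cdot\partial_t\overline{\nabla}\partial_x^\alpha\phi^\varepsilon_R\rangle_0$---and its resolution via Corollary~\ref{rmk5}, which is exactly how the paper handles it.
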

\begin{proof}
Taking $\partial_x^{\alpha}$ of \eqref{re-3}, we have
\begin{equation*}
\begin{split}
\partial_x^{\alpha}n^{\varepsilon}_R =\partial_x^{\alpha}\phi^{\varepsilon}_R -\varepsilon\overline{\Delta}\partial_x^{\alpha}\phi^{\varepsilon}_R +\varepsilon\partial_x^{\alpha}(\phi^{(1)}\phi^{\varepsilon}_R) +\frac{\varepsilon^2}{2}\partial_x^{\alpha}[(\phi^{\varepsilon}_R)^2] +\varepsilon^{2}\partial_x^{\alpha}\overline{R}_{\phi}.
\end{split}
\end{equation*}
Inserting this into $I_{62}$, we have
\begin{equation}\label{equ51}
\begin{split}
I_{62}=&-\langle\partial_x^{\alpha}\phi^{\varepsilon}_R, \frac{1}{n}\partial_t\partial_x^{\alpha}\phi^{\varepsilon}_R\rangle_0 +\varepsilon\langle\partial_x^{\alpha}\phi^{\varepsilon}_R, \frac{1}{n}\partial_t\overline{\Delta}\partial_x^{\alpha}\phi^{\varepsilon}_R\rangle_0\\
&-\varepsilon\langle\partial_x^{\alpha}\phi^{\varepsilon}_R, \frac{1}{n}\partial_t\partial_x^{\alpha}(\phi^{(1)}\phi^{\varepsilon}_R)\rangle_0 -\frac{\varepsilon^2}{2}\langle\partial_x^{\alpha}\phi^{\varepsilon}_R, \frac{1}{n}\partial_t\partial_x^{\alpha} [(\phi^{\varepsilon}_R)^2]\rangle_0 \\
&-\varepsilon^{2}\langle\partial_x^{\alpha}\phi^{\varepsilon}_R, \frac{1}{n}\partial_t\partial_x^{\alpha}\overline{R}_{\phi}\rangle_0\\
=&:I_{621}+I_{622}+I_{623}+I_{624}+I_{625}.
\end{split}
\end{equation}

\emph{Estimate of $I_{621}$.} By integration by parts in time, we obtain
\begin{equation}\label{equ40}
\begin{split}
I_{621}=&-\frac12\frac{d}{dt}\langle\partial_x^{\alpha}\phi^{\varepsilon}_R, \frac{1}{n}\partial_x^{\alpha}\phi^{\varepsilon}_R\rangle_0 +\frac12\langle\partial_x^{\alpha}\phi^{\varepsilon}_R, \partial_t(\frac{1}{n})\partial_x^{\alpha}\phi^{\varepsilon}_R\rangle_0.
\end{split}
\end{equation}
By direct computation, we have
\begin{equation*}
\begin{split}
\partial_t(\frac{1}{n})=&-\frac{1}{n^2}(\varepsilon\partial_{t}\widetilde n+\varepsilon^2\partial_{t}n^{\varepsilon}_R),
\end{split}
\end{equation*}
which yields that
\begin{equation}\label{equ41}
\begin{split}
\|\partial_t(\frac{1}{n})\|_{L^{\infty}}\leq C(\varepsilon+\varepsilon^2\|\partial_tn^{\varepsilon}_R\|_{L^{\infty}}),
\end{split}
\end{equation}
where we have used \eqref{assumption}, \eqref{assumption1} and Sobolev embedding. Therefore, from \eqref{equ40}, we have
\begin{equation}\label{equ50}
\begin{split}
I_{621}\leq &-\frac12\frac{d}{dt}\langle\partial_x^{\alpha}\phi^{\varepsilon}_R, \frac{1}{n}\partial_x^{\alpha}\phi^{\varepsilon}_R\rangle_0 +C\varepsilon(1+\varepsilon\|\partial_tn^{\varepsilon}_R\|_{L^{\infty}}) \|\partial_x^{\alpha}\phi^{\varepsilon}_R\|^2\\
\leq & -\frac12\frac{d}{dt}\langle\partial_x^{\alpha}\phi^{\varepsilon}_R, \frac{1}{n}\partial_x^{\alpha}\phi^{\varepsilon}_R\rangle_0 +C(1+\varepsilon\|(n^{\varepsilon}_R,\textbf{u}^{\varepsilon}_R)\|_{H^{3}}) \|\phi^{\varepsilon}_R\|_{H^{s'}}^2,
\end{split}
\end{equation}
where in the second inequality, we have used \eqref{equ19} in Lemma \ref{L2}.

\emph{Estimate of $I_{622}$.} By first integrating by parts in space and then in time, we have
\begin{equation}\label{equ42}
\begin{split}
I_{622}= &-\varepsilon\langle\overline{\nabla}\partial_x^{\alpha}\phi^{\varepsilon}_R, \frac{1}{n}\partial_t\overline{\nabla}\partial_x^{\alpha}\phi^{\varepsilon}_R\rangle_0 -\varepsilon\langle\partial_x^{\alpha}\phi^{\varepsilon}_R, \overline{\nabla}(\frac{1}{n})\cdot\partial_t\overline{\nabla} \partial_x^{\alpha} \phi^{\varepsilon}_R\rangle_0\\
=& -\frac12\frac{d}{dt} \langle\overline{\nabla}\partial_x^{\alpha}\phi^{\varepsilon}_R, \frac{\varepsilon}{n}\overline{\nabla}\partial_x^{\alpha} \phi^{\varepsilon}_R\rangle_0\\ &+\frac{\varepsilon}2\langle\overline{\nabla}\partial_x^{\alpha} \phi^{\varepsilon}_R, \partial_t(\frac{1}{n})\overline{\nabla}\partial_x^{\alpha} \phi^{\varepsilon}_R\rangle_0 -\varepsilon\langle\partial_x^{\alpha}\phi^{\varepsilon}_R, \overline{\nabla}(\frac{1}{n})\cdot \partial_t\overline{\nabla}\partial_x^{\alpha} \phi^{\varepsilon}_R\rangle_0.
\end{split}
\end{equation}
From \eqref{equ41}, the second term on the RHS of \eqref{equ42} is bounded by
\begin{equation}\label{equ43}
\begin{split}
\left|\frac12\langle\overline{\nabla}\partial_x^{\alpha}\phi^{\varepsilon}_R, \partial_t(\frac{\varepsilon}{n}) \overline{\nabla}\partial_x^{\alpha}\phi^{\varepsilon}_R\rangle_0\right| \leq C\varepsilon(\varepsilon+\varepsilon\|\varepsilon\partial_tn^{\varepsilon}_R \|_{L^{\infty}})\|\overline{\nabla}\partial_x^{\alpha} \phi^{\varepsilon}_R\|^2.
\end{split}
\end{equation}
By H\"older inequality, the third term on the RHS of \eqref{equ42} is bounded by
\begin{equation}\label{equ44}
\begin{split}
\left|\varepsilon\langle\partial_x^{\alpha}\phi^{\varepsilon}_R, \overline{\nabla}(\frac{1}{n})\cdot\partial_t\overline{\nabla}\partial_x^{\alpha} \phi^{\varepsilon}_R\rangle_0\right|
\leq & C\varepsilon^2\|\partial_x^{\alpha}\phi^{\varepsilon}_R\| \|\partial_t\overline{\nabla}\partial_x^{\alpha} \phi^{\varepsilon}_R\| +C\varepsilon^3\|\nabla n^{\varepsilon}_R\|_{L^{\infty}}\|\partial_x^{\alpha}\phi^{\varepsilon}_R\| \|\partial_t\overline{\nabla}\partial_x^{\alpha}\phi^{\varepsilon}_R\|\\
\leq & C(1+\varepsilon\|\overline{\nabla} n^{\varepsilon}_R\|_{L^{\infty}}^2)\|\partial_x^{\alpha}\phi^{\varepsilon}_R\|^2 +C\varepsilon^3\|\varepsilon\partial_t\overline{\nabla}\partial_x^{\alpha} \phi^{\varepsilon}_R\|^2\\
\leq & C(1+\varepsilon\|\overline{\nabla} n^{\varepsilon}_R\|_{L^{\infty}}^2) \|\partial_x^{\alpha}\phi^{\varepsilon}_R\|^2 +C\varepsilon^2\|\varepsilon\partial_t\overline{\Delta} \phi^{\varepsilon}_R\|_{H^{k-1}}^2,
\end{split}
\end{equation}
where we have used Corollary \ref{rmk5} and
\begin{equation*}
\begin{split}
\|\overline{\nabla}(\frac{1}{n})\|_{L^{\infty}}\leq C\varepsilon(1+\varepsilon\|\nabla n^{\varepsilon}_R\|_{L^{\infty}}).
\end{split}
\end{equation*}
Therefore, $I_{622}$ in \eqref{equ42} can be estimated as
\begin{equation*}
\begin{split}
I_{622}\leq & -\frac12\frac{d}{dt} \langle\overline{\nabla}\partial_x^{\alpha}\phi^{\varepsilon}_R, \frac{\varepsilon}{n}\overline{\nabla}\partial_x^{\alpha}\phi^{\varepsilon}_R\rangle_0\\ &+C(1+\varepsilon\|\nabla n^{\varepsilon}_R\|_{L^{\infty}}^2 +\varepsilon\|\varepsilon\partial_tn^{\varepsilon}_R \|_{L^{\infty}}^2)(\|\partial_x^{\alpha}\phi^{\varepsilon}_R\|^2 +\varepsilon\|\overline{\nabla}\partial_x^{\alpha} \phi^{\varepsilon}_R\|^2)\\
&+C\{1+\|(\textbf{u}^{\varepsilon}_R, n^{\varepsilon}_R, \phi^{\varepsilon}_R)\|_{H^{\delta+1}}^2\},
\end{split}
\end{equation*}
where $\delta=\max\{2,k-1\}$ and we have used \eqref{e22} in Corollary \ref{cor2}. Using Lemma \ref{L1} and \ref{L2}, we have
\begin{equation}\label{equ49}
\begin{split}
I_{622}\leq & -\frac12\frac{d}{dt} \langle\overline{\nabla}\partial_x^{\alpha}\phi^{\varepsilon}_R, \frac{\varepsilon}{n}\overline{\nabla}\partial_x^{\alpha}\phi^{\varepsilon}_R\rangle_0\\ &+C(1+\varepsilon\|(\textbf{u}^{\varepsilon}_R, n^{\varepsilon}_R, \phi^{\varepsilon}_R)\|_{H^{3}}^2)\{1+\|(\textbf{u}^{\varepsilon}_R, n^{\varepsilon}_R, \phi^{\varepsilon}_R)\|_{H^{s'}}^2\}.
\end{split}
\end{equation}

\emph{Estimate of $I_{623}$.} By integration by parts in time, we obtain
\begin{equation}\label{equ45}
\begin{split}
I_{623}=&-\langle\partial_x^{\alpha}\phi^{\varepsilon}_R, \frac{\varepsilon\phi^{(1)}}{n} \partial_t\partial_x^{\alpha}\phi^{\varepsilon}_R\rangle_0 -\langle\partial_x^{\alpha}\phi^{\varepsilon}_R, \frac{\varepsilon}{n}[\partial_x^{\alpha},\phi^{(1)}] \partial_t\phi^{\varepsilon}_R\rangle_0\\
&-\langle\partial_x^{\alpha}\phi^{\varepsilon}_R, \frac{\varepsilon}{n}\partial_x^{\alpha} (\partial_t\phi^{(1)}\phi^{\varepsilon}_R)\rangle_0\\
=&:I_{6231}+I_{6232}+I_{6233},
\end{split}
\end{equation}
where $|\alpha|=k\leq s'$. For the first on the RHS, by integration by parts, we have
\begin{equation}\label{equ46}
\begin{split}
I_{6231}=&-\frac12\frac{d}{dt}\langle\partial_x^{\alpha}\phi^{\varepsilon}_R, \frac{\varepsilon\phi^{(1)}}{n}\partial_x^{\alpha}\phi^{\varepsilon}_R\rangle_0 +\frac12\langle\partial_x^{\alpha}\phi^{\varepsilon}_R, \partial_t(\frac{\varepsilon\phi^{(1)}}{n}) \partial_x^{\alpha}\phi^{\varepsilon}_R\rangle_0.
\end{split}
\end{equation}
By direct computation, we have
\begin{equation*}
\begin{split}
\partial_t(\frac{\varepsilon\phi^{(1)}}{n})=&\frac{\varepsilon\partial_t\phi^{(1)}}{n} -\frac{\varepsilon\phi^{(1)}}{n^2}(\varepsilon\partial_{t}\widetilde n+\varepsilon^2\partial_{t}n^{\varepsilon}_R),
\end{split}
\end{equation*}
which yields that
\begin{equation*}
\begin{split}
\|\partial_t(\frac{\varepsilon\phi^{(1)}}{n})\|_{L^{\infty}}\leq C\varepsilon(1+\varepsilon^2\|\partial_tn^{\varepsilon}_R\|_{L^{\infty}}),
\end{split}
\end{equation*}
where we have used \eqref{assumption}, \eqref{assumption1} and Sobolev embedding. Therefore, from \eqref{equ46}, we have
\begin{equation}\label{equ80}
\begin{split}
I_{6231}\leq &-\frac12\frac{d}{dt}\langle\partial_x^{\alpha}\phi^{\varepsilon}_R, \frac{\varepsilon\phi^{(1)}}{n}\partial_x^{\alpha}\phi^{\varepsilon}_R\rangle_0 +C\varepsilon(1+\varepsilon^2\|\partial_tn^{\varepsilon}_R\|_{L^{\infty}}) \|\partial_x^{\alpha}\phi^{\varepsilon}_R\|^2.
\end{split}
\end{equation}
By commutator estimates, we have
\begin{equation*}
\begin{split}
\|[\partial_x^{\alpha},\phi^{(1)}] \partial_t\phi^{\varepsilon}_R\|_{L^2}\leq & C(\|\partial_t\phi^{\varepsilon}_R\|_{H^{k-1}}\|\phi^{(1)}\|_{L^{\infty}} +\|\partial_t\phi^{\varepsilon}_R\|_{L^{\infty}}\|\phi^{(1)}\|_{H^{k-1}})\\
\leq & C(\|\partial_t\phi^{\varepsilon}_R\|_{H^{k-1}} +\|\partial_t\phi^{\varepsilon}_R\|_{H^2}).
\end{split}
\end{equation*}
Therefore, by using \eqref{e22} in Corollary \ref{cor2}, we have for the second term on the RHS of \eqref{equ45}, we have
\begin{equation}\label{e101}
\begin{split}
I_{6232}\leq & C\|\partial_x^{\alpha}\phi^{\varepsilon}_R\|^2 +C\varepsilon^2\|\partial_t\phi^{\varepsilon}_R\|^2_{H^{\delta}}\\
\leq & C\{1+\|(n^{\varepsilon}_R,\textbf{u}^{\varepsilon}_R, \phi^{\varepsilon}_R\|_{H^{\delta+1}}^2\},
\end{split}
\end{equation}
where $\delta=\max\{2,k-1\}$. By multiplicative estimates, we have
\begin{equation*}
\begin{split}
\|\partial_x^{\alpha}(\partial_t\phi^{(1)}\phi^{\varepsilon}_R)\|_{L^2} \leq & C(\|\phi^{\varepsilon}_R\|_{H^{k}}\|\partial_t\phi^{(1)}\|_{L^{\infty}} +\|\phi^{\varepsilon}_R\|_{L^{\infty}}\|\partial_t\phi^{(1)}\|_{H^{k}})\\
\leq & C(\|\phi^{\varepsilon}_R\|_{H^{k}} +\|\phi^{\varepsilon}_R\|_{H^2}).
\end{split}
\end{equation*}
Then, for the third term on the RHS of \eqref{equ45}, we have
\begin{equation}\label{e102}
\begin{split}
I_{6233}\leq & C\varepsilon\|\phi^{\varepsilon}_R\|_{H^{s'}}^2,
\end{split}
\end{equation}
where $|\alpha|=k\leq s'$.
Therefore, adding \eqref{equ80}, \eqref{e101} and \eqref{e102} together and using \eqref{equ19} in Lemma \ref{L2}, we have
\begin{equation}\label{equ47}
\begin{split}
I_{623}\leq &-\frac12\frac{d}{dt}\langle\partial_x^{\alpha}\phi^{\varepsilon}_R, \frac{\varepsilon\phi^{(1)}}{n}\partial_x^{\alpha}\phi^{\varepsilon}_R\rangle_0\\
& +C(1+\varepsilon^2\|(n^{\varepsilon}_R,\textbf{u}^{\varepsilon}_R)\|_{H^3}^2) \{1+\|(\textbf{u}^{\varepsilon}_R, n^{\varepsilon}_R, \phi^{\varepsilon}_R)\|_{H^{s'}}^2\},
\end{split}
\end{equation}
where $\delta=\max\{2,k-1\}$, $k\leq s'$ and $s'\geq 3$.

\emph{Estimate of $I_{624}$.} We have
\begin{equation*}
\begin{split}
I_{624}=&-{\varepsilon^2}\langle\partial_x^{\alpha}\phi^{\varepsilon}_R, \frac{\phi^{\varepsilon}_R}{n} \partial_t\partial_x^{\alpha}\phi^{\varepsilon}_R\rangle_0 -{\varepsilon^2}\langle\partial_x^{\alpha}\phi^{\varepsilon}_R, \frac{1}{n}[\partial_x^{\alpha}, \phi^{\varepsilon}_R]\partial_t\phi^{\varepsilon}_R\rangle_0\\
=&:I_{6241}+I_{6242}.
\end{split}
\end{equation*}
By integrating by parts in time, we have
\begin{equation*}
\begin{split}
I_{6241}=&-\frac{1}{2}\frac{d}{dt} \int(\frac{\varepsilon^2\phi^{\varepsilon}_R}{n}) |\partial_x^{\alpha}\phi^{\varepsilon}_R|^2dx -\frac12\int\partial_t(\frac{\varepsilon^2\phi^{\varepsilon}_R}{n}) |\partial_x^{\alpha}\phi^{\varepsilon}_R|^2dx.
\end{split}
\end{equation*}
From \eqref{e3}, we have
\begin{equation*}
\begin{split}
\|\partial_t(\frac{\varepsilon^2\phi^{\varepsilon}_R}{n})\|_{L^{\infty}} \leq & C\varepsilon^2(\|\partial_t\phi^{\varepsilon}_R\|_{L^{\infty}} +\varepsilon\|\partial_t\widetilde n\|_{L^{\infty}} +\varepsilon^2\|\partial_tn^{\varepsilon}_R\|_{L^{\infty}})\\
\leq & C\varepsilon^2(\varepsilon+\|\partial_t\phi^{\varepsilon}_R\|_{H^2} +\varepsilon^2\|\partial_tn^{\varepsilon}_R\|_{H^2}).
\end{split}
\end{equation*}
From Lemma \ref{L2} and Corollary \ref{cor2}, it follows
\begin{equation*}
\begin{split}
\|\partial_t(\frac{\varepsilon^2\phi^{\varepsilon}_R}{n})\|_{L^{\infty}}
\leq & C\varepsilon(1+\|\textbf{u}^{\varepsilon}_R\|_{H^3} +\|n^{\varepsilon}_R\|_{H^3} +\|\phi^{\varepsilon}_R\|_{H^3}).
\end{split}
\end{equation*}
Hence, we obtain
\begin{equation}\label{e20}
\begin{split}
I_{6241}\leq & -\frac{1}{2}\frac{d}{dt} \int(\frac{\varepsilon^2\phi^{\varepsilon}_R}{n}) |\partial_x^{\alpha}\phi^{\varepsilon}_R|^2dx +C\varepsilon(1+\|(\textbf{u}^{\varepsilon}_R, n^{\varepsilon}_R, \phi^{\varepsilon}_R)\|_{H^3})\|\phi^{\varepsilon}_R\|_{H^k}^2.
\end{split}
\end{equation}

On the other hand, by commutator estimate, we have
\begin{equation*}
\begin{split}
\|[\partial_x^{\alpha}, \phi^{\varepsilon}_R]\partial_t\phi^{\varepsilon}_R\|_{L^2} \leq & C(\|\partial_t\phi^{\varepsilon}_R\|_{H^{k-1}} \|\phi^{\varepsilon}_R\|_{L^{\infty}} +\|\partial_t\phi^{\varepsilon}_R\|_{L^{\infty}} \|\phi^{\varepsilon}_R\|_{H^k})\\
\leq & C\{\|\phi^{\varepsilon}_R\|_{H^2} (1+\|\partial_tn^{\varepsilon}_R\|_{H^{k-1}} +\|\phi^{\varepsilon}_R\|_{H^{k-1}})\\
&+\|\phi^{\varepsilon}_R\|_{H^k} (1+\|\partial_tn^{\varepsilon}_R\|_{H^{2}} +\|\phi^{\varepsilon}_R\|_{H^{2}})\},
\end{split}
\end{equation*}
where we have used Lemma \ref{L3}. Since $k\leq s'$, by using Lemma \ref{L2}, we have
\begin{equation*}
\begin{split}
\varepsilon\|[\partial_x^{\alpha}, \phi^{\varepsilon}_R]\partial_t\phi^{\varepsilon}_R\|_{L^2} \leq & C\|\phi^{\varepsilon}_R\|_{H^{s'}} (1+\|(\textbf{u}^{\varepsilon}_R,n^{\varepsilon}_R, \phi^{\varepsilon}_R)\|_{H^{s'}}).
\end{split}
\end{equation*}
It then follows that
\begin{equation}\label{e21}
\begin{split}
|I_{6242}|\leq & C(1+\varepsilon\|(\textbf{u}^{\varepsilon}_R,n^{\varepsilon}_R, \phi^{\varepsilon}_R)\|_{H^{s'}})\|\phi^{\varepsilon}_R\|_{H^{s'}}^2.
\end{split}
\end{equation}
Combining \eqref{e20} and \eqref{e21}, we have
\begin{equation}\label{e20}
\begin{split}
I_{624}\leq & -\frac{1}{2}\frac{d}{dt} \int(\frac{\varepsilon^2\phi^{\varepsilon}_R}{n}) |\partial_x^{\alpha}\phi^{\varepsilon}_R|^2dx +C(1+\varepsilon\|(\textbf{u}^{\varepsilon}_R, n^{\varepsilon}_R, \phi^{\varepsilon}_R)\|_{H^{s'}})\|\phi^{\varepsilon}_R\|_{H^{s'}}^2.
\end{split}
\end{equation}

\emph{Estimate of $I_{625}$.} By \eqref{e52-2} in Lemma \ref{lem-A}, we have
\begin{equation}\label{equ48}
\begin{split}
I_{625} \leq & C\|\partial_x^{\alpha}\phi^{\varepsilon}_R\|^2 +\varepsilon^{4}\|\partial_t\partial_x^{\alpha}\overline{R}_{\phi}\|^2\\
\leq & C\|\partial_x^{\alpha}\phi^{\varepsilon}_R\|^2 +C_1(\sqrt{\varepsilon}\|\phi^{\varepsilon}_R\|_{H^{\delta}}) (1+\varepsilon^2\|\varepsilon\partial_t\phi^{\varepsilon}_R\|_{H^k}^2),
\end{split}
\end{equation}
where $\delta=\max\{2,k-1\}$ in Lemma \ref{lem-A}. Furthermore,
\begin{equation*}
\begin{split}
\varepsilon^2\|\varepsilon\partial_t\phi^{\varepsilon}_R\|_{H^k}^2 \leq & \varepsilon\|\varepsilon\partial_t(\varepsilon^{1/2}\nabla)\phi^{\varepsilon}_R\|_{H^{k-1}}^2 +\varepsilon^2\|\varepsilon\partial_t\phi^{\varepsilon}_R\|_{H^{k-1}}^2\\
\leq & \varepsilon\|\varepsilon\partial_t\overline{\nabla}\phi^{\varepsilon}_R\|_{H^{k-1}}^2 +\varepsilon^2\|\varepsilon\partial_t\phi^{\varepsilon}_R\|_{H^{k-1}}^2\\
\leq & C\|\varepsilon\partial_tn^{\varepsilon}_R\|_{H^{k-1}}^2+ C\varepsilon^3\|\phi^{\varepsilon}_R\|_{H^{k-1}}^2+CC_0^2\varepsilon^3,
\end{split}
\end{equation*}
where we have used \eqref{equ27}. By Corollary \ref{cor2}, when $0<\varepsilon<\varepsilon_1$, we have
\begin{equation*}
\begin{split}
\varepsilon^2\|\varepsilon\partial_t\phi^{\varepsilon}_R\|_{H^k}^2 \leq & C\{1+\|\textbf{u}^{\varepsilon}_R\|_{H^{s'}}^2 +\|n^{\varepsilon}_R\|_{H^{s'}}^2 +\varepsilon^3\|\phi^{\varepsilon}_R\|_{H^{k-1}}^2\}.
\end{split}
\end{equation*}
It then follows from \eqref{equ48} that
\begin{equation}\label{e74}
\begin{split}
I_{625} \leq & C_1(\sqrt{\varepsilon}\|\phi^{\varepsilon}_R\|_{H^{\delta}}) \{1+\|\textbf{u}^{\varepsilon}_R\|_{H^{s'}}^2 +\|n^{\varepsilon}_R\|_{H^{s'}}^2 +\|\phi^{\varepsilon}_R\|_{H^{s'}}^2\},
\end{split}
\end{equation}
where $\delta=\max\{2,k-1\}\leq s'-1$.

Summarizing, from \eqref{equ51}, \eqref{equ50}, \eqref{equ49}, \eqref{equ47}, \eqref{e20} and \eqref{e74}, we have \eqref{equ52}. The proof is complete.
\end{proof}

\begin{lemma}\label{Le-1}
Let $s'\geq 4$ be integer and $(n^{\varepsilon}_R,\textbf{u}^{\varepsilon}_R,\phi^{\varepsilon}_R)$ be a solution to \eqref{re}. The for any $0\leq k\leq s'$, there holds
\begin{equation}\label{e92}
\begin{split}
\frac12\frac{d}{dt}\int\frac{T_i}{n^2} |\partial_x^{\alpha}n^{\varepsilon}_R|^2dx\leq C(\varepsilon+\varepsilon\|(\textbf{u}^{\varepsilon}_R, n^{\varepsilon}_R)\|_{H^{s'}}) \{1+\|(n^{\varepsilon}_R,\textbf{u}^{\varepsilon}_R)\|_{H^{s'}}^2\}+J_{31},
\end{split}
\end{equation}
where $\alpha$ is any multi-index with $|\alpha|=k$ and
\begin{equation}\label{e95}
\begin{split}
J_{31}=-T_i\langle(\overline{\nabla} \cdot\partial_x^{\alpha}\textbf{u}^{\varepsilon}_R), \frac{1}{\varepsilon n}\partial_x^{\alpha}n^{\varepsilon}_R\rangle_0.
\end{split}
\end{equation}
\end{lemma}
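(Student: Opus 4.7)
The plan is to take $\partial_x^{\alpha}$ of the remainder continuity equation \eqref{re-1} and then test against $\frac{T_i}{n^{2}}\partial_x^{\alpha}n^{\varepsilon}_R$ in $L^{2}$. This is the symmetric counterpart of the estimate for $I_{71}$ in Lemma~\ref{L8}; the term $J_{31}$ is produced here precisely in order to cancel with $I_{71}$ once the two identities are added together in the proof of Proposition~\ref{prop-kp2}. The $\partial_t$-piece reorganizes as
\begin{equation*}
\int\frac{T_i}{n^{2}}\partial_x^{\alpha}n^{\varepsilon}_R\,\partial_t\partial_x^{\alpha}n^{\varepsilon}_R\,dx=\frac{1}{2}\frac{d}{dt}\int\frac{T_i}{n^{2}}|\partial_x^{\alpha}n^{\varepsilon}_R|^{2}\,dx-\frac{1}{2}\int\partial_t\!\Big(\frac{T_i}{n^{2}}\Big)|\partial_x^{\alpha}n^{\varepsilon}_R|^{2}\,dx,
\end{equation*}
producing the left-hand side of \eqref{e92}. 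The correction $\partial_t(T_i/n^{2})=-2T_in^{-3}(\varepsilon\partial_t\widetilde n+\varepsilon^{2}\partial_tn^{\varepsilon}_R)$ is of size $O(\varepsilon+\varepsilon\|\varepsilon\partial_tn^{\varepsilon}_R\|_{L^{\infty}})$ in $L^{\infty}$ and is absorbed into the right-hand side via Lemma~\ref{L2} and the Sobolev embedding (since $s'\geq 4$).

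Splitting the $\frac{n}{\varepsilon}\overline{\nabla}\cdot\textbf{u}^{\varepsilon}_R$ factor as $\frac{n}{\varepsilon}\overline{\nabla}\cdot\partial_x^{\alpha}\textbf{u}^{\varepsilon}_R+\frac{1}{\varepsilon}[\partial_x^{\alpha},n]\overline{\nabla}\cdot\textbf{u}^{\varepsilon}_R$ and using the algebraic identity $\frac{n}{\varepsilon}\cdot\frac{T_i}{n^{2}}=\frac{T_i}{\varepsilon n}$, the leading piece pairs with $\partial_x^{\alpha}n^{\varepsilon}_R$ to produce exactly $J_{31}$; the commutator remainder is bounded by \eqref{e27} of Lemma~\ref{L8}. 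The most delicate term is the singular transport $-\frac{V\textbf{e}_1-\textbf{u}}{\varepsilon}\cdot\overline{\nabla}n^{\varepsilon}_R$. After extracting the commutator $\frac{1}{\varepsilon}[\partial_x^{\alpha},\textbf{u}]\cdot\overline{\nabla}n^{\varepsilon}_R$ (controlled as in \eqref{e24} of Lemma~\ref{L8}), its leading contribution equals
$$\frac{1}{2}\int\overline{\nabla}\!\cdot\!\Big(\frac{T_i(V\textbf{e}_1-\textbf{u})}{\varepsilon n^{2}}\Big)|\partial_x^{\alpha}n^{\varepsilon}_R|^{2}\,dx$$
after integration by parts. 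The hard part here is that, although the weight is a priori $O(1/\varepsilon)$, the expansion \eqref{e3} together with $\textbf{u}=\varepsilon\widetilde{\textbf{u}}+\varepsilon^{2}\textbf{u}^{\varepsilon}_R$ and $n-1=\varepsilon\widetilde n+\varepsilon^{2}n^{\varepsilon}_R$ forces its divergence to be only $O(1+\varepsilon\|(n^{\varepsilon}_R,\textbf{u}^{\varepsilon}_R)\|_{H^{3}})$---exactly the $1/\varepsilon$-cancellation already exploited for $I_{611}$ in \eqref{equ33} of Lemma~\ref{L4}. This is the main obstacle of the lemma, and it is settled by the same mechanism.

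The remaining lower-order terms $n^{\varepsilon}_R\overline{\nabla}\cdot\widetilde{\textbf{u}}$, $\textbf{u}^{\varepsilon}_R\cdot\overline{\nabla}\widetilde n$ and $\varepsilon R_n$ are handled by routine multiplicative Sobolev estimates combined with \eqref{assump}, exactly as in \eqref{e28}; they contribute terms bounded by $C\{\|(\textbf{u}^{\varepsilon}_R,n^{\varepsilon}_R)\|_{H^{s'}}^{2}+\varepsilon^{2}\}$. Collecting all pieces and invoking the lower bound $n\geq 1/2$ from \eqref{assumption1} yields \eqref{e92}. No genuinely new ingredient beyond what was developed in Lemmas~\ref{L2}, \ref{L4} and~\ref{L8} is required.
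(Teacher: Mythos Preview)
Your proposal is correct and follows essentially the same approach as the paper: apply $\partial_x^{\alpha}$ to \eqref{re-1}, test against $\frac{T_i}{n^{2}}\partial_x^{\alpha}n^{\varepsilon}_R$, split the $\frac{n}{\varepsilon}\overline{\nabla}\cdot\textbf{u}^{\varepsilon}_R$ contribution into the principal part $J_{31}$ plus a commutator, and integrate by parts in the transport term to recover the $1/\varepsilon$-cancellation. The only cosmetic difference is that the paper separates the transport piece into the constant $V\textbf{e}_1$ part and the $\textbf{u}$ part (its $J_{21}$ and $J_{22}$) before integrating by parts, whereas you treat $V\textbf{e}_1-\textbf{u}$ together; the underlying estimate is identical.
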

\begin{proof}
Let $\alpha$ be a multi-index with $|\alpha|=k\leq s'$. We take $\partial_x^{\alpha}$ of \eqref{re-1}, and then take inner product of $\frac{T_i}{n^2}\partial_x^{\alpha}n^{\varepsilon}_R$ in $L^2$. Integrating in $t$, we obtain
\begin{equation}\label{e11}
\begin{split}
\frac12\frac{d}{dt}\int\frac{T_i}{n^2} |\partial_x^{\alpha}n^{\varepsilon}_R|^2dx =&\frac{1}2\int\partial_t(\frac{T_i}{n^2}) |\partial_x^{\alpha}n^{\varepsilon}_R|^2dx +\langle\partial_x^{\alpha}(\frac{V\textbf{e}_1-\textbf{u}}{\varepsilon} \cdot\overline{\nabla} n^{\varepsilon}_R), \frac{T_i}{n^2}\partial_x^{\alpha}n^{\varepsilon}_R\rangle_0\\
&-\langle\partial_x^{\alpha}(\frac{n}{\varepsilon}\overline{\nabla} \cdot\textbf{u}^{\varepsilon}_R), \frac{T_i}{n^2}\partial_x^{\alpha}n^{\varepsilon}_R\rangle_0 -\langle\partial_x^{\alpha} (n^{\varepsilon}_R\overline{\nabla}\cdot\widetilde{\textbf{u}}), \frac{T_i}{n^2}\partial_x^{\alpha}n^{\varepsilon}_R\rangle_0\\
&-\langle\partial_x^{\alpha} (\textbf{u}^{\varepsilon}_R\cdot\overline{\nabla}\widetilde{n}), \frac{T_i}{n^2}\partial_x^{\alpha}n^{\varepsilon}_R\rangle_0  -\langle\partial_x^{\alpha}(\varepsilon R_n), \frac{T_i}{n^2}\partial_x^{\alpha}n^{\varepsilon}_R\rangle_0\\
=& J_1+\cdots+J_6.
\end{split}
\end{equation}

\emph{Estimate of $J_{1}$.}  By direct computation, we have
\begin{equation*}
\begin{split}
\partial_t(\frac{T_i}{n^2})=&-2\frac{1}{n^3}(\varepsilon\partial_{t}\widetilde n+\varepsilon^2\partial_{t}n^{\varepsilon}_R),
\end{split}
\end{equation*}
which yields that
\begin{equation}\label{e6}
\begin{split}
\|\partial_t(\frac{T_i}{n^2})\|_{L^{\infty}}\leq C(\varepsilon+\varepsilon\|\varepsilon\partial_tn^{\varepsilon}_R\|_{L^{\infty}}),
\end{split}
\end{equation}
where we have used \eqref{assumption} and \eqref{assumption1}. By Lemma \ref{L2} and \ref{L1} and Sobolev embedding, we have
\begin{equation*}
\begin{split}
\|\varepsilon\partial_tn^{\varepsilon}_R\|_{L^{\infty}}\leq \|\varepsilon\partial_tn^{\varepsilon}_R\|_{H^2}\leq C\{1+\|\textbf{u}^{\varepsilon}_R\|_{H^{3}} +\|n^{\varepsilon}_R\|_{H^{3}}\}.
\end{split}
\end{equation*}
Therefore,
\begin{equation}\label{e87}
\begin{split}
|J_1|\leq C(\varepsilon+\varepsilon\|(\textbf{u}^{\varepsilon}_R, n^{\varepsilon}_R)\|_{H^{3}})\|n^{\varepsilon}_R\|_{H^k}^2.
\end{split}
\end{equation}

\emph{Estimate of $J_{2}$.} Expanding the expression of $J_2$, we have
\begin{equation}\label{e86}
\begin{split}
J_2=\langle\partial_x^{\alpha}(\frac{V}{\varepsilon} \cdot\partial_{x_1}n^{\varepsilon}_R), \frac{T_i}{n^2}\partial_x^{\alpha}n^{\varepsilon}_R\rangle_0 -\langle\partial_x^{\alpha}(\frac{\textbf{u}}{\varepsilon} \cdot\overline{\nabla} n^{\varepsilon}_R), \frac{T_i}{n^2}\partial_x^{\alpha}n^{\varepsilon}_R\rangle_0=:J_{21}+J_{22}.
\end{split}
\end{equation}
Similar to \eqref{e6}, we have
\begin{equation*}
\begin{split}
\|\partial_{x_1}(\frac{T_i}{n^2})\|_{L^{\infty}}\leq & C(\varepsilon +\varepsilon^2\|\partial_{x_1}n^{\varepsilon}_R\|_{L^{\infty}})\\
\leq & C(\varepsilon +\varepsilon^2\|n^{\varepsilon}_R\|_{H^3}).
\end{split}
\end{equation*}
Hence, by integrating by parts, we have
\begin{equation}\label{e7}
\begin{split}
|J_{21}|=& \left|-\frac{V}2\int\partial_{x_1}(\frac{T_i}{\varepsilon n^2})|\partial_x^{\alpha}n^{\varepsilon}_R|^2dx\right|\\
\leq & C(1+\varepsilon\|n^{\varepsilon}_R\|_{H^3})\|n^{\varepsilon}_R\|_{\dot H^k}^2.
\end{split}
\end{equation}
Next, we estimate $J_{22}$ in \eqref{e86}. We have
\begin{equation*}
\begin{split}
J_{22}=& -\left\langle\left[\partial_x^{\alpha},\frac{\textbf{u}}{\varepsilon}\right] \overline{\nabla}n^{\varepsilon}_R, \frac{T_i}{n^2}\partial_x^{\alpha}n^{\varepsilon}_R\right\rangle_0 -\left\langle\frac{\textbf{u}}{\varepsilon} \partial_x^{\alpha}\overline{\nabla}n^{\varepsilon}_R, \frac{T_i}{n^2}\partial_x^{\alpha}n^{\varepsilon}_R\right\rangle_0 =:J_{221}+J_{222}.
\end{split}
\end{equation*}
By commutator estimate, we have
\begin{equation*}
\begin{split}
\|\left[\partial_x^{\alpha},\frac{\textbf{u}}{\varepsilon}\right] \overline{\nabla}n^{\varepsilon}_R\|_{L^2}
& \leq  C(\|\overline{\nabla}n^{\varepsilon}_R\|_{H^{k-1}} \|\nabla(\frac{\textbf{u}}{\varepsilon})\|_{L^{\infty}} +\|\overline{\nabla}n^{\varepsilon}_R\|_{L^{\infty}} \|\frac{\textbf{u}}{\varepsilon}\|_{H^k})\\
& \leq  C\{\|n^{\varepsilon}_R\|_{H^{k}}(\|\widetilde{\textbf{u}}\|_{H^3} +\varepsilon\|\textbf{u}^{\varepsilon}_R\|_{H^3}) +\|n^{\varepsilon}_R\|_{H^{3}}(\|\widetilde{\textbf{u}}\|_{H^k} +\varepsilon\|\textbf{u}^{\varepsilon}_R\|_{H^k})\}\\
& \leq C(1+\varepsilon\|\textbf{u}^{\varepsilon}_R\|_{H^{s'}}) \|n^{\varepsilon}_R\|_{H^{s'}},
\end{split}
\end{equation*}
where we have used $k\leq s'$, $s'\geq3$ and \eqref{e3}. Therefore, we have
\begin{equation}\label{e8}
\begin{split}
|J_{221}|
\leq C(1+\varepsilon\|\textbf{u}^{\varepsilon}_R\|_{H^{s'}}) \|n^{\varepsilon}_R\|_{H^{s'}}^2.
\end{split}
\end{equation}
On the other hand, from \eqref{assumption1} and \eqref{e1}, by H\"older inequality, we know
\begin{equation*}
\begin{split}
\|\overline{\nabla}(\frac{\textbf{u}}{\varepsilon n^2})\|_{L^{\infty}}\leq & \|\frac{\overline{\nabla}\widetilde{\textbf{u}}+\varepsilon \overline{\nabla}\textbf{u}^{\varepsilon}_R}{n^2}\|_{L^{\infty}} +2\|\textbf{u}\frac{\overline{\nabla}\widetilde{n}+\varepsilon \overline{\nabla}n^{\varepsilon}_R}{n^3}\|_{L^{\infty}}\\
\leq & C+C\varepsilon(\|\overline{\nabla}\textbf{u}^{\varepsilon}_R\|_{L^{\infty}} +\|\overline{\nabla}n^{\varepsilon}_R\|_{L^{\infty}})\\
\leq & C+C\varepsilon(\|\textbf{u}^{\varepsilon}_R\|_{H^3} +\|n^{\varepsilon}_R\|_{H^3}).
\end{split}
\end{equation*}
By integrating by parts, we obtain
\begin{equation}\label{e9}
\begin{split}
|J_{222}|
=&\left|\frac{T_i}2\langle\overline{\nabla}(\frac{\textbf{u}}{\varepsilon n^2})\partial_x^{\alpha}n^{\varepsilon}_R, \partial_x^{\alpha}n^{\varepsilon}_R\rangle_0\right|\\
\leq & C(1+\varepsilon(\|\textbf{u}^{\varepsilon}_R\|_{H^3} +\|n^{\varepsilon}_R\|_{H^3}))\|\partial_x^{\alpha}n^{\varepsilon}_R\|^2.
\end{split}
\end{equation}
Adding the estimates \eqref{e7}, \eqref{e8} and \eqref{e9} together, we have
\begin{equation}\label{e10}
\begin{split}
|J_{2}|\leq & C(1+\varepsilon(\|\textbf{u}^{\varepsilon}_R\|_{H^{s'}} +\|n^{\varepsilon}_R\|_{H^{s'}}))\|n^{\varepsilon}_R\|_{H^{s'}}^2,
\end{split}
\end{equation}
where $k\leq s'$ and $s'\geq 3$.

\emph{Estimate of $J_{4}$.} For $J_4$ in \eqref{e11}, we have
\begin{equation}\label{e88}
\begin{split}
|J_4|\leq C\|n^{\varepsilon}_R\|_{H^{k}}^2.
\end{split}
\end{equation}

\emph{Estimate of $J_{5}$.} We have
\begin{equation}\label{e89}
\begin{split}
|J_5|\leq C\|\textbf{u}^{\varepsilon}_R\|^2_{H^k} +C\|n^{\varepsilon}_R\|^2_{H^k}.
\end{split}
\end{equation}

\emph{Estimate of $J_{6}$.} We have
\begin{equation}\label{e90}
\begin{split}
|J_6|\leq C\varepsilon^2+C\|n^{\varepsilon}_R\|^2_{H^k}.
\end{split}
\end{equation}

\emph{Estimate of $J_{3}$ in \eqref{e11}.} $J_3$ can be written in the commutator form
\begin{equation*}
\begin{split}
J_3=& -T_i\langle(\overline{\nabla} \cdot\partial_x^{\alpha}\textbf{u}^{\varepsilon}_R), \frac{1}{\varepsilon n}\partial_x^{\alpha}n^{\varepsilon}_R\rangle_0 -T_i\langle[\partial_x^{\alpha},\frac{n}{\varepsilon}]\overline{\nabla} \cdot\textbf{u}^{\varepsilon}_R, \frac{1}{n^2}\partial_x^{\alpha}n^{\varepsilon}_R\rangle_0\\
=& J_{31}+J_{32}.
\end{split}
\end{equation*}
By commutator estimate, we have
\begin{equation*}
\begin{split}
\|[\partial_x^{\alpha},\frac{n}{\varepsilon}]\overline{\nabla} \cdot\textbf{u}^{\varepsilon}_R\|_{L^2}\leq & C(\|\overline{\nabla}\cdot \textbf{u}^{\varepsilon}_R\|_{\dot H^{k-1}} \|\nabla(\frac{n}{\varepsilon})\|_{L^{\infty}} +\|\overline{\nabla}\cdot \textbf{u}^{\varepsilon}_R\|_{L^{\infty}}\|\frac{n}{\varepsilon}\|_{\dot H^k})\\
\leq & C(1+\varepsilon\|n^{\varepsilon}_R\|_{H^{s'}}) \|\textbf{u}^{\varepsilon}_R\|_{H^{s'}}.
\end{split}
\end{equation*}
Therefore, we have
\begin{equation}\label{e91}
\begin{split}
|J_{32}|\leq C(1+\varepsilon\|n^{\varepsilon}_R\|_{H^{s'}}) (\|\textbf{u}^{\varepsilon}_R\|_{H^{s'}} +\|n^{\varepsilon}_R\|_{H^{s'}}),
\end{split}
\end{equation}
thanks to \eqref{assumption1}.

From \eqref{e11}, by adding \eqref{e87}, \eqref{e10}, \eqref{e88}, \eqref{e89}, \eqref{e90} and \eqref{e91} together, we obtain \eqref{e92}. The proof is complete.
\end{proof}

\begin{proof}[\textbf{Proof of Proposition \ref{prop-kp2}}]
By adding \eqref{e93} and \eqref{e92} together, we obtain
\begin{equation}\label{e32}
\begin{split}
\frac12\frac{d}{dt}\|\partial_x^{\alpha}\textbf{u}^{\varepsilon}_R\|_{L^2}^2 &+\frac12\frac{d}{dt}\int\frac{1+\varepsilon\phi^{(1)} +\varepsilon^2\phi^{\varepsilon}_R}{n} |\partial_x^{\alpha}\phi^{\varepsilon}_R|^2\\
&+\frac12\frac{d}{dt}\int\frac{\varepsilon}{n} |\overline{\nabla}\partial_x^{\alpha}\phi^{\varepsilon}_R|^2 +\frac12\frac{d}{dt}\int\frac{T_i}{n^2} |\partial_x^{\alpha}n^{\varepsilon}_R|^2dx\\
\leq & CC_1(C_1+\varepsilon\|(\textbf{u}^{\varepsilon}_R, n^{\varepsilon}_R, \phi^{\varepsilon}_R)\|_{H^{s'}}^2)\{1 +\|(\textbf{u}^{\varepsilon}_R, n^{\varepsilon}_R, \phi^{\varepsilon}_R)\|_{H^{s'}}^2\}+I_{71}+J_{31},
\end{split}
\end{equation}
where $I_{71}$ and $J_{31}$ are given in \eqref{e94} and \eqref{e95} respectively. By integration by parts, we have
\begin{equation}\label{e96}
\begin{split}
I_{71}+J_{31}= & -T_i\langle\frac{1}{\varepsilon n}\partial_x^{\alpha}\overline{\nabla}n^{\varepsilon}_R, \partial_x^{\alpha}\textbf{u}^{\varepsilon}_R\rangle_0 -T_i\langle(\overline{\nabla} \cdot\partial_x^{\alpha}\textbf{u}^{\varepsilon}_R), \frac{1}{\varepsilon n}\partial_x^{\alpha}n^{\varepsilon}_R\rangle_0\\
= & T_i\langle\overline{\nabla}(\frac{1}{\varepsilon n})\partial_x^{\alpha}n^{\varepsilon}_R, \partial_x^{\alpha}\textbf{u}^{\varepsilon}_R\rangle_0\\
\leq & C(1+\varepsilon\|n^{\varepsilon}_R\|_{H^3}) \{\|n^{\varepsilon}_R\|_{H^{k}}+\|\textbf{u}^{\varepsilon}_R\|_{H^{k}}\}.
\end{split}
\end{equation}
Inserting \eqref{e96} into \eqref{e32}, we complete the proof of Proposition \ref{prop-kp2}.
\end{proof}

\begin{proof}[\textbf{Proof of Theorem \ref{th} for $T_i>0$}]
Recalling \eqref{assumption1}, 
and integrating \eqref{e-prop1} over $[0,t]$ and taking summation over $|\alpha|=k$ and $0\leq k\leq s'$, we obtain
\begin{equation}
\begin{split}
\|\textbf{u}^{\varepsilon}_R\|_{H^{s'}}^2 &+\|\phi^{\varepsilon}_R\|_{H^{s'}}^2 +\varepsilon \|\overline{\nabla}\phi^{\varepsilon}_R\|_{H^{s'}}^2 +T_i\|n^{\varepsilon}_R\|_{H^{s'}}^2\\
\leq & CC_{\varepsilon}(0)+CC_1\int_0^t(C_1 +\varepsilon\|(\textbf{u}^{\varepsilon}_R, n^{\varepsilon}_R, \phi^{\varepsilon}_R)\|_{H^{s'}}^2)\{1 +\|(\textbf{u}^{\varepsilon}_R, n^{\varepsilon}_R, \phi^{\varepsilon}_R)\|_{H^{s'}}^2\}dr,
\end{split}
\end{equation}
where $C_{\varepsilon}(0)=\|(\textbf{u}^{\varepsilon}_R, \phi^{\varepsilon}_R)(0)\|_{H^{s'}}^2 +\varepsilon \|\overline{\nabla}\phi^{\varepsilon}_R(0)\|_{H^{s'}}^2 +T_i\|n^{\varepsilon}_R(0)\|_{H^{s'}}^2$. From \eqref{assumption}, there exists some constant $0<\varepsilon_0<\varepsilon_1$ (in Lemma \ref{L1}) such that $\varepsilon\|(\textbf{u}^{\varepsilon}_R, n^{\varepsilon}_R, \phi^{\varepsilon}_R)\|_{H^{s'}}^2\leq 1$ for any $0<\varepsilon<\varepsilon_0$. Since $C_1=C_1(\sqrt{\varepsilon}\|n^{\varepsilon}_R\|_{H^{s'}})$ and is nondecreasing, we know that $C_1\leq C_1(1)$ when $0<\varepsilon<\varepsilon_0$. Since $T_i>0$, there exists some constant $C_3>1$ such that
\begin{equation}
\begin{split}
\|\textbf{u}^{\varepsilon}_R\|_{H^{s'}}^2 +\|\phi^{\varepsilon}_R\|_{H^{s'}}^2 +\|n^{\varepsilon}_R\|_{H^{s'}}^2\leq C_3C_{\varepsilon}(0) +C_3\int_0^t\{1+\|(\textbf{u}^{\varepsilon}_R, n^{\varepsilon}_R, \phi^{\varepsilon}_R)\|_{H^{s'}}^2\}dr.
\end{split}
\end{equation}

For any given $0<\tau_0<\tau_*$, let $C_0'=\sup_{0<\varepsilon<1}C_{\varepsilon}(0)$ and $\tilde C$ in \eqref{assumption} satisfy $\tilde C\geq 2(1+CC_0')e^{C_3\tau_0}$, then by Gronwall inequality, we obtain
\begin{equation}
\begin{split}
\sup_{0\leq t\leq\tau_0}\|\textbf{u}^{\varepsilon}_R\|_{H^{s'}}^2 +\|\phi^{\varepsilon}_R\|_{H^{s'}}^2 +\|n^{\varepsilon}_R\|_{H^{s'}}^2 \leq (1+CC_3)e^{C_3\tau_0}\leq \tilde C.
\end{split}
\end{equation}
It is then standard to obtain uniform estimates for $\|(n^{\varepsilon}_R,\textbf{u}^{\varepsilon}_R, \phi^{\varepsilon}_R)\|_{H^{s'}}$ independent of $\varepsilon$ by the continuity method. The proof is complete for the case $T_i>0$.
\end{proof}

\section{Proof of Theorem \ref{th} for $T_i=0$}
\setcounter{section}{4}\setcounter{equation}{0}
In this section, we prove Theorem \ref{th} for the case of $T_i=0$ and $d=3$, i.e., we prove \eqref{estimate2}. In this case, $\overline{\nabla}$ and $\overline{\Delta}$ reduce to $\nabla=(\partial_{x_1}, \partial_{x_2}, \partial_{x_3})$ and $\Delta=\partial_{x_1}^2+\partial_{x_2}^2+\partial_{x_3}^2$. Since $T_i=0$, we obtain $V=1$ from \eqref{v=1}.

We also assume that \eqref{re} has smooth solutions in a small time $\tau_{\varepsilon}$ dependent on $\varepsilon$. As in Section 3, we let $\tilde C$ be a constant, which will be determined later, much larger than the bound of $|\!|\!|(n^{\varepsilon}_R,\textbf{u}^{\varepsilon}_R, \phi^{\varepsilon}_R)(0)|\!|\!|_{s'}$, such that on $[0,\tau_{\varepsilon}]$
\begin{equation}\label{e25}
\begin{split}
\sup_{[0,\tau_{\varepsilon}]}|\!|\!|(n^{\varepsilon}_R,\textbf{u}^{\varepsilon}_R, \phi^{\varepsilon}_R)|\!|\!|_{s'}\leq \tilde C.
\end{split}
\end{equation}
We will prove that $\tau_{\varepsilon}>\tau_0$ as $\varepsilon\to 0$ for some $0<\tau_0<\tau_*$, where $\tau_*$ is the existence time of the limit equation \eqref{ZKE}. Recalling the expressions for $n$ and $\textbf{u}$ in \eqref{e3}, we immediately know that there exists some $\varepsilon_1=\varepsilon_1(\tilde C)>0$ such that on $[0,\tau_{\varepsilon}]$,
\begin{equation}\label{e26}
\begin{split}
1/2<n<3/2,\ \ \ |\textbf{u}|\leq 1/2,
\end{split}
\end{equation}
for all $0<\varepsilon<\varepsilon_1$.


\begin{proposition}\label{prop1}
Let $s'\geq 4$ be an integer and $(n^{\varepsilon}_R,\textbf{u}^{\varepsilon}_R,\phi^{\varepsilon}_R)$ be a solution to \eqref{re}. Then for any integer $0\leq k\leq s'$, there holds
\begin{equation}\label{e97}
\begin{split}
\frac12\frac{d}{dt}\|\partial_x^{\alpha} \textbf{u}^{\varepsilon}_R\|_{L^2}^2 &+\frac12\frac{d}{dt}\int\frac{1+\varepsilon\phi^{(1)} +\varepsilon^2\phi^{\varepsilon}_R}{n} |\partial_x^{\alpha}\phi^{\varepsilon}_R|^2 +\frac12\frac{d}{dt}\int\frac{\varepsilon}{n} |{\nabla}\partial_x^{\alpha}\phi^{\varepsilon}_R|^2\\
\leq & CC_1(1+\varepsilon|\!|\!|(\textbf{u}^{\varepsilon}_R, \phi^{\varepsilon}_R)|\!|\!|_{s'}^2) \{1+|\!|\!|(\textbf{u}^{\varepsilon}_R, \phi^{\varepsilon}_R)|\!|\!|_{s'}^2\},
\end{split}
\end{equation}
where $\alpha$ is any multi-index with $|\alpha|=k\leq s'$ and $|\!|\!|(\textbf{u}^{\varepsilon}_R,\phi^{\varepsilon}_R)|\!|\!|_{s'}^2$ is defined in \eqref{tri-norm}.
\end{proposition}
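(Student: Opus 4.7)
The plan is to mimic the proof of Proposition \ref{prop-kp2} in the $T_i>0$ case, now specialized to $T_i=0$, $d=3$, $b=1$, $V=1$, with the triple norm $|\!|\!|\cdot|\!|\!|_{s'}$ replacing the plain Sobolev norm on the right-hand side. Concretely, I apply $\partial_x^{\alpha}$ to the momentum equation \eqref{re-2} and take the $L^2$ inner product with $\partial_x^{\alpha}\textbf{u}^{\varepsilon}_R$, producing the analog of the decomposition \eqref{e4}. Since $T_i=0$, the three terms $I_7,I_8,I_9$ coming from the pressure-type gradient drop out entirely. The magnetic-field contribution $I_5$ vanishes identically by antisymmetry, $\langle\partial_x^{\alpha}\textbf{u}^{\varepsilon}_R\times\textbf{e}_1,\partial_x^{\alpha}\textbf{u}^{\varepsilon}_R\rangle_0=0$, so it does not matter that its prefactor is $1/\varepsilon^{3/2}$. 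The transport-type pieces $I_1,\ldots,I_4$ are handled exactly as in Lemma \ref{L8}, using that $n,\textbf{u}$ are close to $(1,\mathbf{0})$ by \eqref{e26} together with Kato--Ponce commutator estimates, yielding a contribution of order $CC_1(1+\varepsilon\|\textbf{u}^{\varepsilon}_R\|_{H^{s'}})(1+\|\textbf{u}^{\varepsilon}_R\|_{H^{s'}}^2)$ plus admissible $O(\varepsilon^2)$ errors.

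The heart of the proof is the treatment of $I_6=-\varepsilon^{-1}\langle\partial_x^{\alpha}\nabla\phi^{\varepsilon}_R,\partial_x^{\alpha}\textbf{u}^{\varepsilon}_R\rangle_0$. I integrate by parts in $x$ and, as in \eqref{equ62}--\eqref{equ31}, substitute for $\frac{1}{\varepsilon}\nabla\cdot\partial_x^{\alpha}\textbf{u}^{\varepsilon}_R$ using the continuity equation \eqref{re-1}, so that $I_6$ splits into five pieces $I_{61},\ldots,I_{65}$. In $I_{61}$ I eliminate $\partial_x^{\alpha}n^{\varepsilon}_R$ through the Poisson identity \eqref{equ39} and estimate exactly as in Lemma \ref{L4}. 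The piece $I_{62}$, involving $-\partial_t\partial_x^{\alpha}n^{\varepsilon}_R$, is treated as in Lemma \ref{L5}: substituting for $\partial_x^{\alpha}n^{\varepsilon}_R$ via \eqref{equ39} and integrating by parts in $t$ produces the two energy derivatives $\tfrac12\tfrac{d}{dt}\!\int\!\tfrac{1+\varepsilon\phi^{(1)}+\varepsilon^2\phi^{\varepsilon}_R}{n}|\partial_x^{\alpha}\phi^{\varepsilon}_R|^2$ and $\tfrac12\tfrac{d}{dt}\!\int\!\tfrac{\varepsilon}{n}|\nabla\partial_x^{\alpha}\phi^{\varepsilon}_R|^2$ that appear on the left of \eqref{e97}, together with commutator remainders controlled by Lemmas \ref{L1}--\ref{L3} and Corollary \ref{cor2}. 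The last three pieces $I_{63},I_{64},I_{65}$ are commutator and source terms bounded exactly as in Lemma \ref{L8}.

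The main obstacle, and the reason for introducing the triple norm, is that when $T_i=0$ there is no Sobolev energy for $n^{\varepsilon}_R$ on the left-hand side of \eqref{e97}, so every occurrence of $\|n^{\varepsilon}_R\|_{H^{s'}}$ on the right must ultimately be closed through the Poisson equation. Wherever the commutator and multiplicative estimates above produce a factor of $\|n^{\varepsilon}_R\|_{H^{s'}}$ or $\varepsilon\|n^{\varepsilon}_R\|_{H^{s'}}$, I invoke Lemma \ref{L1} to replace it by $|\!|\!|\phi^{\varepsilon}_R|\!|\!|_{s'}^2+CC_0^2\varepsilon^2$. The same substitution propagates through the time-derivative bounds in Lemma \ref{L2} and Corollary \ref{cor2}, which are already stated in a form that feeds $|\!|\!|\phi^{\varepsilon}_R|\!|\!|_{s'}$ back into the estimate; this is precisely why the triple norm on $\phi^{\varepsilon}_R$, with its $\varepsilon$- and $\varepsilon^2$-weighted derivative components from \eqref{tri-norm}, matches the scaling of every term one encounters. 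A small bookkeeping point is that the time derivative generated in $I_{624}$ on the quantity $\tfrac12\!\int\!\tfrac{\varepsilon^2\phi^{\varepsilon}_R}{n}|\partial_x^{\alpha}\phi^{\varepsilon}_R|^2$ combines with the $1+\varepsilon\phi^{(1)}$ coefficient from $I_{621}$ and $I_{623}$ to yield the full weight $1+\varepsilon\phi^{(1)}+\varepsilon^2\phi^{\varepsilon}_R$ displayed in \eqref{e97}. Collecting all contributions then gives the claimed inequality.
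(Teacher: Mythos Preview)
Your proposal is correct and follows essentially the same approach as the paper: drop $I_7,I_8,I_9$ since $T_i=0$, apply the estimates of Lemma \ref{L8} (which packages the treatment of $I_1$--$I_6$ via Lemmas \ref{L4} and \ref{L5}), and then invoke Lemma \ref{L1} to convert every $\|n^{\varepsilon}_R\|_{H^{s'}}$ on the right into $|\!|\!|\phi^{\varepsilon}_R|\!|\!|_{s'}$. The paper's own proof is in fact just a two-line reference back to Lemma \ref{L8} and Lemma \ref{L1}, so your write-up is a more detailed but faithful expansion of the same argument.
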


\begin{proof}
The proof of is basically contained in Section 3. Here, we only give the differences. Since $T_i=0$, there are no $I_7$, $I_8$ and $I_9$ in the proof of Lemma \ref{L8}. Then from Lemma \ref{L8}, we obtain
\begin{equation}
\begin{split}
\frac12\frac{d}{dt}\|\partial_x^{\alpha}\textbf{u}^{\varepsilon}_R\|_{L^2}^2 &+\frac12\frac{d}{dt}\int\frac{1+\varepsilon\phi^{(1)} +\varepsilon^2\phi^{\varepsilon}_R}{n} |\partial_x^{\alpha}\phi^{\varepsilon}_R|^2 +\frac12\frac{d}{dt}\int\frac{\varepsilon}{n} |{\nabla}\partial_x^{\alpha}\phi^{\varepsilon}_R|^2\\
\leq & CC_1(C_1+\varepsilon\|(\textbf{u}^{\varepsilon}_R, n^{\varepsilon}_R, \phi^{\varepsilon}_R)\|_{H^{s'}}^2)\{1 +\|(\textbf{u}^{\varepsilon}_R, n^{\varepsilon}_R, \phi^{\varepsilon}_R)\|_{H^{s'}}^2\}.
\end{split}
\end{equation}
%
The proof is then complete by replacing $\|n^{\varepsilon}_R\|_{H^{s'}}^2$ with $|\!|\!|\phi^{\varepsilon}_R|\!|\!|_{s'}$, thanks to Lemma \ref{L1}.
\end{proof}

To obtain uniform estimates for the remainder term $(\textbf{u}^{\varepsilon}_R, n^{\varepsilon}_R, \phi^{\varepsilon}_R)$, we will prove the following
\begin{proposition}\label{prop2}
Let $(n^{\varepsilon}_R,\textbf{u}^{\varepsilon}_R, \phi^{\varepsilon}_R)$ be a solution to \eqref{re} in 3D, then
\begin{equation}\label{equ61}
\begin{split}
\frac{\varepsilon}2\frac{d}{dt}\|{\nabla} \textbf{u}^{\varepsilon}_R\|^2_{\dot H^{s'}} &+\frac12\frac{d}{dt}\langle\partial_x^{\alpha}{\nabla} \phi^{\varepsilon}_R, \frac{\varepsilon(1+\varepsilon\phi^{(1)})}{n} \partial_x^{\alpha}{\nabla}\phi^{\varepsilon}_R\rangle_0 +\frac12\frac{d}{dt} \langle\partial_x^{\alpha}{\Delta}\phi^{\varepsilon}_R, \frac{\varepsilon^2}{n}\partial_x^{\alpha} {\Delta}\phi^{\varepsilon}_R\rangle_0\\
&\leq C(1+C_1(\sqrt{\varepsilon}\|\phi^{\varepsilon}_R\|_{H^{3}})) (1+\varepsilon^2|\!|\!|(\textbf{u}^{\varepsilon}_R, \phi^{\varepsilon}_R)|\!|\!|_{s'}^2) (1+|\!|\!|(\textbf{u}^{\varepsilon}_R, \phi^{\varepsilon}_R)|\!|\!|_{s'}^2),
\end{split}
\end{equation}
where $|\!|\!|(\textbf{u}^{\varepsilon}_R,\phi^{\varepsilon}_R)|\!|\!|_{s'}^2$ is defined in \eqref{tri-norm} and $|\alpha|=s'$ for any $s'\geq 4$.
\end{proposition}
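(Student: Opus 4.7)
The plan is to adapt the energy scheme of Proposition \ref{prop-kp2} one derivative higher while carrying an extra factor of $\varepsilon$, so as to capture the new norms $\varepsilon\|\nabla\textbf{u}^\varepsilon_R\|_{H^{s'}}^2$ and $\varepsilon^2\|\Delta\phi^\varepsilon_R\|_{H^{s'}}^2$ that appear in $|\!|\!|\cdot|\!|\!|_{s'}$ when $T_i=0$. Concretely, for $|\alpha|=s'$ and $j=1,2,3$, I apply $\partial_x^\alpha\partial_j$ to the momentum equation \eqref{re-2} (with $T_i=0$), take the $L^2$ inner product with $\varepsilon\,\partial_x^\alpha\partial_j\textbf{u}^\varepsilon_R$, and sum over $j$. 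The time derivative produces $\tfrac{\varepsilon}{2}\tfrac{d}{dt}\|\partial_x^\alpha\nabla\textbf{u}^\varepsilon_R\|_{L^2}^2$; the magnetic term $\varepsilon^{-3/2}\textbf{u}^\varepsilon_R\times\textbf{e}_1$ vanishes against $\partial_x^\alpha\partial_j\textbf{u}^\varepsilon_R$ by antisymmetry; the transport operator $-\frac{V\textbf{e}_1-\textbf{u}}{\varepsilon}\cdot\nabla$ is symmetrized at the top order as in the $I_2$ estimate of Lemma \ref{L8}, with the commutator bounded by $\|\nabla\textbf{u}\|_{L^\infty}$; and the remaining perturbative terms $\textbf{u}^\varepsilon_R\cdot\nabla\widetilde{\textbf{u}}$, $\varepsilon\textbf{R}_\textbf{u}$ are absorbed via multiplicative estimates and \eqref{assump}.

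The delicate term is the pressure term $-\frac{1}{\varepsilon}\nabla\phi^\varepsilon_R$. After multiplication by $\varepsilon$ and one integration by parts in space, it becomes $\sum_j\langle\partial_x^\alpha\partial_j\phi^\varepsilon_R,\,\partial_x^\alpha\partial_j(\nabla\cdot\textbf{u}^\varepsilon_R)\rangle$. I then substitute $\nabla\cdot\textbf{u}^\varepsilon_R$ from the continuity equation \eqref{re-1}, exactly as in \eqref{equ62}, and substitute $n^\varepsilon_R$ via the equivalent Poisson relation \eqref{e35}, $n^\varepsilon_R=(1+\varepsilon\phi^{(1)})\phi^\varepsilon_R-\varepsilon\Delta\phi^\varepsilon_R+\tfrac{\varepsilon^2}{2}(\phi^\varepsilon_R)^2+\varepsilon^2\overline{R}_\phi$. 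The leading contributions come from the $\partial_t n^\varepsilon_R$ piece: after one integration by parts in time, the $(1+\varepsilon\phi^{(1)})\partial_t\phi^\varepsilon_R$ part generates $\tfrac12\tfrac{d}{dt}\int\frac{\varepsilon(1+\varepsilon\phi^{(1)})}{n}|\partial_x^\alpha\nabla\phi^\varepsilon_R|^2$, matching the second LHS term of \eqref{equ61}; and the $-\varepsilon\Delta\partial_t\phi^\varepsilon_R$ part, after one more integration by parts in $x_j$ so that $\partial_j$ lands on $\partial_j\phi^\varepsilon_R$ to form $\Delta\phi^\varepsilon_R$, followed by integration by parts in $t$, yields $\tfrac12\tfrac{d}{dt}\int\frac{\varepsilon^2}{n}|\partial_x^\alpha\Delta\phi^\varepsilon_R|^2$, matching the third LHS term.

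The remaining contributions from the continuity substitution, namely $(V\textbf{e}_1-\textbf{u})\cdot\nabla n^\varepsilon_R/\varepsilon$, $n^\varepsilon_R\nabla\cdot\widetilde{\textbf{u}}$, $\textbf{u}^\varepsilon_R\cdot\nabla\widetilde{n}$, $\varepsilon R_n$, together with the commutators from the non-constant coefficients $1/n$, $(V\textbf{e}_1-\textbf{u})/n$, and $(1+\varepsilon\phi^{(1)})/n$, reduce to estimates of exactly the same type as in Lemmas \ref{L4} and \ref{L5}, at one higher derivative. They are closed using Lemma \ref{L1} (to convert $\|n^\varepsilon_R\|_{H^k}$ into $|\!|\!|\phi^\varepsilon_R|\!|\!|_k$), Lemma \ref{L2} and Corollary \ref{cor2} (for the $\varepsilon\partial_t n^\varepsilon_R$ and $\varepsilon\partial_t\phi^\varepsilon_R$ factors that appear when $\partial_t$ falls on a coefficient), and Lemma \ref{lem-A} for the contributions of $\overline{R}_\phi$ and $\partial_t\overline{R}_\phi$, which is where the factor $C_1(\sqrt{\varepsilon}\|\phi^\varepsilon_R\|_{H^3})$ enters. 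The $\frac{1}{\varepsilon^{3/2}}$ weight of the magnetic term never obstructs this analysis, since that term vanishes identically at the energy level.

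The main obstacle will be the $\varepsilon\Delta\partial_t\phi^\varepsilon_R$ subterm: the time derivative sits at the highest order in $\phi$, so the commutators $[\partial_x^\alpha\partial_j,1/n]\,\Delta\partial_t\phi^\varepsilon_R$ and the piece where $\partial_j$ falls on $1/n$ after integration by parts carry one more derivative of $\partial_t\phi^\varepsilon_R$ than Corollary \ref{cor2} directly controls. This is handled precisely as in the $I_{622}$ analysis of Lemma \ref{L5}: I use Corollary \ref{rmk5} to trade $\sqrt{\varepsilon}\,\partial_j\nabla$ acting on $\varepsilon^2\partial_t\phi^\varepsilon_R$ for $\varepsilon\overline{\Delta}$ acting on $\varepsilon\partial_t\phi^\varepsilon_R$ through the Riesz inequality, then apply Corollary \ref{cor2} to bound the latter by $1+|\!|\!|(\textbf{u}^\varepsilon_R,\phi^\varepsilon_R)|\!|\!|_{s'}^2$. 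Gathering all contributions and moving the three perfect-time-derivative terms to the LHS yields \eqref{equ61}.
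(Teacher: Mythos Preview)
Your overall strategy is correct and coincides with the paper's: differentiate the momentum equation, test against a weighted copy of $\textbf{u}^\varepsilon_R$, then replace $\nabla\cdot\textbf{u}^\varepsilon_R$ via the continuity equation and $n^\varepsilon_R$ via the Poisson relation \eqref{e35}, so that the $\partial_t n^\varepsilon_R$ piece produces the two weighted $\phi$-energies on the left of \eqref{equ61}. This is exactly Lemmas \ref{L6} and \ref{L7} in the paper.

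The one organizational difference matters, though. The paper applies only $\partial_x^\alpha$ to \eqref{re-2} and tests against $\varepsilon\Delta\partial_x^\alpha\textbf{u}^\varepsilon_R$; after a single integration by parts the pressure term is already $K_8=-\langle\partial_x^\alpha\Delta\phi^\varepsilon_R,\partial_x^\alpha\nabla\cdot\textbf{u}^\varepsilon_R\rangle_0$, and the continuity substitution is then done at the $\partial_x^\alpha$ level. Consequently the dangerous piece $K_{822}=-\varepsilon\langle\partial_x^\alpha\Delta\phi^\varepsilon_R,\tfrac{\varepsilon}{n}\partial_t\Delta\partial_x^\alpha\phi^\varepsilon_R\rangle_0$ is handled by a \emph{single} time integration by parts, with no spatial commutator and no appeal to Corollary \ref{rmk5}. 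By contrast, your scheme (apply $\partial_x^\alpha\partial_j$, substitute, then integrate $\partial_j$ back) creates the extra error where $\partial_j$ falls on $1/n$ and leaves you facing $\partial_t\Delta\partial_x^\alpha\phi^\varepsilon_R$, which is one derivative above what Corollary \ref{rmk5} and Corollary \ref{cor2} control (those give $\varepsilon^2\|\varepsilon\partial_t\Delta\phi^\varepsilon_R\|_{H^{s'-1}}^2$, not $H^{s'}$). Your reference to the $I_{622}$ argument does not close this: there the top-order time-derivative factor is $\partial_t\overline{\nabla}\partial_x^\alpha\phi^\varepsilon_R$, one derivative lower than the $\partial_t\Delta\partial_x^\alpha\phi^\varepsilon_R$ you face. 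The clean fix is trivial: integrate by parts in $x_j$ \emph{before} you substitute, converting $\sum_j\langle\partial_x^\alpha\partial_j\phi^\varepsilon_R,\partial_x^\alpha\partial_j(\nabla\cdot\textbf{u}^\varepsilon_R)\rangle$ into $-\langle\partial_x^\alpha\Delta\phi^\varepsilon_R,\partial_x^\alpha\nabla\cdot\textbf{u}^\varepsilon_R\rangle$, and then follow the paper's $K_{81}$--$K_{85}$ decomposition verbatim. With that reorganization your sketch is complete.
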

\begin{proof}
Let $\alpha$ be any multi-index with $|\alpha|=k\leq s'$ for any $4\leq s'\leq s$. We take $\partial_x^{\alpha}$ of \eqref{re-2} and then take inner product with $\varepsilon{\Delta}\partial_x^{\alpha} \textbf{u}^{\varepsilon}_R$ in $L^2(\Bbb R^3)$. By integrating by parts, we obtain
\begin{equation}\label{equ87}
\begin{split}
\frac{\varepsilon}2\frac{d}{dt}\|\partial_x^{\alpha}{\nabla} \textbf{u}^{\varepsilon}_R\|^2= & \langle\partial_{x_1}\partial_x^{\alpha}{\nabla} \textbf{u}^{\varepsilon}_R, \partial_x^{\alpha}{\nabla} \textbf{u}^{\varepsilon}_R\rangle_{0} -\langle\partial_x^{\alpha}({\nabla}\textbf{u}\cdot {\nabla}\textbf{u}^{\varepsilon}_R), \partial_x^{\alpha}{\nabla}\textbf{u}^{\varepsilon}_R\rangle_{0}\\
&-\langle\partial_x^{\alpha}(\textbf{u}\cdot {\nabla}^2\textbf{u}^{\varepsilon}_R), \partial_x^{\alpha}{\nabla}\textbf{u}^{\varepsilon}_R\rangle_{0}
-\varepsilon\langle\partial_x^{\alpha}({\nabla} \textbf{u}^{\varepsilon}_R\cdot{\nabla} \widetilde{\textbf{u}}), \partial_x^{\alpha}{\nabla}\textbf{u}^{\varepsilon}_R\rangle_0\\
&-\varepsilon\langle\partial_x^{\alpha}(\textbf{u}^{\varepsilon}_R \cdot{\nabla}^2 \widetilde{\textbf{u}}), \partial_x^{\alpha}{\nabla}\textbf{u}^{\varepsilon}_R\rangle_0 -\varepsilon^2\langle \partial_x^{\alpha}{\nabla}\textbf{R}_\textbf{u}, \partial_x^{\alpha}{\nabla}\textbf{u}^{\varepsilon}_R\rangle_0\\
&+\frac{1}{\varepsilon^{1/2}}\langle \textbf{u}^{\varepsilon}_R\times \textbf{e}_1, \textbf{u}^{\varepsilon}_R\rangle_{\dot H^k} -\langle\partial_x^{\alpha}{\Delta}\phi^{\varepsilon}_R, \partial_x^{\alpha}{\nabla}\cdot \textbf{u}^{\varepsilon}_R\rangle_0\\
=&:K_1+\cdots+K_8.
\end{split}
\end{equation}

First, we know that $K_1$ and $K_7$ vanish by integration by parts.

\emph{Estimate of $K_2$.} Using the commutator estimate, we have
\begin{equation}
\begin{split}
|K_2|\leq & |\varepsilon\langle\partial_x^{\alpha}({\nabla} \widetilde{\textbf{u}}\cdot{\nabla}\textbf{u}^{\varepsilon}_R), \partial_x^{\alpha}{\nabla}\textbf{u}^{\varepsilon}_R\rangle_{0}| +|\varepsilon^2\langle\partial_x^{\alpha}({\nabla} \textbf{u}^{\varepsilon}_R\cdot{\nabla}\textbf{u}^{\varepsilon}_R), \partial_x^{\alpha}{\nabla}\textbf{u}^{\varepsilon}_R\rangle_{0}|\\
\leq & C\varepsilon\|{\nabla}\textbf{u}^{\varepsilon}_R\|_{H^k}^2 +\varepsilon^2\|{\nabla}\textbf{u}^{\varepsilon}_R\|_{L^{\infty}} \|{\nabla}\textbf{u}^{\varepsilon}_R\|_{H^k}^2\\
\leq & C(1+\varepsilon\|{\nabla}\textbf{u}^{\varepsilon}_R\|_{H^2}) (\varepsilon\|{\nabla}\textbf{u}^{\varepsilon}_R\|_{H^k}^2),
\end{split}
\end{equation}
where we have used Lemma \ref{Le-inequ} and Sobolev embedding.

\emph{Estimate of $K_3$.} By using commutator and integrating by parts, we have
\begin{equation}
\begin{split}
K_3= &-\langle{\textbf{u}}\cdot \partial_x^{\alpha}{\nabla}^2\textbf{u}^{\varepsilon}_R, \partial_x^{\alpha}{\nabla}\textbf{u}^{\varepsilon}_R\rangle_{0} -\langle[\partial_x^{\alpha},\textbf{u}]\cdot{\nabla}^2 \textbf{u}^{\varepsilon}_R,\partial_x^{\alpha}{\nabla} \textbf{u}^{\varepsilon}_R\rangle_{0}\\
=& \frac12\langle{\nabla}\cdot{\textbf{u}} \partial_x^{\alpha}{\nabla}\textbf{u}^{\varepsilon}_R, \partial_x^{\alpha}{\nabla}\textbf{u}^{\varepsilon}_R\rangle_{0} -\langle[\partial_x^{\alpha},\textbf{u}]\cdot{\nabla}^2 \textbf{u}^{\varepsilon}_R,\partial_x^{\alpha}{\nabla} \textbf{u}^{\varepsilon}_R\rangle_{0}\\
=&:K_{31}+K_{32}.
\end{split}
\end{equation}
Recalling \eqref{e3}, we know
\begin{equation}
\begin{split}
K_{31}\leq & \langle{\nabla}\cdot{\textbf{u}} \partial_x^{\alpha}{\nabla}\textbf{u}^{\varepsilon}_R, \partial_x^{\alpha}{\nabla}\textbf{u}^{\varepsilon}_R\rangle_0\\
\leq & C\varepsilon(1+\varepsilon\|{\nabla}\cdot \textbf{u}^{\varepsilon}_R\|_{L^{\infty}}) \|\partial_x^{\alpha}{\nabla}\textbf{u}^{\varepsilon}_R\|^2\\
\leq & C(1+\varepsilon\|\textbf{u}^{\varepsilon}_R\|_{H^3}) (\varepsilon \|{\nabla}\textbf{u}^{\varepsilon}_R\|_{H^{k}}^2).
\end{split}
\end{equation}
Similarly, using \eqref{e3} and Lemma \ref{Le-inequ}, we have 
\begin{equation}
\begin{split}
K_{32}\leq & C(\|{\nabla}^2\textbf{u}^{\varepsilon}_R\|_{H^{k-1}} \|{\textbf{u}}\|_{L^{\infty}} +\|{\nabla}^2\textbf{u}^{\varepsilon}_R\|_{L^{\infty}} \|{\textbf{u}}\|_{H^k}) \|\partial_x^{\alpha}{\nabla}\textbf{u}^{\varepsilon}_R\|^2\\
\leq & C(1+\varepsilon\|\textbf{u}^{\varepsilon}_R\|_{L^{\infty}}) (\varepsilon\|{\nabla}\textbf{u}^{\varepsilon}_R\|_{H^k}^2)\\
&+C\varepsilon\|{\nabla}^2\textbf{u}^{\varepsilon}_R\|_{L^{\infty}} (1+\varepsilon \|\textbf{u}^{\varepsilon}_R\|_{H^k}) \|\partial_x^{\alpha}{\nabla}\textbf{u}^{\varepsilon}_R\|_{L^{2}}\\
\leq & C(1+\varepsilon\|\textbf{u}^{\varepsilon}_R\|_{H^4}) (\|\textbf{u}^{\varepsilon}_R\|_{H^{s'}}^2 +\varepsilon\|{\nabla}\textbf{u}^{\varepsilon}_R\|_{H^{s'}}^2),
\end{split}
\end{equation}
where $k\leq s'$ and  $s'\geq 4$. Therefore, using \eqref{tri-norm}, we have
\begin{equation}
\begin{split}
K_{3} \leq & C(1+\varepsilon\|\textbf{u}^{\varepsilon}_R\|_{H^4}) |\!|\!|\textbf{u}^{\varepsilon}_R|\!|\!|_{{s'}}^2.
\end{split}
\end{equation}

\emph{Estimate of $K_4$, $K_5$ and $K_6$.} The estimate of $K_4$ and $K_5$ is similar to $K_2$ and $K_3$. We obtain
\begin{equation}
\begin{split}
K_{4},K_5 \leq & C|\!|\!|\textbf{u}^{\varepsilon}_R|\!|\!|_{{s'}}^2.
\end{split}
\end{equation}
Since $\textbf{R}_\textbf{u}$ depends only on $\widetilde{\textbf{u}}$, we know that
\begin{equation}
\begin{split}
K_{6} \leq & C(1+\varepsilon\|{\nabla}\textbf{u}^{\varepsilon}_R\|_{H^{s'}}^2).
\end{split}
\end{equation}

Summarizing, we have
\begin{equation}\label{equ88}
\begin{split}
\sum_{i=1}^7|K_{i}| \leq  C(1+\varepsilon\|\textbf{u}^{\varepsilon}_R\|_{H^4}) (1+|\!|\!|\textbf{u}^{\varepsilon}_R|\!|\!|_{{s'}}^2).
\end{split}
\end{equation}

\emph{Estimate of $K_8$.} Taking $\partial_x^{\alpha}$ of \eqref{re-1} with $\overline{\nabla}=\nabla$ in 3D, we obtain
\begin{equation}\label{e53}
\begin{split}
\frac{1}{\varepsilon}{\nabla}\cdot\partial_x^{\alpha} \textbf{u}^{\varepsilon}_R=& \frac{1}{n}\{\frac{V\textbf{e}_1-\textbf{u}}{\varepsilon} \cdot{\nabla} \partial_x^{\alpha}n^{\varepsilon}_R -\partial_t\partial_x^{\alpha}n^{\varepsilon}_R -[\partial_x^{\alpha},\frac{\textbf{u}}{\varepsilon}] \cdot{\nabla}n^{\varepsilon}_R\\
&-[\partial_x^{\alpha}, \frac{n}{\varepsilon}]{\nabla}\cdot\textbf{u}^{\varepsilon}_R -\partial_x^{\alpha}(n^{\varepsilon}_R\nabla\cdot\widetilde{\textbf{u}} +\textbf{u}^{\varepsilon}_R\cdot\nabla\widetilde{n}+\varepsilon R_n)\},
\end{split}
\end{equation}
where $|\alpha|=k$. Accordingly, we have the decomposition
\begin{equation}\label{equ67}
\begin{split}
K_8=&-\langle\partial_x^{\alpha}{\Delta}\phi^{\varepsilon}_R, \frac{V\textbf{e}_1-\textbf{u}}{n}\cdot {\nabla}\partial_x^{\alpha}n^{\varepsilon}_R\rangle_{0} +\langle\partial_x^{\alpha}{\Delta}\phi^{\varepsilon}_R, \frac{\varepsilon}{n}\partial_t\partial_x^{\alpha}n^{\varepsilon}_R \rangle_{0}\\
&+\langle\partial_x^{\alpha}{\Delta}\phi^{\varepsilon}_R, \frac{1}{n}[\partial_x^{\alpha},{\textbf{u}}] \cdot{\nabla}n^{\varepsilon}_R\rangle_{0} +\langle\partial_x^{\alpha}{\Delta}\phi^{\varepsilon}_R, \frac{1}{n}[\partial_x^{\alpha},{n}]{\nabla} \cdot\textbf{u}^{\varepsilon}_R\rangle_{0}\\
& +\langle\partial_x^{\alpha}{\Delta}\phi^{\varepsilon}_R, \frac{\varepsilon}{n}\partial_x^{\alpha} (n^{\varepsilon}_R\nabla\cdot\widetilde{\textbf{u}} +\textbf{u}^{\varepsilon}_R\cdot\nabla\widetilde{n}+\varepsilon R_n)\rangle_{0}\\
=&:K_{81}+\cdots+K_{85}.
\end{split}
\end{equation}

Recalling $\textbf{u}=\varepsilon\widetilde{\textbf{u}} +\varepsilon^2\textbf{u}^{\varepsilon}_R$ in \eqref{e3}, we have
\begin{equation}\label{equ63}
\begin{split}
|K_{83}|\leq & \left|\langle \partial_x^{\alpha}{\Delta}\phi^{\varepsilon}_R, \frac{1}{n}[\partial_x^{\alpha}, {\textbf{u}}]{\nabla}n^{\varepsilon}_R\rangle_0\right|\\
\leq & C\|\partial_x^{\alpha}{\Delta}\phi^{\varepsilon}_R\|_{L^2}^2 \{\|\nabla{n}^{\varepsilon}_R\|_{H^{k-1}} \|\nabla\textbf{u}\|_{L^{\infty}} +\|\nabla{n}^{\varepsilon}_R\|_{L^{\infty}} \|\nabla\textbf{u}\|_{H^{k-1}}\}\\
\leq & C\varepsilon\|\partial_x^{\alpha} {\Delta}\phi^{\varepsilon}_R\|_{L^2} \{(1+\varepsilon\|\nabla\textbf{u}^{\varepsilon}_R\|_{L^{\infty}}) \|n^{\varepsilon}_R\|_{H^3} +(1+\varepsilon\|\textbf{u}^{\varepsilon}_R\|_{H^k}) \|n^{\varepsilon}_R\|_{H^k}\}\\
\leq & C\varepsilon^2\|{\Delta}\phi^{\varepsilon}_R\|_{H^{s'}}^2 +C(1+\varepsilon^2\|\textbf{u}^{\varepsilon}_R\|_{H^{s'}}^2) \|n^{\varepsilon}_R\|_{H^{s'}}^2,
\end{split}
\end{equation}
where $k\leq s'$ and $s'\geq 4$. Here, we have used the fact that $1/2\leq n\leq 3/2$ is bounded from above and below by \eqref{e25}.

The estimate for $K_{84}$ and $K_{85}$ are similar to $K_{83}$. From \eqref{e27} and \eqref{e28}, we know
\begin{equation}\label{equ64}
\begin{split}
|K_{84}|+|K_{85}|\leq & C\varepsilon^2\|{\Delta}\phi^{\varepsilon}_R\|_{H^{s'}}^2 + C(1+\varepsilon^2\|\textbf{u}^{\varepsilon}_R\|_{H^{s'}}^2) \|(n^{\varepsilon}_R,\textbf{u}^{\varepsilon}_R, \phi^{\varepsilon}_R)\|_{H^{s'}}^2,
\end{split}
\end{equation}
where $s'\geq4$.
%

Adding \eqref{equ63} and \eqref{equ64} together and using Lemma \ref{L1}, we have
 \begin{equation}\label{equ66}
\begin{split}
|K_{83}|+|K_{84}|+|K_{85}|\leq &C(1+\varepsilon^2\|\textbf{u}^{\varepsilon}_R\|^2_{H^{s'}}) |\!|\!|(\textbf{u}^{\varepsilon}_R,\phi^{\varepsilon}_R)|\!|\!|^2_{s'},
\end{split}
\end{equation}
where $|\!|\!|(\textbf{u}^{\varepsilon}_R,\phi^{\varepsilon}_R)|\!|\!|^2_{s'}$ is defined in \eqref{tri-norm}.

The proof of Proposition \ref{prop2} is complete once $K_{81}$ and $K_{82}$ are estimated in the following Lemma \ref{L6} and Lemma \ref{L7}.
\end{proof}

\begin{lemma}\label{L6}
Let $(n^{\varepsilon}_R,\textbf{u}^{\varepsilon}_R,\phi^{\varepsilon}_R)$ be a solution to \eqref{re}, then
\begin{equation*}
\begin{split}
|K_{81}|\leq & C_1(\sqrt{\varepsilon}\|\phi^{\varepsilon}_R\|_{H^{3}}) (1+\varepsilon^2|\!|\!|(\textbf{u}^{\varepsilon}_R, \phi^{\varepsilon}_R)|\!|\!|_{s'}^2) |\!|\!|\phi^{\varepsilon}_R|\!|\!|_{s'}^2,
\end{split}
\end{equation*}
where $|\!|\!|(\textbf{u}^{\varepsilon}_R, \phi^{\varepsilon}_R)|\!|\!|_{s'}^2$ is given in \eqref{tri-norm} and $K_{81}$ is given in \eqref{equ67}.
\end{lemma}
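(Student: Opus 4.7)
The strategy is to mimic the proof of Lemma \ref{L4}: substitute the Poisson relation \eqref{re-3} into $\nabla\partial_x^{\alpha}n^{\varepsilon}_R$ to get the four-term decomposition $K_{81}=K_{811}+K_{812}+K_{813}+K_{814}$, corresponding to the four summands of $\nabla\partial_x^{\alpha}\bigl(\phi^{\varepsilon}_R-\varepsilon\Delta\phi^{\varepsilon}_R+\varepsilon\phi^{(1)}\phi^{\varepsilon}_R+\varepsilon^{3/2}R_{\phi}\bigr)$. The engine of the whole argument is the observation already exploited in \eqref{equ33} and \eqref{equ69}: writing $\textbf{a}:=(V\textbf{e}_1-\textbf{u})/n$, both $\|\nabla\textbf{a}\|_{L^{\infty}}$ and $\|\nabla\cdot\textbf{a}\|_{L^{\infty}}$ carry an extra factor $\varepsilon$ because $\textbf{u}=O(\varepsilon)$ and $n=1+O(\varepsilon)$. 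This $\varepsilon$-gain is exactly what compensates the inflations $\|\Delta\phi^{\varepsilon}_R\|_{H^{s'}}^2\leq\varepsilon^{-2}|\!|\!|\phi^{\varepsilon}_R|\!|\!|_{s'}^2$ and $\|\nabla\phi^{\varepsilon}_R\|_{H^{s'}}^2\leq\varepsilon^{-1}|\!|\!|\phi^{\varepsilon}_R|\!|\!|_{s'}^2$ tolerated by the triple norm.

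For $K_{811}$ I would factor $\Delta=\nabla\cdot\nabla$ and integrate the outer divergence by parts to land at $\int\nabla\partial_x^{\alpha}\phi^{\varepsilon}_R\cdot\nabla(\textbf{a}\cdot\nabla\partial_x^{\alpha}\phi^{\varepsilon}_R)\,dx$; the Leibniz rule together with the identity $\int w\cdot(\textbf{a}\cdot\nabla)w\,dx=-\tfrac{1}{2}\int(\nabla\cdot\textbf{a})|w|^2\,dx$ for $w=\nabla\partial_x^{\alpha}\phi^{\varepsilon}_R$ reduces the estimate to $C\|\nabla\textbf{a}\|_{L^{\infty}}\|\nabla\phi^{\varepsilon}_R\|_{H^{s'}}^2\leq C\varepsilon\|\nabla\phi^{\varepsilon}_R\|_{H^{s'}}^2\leq C|\!|\!|\phi^{\varepsilon}_R|\!|\!|_{s'}^2$, after absorbing the $\varepsilon\|(n^{\varepsilon}_R,\textbf{u}^{\varepsilon}_R)\|_{H^3}$ factor via Lemma \ref{L1} into the $(1+\varepsilon^2|\!|\!|\cdot|\!|\!|_{s'}^2)$ term. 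For $K_{812}$ a single integration by parts yields $-\tfrac{\varepsilon}{2}\int(\nabla\cdot\textbf{a})|\partial_x^{\alpha}\Delta\phi^{\varepsilon}_R|^2dx$, bounded by $C\varepsilon^2\|\Delta\phi^{\varepsilon}_R\|_{H^{s'}}^2\leq C|\!|\!|\phi^{\varepsilon}_R|\!|\!|_{s'}^2$ in the same way. For $K_{814}$ I would use Cauchy--Schwarz in the form $\varepsilon^{3/2}\|\Delta\phi^{\varepsilon}_R\|_{H^{s'}}\|R_{\phi}\|_{H^{s'+1}}$, invoke \eqref{e52-1} in Lemma \ref{lem-A} (whose $C_1$-argument at $H^3$ comes from monotonicity of $C_1$ and the elementary $\|\cdot\|_{H^3}\leq\|\cdot\|_{H^{s'}}$), use the Poisson-based interpolation $\|\phi^{\varepsilon}_R\|_{H^{s'+1}}\leq C\varepsilon^{-1/2}|\!|\!|\phi^{\varepsilon}_R|\!|\!|_{s'}$, and close with Young's inequality to redistribute the $\varepsilon^{3/2}$.

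The genuine obstacle is $K_{813}=-\varepsilon\langle\partial_x^{\alpha}\Delta\phi^{\varepsilon}_R,\textbf{a}\cdot\nabla\partial_x^{\alpha}(\phi^{(1)}\phi^{\varepsilon}_R)\rangle_0$: a direct Cauchy--Schwarz produces an uncontrolled $\varepsilon^{-1/2}$ loss, because $\nabla\partial_x^{\alpha}(\phi^{(1)}\phi^{\varepsilon}_R)$ costs one extra derivative of $\phi^{\varepsilon}_R$. My plan is the Leibniz splitting $\nabla\partial_x^{\alpha}(\phi^{(1)}\phi^{\varepsilon}_R)=\phi^{(1)}\nabla\partial_x^{\alpha}\phi^{\varepsilon}_R+[\nabla\partial_x^{\alpha},\phi^{(1)}]\phi^{\varepsilon}_R$. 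The principal piece is then of exactly the same shape as $K_{811}$ with coefficient $\phi^{(1)}\textbf{a}$ in place of $\textbf{a}$; although $\|\nabla(\phi^{(1)}\textbf{a})\|_{L^{\infty}}$ is only $O(1)$, the prefactor $\varepsilon$ sitting in front of $K_{813}$ plays the role of the missing $\varepsilon$, and the same $\Delta\to\nabla^2$ integration by parts closes the estimate. The commutator piece is bounded by the Kato--Ponce inequality $\|[\nabla\partial_x^{\alpha},\phi^{(1)}]\phi^{\varepsilon}_R\|_{L^2}\leq C\|\phi^{\varepsilon}_R\|_{H^{s'}}$ combined with $\varepsilon\|\Delta\phi^{\varepsilon}_R\|_{H^{s'}}\|\phi^{\varepsilon}_R\|_{H^{s'}}\leq\varepsilon^2\|\Delta\phi^{\varepsilon}_R\|_{H^{s'}}^2+\|\phi^{\varepsilon}_R\|_{H^{s'}}^2\leq 2|\!|\!|\phi^{\varepsilon}_R|\!|\!|_{s'}^2$. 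Summing the four estimates and tracking the $C_1$ factor from Lemma \ref{lem-A} yields the claimed bound.
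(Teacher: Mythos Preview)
Your proposal is correct and follows essentially the same route as the paper: the same four-term decomposition via \eqref{re-3}, the same $\Delta=\nabla\cdot\nabla$ integration-by-parts for $K_{811}$, the single integration-by-parts for $K_{812}$, the Leibniz/commutator split of $K_{813}$ with the principal part handled like $K_{811}$ (using the explicit $\varepsilon$ prefactor in lieu of an $\varepsilon$ from $\nabla\textbf{a}$) and the commutator part by Kato--Ponce, and Cauchy--Schwarz plus Lemma~\ref{lem-A}/Corollary~\ref{cor-a} for $K_{814}$. One small slip: your justification for placing the $C_1$-argument at $H^3$ via monotonicity runs in the wrong direction (monotonicity gives $C_1(\sqrt{\varepsilon}\|\phi^{\varepsilon}_R\|_{H^3})\leq C_1(\sqrt{\varepsilon}\|\phi^{\varepsilon}_R\|_{H^{s'}})$, not the reverse), but this is harmless since the paper itself is loose on this point and the bound with the larger argument is what actually feeds into Proposition~\ref{prop2}.
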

\begin{proof}
Let $\alpha$ be the multi-index in Proposition \ref{prop2}. Taking $\partial_x^{\alpha}$ of \eqref{re-3}, we have
\begin{equation*}
\begin{split}
\partial_x^{\alpha}n^{\varepsilon}_R =\partial_x^{\alpha}\phi^{\varepsilon}_R -\varepsilon{\Delta}\partial_x^{\alpha}\phi^{\varepsilon}_R +\varepsilon\partial_x^{\alpha}(\phi^{(1)}\phi^{\varepsilon}_R) +\varepsilon^{3/2}\partial_x^{\alpha}R_{\phi}.
\end{split}
\end{equation*}
$K_{81}$ in \eqref{equ67} is then divided into
\begin{equation*}
\begin{split}
K_{81}=&-\langle\partial_x^{\alpha}{\Delta}\phi^{\varepsilon}_R, \frac{\textbf{e}_1-\textbf{u}}{n} \cdot{\nabla}\partial_x^{\alpha}n^{\varepsilon}_R\rangle_{0}\\ =&-\langle\partial_x^{\alpha}{\Delta}\phi^{\varepsilon}_R, \frac{\textbf{e}_1-\textbf{u}}{n} \cdot{\nabla}\partial_x^{\alpha}\phi^{\varepsilon}_R\rangle_0 +\varepsilon\langle\partial_x^{\alpha}{\Delta}\phi^{\varepsilon}_R, \frac{\textbf{e}_1-\textbf{u}}{n} \cdot{\nabla}{\Delta}\partial_x^{\alpha} \phi^{\varepsilon}_R\rangle_0\\
&-\varepsilon\langle\partial_x^{\alpha}{\Delta}\phi^{\varepsilon}_R, \frac{\textbf{e}_1-\textbf{u}}{n} \cdot{\nabla}\partial_x^{\alpha} (\phi^{(1)}\phi^{\varepsilon}_R)\rangle_0 -\varepsilon^{3/2}\langle\partial_x^{\alpha} {\Delta}\phi^{\varepsilon}_R, \frac{\textbf{e}_1-\textbf{u}}{n}\cdot{\nabla} \partial_x^{\alpha}R_{\phi}\rangle_0\\
=&:K_{811}+K_{812}+K_{813}+K_{814},
\end{split}
\end{equation*}
where $|\alpha|=k\leq s'$ for $s'\geq 3$.

\emph{Estimate of $K_{811}$.} By integrating by parts twice, we have
\begin{equation}\label{e30}
\begin{split}
K_{811}=& -\langle\partial_x^{\alpha}{\nabla}\cdot {\nabla}\phi^{\varepsilon}_R, \frac{\textbf{e}_1-\textbf{u}}{n}\partial_x^{\alpha} {\nabla}\phi^{\varepsilon}_R\rangle_0\\
=& \langle\partial_x^{\alpha}{\nabla}\phi^{\varepsilon}_R, \frac{\textbf{e}_1-\textbf{u}}{n}\partial_x^{\alpha} {\nabla}^2\phi^{\varepsilon}_R\rangle_0 +\langle\partial_x^{\alpha}{\nabla}\phi^{\varepsilon}_R, {\nabla}(\frac{\textbf{e}_1-\textbf{u}}{n})\partial_x^{\alpha} {\nabla}\phi^{\varepsilon}_R\rangle_0\\
=& -\frac12\langle\partial_x^{\alpha}{\nabla}\phi^{\varepsilon}_R, {\nabla}\cdot(\frac{\textbf{e}_1-\textbf{u}}{n})\partial_x^{\alpha} {\nabla}\phi^{\varepsilon}_R\rangle_0 +\langle\partial_x^{\alpha}{\nabla}\phi^{\varepsilon}_R, {\nabla}(\frac{\textbf{e}_1-\textbf{u}}{n})\partial_x^{\alpha} {\nabla}\phi^{\varepsilon}_R\rangle_0
\end{split}
\end{equation}
Recalling \eqref{e3}, we have
\begin{equation}\label{e29}
\begin{split}
\|{\nabla}(\frac{\textbf{e}_1-\textbf{u}}{n})\|_{L^{\infty}}, \|{\nabla}\cdot(\frac{\textbf{e}_1-\textbf{u}}{n})\|_{L^{\infty}}\leq C\varepsilon(1+\varepsilon\|\nabla n^{\varepsilon}_R\|_{L^{\infty}} +\varepsilon\|\nabla\textbf{u}^{\varepsilon}_R\|_{L^{\infty}}).
\end{split}
\end{equation}
Therefore, using Lemma \ref{L1}, we obtain
\begin{equation}\label{equ74}
\begin{split}
|K_{811}|\leq & C\varepsilon(1+\varepsilon(\|n^{\varepsilon}_R\|_{H^3} +\|\textbf{u}^{\varepsilon}_R\|_{H^3})) \|\overline{\nabla}\partial_x^{\alpha}\phi^{\varepsilon}_R\|^2\\
\leq & C(1+\varepsilon|\!|\!|(\textbf{u}^{\varepsilon}_R, \phi^{\varepsilon}_R)|\!|\!|_{s'}) |\!|\!|\phi^{\varepsilon}_R|\!|\!|_{s'}^2,
\end{split}
\end{equation}
where $s'\geq 3$.

\emph{Estimate of $K_{812}$.} By integrating by parts, we obtain
\begin{equation}
\begin{split}
K_{812}=  -\frac{\varepsilon}{2}\langle\partial_x^{\alpha} {\Delta}\phi^{\varepsilon}_R, {\nabla}\cdot\frac{\textbf{e}_1-\textbf{u}}{n} {\Delta}\partial_x^{\alpha}\phi^{\varepsilon}_R\rangle_0.
\end{split}
\end{equation}
Using \eqref{e29}, we obtain
\begin{equation}\label{equ71}
\begin{split}
|K_{812}|\leq & C\varepsilon^2(1+\varepsilon(\|(n^{\varepsilon}_R, \textbf{u}^{\varepsilon}_R)\|_{H^3})) \|{\Delta}\partial_x^{\alpha}\phi^{\varepsilon}_R\|^2\\
\leq & C(1+\varepsilon|\!|\!|(\textbf{u}^{\varepsilon}_R, \phi^{\varepsilon}_R)|\!|\!|_{s'}) |\!|\!|\phi^{\varepsilon}_R|\!|\!|_{s'}^2,
\end{split}
\end{equation}
thanks to Lemma \ref{L1}, where $s'\geq 3$.

\emph{Estimate of $K_{813}$.} By using the commutator and integrating by parts twice, we have
\begin{equation*}\label{equ73}
\begin{split}
K_{813}= & -\varepsilon\langle\partial_x^{\alpha}{\Delta}\phi^{\varepsilon}_R, \frac{(\textbf{e}_1-\textbf{u})\phi^{(1)}}{n} \cdot{\nabla}\partial_x^{\alpha}\phi^{\varepsilon}_R\rangle_0 -\varepsilon\langle\partial_x^{\alpha}{\Delta}\phi^{\varepsilon}_R, \frac{(\textbf{e}_1-\textbf{u})}{n} \cdot[{\nabla}\partial_x^{\alpha},\phi^{(1)}] \phi^{\varepsilon}_R\rangle_0\\
= & -\frac{\varepsilon}{2}\langle\partial_x^{\alpha}{\nabla} \phi^{\varepsilon}_R, {\nabla}\cdot(\frac{(\textbf{e}_1-\textbf{u})\phi^{(1)}}{n}) {\nabla}\partial_x^{\alpha}\phi^{\varepsilon}_R\rangle_0 +{\varepsilon}\langle\partial_x^{\alpha}{\nabla} \phi^{\varepsilon}_R, {\nabla}(\frac{(\textbf{e}_1-\textbf{u})\phi^{(1)}}{n}) {\nabla}\partial_x^{\alpha}\phi^{\varepsilon}_R\rangle_0\\
& -\varepsilon\langle\partial_x^{\alpha}{\Delta}\phi^{\varepsilon}_R, \frac{(\textbf{e}_1-\textbf{u})}{n} \cdot[{\nabla}\partial_x^{\alpha},\phi^{(1)}] \phi^{\varepsilon}_R\rangle_0\\
=& : K_{8131}+K_{8132}+K_{813}.
\end{split}
\end{equation*}
Since
\begin{equation}
\begin{split}
\|{\nabla} (\frac{(\textbf{e}_1-\textbf{u})\phi^{(1)}}{n})\|_{L^{\infty}}, \|{\nabla}\cdot(\frac{(\textbf{e}_1-\textbf{u})\phi^{(1)}} {n})\|_{L^{\infty}}\leq C(1+\varepsilon^2\|(\nabla n^{\varepsilon}_R,\nabla\textbf{u}^{\varepsilon}_R)\|_{L^{\infty}}),
\end{split}
\end{equation}
we obtain
\begin{equation}
\begin{split}
|K_{8131}|+|K_{8132}|\leq & C\varepsilon(1+\varepsilon^2(\|n^{\varepsilon}_R\|_{H^3} +\|\textbf{u}^{\varepsilon}_R\|_{H^3})) \|{\nabla}\partial_x^{\alpha}\phi^{\varepsilon}_R\|^2\\
\leq & C(1+\varepsilon|\!|\!|(\textbf{u}^{\varepsilon}_R, \phi^{\varepsilon}_R)|\!|\!|_{s'}) |\!|\!|\phi^{\varepsilon}_R|\!|\!|_{s'}^2.
\end{split}
\end{equation}
From \eqref{e26}, we know that
\begin{equation}
\begin{split}
|K_{8133}|\leq & C\varepsilon^2\|{\Delta}\partial_x^{\alpha} \phi^{\varepsilon}_R\|^2+C\|[{\nabla}\partial_x^{\alpha},\phi^{(1)}] \phi^{\varepsilon}_R\|_{L^2}^2\\
\leq & C\varepsilon^2\|{\Delta}\partial_x^{\alpha} \phi^{\varepsilon}_R\|^2+C\|\phi^{\varepsilon}_R\|_{H^{s'}}^2,
\end{split}
\end{equation}
where we have used \eqref{e31}. Therefore,
\begin{equation}
\begin{split}
|K_{813}| \leq & C(1+\varepsilon|\!|\!|(\textbf{u}^{\varepsilon}_R, \phi^{\varepsilon}_R)|\!|\!|_{s'}) |\!|\!|\phi^{\varepsilon}_R|\!|\!|_{s'}^2.
\end{split}
\end{equation}

\emph{Estimate of $K_{814}$.} Using \eqref{e52-1} in Lemma \ref{lem-A}, we have
\begin{equation}\label{equ68}
\begin{split}
|K_{814}|\leq & C\varepsilon^{3/2}\|{\Delta}\partial_x^{\alpha}\phi^{\varepsilon}_R\| \|\partial_x^{\alpha}{\nabla}R_{\phi}\|\\
\leq & C\varepsilon^2\|{\Delta}\phi^{\varepsilon}_R\|_{H^{s'}}^2 +\varepsilon C_1(\sqrt{\varepsilon}\|\phi^{\varepsilon}_R\|_{H^{s'}}) \|{\nabla}\phi^{\varepsilon}_R\|_{H^{s'}}^2,
\end{split}
\end{equation}
where we have used $|\alpha|=k\leq s'$ for $s'\geq3$.

Adding \eqref{equ74}-\eqref{equ68} together, we have
\begin{equation*}
\begin{split}
\left|K_{81}\right|\leq & C_1(\sqrt{\varepsilon}\|\phi^{\varepsilon}_R\|_{H^{3}})
(1+\varepsilon^2|\!|\!|(\textbf{u}^{\varepsilon}_R, \phi^{\varepsilon}_R)|\!|\!|_{s'}^2) |\!|\!|\phi^{\varepsilon}_R|\!|\!|_{H^{s'}}^2,
\end{split}
\end{equation*}
where $|\!|\!|(\textbf{u}^{\varepsilon}_R, \phi^{\varepsilon}_R)|\!|\!|_{s'}^2$ is given in \eqref{tri-norm}. The proof of Lemma \ref{L6} is complete.
\end{proof}

\begin{lemma}\label{L7}
Let $(n^{\varepsilon}_R,\textbf{u}^{\varepsilon}_R,\phi^{\varepsilon}_R)$ be a solution to \eqref{re} in 3D, then
\begin{equation}
\begin{split}
K_{821}\leq & -\frac12\frac{d}{dt}\langle\partial_x^{\alpha}\nabla\phi^{\varepsilon}_R, \frac{\varepsilon}{n}\partial_x^{\alpha}\nabla\phi^{\varepsilon}_R\rangle_0 -\frac12\frac{d}{dt}\langle\partial_x^{\alpha}\Delta\phi^{\varepsilon}_R, \frac{\varepsilon^2}{n}\partial_x^{\alpha}\Delta\phi^{\varepsilon}_R\rangle_0\\
&-\frac12\frac{d}{dt}\langle\partial_x^{\alpha}\nabla \phi^{\varepsilon}_R, (\frac{\varepsilon^2\phi^{(1)}}{n}) \partial_x^{\alpha} \nabla\phi^{\varepsilon}_R\rangle_0 -\frac12\frac{d}{dt}\langle\partial_x^{\alpha}\nabla\phi^{\varepsilon}_R, \frac{\varepsilon^3\phi^{\varepsilon}_R}{n} \partial_x^{\alpha}\nabla\phi^{\varepsilon}_R\rangle_0\\
&+C(C_1(\sqrt{\varepsilon}\tilde C)+\varepsilon^2|\!|\!|(\textbf{u}^{\varepsilon}_R, \phi^{\varepsilon}_R)|\!|\!|_{3}^2) \{1+|\!|\!|(\textbf{u}^{\varepsilon}_R, \phi^{\varepsilon}_R)|\!|\!|_{s'}^2\},
\end{split}
\end{equation}
where $|\alpha|=s'$ for any $s'\geq 4$, $K_{82}$ is given in \eqref{equ67} and $|\!|\!|(\textbf{u}^{\varepsilon}_R,\phi^{\varepsilon}_R)|\!|\!|_{s'}^2$ is defined in \eqref{tri-norm}.
\end{lemma}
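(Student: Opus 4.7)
\textbf{Proof plan for Lemma \ref{L7}.} The idea is to eliminate $n^{\varepsilon}_R$ from $K_{82}=\langle\partial_x^{\alpha}\Delta\phi^{\varepsilon}_R,\frac{\varepsilon}{n}\partial_t\partial_x^{\alpha}n^{\varepsilon}_R\rangle_0$ by invoking the equivalent form \eqref{e35} of the Poisson relation, namely $n^{\varepsilon}_R=\phi^{\varepsilon}_R-\varepsilon\Delta\phi^{\varepsilon}_R+\varepsilon\phi^{(1)}\phi^{\varepsilon}_R+\tfrac{\varepsilon^2}{2}(\phi^{\varepsilon}_R)^2+\varepsilon^2\overline{R}_\phi$. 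Inserting this decomposes $K_{82}=K_{821}+\cdots+K_{825}$ into five pieces, one per summand. The first four pieces will each be converted, by integration by parts in $x$ followed by integration by parts in $t$, into a time derivative of a weighted quadratic energy of the form $\tfrac12\tfrac{d}{dt}\int\frac{w}{n}|\partial_x^\alpha(\cdot)|^2\,dx$, producing exactly the four $\tfrac{d}{dt}$-terms on the RHS of the lemma with weights $\varepsilon$, $\varepsilon^2$, $\varepsilon^2\phi^{(1)}$, and $\varepsilon^3\phi^{\varepsilon}_R$, respectively.

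More concretely, for $K_{821}=\langle\partial_x^{\alpha}\Delta\phi^{\varepsilon}_R,\tfrac{\varepsilon}{n}\partial_t\partial_x^{\alpha}\phi^{\varepsilon}_R\rangle_0$, a spatial IBP produces the principal symmetric term $-\varepsilon\int\tfrac{1}{n}\partial_x^{\alpha}\nabla\phi^{\varepsilon}_R\cdot\partial_t\partial_x^{\alpha}\nabla\phi^{\varepsilon}_R\,dx$ plus a commutator involving $\nabla(1/n)$; a subsequent temporal IBP then extracts $-\tfrac12\tfrac{d}{dt}\int\tfrac{\varepsilon}{n}|\partial_x^{\alpha}\nabla\phi^{\varepsilon}_R|^2$ together with the harmless remainder $\tfrac12\int\partial_t(\tfrac{\varepsilon}{n})|\partial_x^{\alpha}\nabla\phi^{\varepsilon}_R|^2$. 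The term $K_{822}$ is already self-adjoint and a single time IBP yields the $-\tfrac12\tfrac{d}{dt}\int\tfrac{\varepsilon^2}{n}|\partial_x^{\alpha}\Delta\phi^{\varepsilon}_R|^2$ contribution plus $\tfrac12\int\partial_t(\tfrac{\varepsilon^2}{n})|\partial_x^{\alpha}\Delta\phi^{\varepsilon}_R|^2$. For $K_{823}$ and $K_{824}$ one first uses Kato--Ponce commutators to bring the smooth factor $\phi^{(1)}$ (respectively $\phi^{\varepsilon}_R$) outside $\partial_x^\alpha$, then performs the same spatial-then-temporal IBP, arriving at the $\tfrac{\varepsilon^2\phi^{(1)}}{n}$- and $\tfrac{\varepsilon^3\phi^{\varepsilon}_R}{n}$-weighted energies displayed in the lemma.

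The error terms fall into three families. First, coefficient derivatives $\partial_t(1/n)$, $\nabla(1/n)$, $\partial_t(\varepsilon^2\phi^{(1)}/n)$, etc., are controlled in $L^\infty$ by $C\varepsilon(1+\varepsilon\|\varepsilon\partial_tn^{\varepsilon}_R\|_{L^\infty})$ and analogous bounds via \eqref{e3}, \eqref{e26}, Sobolev embedding, and then Lemma \ref{L2}, Corollary \ref{cor2}, and Lemma \ref{L1} reduce these to $|\!|\!|(\textbf{u}^{\varepsilon}_R,\phi^{\varepsilon}_R)|\!|\!|_{s'}$-bounded quantities. Second, the Kato--Ponce commutators $[\partial_x^\alpha,n^{-1}]$, $[\partial_x^\alpha,\phi^{(1)}]$, $[\partial_x^\alpha,\phi^{\varepsilon}_R]$ are handled verbatim as in Section 3 (e.g., using \eqref{e31}). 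Third, the residual piece $K_{825}$ is estimated by \eqref{e52-2} of Lemma \ref{lem-A} applied to $\partial_t\overline{R}_\phi$, which accounts for the prefactor $C_1(\sqrt{\varepsilon}\tilde C)$ in the conclusion.

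The main obstacle is the top-order remainder $\int\partial_t(\tfrac{\varepsilon^2}{n})|\partial_x^{\alpha}\Delta\phi^{\varepsilon}_R|^2$ generated from the time IBP in $K_{822}$, together with the cross-term $-\varepsilon\int\partial_x^{\alpha}\nabla\phi^{\varepsilon}_R\cdot\nabla(\tfrac1n)\,\partial_t\partial_x^{\alpha}\phi^{\varepsilon}_R$ from the IBP in $K_{821}$. The former requires controlling $\varepsilon\partial_tn^{\varepsilon}_R$ in $L^\infty$ by $|\!|\!|(\textbf{u}^{\varepsilon}_R,\phi^{\varepsilon}_R)|\!|\!|_{s'}$ via \eqref{e78} and Sobolev embedding in $\mathbb{R}^3$, which is exactly why the hypothesis $s'\geq 4$ is needed. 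The latter exploits the gain $\varepsilon^3\|\varepsilon\partial_t\nabla\partial_x^\alpha\phi^{\varepsilon}_R\|^2\leq C\varepsilon^2\|\varepsilon\partial_t\Delta\phi^{\varepsilon}_R\|_{H^{k-1}}^2$ from Corollary \ref{rmk5}, after which Corollary \ref{cor2} converts $\varepsilon\partial_t\phi^{\varepsilon}_R$-norms into the triple norm $|\!|\!|\cdot|\!|\!|_{s'}$. It is precisely this point that forces the introduction of the $\varepsilon$-weighted triple norm when $T_i=0$, since no direct control on $\|n^{\varepsilon}_R\|_{H^{s'}}$ is available to close the estimate.
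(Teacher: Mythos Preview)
Your plan is correct and follows essentially the same route as the paper: substitute the equivalent Poisson relation \eqref{e35} into $K_{82}$ to get five pieces $K_{821},\dots,K_{825}$, extract the four weighted energy derivatives from $K_{821}$--$K_{824}$ via spatial-then-temporal integration by parts, and estimate $K_{825}$ by Lemma~\ref{lem-A}.

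Two small corrections. First, there is no $[\partial_x^\alpha,n^{-1}]$ commutator in this argument: the factor $1/n$ sits outside $\partial_x^\alpha$ already in $K_{82}$. The relevant commutators are only $[\partial_x^\alpha\nabla,\phi^{(1)}]$ and $[\partial_x^\alpha\nabla,\phi^{\varepsilon}_R]$. Second, for the cross-term $K_{8212}=-\langle\partial_x^\alpha\nabla\phi^{\varepsilon}_R,\nabla(\tfrac{\varepsilon}{n})\partial_t\partial_x^\alpha\phi^{\varepsilon}_R\rangle_0$ you do not need Corollary~\ref{rmk5} (and the $\varepsilon^3$ version you quote is the 2D one; in 3D the corollary is trivial since $\overline\nabla=\nabla$). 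The paper simply uses $\|\nabla(\varepsilon/n)\|_{L^\infty}\le C\varepsilon^2(1+\varepsilon\|\nabla n^{\varepsilon}_R\|_{L^\infty})$, writes $\|\partial_t\partial_x^\alpha\phi^{\varepsilon}_R\|\le\|\partial_t\nabla\phi^{\varepsilon}_R\|_{H^{k-1}}$, and absorbs the result directly as $\varepsilon\|\varepsilon\partial_t\nabla\phi^{\varepsilon}_R\|_{H^{k-1}}^2\le|\!|\!|\varepsilon\partial_t\phi^{\varepsilon}_R|\!|\!|_{k-1}^2$, then applies Corollary~\ref{cor2}.
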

\begin{proof}
Recall that $\overline{\nabla}=\nabla, \overline{\Delta}=\Delta$ in 3D and $K_{82}$ is defined in \eqref{equ67}
\begin{equation}
\begin{split}
K_{82}=\langle\partial_x^{\alpha}{\Delta}\phi^{\varepsilon}_R, \frac{\varepsilon}{n}\partial_t\partial_x^{\alpha}n^{\varepsilon}_R \rangle_{0}.
\end{split}
\end{equation}
Taking $\partial_x^{\alpha}$ of \eqref{re-3}, we have
\begin{equation}
\begin{split}
\partial_x^{\alpha}n^{\varepsilon}_R =\partial_x^{\alpha}\phi^{\varepsilon}_R -\varepsilon{\Delta}\partial_x^{\alpha}\phi^{\varepsilon}_R +\varepsilon\partial_x^{\alpha}(\phi^{(1)}\phi^{\varepsilon}_R) +\frac{\varepsilon^2}{2}\partial_x^{\alpha}[(\phi^{\varepsilon}_R)^2] +\varepsilon^{2}\partial_x^{\alpha}\overline{R}_{\phi}.
\end{split}
\end{equation}
Inserting this into $K_{82}$, we have
\begin{equation}\label{e58}
\begin{split}
K_{82}=&\langle\partial_x^{\alpha}\Delta\phi^{\varepsilon}_R, \frac{\varepsilon}{n}\partial_t\partial_x^{\alpha}\phi^{\varepsilon}_R\rangle_0 -\varepsilon\langle\partial_x^{\alpha}\Delta\phi^{\varepsilon}_R, \frac{\varepsilon}{n}\partial_t{\Delta}\partial_x^{\alpha} \phi^{\varepsilon}_R\rangle_0\\
&+\varepsilon\langle\partial_x^{\alpha}\Delta\phi^{\varepsilon}_R, \frac{\varepsilon}{n}\partial_t\partial_x^{\alpha} (\phi^{(1)}\phi^{\varepsilon}_R)\rangle_0 +\frac{\varepsilon^2}{2}\langle\partial_x^{\alpha}\Delta \phi^{\varepsilon}_R, \frac{\varepsilon}{n}\partial_t\partial_x^{\alpha} [(\phi^{\varepsilon}_R)^2]\rangle_0 \\
&+\varepsilon^{2}\langle\partial_x^{\alpha}\Delta\phi^{\varepsilon}_R, \frac{\varepsilon}{n}\partial_t\partial_x^{\alpha}\overline{R}_{\phi}\rangle_0\\
=&:K_{821}+K_{822}+K_{823}+K_{824}+K_{825}.
\end{split}
\end{equation}

\emph{Estimate of $K_{821}$.} By integration by parts, we obtain
\begin{equation}\label{e54}
\begin{split}
K_{821}=&-\langle\partial_x^{\alpha}\nabla\phi^{\varepsilon}_R, \frac{\varepsilon}{n} \partial_t\partial_x^{\alpha}\nabla\phi^{\varepsilon}_R\rangle_0 -\langle\partial_x^{\alpha}\nabla\phi^{\varepsilon}_R, \nabla(\frac{\varepsilon}{n}) \partial_t\partial_x^{\alpha}\phi^{\varepsilon}_R\rangle_0\\
=&: K_{8211}+K_{8212}.
\end{split}
\end{equation}
By integrating in time, we obtain
\begin{equation}
\begin{split}
K_{8211}=-\frac12\frac{d}{dt} \langle\partial_x^{\alpha}\nabla\phi^{\varepsilon}_R, \frac{\varepsilon}{n} \partial_x^{\alpha}\nabla\phi^{\varepsilon}_R\rangle_0 +\frac12\langle\partial_x^{\alpha}\nabla\phi^{\varepsilon}_R, \partial_t(\frac{\varepsilon}{n}) \partial_x^{\alpha}\nabla\phi^{\varepsilon}_R\rangle_0.
\end{split}
\end{equation}
From \eqref{equ41}, we then have
\begin{equation}\label{e55}
\begin{split}
K_{8211}\leq &-\frac12\frac{d}{dt}\langle\partial_x^{\alpha}\nabla\phi^{\varepsilon}_R, \frac{\varepsilon}{n}\partial_x^{\alpha}\nabla\phi^{\varepsilon}_R\rangle_0 +C(1+\varepsilon^2\|\partial_tn^{\varepsilon}_R\|_{L^{\infty}}) (\varepsilon\|\partial_x^{\alpha}\nabla\phi^{\varepsilon}_R\|^2)\\
\leq & -\frac12\frac{d}{dt}\langle\partial_x^{\alpha}\nabla\phi^{\varepsilon}_R, \frac{\varepsilon}{n}\partial_x^{\alpha}\nabla\phi^{\varepsilon}_R\rangle_0 +C(1+\varepsilon|\!|\!|(\textbf{u}^{\varepsilon}_R, \phi^{\varepsilon}_R)|\!|\!|_{3}) |\!|\!|\phi^{\varepsilon}_R|\!|\!|_{s'}^2,
\end{split}
\end{equation}
where $|\!|\!|\cdot|\!|\!|_{s'}$ is defined in \eqref{tri-norm} and we have used \eqref{equ19} in Corollary \ref{L2}.

Recalling \eqref{e3}, we have
\begin{equation}\label{e60}
\begin{split}
\|\nabla(\frac{\varepsilon}{n})\|_{L^{\infty}}\leq C\varepsilon^2(1+\varepsilon\|\nabla n^{\varepsilon}_R\|_{L^{\infty}}).
\end{split}
\end{equation}
It the follows from \eqref{e54} that
\begin{equation}\label{e56}
\begin{split}
K_{8212}\leq & C\varepsilon^2(1+\varepsilon\|\nabla n^{\varepsilon}_R\|_{L^{\infty}}) \|\partial_x^{\alpha}\nabla\phi^{\varepsilon}_R\|_{L^2} \|\partial_t\partial_x^{\alpha}\phi^{\varepsilon}_R\|_{L^2}\\
\leq & C(1+\varepsilon\|n^{\varepsilon}_R\|_{H^3}) \{\varepsilon\|\nabla\phi^{\varepsilon}_R\|_{H^k}^2 +\varepsilon \|\varepsilon\partial_t\nabla\phi^{\varepsilon}_R\|_{H^{k-1}}^2\}\\
\leq & C(1+\varepsilon|\!|\!|\phi^{\varepsilon}_R|\!|\!|_{3}) |\!|\!|(\textbf{u}^{\varepsilon}_R,\phi^{\varepsilon}_R)|\!|\!|_{s'}^2,
\end{split}
\end{equation}
where we have used Lemma \ref{L1}, Corollary \ref{cor2} and $k\leq s'$.

Therefore, by adding \eqref{e55} and \eqref{e56}, we obtain
\begin{equation}\label{e57}
\begin{split}
K_{821}\leq &-\frac12\frac{d}{dt}\langle\partial_x^{\alpha}\nabla\phi^{\varepsilon}_R, \frac{\varepsilon}{n}\partial_x^{\alpha}\nabla\phi^{\varepsilon}_R\rangle_0 +C(1+\varepsilon^2\|\partial_tn^{\varepsilon}_R\|_{L^{\infty}}) (\varepsilon\|\partial_x^{\alpha}\nabla\phi^{\varepsilon}_R\|^2)\\
\leq & -\frac12\frac{d}{dt}\langle\partial_x^{\alpha}\nabla\phi^{\varepsilon}_R, \frac{\varepsilon}{n}\partial_x^{\alpha}\nabla\phi^{\varepsilon}_R\rangle_0 +C(1+\varepsilon|\!|\!|(\textbf{u}^{\varepsilon}_R, \phi^{\varepsilon}_R)|\!|\!|_{3}) |\!|\!|(\textbf{u}^{\varepsilon}_R,\phi^{\varepsilon}_R)|\!|\!|_{s'}^2.
\end{split}
\end{equation}

\emph{Estimate of $K_{822}$.} Recall that in \eqref{e58}
\begin{equation}
\begin{split}
K_{822}=-\varepsilon\langle\partial_x^{\alpha}\Delta\phi^{\varepsilon}_R, \frac{\varepsilon}{n}\partial_t{\Delta}\partial_x^{\alpha} \phi^{\varepsilon}_R\rangle_0.
\end{split}
\end{equation}
By integration by parts in time, we obtain
\begin{equation}
\begin{split}
K_{822}=-\frac12\frac{d}{dt}\langle\partial_x^{\alpha}\Delta \phi^{\varepsilon}_R, \frac{\varepsilon^2}{n}\partial_x^{\alpha}{\Delta} \phi^{\varepsilon}_R\rangle_0 +\frac12\langle\partial_x^{\alpha}\Delta \phi^{\varepsilon}_R, \partial_t(\frac{\varepsilon^2}{n})\partial_x^{\alpha}{\Delta} \phi^{\varepsilon}_R\rangle_0.
\end{split}
\end{equation}
Similar to \eqref{e55}, it is estimated that
\begin{equation}\label{e59}
\begin{split}
K_{822}\leq & -\frac12\frac{d}{dt}\langle\partial_x^{\alpha}\Delta\phi^{\varepsilon}_R, \frac{\varepsilon^2}{n}\partial_x^{\alpha}\Delta\phi^{\varepsilon}_R\rangle_0 +C(1+\varepsilon|\!|\!|(\textbf{u}^{\varepsilon}_R, \phi^{\varepsilon}_R)|\!|\!|_{3}) |\!|\!|\phi^{\varepsilon}_R|\!|\!|_{s'}^2.
\end{split}
\end{equation}

\emph{Estimate of $K_{823}$.} Recall that in \eqref{e58}
\begin{equation}
\begin{split}
K_{823}=\langle\partial_x^{\alpha}\Delta\phi^{\varepsilon}_R, \frac{\varepsilon^2}{n}\partial_t\partial_x^{\alpha} (\phi^{(1)}\phi^{\varepsilon}_R)\rangle_0.
\end{split}
\end{equation}
By integrating by parts and using the commutator, we can rewrite
\begin{equation}\label{e66}
\begin{split}
K_{823}= &-\langle\partial_x^{\alpha}\nabla\phi^{\varepsilon}_R, \nabla(\frac{\varepsilon^2}{n})\partial_x^{\alpha} (\phi^{(1)}\partial_t\phi^{\varepsilon}_R)\rangle_0 -\langle\partial_x^{\alpha}\nabla\phi^{\varepsilon}_R, \nabla(\frac{\varepsilon^2}{n})\partial_x^{\alpha} (\partial_t\phi^{(1)}\phi^{\varepsilon}_R)\rangle_0\\
& -\langle\partial_x^{\alpha}\nabla\phi^{\varepsilon}_R, (\frac{\varepsilon^2\phi^{(1)}}{n}) \partial_t\partial_x^{\alpha} \nabla\phi^{\varepsilon}_R\rangle_0 -\langle\partial_x^{\alpha}\nabla\phi^{\varepsilon}_R, (\frac{\varepsilon^2}{n}) [\partial_x^{\alpha}\nabla, \phi^{(1)}] \partial_t\phi^{\varepsilon}_R\rangle_0\\
& -\langle\partial_x^{\alpha}\nabla\phi^{\varepsilon}_R, (\frac{\varepsilon^2}{n}) \partial_x^{\alpha}\nabla (\partial_t\phi^{(1)}\phi^{\varepsilon}_R)\rangle_0\\
=& :K_{8231}+K_{8232}+K_{8233}+K_{8234}+K_{8235}.
\end{split}
\end{equation}

We note that from \eqref{e60}, we have
\begin{equation}
\begin{split}
\|\nabla(\frac{\varepsilon^2}{n})\|_{L^{\infty}}\leq C\varepsilon^3(1+\varepsilon\|\nabla n^{\varepsilon}_R\|_{L^{\infty}}).
\end{split}
\end{equation}
Using the multiplicative estimates \eqref{mul-com} in Lemma \ref{Le-inequ}, we obtain
\begin{equation*}
\begin{split}
K_{8231}\leq & C\varepsilon^3(1+\varepsilon\|\nabla n^{\varepsilon}_R\|_{L^{\infty}}) \|\partial_x^{\alpha}\nabla\phi^{\varepsilon}_R\|_{L^2} \{\|\partial_t\phi^{\varepsilon}_R\|_{H^k}\|\phi^{(1)}\|_{L^{\infty}} +\|\partial_t\phi^{\varepsilon}_R\|_{L^{\infty}}\|\phi^{(1)}\|_{H^k}\}
\end{split}
\end{equation*}
Since
\begin{equation}\label{e70}
\begin{split}
\|\partial_t\phi^{\varepsilon}_R\|_{H^k}\leq \|\partial_t\phi^{\varepsilon}_R\|_{H^{k-1}} +\|\partial_t\nabla\phi^{\varepsilon}_R\|_{H^{k-1}},
\end{split}
\end{equation}
we obtain
\begin{equation}\label{e61}
\begin{split}
K_{8231}\leq & C(1+\varepsilon\|\nabla n^{\varepsilon}_R\|_{L^{\infty}})\\
&\times\{\varepsilon\|\nabla\phi^{\varepsilon}_R\|_{H^k}^2 +\|\varepsilon\partial_t\phi^{\varepsilon}_R\|_{H^{k-1}}^2 +\varepsilon\|\varepsilon\partial_t\nabla\phi^{\varepsilon}_R\|_{H^{k-1}}^2 +\|\varepsilon\partial_t\phi^{\varepsilon}_R\|_{H^2}^2\}\\
\leq & C(1+\varepsilon|\!|\!|\phi^{\varepsilon}_R|\!|\!|_{3}) \{1+ |\!|\!|(\textbf{u}^{\varepsilon}_R,\phi^{\varepsilon}_R)|\!|\!|_{s'}^2\}.
\end{split}
\end{equation}
where we have used Lemma \ref{L1} and Corollary \ref{cor2}.

Similarly, we have
\begin{equation}\label{e62}
\begin{split}
K_{8232} \leq & C(1+\varepsilon|\!|\!|\phi^{\varepsilon}_R|\!|\!|_{3}) (1+|\!|\!|\phi^{\varepsilon}_R|\!|\!|_{s'}^2).
\end{split}
\end{equation}

Now we estimate $K_{8233}$. By integrating by parts in time, we obtain
\begin{equation}\label{e63}
\begin{split}
K_{8233}=& -\frac12\frac{d}{dt}\langle\partial_x^{\alpha}\nabla \phi^{\varepsilon}_R, (\frac{\varepsilon^2\phi^{(1)}}{n}) \partial_x^{\alpha} \nabla\phi^{\varepsilon}_R\rangle_0 +\frac12\langle\partial_x^{\alpha}\nabla \phi^{\varepsilon}_R, \partial_t(\frac{\varepsilon^2\phi^{(1)}}{n}) \partial_x^{\alpha} \nabla\phi^{\varepsilon}_R\rangle_0\\
\leq & -\frac12\frac{d}{dt}\langle\partial_x^{\alpha}\nabla \phi^{\varepsilon}_R, (\frac{\varepsilon^2\phi^{(1)}}{n}) \partial_x^{\alpha} \nabla\phi^{\varepsilon}_R\rangle_0 +C(1+\varepsilon|\!|\!|(\textbf{u}^{\varepsilon}_R, \phi^{\varepsilon}_R)|\!|\!|_{3}) |\!|\!|\phi^{\varepsilon}_R|\!|\!|_{s'}^2.
\end{split}
\end{equation}

For the term $K_{8234}$ in \eqref{e58}, we have
\begin{equation*}
\begin{split}
K_{8234}=& -\langle\partial_x^{\alpha}\nabla\phi^{\varepsilon}_R, (\frac{\varepsilon^2}{n}) [\partial_x^{\alpha}\nabla, \phi^{(1)}] \partial_t\phi^{\varepsilon}_R\rangle_0\\
\leq & C\varepsilon^2\|\partial_x^{\alpha}\nabla\phi^{\varepsilon}_R\|_{L^2} \{\|\partial_t\phi^{\varepsilon}_R\|_{H^{k}}\|\phi^{(1)}\|_{L^{\infty}} +\|\partial_t\phi^{\varepsilon}_R\|_{L^{\infty}}\|\phi^{(1)}\|_{H^{k+1}} \}\\
\leq & C\varepsilon\|\partial_x^{\alpha}\nabla\phi^{\varepsilon}_R\|_{L^2}^2 +C\{\varepsilon\|\varepsilon\partial_t\phi^{\varepsilon}_R\|_{H^{k}}^2 +\varepsilon\|\varepsilon\partial_t\phi^{\varepsilon}_R\|_{H^2}^2\}.
\end{split}
\end{equation*}
Similarly to the estimate of $K_{8231}$ in \eqref{e61}, we have
\begin{equation}\label{e64}
\begin{split}
K_{8234}\leq & C(1+\varepsilon|\!|\!|\phi^{\varepsilon}_R|\!|\!|_{3}) \{1+ |\!|\!|(\textbf{u}^{\varepsilon}_R,\phi^{\varepsilon}_R)|\!|\!|_{s'}^2\}.
\end{split}
\end{equation}

For $K_{8235}$, by using multiplicative estimates in \eqref{mul-com}, we have
\begin{equation}\label{e65}
\begin{split}
K_{8235}=&-\langle\partial_x^{\alpha}\nabla\phi^{\varepsilon}_R, (\frac{\varepsilon^2}{n}) \partial_x^{\alpha} (\partial_t\phi^{(1)}\nabla\phi^{\varepsilon}_R)\rangle_0 -\langle\partial_x^{\alpha}\nabla\phi^{\varepsilon}_R, (\frac{\varepsilon^2}{n}) \partial_x^{\alpha} (\nabla \partial_t\phi^{(1)}\phi^{\varepsilon}_R)\rangle_0\\
\leq & C\varepsilon^2\|\partial_x^{\alpha}\nabla\phi^{\varepsilon}_R\|_{L^2} \{\|\nabla\phi^{\varepsilon}_R\|_{H^{k}}\|\partial_t\phi^{(1)}\|_{L^{\infty}} +\|\phi^{\varepsilon}_R\|_{L^{\infty}}\|\partial_t\phi^{(1)}\|_{H^{k+1}}\\
&+\|\phi^{\varepsilon}_R\|_{H^{k}}\|\partial_t\nabla\phi^{(1)}\|_{L^{\infty}} +\|\phi^{\varepsilon}_R\|_{L^{\infty}}\|\partial_t\nabla\phi^{(1)}\|_{H^{k+1}}\}\\
\leq & C\varepsilon\|\partial_x^{\alpha}\nabla\phi^{\varepsilon}_R\|_{L^2}^2 +C\{\varepsilon\|\nabla\phi^{\varepsilon}_R\|_{H^{k}}^2 +\|\phi^{\varepsilon}_R\|_{H^k}^2 +\|\phi^{\varepsilon}_R\|_{H^3}^2\}\\
\leq & C|\!|\!|\phi^{\varepsilon}_R|\!|\!|_{s'}^2,
\end{split}
\end{equation}
where $k\leq s'$ and $s'\geq 3$.

From \eqref{e66}, by adding \eqref{e61} to \eqref{e65} together, we obtain
\begin{equation}\label{e67}
\begin{split}
K_{823}\leq & -\frac12\frac{d}{dt}\langle\partial_x^{\alpha}\nabla \phi^{\varepsilon}_R, (\frac{\varepsilon^2\phi^{(1)}}{n}) \partial_x^{\alpha} \nabla\phi^{\varepsilon}_R\rangle_0\\
&+C(1+\varepsilon|\!|\!|(\textbf{u}^{\varepsilon}_R, \phi^{\varepsilon}_R)|\!|\!|_{3}) \{1+ |\!|\!|(\textbf{u}^{\varepsilon}_R,\phi^{\varepsilon}_R)|\!|\!|_{s'}^2\}.
\end{split}
\end{equation}

\emph{Estimate of $K_{824}$.} Recall that $K_{824}$ is defined in \eqref{e58}
\begin{equation}\label{e68}
\begin{split}
K_{824}=& \langle\partial_x^{\alpha}\Delta \phi^{\varepsilon}_R, \frac{\varepsilon^3}{n}\partial_x^{\alpha} (\phi^{\varepsilon}_R\partial_t\phi^{\varepsilon}_R)\rangle_0\\
=& -\langle\partial_x^{\alpha}\nabla\phi^{\varepsilon}_R, \frac{\varepsilon^3\phi^{\varepsilon}_R}{n} \partial_t\partial_x^{\alpha}\nabla\phi^{\varepsilon}_R\rangle_0 -\langle\partial_x^{\alpha}\nabla\phi^{\varepsilon}_R, \frac{\varepsilon^3}{n} [\partial_x^{\alpha}\nabla,\phi^{\varepsilon}_R] \partial_t\phi^{\varepsilon}_R\rangle_0\\
& -\langle\partial_x^{\alpha}\nabla\phi^{\varepsilon}_R, \frac{\varepsilon^3}{n} \partial_x^{\alpha}(\nabla\phi^{\varepsilon}_R \partial_t\phi^{\varepsilon}_R)\rangle_0\\
=& :K_{8241}+K_{8242}+K_{8243}.
\end{split}
\end{equation}

For the term $K_{8241}$, by integrating by parts in time, we obtain
\begin{equation}\label{e69}
\begin{split}
K_{8241}=& -\frac12\frac{d}{dt}\langle\partial_x^{\alpha}\nabla\phi^{\varepsilon}_R, \frac{\varepsilon^3\phi^{\varepsilon}_R}{n} \partial_x^{\alpha}\nabla\phi^{\varepsilon}_R\rangle_0 +\frac12\langle\partial_x^{\alpha}\nabla\phi^{\varepsilon}_R, \partial_t(\frac{\varepsilon^3\phi^{\varepsilon}_R}{n}) \partial_x^{\alpha}\nabla\phi^{\varepsilon}_R\rangle_0.
\end{split}
\end{equation}
By using H\"older inequality, Lemma \ref{L1}, \ref{L2} and Corollary \ref{cor2}, we obtain
\begin{equation}
\begin{split}
\|\partial_t(\frac{\varepsilon^2\phi^{\varepsilon}_R}{n})\|_{L^{\infty}}
\leq & C\varepsilon^2\|\partial_t\phi^{\varepsilon}_R\|_{L^{\infty}} +C\varepsilon^3\|\phi^{\varepsilon}_R\|_{L^{\infty}} \|\partial_t\widetilde{n}^{\varepsilon}_R\|_{L^{\infty}} +C\varepsilon^4\|\phi^{\varepsilon}_R\|_{L^{\infty}} \|\partial_t{n}^{\varepsilon}_R\|_{L^{\infty}}\\
\leq & C+C\varepsilon^2\|\varepsilon\partial_t\phi^{\varepsilon}_R\|_{H^2}^2 +C\varepsilon^2\|\phi^{\varepsilon}_R\|_{H^2}^2 +C\varepsilon^2\|\varepsilon\partial_t{n}^{\varepsilon}_R\|_{H^2}^2\\
\leq & C+C\varepsilon^2|\!|\!|(\textbf{u}^{\varepsilon}_R, \phi^{\varepsilon}_R)|\!|\!|_{3}^2.
\end{split}
\end{equation}
Therefore, $K_{8241}$ in \eqref{e69} is estimated as
\begin{equation}\label{e72}
\begin{split}
K_{8241}=& -\frac12\frac{d}{dt}\langle\partial_x^{\alpha}\nabla\phi^{\varepsilon}_R, \frac{\varepsilon^3\phi^{\varepsilon}_R}{n} \partial_x^{\alpha}\nabla\phi^{\varepsilon}_R\rangle_0 +C(1+\varepsilon^2|\!|\!|(\textbf{u}^{\varepsilon}_R, \phi^{\varepsilon}_R)|\!|\!|_{3}^2) \{\varepsilon\|\partial_x^{\alpha}\nabla\phi^{\varepsilon}_R\|_{L^2}^2\}.
\end{split}
\end{equation}

For the term $K_{8242}$ in \eqref{e68}, by commutator estimates in \eqref{mul-com}, we have
\begin{equation}
\begin{split}
K_{8242}=& -\langle\partial_x^{\alpha}\nabla\phi^{\varepsilon}_R, \frac{\varepsilon^3}{n} [\partial_x^{\alpha}\nabla,\phi^{\varepsilon}_R] \partial_t\phi^{\varepsilon}_R\rangle_0\\
\leq & C\varepsilon^3\|\partial_x^{\alpha}\nabla\phi^{\varepsilon}_R\|_{L^2} \{\|\partial_t\phi^{\varepsilon}_R\|_{H^k} \|\phi^{\varepsilon}_R\|_{L^{\infty}} +\|\partial_t\phi^{\varepsilon}_R\|_{L^{\infty}} \|\phi^{\varepsilon}_R\|_{H^{k+1}}\}\\
\leq & C\varepsilon\|\nabla\phi^{\varepsilon}_R\|_{H^k}^2 +C(\varepsilon^2\|\phi^{\varepsilon}_R\|_{H^3}^2 +\varepsilon^2\|\varepsilon\partial_t\phi^{\varepsilon}_R\|_{H^2}^2 )(\varepsilon\|\varepsilon\partial_t\phi^{\varepsilon}_R\|_{H^k}^2+ \varepsilon\|\phi^{\varepsilon}_R\|_{H^{k+1}}^2).
\end{split}
\end{equation}
Using \eqref{e70}, Lemma \ref{L1} and Corollary \ref{cor2}, we then have
\begin{equation}\label{e71}
\begin{split}
K_{8242}\leq & C(1+\varepsilon^2|\!|\!|(\textbf{u}^{\varepsilon}_R, \phi^{\varepsilon}_R)|\!|\!|_{3}^2) |\!|\!|(\textbf{u}^{\varepsilon}_R,\phi^{\varepsilon}_R)|\!|\!|_{s'}^2.
\end{split}
\end{equation}

For the term $K_{8243}$ in \eqref{e68}, by multiplicative estimates in \eqref{mul-com}, we have
\begin{equation}
\begin{split}
K_{8243}= & -\langle\partial_x^{\alpha}\nabla\phi^{\varepsilon}_R, \frac{\varepsilon^3}{n} \partial_x^{\alpha}(\nabla\phi^{\varepsilon}_R \partial_t\phi^{\varepsilon}_R)\rangle_0\\
\leq & C\varepsilon^3\|\partial_x^{\alpha}\nabla\phi^{\varepsilon}_R\|_{L^2} \{\|\partial_t\phi^{\varepsilon}_R\|_{H^k} \|\nabla\phi^{\varepsilon}_R\|_{L^{\infty}} +\|\partial_t\phi^{\varepsilon}_R\|_{L^{\infty}} \|\nabla\phi^{\varepsilon}_R\|_{H^{k}}\}\\
\leq & C(1+\varepsilon^2|\!|\!|(\textbf{u}^{\varepsilon}_R, \phi^{\varepsilon}_R)|\!|\!|_{3}^2) |\!|\!|(\textbf{u}^{\varepsilon}_R,\phi^{\varepsilon}_R)|\!|\!|_{s'}^2,
\end{split}
\end{equation}
where in the last inequality, we have used the same estimates as in \eqref{e71}.

From \eqref{e68}, adding \eqref{e69}, \eqref{e72} and \eqref{e71}, we obtain
\begin{equation}\label{e73}
\begin{split}
K_{824}\leq & -\frac12\frac{d}{dt}\langle\partial_x^{\alpha}\nabla\phi^{\varepsilon}_R, \frac{\varepsilon^3\phi^{\varepsilon}_R}{n} \partial_x^{\alpha}\nabla\phi^{\varepsilon}_R\rangle_0 +C\varepsilon(1+\varepsilon^2|\!|\!|(\textbf{u}^{\varepsilon}_R, \phi^{\varepsilon}_R)|\!|\!|_{3}^2) |\!|\!|(\textbf{u}^{\varepsilon}_R,\phi^{\varepsilon}_R)|\!|\!|_{s'}^2.
\end{split}
\end{equation}

\emph{Estimate of $K_{825}$.} Recall that $K_{825}$ is defined in \eqref{e58}. By \eqref{e52-2} in Lemma \ref{lem-A}, we have
\begin{equation}\label{e76}
\begin{split}
K_{825} \leq & C\varepsilon^2\|\partial_x^{\alpha}\Delta\phi^{\varepsilon}_R\|^2 +\varepsilon^{4}\|\partial_t\partial_x^{\alpha}R_{\phi}\|^2\\
\leq & C\varepsilon^2\|\partial_x^{\alpha}\Delta\phi^{\varepsilon}_R\|^2 +C_1(\sqrt{\varepsilon}\|\phi^{\varepsilon}_R\|_{H^{\delta}}) (1+\varepsilon^2\|\varepsilon\partial_t\phi^{\varepsilon}_R\|_{H^k}^2),
\end{split}
\end{equation}
where $\delta=\max\{2,k-1\}$ in Lemma \ref{lem-A}. Furthermore, when $0<\varepsilon<\varepsilon_1$,
\begin{equation}
\begin{split}
\varepsilon^2\|\varepsilon\partial_t\phi^{\varepsilon}_R\|_{H^k}^2 \leq & \varepsilon^2\|\varepsilon\partial_t\nabla\phi^{\varepsilon}_R\|_{H^{k-1}}^2 +\varepsilon^2\|\varepsilon\partial_t\phi^{\varepsilon}_R\|_{H^{k-1}}^2\\
\leq & C(1+|\!|\!|(\textbf{u}^{\varepsilon}_R,\phi^{\varepsilon}_R)|\!|\!|_{s'}^2),
\end{split}
\end{equation}
where we have used \eqref{e75} in Corollary \ref{cor2} in the last inequality. It then follows from \eqref{e76} that
\begin{equation}\label{e77}
\begin{split}
K_{825} \leq & C_1(\sqrt{\varepsilon}\|\phi^{\varepsilon}_R\|_{H^{\delta}}) \{1+|\!|\!|(\textbf{u}^{\varepsilon}_R,\phi^{\varepsilon}_R)|\!|\!|_{s'}^2\},
\end{split}
\end{equation}
where $\delta=\max\{2,k-1\}\leq s'-1$.

The proof of Lemma \ref{L7} is complete by adding \eqref{e57}, \eqref{e59}, \eqref{e67}, \eqref{e73} and \eqref{e77} together.
\end{proof}

\begin{proof}[\textbf{Proof of Theorem \ref{th} for $T_i=0$}]
From \eqref{e25}, there exists some $\varepsilon_1>0$ such that $1/2\leq 1+\varepsilon\phi^{(1)}+\varepsilon^2\phi^{\varepsilon}_R\leq 3/2$. By adding inequalities \eqref{e97} and \eqref{equ61}, then integrating over $[0,t]$ and taking summation over $|\alpha|=k$ for $0\leq k\leq s'$, we obtain
\begin{equation}
\begin{split}
|\!|\!|(\textbf{u}^{\varepsilon}_R,\phi^{\varepsilon}_R)|\!|\!|_{s'}^2
\leq & CC_{\varepsilon}(0)+CC_1\int_0^t(C_1+\varepsilon|\!|\!|(\textbf{u}^{\varepsilon}_R, n^{\varepsilon}_R, \phi^{\varepsilon}_R)|\!|\!|_{s'}^2)\{1 +|\!|\!|(\textbf{u}^{\varepsilon}_R, n^{\varepsilon}_R, \phi^{\varepsilon}_R)|\!|\!|_{s'}^2\}dr,
\end{split}
\end{equation}
where $C_{\varepsilon}(0)=|\!|\!|(\textbf{u}^{\varepsilon}_R, \phi^{\varepsilon}_R)(0)|\!|\!|_{s'}^2$. Recalling \eqref{assumption}, we know that there exists some constant $0<\varepsilon_0<\varepsilon_1$ such that $\varepsilon\|(\textbf{u}^{\varepsilon}_R, n^{\varepsilon}_R, \phi^{\varepsilon}_R)\|_{H^{s'}}^2\leq 1$. Since $C_1=C_1(\sqrt{\varepsilon}\|n^{\varepsilon}_R\|_{H^{s'}})$ and is nondecreasing, we know that $C_1\leq C_1(1)$ when $0<\varepsilon<\varepsilon_0$. Therefore, there exists some constant $C_3>1$ such that
\begin{equation}
\begin{split}
|\!|\!|(\textbf{u}^{\varepsilon}_R, \phi^{\varepsilon}_R)|\!|\!|_{s'}^2\leq C_3C_{\varepsilon}(0)+C_3\int_0^t\{1+|\!|\!|(\textbf{u}^{\varepsilon}_R, \phi^{\varepsilon}_R)|\!|\!|_{s'}^2\}dr.
\end{split}
\end{equation}
On the other hand, from Lemma \ref{L1}, there exists some constant $C_4\geq 1$ such that for any $0<\varepsilon<\varepsilon_0$,
\begin{equation}\label{e98}
\begin{split}
\|n^{\varepsilon}_R\|_{H^{s'}}^2\leq C_4(1+|\!|\!|\phi^{\varepsilon}_R|\!|\!|_{s'}^2).
\end{split}
\end{equation}

Let $C_0'=\sup_{0<\varepsilon<1}C_{\varepsilon}(0)$. Given given $0<\tau_0<\tau_*$, we let $\tilde C$ in \eqref{assumption} satisfy $\tilde C\geq 2C_4(1+C_3C_0')e^{C_3\tau_0}$, then by Gronwall inequality,
\begin{equation}
\begin{split}
\sup_{0\leq t\leq\tau_0}|\!|\!|(\textbf{u}^{\varepsilon}_R,\phi^{\varepsilon}_R)|\!|\!|_{s'}^2 \leq (1+C_3C_0')e^{C_3\tau_0}\leq \tilde C,
\end{split}
\end{equation}
and from \eqref{e98}
\begin{equation}
\begin{split}
\sup_{0\leq t\leq\tau_0}\|n^{\varepsilon}_R\|_{H^{s'}}^2\leq C_4\{1+(1+C_3C_0')e^{C_3\tau_0}\}\leq \tilde C.
\end{split}
\end{equation}
It is then standard to obtain uniform estimates for $|\!|\!|(n^{\varepsilon}_R,\textbf{u}^{\varepsilon}_R, \phi^{\varepsilon}_R)|\!|\!|_{s'}$ independent of $\varepsilon$ by the continuity method. The proof is complete for the case $T_i=0$.
\end{proof}

\renewcommand{\theequation}{\Alph{section}.\arabic{equation}}

\appendix
\section{Proof of Proposition \ref{prop-kp3} and Lemma \ref{lem-A}}
\setcounter{equation}{0}
\begin{proof}[Proof of Proposition \ref{prop-kp3}]
We need to derive the remainder system \eqref{rem-kp}. We first consider the remainder equation \eqref{rem-kp-n}. Plugging the expansion of $n$ and $\textbf{u}=(u_1,u_2)$ in \eqref{exp-kp} into the \eqref{k1-n}, we obtain a polynomial equation of $\varepsilon$, whose coefficients depend on $n^{(i)},\ \textbf{u}^{(i)}$, $n^{\varepsilon}_R$ and $\textbf{u}^{\varepsilon}_R$. 
Subtracting $\{\varepsilon\times\eqref{k-order1-n}+ \varepsilon^2\times\eqref{k-order2-n}+ \varepsilon^3\times\eqref{k-order3-n}\}$ from this polynomial, we obtain the following equation
\begin{equation}\label{e45}
\begin{split}
\varepsilon^4\partial_tn^{(3)}+\varepsilon^3\partial_tn^{\varepsilon}_R -\varepsilon^2\partial_xn^{\varepsilon}_R+A+B=0,
\end{split}
\end{equation}
where
\begin{equation*}
\begin{split}
A=&\varepsilon^4\sum_{\substack{i,j\geq1\\ i+j\geq4}}\varepsilon^{i+j-4}\partial_x(n^{(i)}u_1^{(j)}) +\varepsilon^3\{{u_1}^{\varepsilon}_R\partial_x\widetilde n+\partial_x\widetilde{u}_1n^{\varepsilon}_R\} +\varepsilon^2\{n\partial_x{u_1}^{\varepsilon}_R +u_1\partial_xn^{\varepsilon}_R\},\\
B=&\varepsilon^4\sum_{\substack{i,l\geq1\\ i+{l}\geq4}}\varepsilon^{i+l+\frac12-4} \partial_x(n^{(i)}u_2^{(l)}) +\varepsilon^3\{{u_2}^{\varepsilon}_R\partial_x\widetilde n+\partial_x\widetilde{u}_2n^{\varepsilon}_R\} +\varepsilon^2\{n\partial_x{u_2}^{\varepsilon}_R +u_2\partial_xn^{\varepsilon}_R\}.
\end{split}
\end{equation*}
Rearranging and dividing \eqref{e45} by $\varepsilon^3$, we obtain \eqref{rem-kp-n}, where
\begin{equation}\label{e50}
\begin{split}
R_n=\partial_tn^{(3)}+\sum_{\substack{i,j\geq1\\ i+j\geq4}}\varepsilon^{i+j-4}\partial_x(n^{(i)}u_1^{(j)}) +\sum_{\substack{i,l\geq1\\ i+l\geq4}}\varepsilon^{i+l+\frac12-4} \partial_x(n^{(i)}u_2^{(l)}).
\end{split}
\end{equation}

The derivation of \eqref{rem-kp-1} and \eqref{rem-kp-2} is similar. Subtracting $\{\varepsilon\times\eqref{k-order1-1}+ \varepsilon^2\times\eqref{k-order2-1}+ \varepsilon^3\times\eqref{k-order3-1}\}$ from the equation of \eqref{k1-1}, we obtain the remainder equation \eqref{rem-kp-1}. We only derive the remainder terms of the pressure term ${T_i\partial_{x_1}n}/{n}$. After subtracting, we obtain
\begin{equation}\label{e46}
\begin{split}
T_i\frac{\partial_{x_1}n}{n}-&T_i\{\varepsilon \partial_{x_1}n^{(1)} +\varepsilon^2(\partial_{x_1}n^{(2)}-n^{(1)}\partial_{x_1}n^{(1)})\\
&+\varepsilon^3(\partial_{x_1}n^{(3)} +\partial_{x_1}n^{(1)}(\frac12(n^{(1)})^2-n^{(2)}) -\partial_{x_1}n^{(2)}n^{(1)})\}.
\end{split}
\end{equation}
After divided by $\varepsilon^3$, \eqref{e46} can be rearranged into
\begin{equation}\label{e47}
\begin{split}
T_i\frac{\partial_{x_1}n^{\varepsilon}_R}{\varepsilon n} -T_i\frac{p_1}{n}n^{\varepsilon}_R -T_i\frac{\varepsilon R_{T1}}{n},
\end{split}
\end{equation}
where $p_1$ and $R_{T1}$ are finite combinations of $n^{(1)}$, $n^{(2)}$ and $n^{(3)}$ only. The expression of $\textbf{R}_\textbf{u}$ depends only on $\textbf{u}^{(i)}$ and $\phi^{i}$ and can be derived similarly to the derivation of $R_n$ in \eqref{e50}.

The derivation of \eqref{rem-kp-p} is slightly different, where the remainder $R_{\phi}$ depends on $\phi^{\varepsilon}_R$. Recall $\phi=\varepsilon\phi^{(1)}+\varepsilon^2\phi^{(2)}+\varepsilon^3\phi^{(3)} +\varepsilon^2\phi^{\varepsilon}_R$ in \eqref{exp-kp-p}. Consider the Taylor expansion in the integral form
\begin{equation*}
\begin{split}
e^{\phi}=&1+(\varepsilon\widetilde\phi+\varepsilon^2\phi^{\varepsilon}_R) +\frac{1}{2!}(\varepsilon\widetilde\phi+\varepsilon^2\phi^{\varepsilon}_R)^2 +\frac{1}{3!}(\varepsilon\widetilde\phi+\varepsilon^2\phi^{\varepsilon}_R)^3\\
&+\frac{1}{3!}\int_0^1e^{\theta\phi}(1-\theta)^3d\theta (\varepsilon\widetilde\phi+\varepsilon^2\phi^{\varepsilon}_R)^4.
\end{split}
\end{equation*}
Subtracting $\{\varepsilon\times\eqref{k-order1-p}+ \varepsilon^2\times\eqref{k-order2-p}+ \varepsilon^3\times\eqref{k-order3-p}\}$ from \eqref{k1-p}, we have
\begin{equation}\label{e51}
\begin{split}
&\varepsilon^3(\partial_{x_1}^2+\varepsilon\partial_{x_2}^2) \phi^{\varepsilon}_R +\varepsilon^4\{(\partial_{x_1}^2+\varepsilon\partial_{x_2}^2)\phi^{(3)} +\partial_{x_2}^2\phi^{(2)}\} \\
=&\varepsilon^4\hat R_{\phi} +\varepsilon^2\phi^{\varepsilon}_R +\varepsilon^3\widetilde\phi\phi^{\varepsilon}_R +\frac12\varepsilon^4(\phi^{\varepsilon}_R)^2 +\frac12\varepsilon^4(\widetilde\phi)^2\phi^{\varepsilon}_R +\frac12\varepsilon^5\widetilde\phi(\phi^{\varepsilon}_R)^2\\
&+\frac1{3!}\varepsilon^6(\phi^{\varepsilon}_R)^3 +\varepsilon^4\frac{1}{3!}\int_0^1e^{\theta\phi}(1-\theta)^3d\theta (\widetilde\phi+\varepsilon\phi^{\varepsilon}_R)^4 -\varepsilon^2n^{\varepsilon}_R,
\end{split}
\end{equation}
where $\hat R_{\phi}$ depends only $\phi^{(1)},\phi^{(2)}$ and $\phi^{(3)}$. After divided by $\varepsilon^2$, \eqref{e51} can be rewritten in the form
\begin{equation*}
\begin{split}
\varepsilon(\partial_{x_1}^2+\varepsilon\partial_{x_2}^2)\phi^{\varepsilon}_R=\phi^{\varepsilon}_R-n^{\varepsilon}_R +\varepsilon\phi^{(1)}\phi^{\varepsilon}_R +\varepsilon^{3/2}R'_{\phi} +\varepsilon^2R''_{\phi},
\end{split}
\end{equation*}
where $R''_{\phi}=\hat R_{\phi}-\{(\partial_{x_1}^2+\varepsilon\partial_{x_2}^2)\phi^{(3)} +\partial_{x_2}^2\phi^{(2)}\}$ and $R'_{\phi}=F(\sqrt{\varepsilon}\phi^{\varepsilon}_R)\phi^{\varepsilon}_R$ for some function of $F$ depending on $\sqrt{\varepsilon}\phi^{\varepsilon}_R$. Letting $R_{\phi}=R'_{\phi}+\sqrt{\varepsilon}R''_{\phi}$, we obtain \eqref{rem-kp-p}.

From \eqref{e51}, it is obvious that \eqref{rem-kp-p} can be written in an equivalent form
\begin{equation}\label{e34}
\begin{split}
\varepsilon(\partial_{x_1}^2+\varepsilon\partial_{x_2}^2)\phi^{\varepsilon}_R=\phi^{\varepsilon}_R-n^{\varepsilon}_R +\varepsilon\phi^{(1)}\phi^{\varepsilon}_R +\frac{\varepsilon^2}{2}(\phi^{\varepsilon}_R)^2 +\varepsilon^{2}\overline{R}_{\phi},
\end{split}
\end{equation}
where $\overline{R}_{\phi}$ is also of the form of $R_\phi$ and satisfies the same estimates of Lemma \ref{lem-A}.

The proof of Proposition \ref{prop-kp3} is complete.
\end{proof}


\begin{proof}[Proof of Lemma \ref{lem-A}]
We mainly consider the estimate for the integral term in \eqref{e51}, which has an important contribution to the remainder term $R'_{\phi}$, while the other contributions from \eqref{e51} can be estimated similarly. Let $\alpha=0$. By H\"older inequality and Sobolev embedding, we have
\begin{equation}
\begin{split}
\|I\|_{L^2}\leq & Ce^{\|\phi\|_{L^{\infty}}}\|\widetilde\phi+\varepsilon\phi^{\varepsilon}_R\|_{L^{\infty}}^3 \|\widetilde\phi+\varepsilon\phi^{\varepsilon}_R\|_{L^2}\\
\leq & C(\varepsilon\|\phi^{\varepsilon}_R\|_{L^{\infty}}) (\|\widetilde\phi\|_{L^2}+\|\phi^{\varepsilon}_R\|_{L^2})\\
\leq & C(\varepsilon\|\phi^{\varepsilon}_R\|_{H^2}) (1+\|\phi^{\varepsilon}_R\|_{L^2}).
\end{split}
\end{equation}
Similar results can be obtained for $\alpha\geq1$, once we note that $H^2$ is an algebra in $\Bbb R^3$. On the other hand, $R''_{\phi}$ depends only on $\phi^{(1)},\phi^{(2)}$ and $\phi^{(3)}$, $\|R''_{\phi}\|_{H^{k}}\leq C$ for any $0\leq k \leq s$. Therefore, we arrive at the estimate
\begin{align}
\|R_{\phi}\|_{H^{k}}\leq &C(\sqrt{\varepsilon}\|\phi^{\varepsilon}_R\|_{H^{\delta}}) (1+\|\phi^{\varepsilon}_R\|_{H^{k}}), \ \ \ \forall 0\leq k\leq s,
\end{align}
where we have used the fact that a uniform constant $C$ is also of the form $C(\sqrt{\varepsilon}\|\phi^{\varepsilon}_R\|_{H^{\delta}})$. Furthermore, if we let $C_1(r)=\sup_{0\leq s\leq r}C(r)$, the constant $C_1(r)$ is nondecreasing. Then \eqref{e52-1} is proved. The inequality \eqref{e52-2} can be proved similarly.
\end{proof}

\begin{corollary}\label{cor-a}
Let $k\geq 0$ be an integer, then there exists a constant $1\leq C_1=C_1(\sqrt{\varepsilon}\|\phi^{\varepsilon}_R\|_{H^{\delta}})$, such that
\begin{align}
\|\overline{\nabla}R_{\phi}\|_{H^{k}}\leq &C_1(\sqrt{\varepsilon}\|\phi^{\varepsilon}_R\|_{H^{\delta}}) (1+\|\overline{\nabla}\phi^{\varepsilon}_R\|_{H^{k}}), \ \ \text{ and}\label{e81-1}\\
\|\partial_t\overline{\nabla}R_{\phi}\|_{H^{k}}\leq &C_1(\sqrt{\varepsilon}\|\phi^{\varepsilon}_R\|_{H^{\delta}}) (1+\|\partial_t\overline{\nabla}\phi^{\varepsilon}_R\|_{H^{k}}),\label{e81-2}
\end{align}
where $\delta=\max\{2,k-1\}$. Furthermore, the constant $C_1(\cdot)$ can be chosen to be nondecreasing.
\end{corollary}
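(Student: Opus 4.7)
The plan is to reduce both inequalities to the proof strategy already used for Lemma \ref{lem-A}, simply reorganizing which factor carries the $H^k$ norm so that a horizontal derivative $\overline{\nabla}$ is forced out front. Recall from the proof of Proposition \ref{prop-kp3} that $R_{\phi}=R'_{\phi}+\sqrt{\varepsilon}R''_{\phi}$, where $R''_{\phi}$ depends only on the (smooth) profiles $\phi^{(1)},\phi^{(2)},\phi^{(3)}$ and hence trivially satisfies $\|\overline{\nabla}R''_{\phi}\|_{H^k}\le C$, while $R'_{\phi}$ is a sum of polynomial terms in $\phi^{\varepsilon}_R$ and $\widetilde\phi$, together with the Taylor integral remainder
\[
I:=\frac{1}{3!}\int_0^1 e^{\theta\phi}(1-\theta)^3\,d\theta\,(\widetilde\phi+\varepsilon\phi^{\varepsilon}_R)^4.
\]
So it suffices to produce the estimate for $\overline{\nabla}I$ and for each polynomial summand in $R'_{\phi}$; the other pieces are easier.

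First I would differentiate $I$ using chain/product rule:
\[
\overline{\nabla}I=\tfrac{1}{3!}\!\int_0^1\!\theta(1-\theta)^3 e^{\theta\phi}\overline{\nabla}\phi\,d\theta\,(\widetilde\phi+\varepsilon\phi^{\varepsilon}_R)^4 +\tfrac{4}{3!}\!\int_0^1\!(1-\theta)^3 e^{\theta\phi}\,d\theta\,(\widetilde\phi+\varepsilon\phi^{\varepsilon}_R)^3\overline{\nabla}(\widetilde\phi+\varepsilon\phi^{\varepsilon}_R),
\]
so every resulting term contains exactly one factor of the form $\overline{\nabla}(\widetilde\phi+\varepsilon\phi^{\varepsilon}_R)=\overline{\nabla}\widetilde\phi+\varepsilon\overline{\nabla}\phi^{\varepsilon}_R$. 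Then I apply the Kato--Ponce multiplicative estimate in $H^k$, placing this $\overline{\nabla}$-factor in $H^k$ (bounded by $C(1+\|\overline{\nabla}\phi^{\varepsilon}_R\|_{H^k})$ via \eqref{e3}) and the remaining factors in $L^\infty$, exactly as in Lemma \ref{lem-A}. The $L^\infty$-factors are controlled by Sobolev embedding ($H^2\hookrightarrow L^\infty$ in dimensions $2,3$), giving a constant of the form $C(\sqrt\varepsilon\|\phi^{\varepsilon}_R\|_{H^\delta})$ with $\delta=\max\{2,k-1\}$. The polynomial summands $\widetilde\phi\phi^{\varepsilon}_R$, $(\phi^{\varepsilon}_R)^2$, $(\widetilde\phi)^2\phi^{\varepsilon}_R$, etc., are handled identically: differentiate, keep one $\overline{\nabla}\phi^{\varepsilon}_R$ (or smooth profile derivative) in $H^k$, put the rest in $L^\infty$. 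Summing the contributions yields \eqref{e81-1}. Finally, as in the original lemma, replacing $C$ by $C_1(r):=\sup_{0\le s\le r}C(s)$ produces a nondecreasing function.

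The bound \eqref{e81-2} is obtained by the same procedure applied to $\partial_t\overline{\nabla}R_{\phi}$. Since each derivative hits either $\phi^{\varepsilon}_R$, $\widetilde\phi$, or an exponential $e^{\theta\phi}$, a Leibniz expansion produces a finite sum of terms, each containing one factor of $\partial_t\overline{\nabla}\phi^{\varepsilon}_R$ (which goes in $H^k$, yielding the desired $\|\partial_t\overline{\nabla}\phi^{\varepsilon}_R\|_{H^k}$ factor), or one factor of $\partial_t\widetilde\phi,\overline{\nabla}\widetilde\phi$ etc. (smooth, bounded by $C$), together with $L^\infty$ factors controlled by $C_1(\sqrt\varepsilon\|\phi^{\varepsilon}_R\|_{H^\delta})$ using $\|\partial_t\phi^{\varepsilon}_R\|_{L^\infty}\lesssim\|\partial_t\phi^{\varepsilon}_R\|_{H^2}$ combined with Lemma \ref{L3}/Corollary \ref{cor2} if needed, or simply through the hypothesis on $\partial_t\phi^{\varepsilon}_R$ itself.

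The only mild obstacle is the bookkeeping for the exponential composition term: ensuring that the constant that appears after applying Moser's composition inequality to $e^{\theta\phi}$ is nondecreasing in $\sqrt\varepsilon\|\phi^{\varepsilon}_R\|_{H^\delta}$, and independent of $\theta\in[0,1]$. This is standard, but one must integrate in $\theta$ after taking $H^k$ norm (using Minkowski's integral inequality), so that the bound survives in the pointwise-in-$\theta$ Moser estimate; uniformity in $\theta$ is automatic because $\theta\in[0,1]$. Everything else is a direct transcription of the argument for Lemma \ref{lem-A}.
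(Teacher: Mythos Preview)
Your approach is essentially the one the paper has in mind: the paper states Corollary~\ref{cor-a} immediately after proving Lemma~\ref{lem-A} and gives no separate proof, the implicit argument being precisely the one you spell out---differentiate the explicit expression for $R_\phi=R'_\phi+\sqrt{\varepsilon}R''_\phi$ from Proposition~\ref{prop-kp3}, apply the Kato--Ponce product estimate with the $\overline{\nabla}$-factor placed in $H^k$, and absorb the remaining factors into a nondecreasing constant depending on $\sqrt{\varepsilon}\|\phi^{\varepsilon}_R\|_{H^\delta}$.

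One small caution: in your sketch for \eqref{e81-2} you mention possibly invoking Lemma~\ref{L3} or Corollary~\ref{cor2} to control $\|\partial_t\phi^{\varepsilon}_R\|_{L^\infty}$. Those results live in Section~3 and rely on the a~priori assumption \eqref{assumption}, so using them here would make the corollary less self-contained than Lemma~\ref{lem-A} (whose constant depends only on $\sqrt{\varepsilon}\|\phi^{\varepsilon}_R\|_{H^\delta}$). The cleaner route---and the one consistent with how the paper proves \eqref{e52-2} ``similarly''---is to arrange the Kato--Ponce splitting so that the factor carrying the time derivative always lands in $H^k$ (hence contributes $\|\partial_t\overline{\nabla}\phi^{\varepsilon}_R\|_{H^k}$), while the $L^\infty$ factors involve only $\phi^{\varepsilon}_R$ and smooth profiles. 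This is possible because every term in $R'_\phi$ carries enough powers of $\sqrt{\varepsilon}$ to pair with the extra $\phi^{\varepsilon}_R$ factors.
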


\section{Derivation of the ZKE}\label{app-b}
\setcounter{equation}{0}
The three dimensional Zakharov-Kuznetsov equation (ZKE) is of the form \cite{ZK74,LS09}
\begin{equation*}
\partial_tu+u\partial_{x_1}u+\partial_{x_1}\Delta u=0,\ \ \ \ x=(x_1,x_2,x_3)\in \Bbb R^3,t\in\Bbb R
\end{equation*}
where $\Delta=\partial_{x_1}^2+\partial_{x_2}^2+\partial_{x_3}^2$. In this appendix, we will derive the Zakharov-Kuznetsov equation (ZKE) from the Euler-Poisson system with static magnetic field,
\begin{equation}\label{equ1}
\begin{cases}
&\partial_tn+\nabla\cdot(n\textbf{u})=0\\
&\partial_t\textbf{u}+\textbf{u}\cdot\nabla\textbf{u} +T_i\frac{\nabla n}{n}+\textbf{e}_1\times\textbf{u}=-\nabla\phi\\
&\Delta\phi=e^{\phi}-n,
\end{cases}
\end{equation}
where $n(t,x),\textbf{u}(t,x)=(u_1(t,x),u_2(t,x),u_3(t,x))$ and $\phi(t,x)$ are respectively the density, velocity of the ions and the electric potential at time $t\geq 0$, position $x=(x_1,x_2,x_3)\in\Bbb R^3$. Here $\textbf{e}_1=(1,0,0)^T$ is the constant magnetic direction and $T_i\geq0$ is the ion temperature. 
The formal derivation of the ZKE when $T_i=0$ can also be found in \cite{LS09}.

\subsection{Formal expansion}
Consider the following Gardner-Morikawa transformation in \eqref{equ1}
\begin{equation}\label{e100}
\begin{split}
\varepsilon^{1/2}(x_1-Vt)\to x_1,\ \ \varepsilon^{1/2}x_2\to x_2,\ \ \varepsilon^{1/2}x_3\to x_3,\ \ \varepsilon^{3/2}t\to t.
\end{split}
\end{equation}
We obtain the parameterized system
\begin{subequations}\label{equ2}
\begin{numcases}{}
\varepsilon\partial_tn-V\partial_{x_1}n+\nabla\cdot(n\textbf{u})=0\label{equ2-n}\\
\varepsilon\partial_t\textbf{u}-V\partial_{x_1}\textbf{u} +\textbf{u}\cdot\nabla\textbf{u}+T_i\frac{\nabla n}{n}=-\nabla\phi +\frac{1}{\varepsilon^{1/2}}\textbf{u}\times{e_1}\label{equ2-u}\\
\varepsilon\Delta\phi=e^{\phi}-n,\label{equ2-p}
\end{numcases}
\end{subequations}
where $\varepsilon$ denotes the amplitude of the initial disturbance and is assumed to be small compared with unity and $V$ is the wave speed to be determined. We consider the following formal expansion
\begin{subequations}\label{formal}
\begin{numcases}{}
\ n=1+\varepsilon n^{(1)}\ \ \ \ \ \ \ \ \ \ \ \ \ \ \ \ \ +\varepsilon^2n^{(2)}\ \ \ \ \ \ \ \ \ \ \ \ \ \ \ \ \ +\varepsilon^3n^{(3)}\ \ \ \ \ \ \ \ \ \ \ \ \ \ \ \ \ +\cdots,\label{formal-n}\\
u_1=\ \ \ \ \ \varepsilon u_1^{(1)}\ \ \ \ \ \ \ \ \ \ \ \ \ \ \ \ \ \ +\varepsilon^2u_1^{(2)}\ \ \ \ \ \ \ \ \ \ \ \ \ \ \ \ \ +\varepsilon^3u_1^{(3)}\ \ \ \ \ \ \ \ \ \ \ \ \ \ \ \ \ +\cdots,\label{formal-1}\\
\ \phi=\ \ \ \ \ \ \varepsilon\phi^{(1)}\ \ \ \ \ \ \ \ \ \ \ \ \ \ \ \ \  +\varepsilon^2\phi^{(2)}\ \ \ \ \ \ \ \ \ \ \ \ \ \ \ \ \ +\varepsilon^3\phi^{(3)}\ \ \ \ \ \ \ \ \ \ \ \ \ \ \ \ \ +\cdots,\label{formal-p}\\
u_2=\ \ \ \ \ \varepsilon^{3/2}u_2^{(1)} +\varepsilon^2u_2^{(2)} +\varepsilon^{5/2}u_2^{(3)}+\varepsilon^{3}u_2^{(4)}+\varepsilon^{7/2}u_2^{(5)} +\varepsilon^{4}u_2^{(6)}+\cdots,\label{formal-2}\\
u_3=\ \ \ \ \ \varepsilon^{3/2}u_3^{(1)} +\varepsilon^2u_3^{(2)} +\varepsilon^{5/2}u_3^{(3)}+\varepsilon^{3}u_3^{(4)}+\varepsilon^{7/2}u_3^{(5)} +\varepsilon^{4}u_3^{(6)}+\cdots.\label{formal-3}
\end{numcases}
\end{subequations}
Plugging the formal expansion \eqref{formal} into the system \eqref{equ2}, we get a power series of $\varepsilon$, whose coefficients depend on $(n^{(k)},\textbf{u}^{(k)},\phi^{(k)})$ for $k\geq1$.

\subsubsection{Derivation of the ZKE for $n^{(1)}$}
At the order of $\varepsilon$, we obtain\\
\emph{Coefficients of $\varepsilon^1$:}
\begin{subequations}\label{order1}
\begin{numcases}{}
-V\partial_{x_1}n^{(1)}+\partial_{x_1}u_1^{(1)}=0,\label{order1-n}\\
V\partial_{x_1}u_1^{(1)}-T_i\partial_{x_1}n^{(1)}=\partial_{x_1}\phi^{(1)},\label{order1-1}\\
\phi^{(1)}=n^{(1)},\label{order1-p}\\
T_i\partial_{x_2}n^{(1)}=-\partial_{x_2}\phi^{(1)}+u_3^{(1)},\label{order1-2}\\
T_i\partial_{x_3}n^{(1)}=-\partial_{x_3}\phi^{(1)}-u_2^{(1)}.\label{order1-3}
\end{numcases}
\end{subequations}
Consider \eqref{order1-n}-\eqref{formal-p}. To get a nontrivial solution, it is necessary to require the determinant of the coefficient matrix of \eqref{order1-n}-\eqref{formal-p} to vanish to obtain
\begin{equation}\label{v=1}
\begin{split}
V^2=T_i+1.
\end{split}
\end{equation}
For definiteness, we let $V=\sqrt{T_i+1}$.

At the orders of $\varepsilon^{3/2}$ and $\varepsilon^2$, we obtain\\
\emph{Coefficients of $\varepsilon^{3/2}$:}
\begin{subequations}\label{order32}
\begin{numcases}{}
\partial_{x_2}u_2^{(1)}+\partial_{x_3}u_3^{(1)}=0,\label{order32-n}\\
-V\partial_{x_1}u_2^{(1)}=u_3^{(2)},\label{order32-2}\\
-V\partial_{x_1}u_3^{(1)}=-u_2^{(2)}.\label{order32-3}
\end{numcases}
\end{subequations}
\emph{Coefficients of $\varepsilon^2$:}
\begin{subequations}\label{order2}
\begin{numcases}{}
\partial_tn^{(1)}-V\partial_{x_1}n^{(2)}+\partial_{x_1}(n^{(1)}u_1^{(1)}) +\partial_{x_1}u_1^{(2)}+\partial_{x_2}u_2^{(2)}+\partial_{x_3}u_3^{(2)}=0,\label{order2-n}\\
\partial_tu_1^{(1)}-V\partial_{x_1}u_1^{(2)}+u_1^{(1)}\partial_{x_1}u_1^{(1)} +T_i\{\partial_{x_1}n^{(2)}-n^{(1)}\partial_{x_1}n^{(1)}\}=-\partial_{x_1}\phi^{(2)},\label{order2-1}\\
\Delta\phi^{(1)}=\phi^{(2)}+\frac12(\phi^{(1)})^2-n^{(2)},\label{order2-p}\\
-V\partial_{x_1}u_2^{(2)} +T_i\{\partial_{x_2}n^{(2)}-n^{(1)}\partial_{x_2}n^{(1)}\} =-\partial_{x_2}\phi^{(2)}+u_3^{(3)},\label{order2-2}\\
-V\partial_{x_1}u_3^{(2)} +T_i\{\partial_{x_3}n^{(2)}-n^{(1)}\partial_{x_3}n^{(1)}\} =-\partial_{x_3}\phi^{(2)}-u_2^{(3)}.\label{order2-3}
\end{numcases}
\end{subequations}
From \eqref{order1-n}-\eqref{order1-p} and \eqref{v=1}, we can assume without loss of generality that
\begin{equation}\label{equ3}
\begin{split}
u_1^{(1)}=Vn^{(1)},\ \ \ \phi^{(1)}=n^{(1)}.
\end{split}
\end{equation}
From \eqref{order1-2} and \eqref{order1-3}, we have
\begin{subequations}\label{equ4}
\begin{numcases}{}
u_2^{(1)}=-T_i\partial_{x_3}n^{(1)}-\partial_{x_3}\phi^{(1)}=-V^2\partial_{x_3}n^{(1)},\label{equ4-1}\\
u_3^{(1)}=T_i\partial_{x_2}n^{(1)}+\partial_{x_2}\phi^{(1)}=V^2\partial_{x_2}n^{(1)},\label{equ4-2}
\end{numcases}
\end{subequations}
thanks to \eqref{v=1} and \eqref{equ3}. Therefore, to solve $n^{(1)}, \textbf{u}^{(1)}$ and $\phi^{(1)}$, we need only to solve $n^{(1)}$.

To find out the equation satisfied by $n^{(1)}$, we take $\partial_{x_1}$ of \eqref{order2-p}, multiply \eqref{order2-n} with $V$, and then add them to \eqref{order2-1}. We thus obtain
\begin{equation}\label{equ5}
\begin{split}
\partial_tn^{(1)}+Vn^{(1)}\partial_{x_1}n^{(1)} +\frac{1}{2V}\partial_{x_1}\Delta n^{(1)} +\frac{1}{2}\{\partial_{x_2}u_2^{(2)}+\partial_{x_3}u_3^{(2)}\}=0.
\end{split}
\end{equation}
On the other hand, from \eqref{order32-3}, \eqref{order32-2} and \eqref{equ4}, we have
\begin{equation}\label{equ6}
\begin{split}
\partial_{x_2}u_2^{(2)}&=V\partial_{x_2x_1}u_3^{(1)} =V^3\partial_{x_1}\partial^2_{x_2}n^{(1)},\\
\partial_{x_3}u_3^{(2)}&=-V\partial_{x_3x_1}u_2^{(1)} =V^3\partial_{x_1}\partial^2_{x_3}n^{(1)},
\end{split}
\end{equation}
thanks to \eqref{v=1}. Inserting this into \eqref{equ5}, we obtain the Zakarov-Kuznetsov equation
\begin{equation}\label{ZKE}
\begin{split}
\partial_tn^{(1)}+n^{(1)}\partial_{x_1}n^{(1)} +\frac{1}{2V}\partial_{x_1}^3n^{(1)} +\frac{V^3}{2}\partial_{x_1}\Delta_{\perp}n^{(1)}=0,
\end{split}
\end{equation}
where $\Delta_{\perp}=\partial_{x_2}^2+\partial_{x_3}^2$ in 3D.

\begin{proposition}\label{prop-ZKE}
Let $s\geq9/8$, the Cauchy problem of ZKE \eqref{ZKE} is locally well-posed in $H^s(\Bbb R^3)$.
\end{proposition}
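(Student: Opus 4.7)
The plan is to view the equation \eqref{ZKE} as a dispersive perturbation of Burgers' equation with symbol
$$p(\xi)=-\tfrac{1}{2V}\xi_1^3-\tfrac{V^3}{2}\xi_1(\xi_2^2+\xi_3^2),$$
and to establish local well-posedness by a contraction fixed point argument in a function space adapted to the linear propagator $S(t)=e^{-tp(D)}$. Since the cubic part and the transverse Laplacian have matching weights only along the $x_1$-direction, the dispersion is genuinely anisotropic; after the harmless rescaling that absorbs the constants $1/(2V)$ and $V^3/2$, the problem reduces to the standard 3D ZK equation treated in Linares--Saut \cite{LS09}, whose argument I would follow.

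First, I would develop the linear estimates for $S(t)$: a Strichartz-type $L^p_t L^q_x$ estimate obtained by stationary phase applied to the oscillatory integral with phase $t\, p(\xi)+x\cdot\xi$, together with the anisotropic Kato-type local smoothing inequality $\|\partial_{x_1}S(t)u_0\|_{L^\infty_{x_1}L^2_{tx_\perp}}\lesssim \|u_0\|_{L^2}$ (which uses that $\partial_{\xi_1}p(\xi)=-\tfrac{3}{2V}\xi_1^2-\tfrac{V^3}{2}|\xi_\perp|^2$ is definite), and a maximal-function estimate $\|S(t)u_0\|_{L^2_{x_1}L^\infty_{tx_\perp}}\lesssim \|u_0\|_{H^s}$ valid for $s>9/8$; the threshold $9/8$ arises from interpolating the sharp maximal function estimate against a transverse Sobolev embedding in $x_\perp\in\mathbb{R}^2$.

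Next, define the resolution space
\begin{equation*}
X^s_T=\Bigl\{u\in C([0,T];H^s):\ \|u\|_{X^s_T}<\infty\Bigr\},
\end{equation*}
where $\|u\|_{X^s_T}$ combines the energy norm $\|u\|_{L^\infty_T H^s_x}$, the smoothing norm $\|\partial_{x_1}u\|_{L^\infty_{x_1}L^2_{T x_\perp}}$ applied to $J^s u$ (with $J=(1-\Delta)^{1/2}$), and the maximal norm $\|J^{s-\sigma}u\|_{L^2_{x_1}L^\infty_{T x_\perp}}$ for a suitable $\sigma$. I would then show that the Duhamel operator $\Phi(u)(t)=S(t)n_0^{(1)}-\tfrac12\int_0^t S(t-t')\partial_{x_1}(u^2)(t')\,dt'$ maps $X^s_T$ into itself and is a contraction on a small ball, provided $T>0$ is chosen sufficiently small. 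The key nonlinear estimate is to control $\|\int_0^t S(t-t')\partial_{x_1}(u v)\,dt'\|_{X^s_T}$ by $T^\theta \|u\|_{X^s_T}\|v\|_{X^s_T}$ for some $\theta>0$: the derivative $\partial_{x_1}$ is absorbed by the smoothing estimate, while the product is handled using the maximal-function norm on one factor and the $L^\infty_T H^s$-norm on the other via the Kato--Ponce fractional Leibniz rule.

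The hard part will be this nonlinear estimate. The loss of one $x_1$-derivative cannot be recovered purely by energy methods, and the smoothing estimate only gains half a derivative after a square-root pairing, which is why the threshold is $s>9/8$ rather than $s>3/4$ as in 2D ZK. Once the nonlinear estimate is in place, the contraction mapping theorem yields a unique solution in $X^s_T$, continuous dependence on initial data follows by the same estimate applied to the difference of two solutions, and persistence of regularity together with Proposition-type uniqueness closes the argument, giving local well-posedness in $H^s(\mathbb{R}^3)$ for all $s\geq 9/8$.
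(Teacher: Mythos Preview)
Your proposal is correct in spirit and outlines the Linares--Saut strategy accurately; the paper's own proof is simply the one-line citation ``See \cite{LS09}'', so you are in fact reconstructing the very argument the paper defers to. One minor point: the maximal-function estimate in \cite{LS09} is stated for $s>9/8$, so the endpoint $s=9/8$ is not covered by the contraction argument as written (the paper's statement of $s\geq 9/8$ is slightly imprecise on this), but this does not affect the use made of the proposition elsewhere in the paper.
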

\begin{proof}
See \cite{LS09}.
\end{proof}

\begin{remark}\label{rmk2}
\eqref{equ3}, \eqref{equ4} and \eqref{ZKE} are a closed system. Once $n^{(1)}$ is solved from \eqref{ZKE}, we have all the other first order profiles $(\textbf{u}^{(1)},\phi^{(1)})$ from \eqref{equ3} and \eqref{equ4}. Furthermore, we can also solve$(u_2^{(2)},u_3^{(2)})$ from \eqref{order32-3} and \eqref{order32-2}. In other words, $(n^{(1)},\textbf{u}^{(1)},\phi^{(1)})$ and $(u_2^{(2)},u_3^{(2)})$ can be solved independently, although the equations \eqref{order1}, \eqref{order32} and \eqref{order2} for the coefficients of $\varepsilon$, $\varepsilon^{3/2}$ and $\varepsilon^2$ depend on the higher order profiles $(n^{(2)},u_1^{(2)},\phi^{(2)})$ and $(u_2^{(3)},u_3^{(3)},u_2^{(4)},u_3^{(4)})$.
\end{remark}

\subsubsection{Derivation of the Linearized ZKE for $n^{(2)}$}
Now, we derive the equation that satisfied by $n^{(2)}$. At the order of $\varepsilon^{5/2}$, we obtain\\
\emph{Coefficients of $\varepsilon^{5/2}$:}
\begin{subequations}\label{order52}
\begin{numcases}{}
\partial_{x_2}(u_2^{(3)}+n^{(1)}u_2^{(1)}) +\partial_{x_3}(u_3^{(3)}+n^{(1)}u_3^{(1)})=0, \label{order52-n}\\
u_2^{(1)}\partial_{x_2}u_1^{(1)}+u_3^{(1)}\partial_{x_3}u_1^{(1)},\label{order52-1}\\
\partial_t u_2^{(1)}-\partial_{x_1}u_2^{(3)}+u_1^{(1)}\partial_{x_1}u_2^{(1)} =u_3^{(4)},\label{order52-2}\\
\partial_t u_3^{(1)}-\partial_{x_1}u_3^{(3)}+u_1^{(1)}\partial_{x_1}u_3^{(1)} =-u_2^{(4)}.\label{order52-3}
\end{numcases}
\end{subequations}
We first note that \eqref{order52-1} is consistent with \eqref{order1-2} and \eqref{order1-3}. Indeed, from \eqref{order1-2} and \eqref{order1-3}, we can derive \eqref{order52-1} by noting \eqref{equ3}. Also, \eqref{order52-n} is consistent with \eqref{order1-2}, \eqref{order1-3},\eqref{order32-2},\eqref{order32-3},\eqref{order2-2},\eqref{order2-3}. Indeed, from \eqref{order2-2} and \eqref{order2-3}, we have
\begin{equation*}
\begin{split}
\partial_{x_2}u_2^{(3)}+\partial_{x_3}u_3^{(3)} =&V\partial_{x_1x_2}u_3^{(2)}-V\partial_{x_1x_3}u_2^{(2)}\\
=&-V\partial_{x_1}^2\partial_{x_2}u_2^{(1)}-V\partial_{x_1}^2\partial_{x_3}u_3^{(1)}\\
=&0,
\end{split}
\end{equation*}
where we have used \eqref{order32-2} and \eqref{order32-3} in the second equality and \eqref{order1-2} and \eqref{order1-3} in the third equality. Similarly, by using \eqref{order1-2} and \eqref{order1-3}, we obtain
\begin{equation*}
\begin{split}
\partial_{x_2}(n^{(1)}u_2^{(1)})+\partial_{x_3}(n^{(1)}u_3^{(1)}) =&\{n^{(1)}(\partial_{x_2}u_2^{(1)}+\partial_{x_3}u_3^{(1)})\} +\{u_2^{(1)}\partial_{x_2}n^{(1)}+u_3^{(1)}\partial_{x_3}n^{(1)}\}\\
=&0.
\end{split}
\end{equation*}
From \eqref{order52-3} and \eqref{order2-2}, we have
\begin{equation*}
\begin{split}
u_2^{(4)}=& \partial_{x_1}u_3^{(3)} -\{\partial_tu_3^{(1)}+u_1^{(1)}\partial_{x_1}u_3^{(1)}\}\\
=& \partial_{x_1x_2}\phi^{(2)}+T_i\partial_{x_1x_2}n^{(2)}+\underline{a_2}^{(1)},
\end{split}
\end{equation*}
where $\underline{a_2}^{(1)}=-\{V\partial_{x_1}^2u_2^{(2)} +\partial_tu_3^{(1)}+T_i\partial_{x_1}(n^{(1)}\partial_{x_2}n^{(1)}) +u_1^{(1)}\partial_{x_1}u_3^{(1)}\}$. Therefore,
\begin{equation}\label{equ7}
\begin{split}
\partial_{x_2}u_2^{(4)}=& \partial_{x_1}\partial^2_{x_2}\phi^{(2)} +T_i\partial_{x_1}\partial^2_{x_2}n^{(2)} +\partial_{x_2}\underline{a_2}^{(1)}.
\end{split}
\end{equation}
Similarly, from \eqref{order52-2} and \eqref{order2-3}, we obtain
\begin{equation}\label{equ8}
\begin{split}
\partial_{x_3}u_3^{(4)}=& \partial_{x_1}\partial^2_{x_3}\phi^{(2)} +T_i\partial_{x_1}\partial_{x_3}^2n^{(2)} +\partial_{x_3}\underline{a_3}^{(1)},
\end{split}
\end{equation}
where $\underline{a_3}^{(1)}=\{\partial_tu_2^{(1)} +u_1^{(1)}\partial_{x_1}u_2^{(1)}-V\partial_{x_1}^2u_3^{(2)} -T_i\partial_{x_1}(n^{(1)}\partial_{x_3}n^{(1)})\}$.

At the order of $\varepsilon^3$, we obtain\\
\emph{Coefficients of $\varepsilon^3$:}
\begin{subequations}\label{order3}
\begin{numcases}{}
\partial_tn^{(2)}-\partial_{x_1}n^{(3)}+\partial_{x_1}(n^{(1)}u_1^{(2)} +n^{(2)}u_1^{(1)})+\partial_{x_1}u_1^{(3)} +\partial_{x_2}u_2^{(4)}+\partial_{x_3}u_3^{(4)}\nonumber\\
\ \ \ \ \ \ \ \ \ \ \ +\{\partial_{x_2}(n^{(1)}u_2^{(2)})+\partial_{x_3}(n^{(1)}u_3^{(2)})\}=0, \label{order3-n}\\
\partial_tu_1^{(2)}-\partial_{x_1}u_1^{(3)} +u_1^{(1)}\partial_{x_1}u_1^{(2)}+u_1^{(2)}\partial_{x_1}u_1^{(1)} +T_i\{\partial_{x_1}n^{(3)}-n^{(1)}\partial_{x_1}n^{(2)}\nonumber\\
\ \ \ \ \ \ \ \ \ \ \ -(n^{(2)}-(n^{(1)})^2) \partial_{x_1}n^{(1)}\} =-\partial_{x_1}\phi^{(3)} +\{u_2^{(2)}\partial_{x_2}u_1^{(1)}+u_3^{(2)}\partial_{x_3}u_1^{(1)}\},\ \ \ \ \ \label{order3-1}\\
\Delta\phi^{(2)}=\phi^{(3)}+\phi^{(1)}\phi^{(2)} +\frac{1}{3!}(\phi^{(1)})^3-n^{(3)},\label{order3-p}\\
\partial_tu_2^{(2)}-\partial_{x_1}u_2^{(4)} +u_1^{(1)}\partial_{x_1}u_2^{(2)}+T_i\{\partial_{x_2}n^{(3)}-n^{(1)}\partial_{x_2}n^{(2)}\nonumber\\
\ \ \ \ \ \ \ \ \ \ \ -(n^{(2)}-(n^{(1)})^2) \partial_{x_2}n^{(1)}\} =-\partial_{x_2}\phi^{(3)}+u_3^{(5)},\label{order3-2}\\
\partial_tu_3^{(2)}-\partial_{x_1}u_3^{(4)}+u_1^{(1)}\partial_{x_1}u_3^{(2)} +T_i\{\partial_{x_3}n^{(3)}-n^{(1)}\partial_{x_3}n^{(2)}\nonumber\\
\ \ \ \ \ \ \ \ \ \ \ -(n^{(2)}-(n^{(1)})^2) \partial_{x_3}n^{(1)}\} =-\partial_{x_3}\phi^{(3)}-u_2^{(5)}.\label{order3-3}
\end{numcases}
\end{subequations}
We first note that from \eqref{order2-p}, we can assume without loss of generality that
\begin{equation}\label{equ9}
\begin{split}
\phi^{(2)}=n^{(2)}+\underline{\phi}^{(1)},
\end{split}
\end{equation}
where $\underline{\phi}^{(1)}=\Delta\phi^{(1)}-\frac12(\phi^{(1)})^2$ is known from \eqref{ZKE}, since $\phi^{(1)}=n^{(1)}$ from \eqref{equ3}. From \eqref{order2-n}, we have
\begin{equation*}
\begin{split}
\partial_{x_1}u_1^{(2)}=V\partial_{x_1}n^{(2)}+\underline{\mathfrak{n}}^{(1)},
\end{split}
\end{equation*}
where $\underline{\mathfrak{n}}^{(1)}=-\partial_tn^{(1)}-\partial_{x_1}(n^{(1)}u_1^{(1)}) -\partial_{x_2}u_2^{(2)}-\partial_{x_3}u_3^{(2)}$. Without loss of generality, we can assume that
\begin{equation}\label{equ10}
\begin{split}
u_1^{(2)}=n^{(2)}+\underline{n}^{(1)},
\end{split}
\end{equation}
where $\underline{n}^{(1)}=\int_{-\infty}^{x_1}\underline{\mathfrak{n}}^{(1)}dx_1$.
\begin{remark}
We claim that $\int_{-\infty}^{\infty}\underline{\mathfrak{n}}^{(1)}dx_1=0$. First, from \eqref{ZKE}, we have $\partial_t\int_{-\infty}^{\infty}n^{(1)}=0$, thanks to the divergence theorem. On the other hand, by \eqref{equ6}, we have $\underline{\mathfrak{n}}^{(1)} =-\partial_tn^{(1)}-\partial_{x_1}\{(n^{(1)}u_1^{(1)}) +\Delta_{\perp}n^{(1)}\}$. The claim then follows, again thanks to divergence theorem.
\end{remark}
By taking $\partial_{x_1}$ of \eqref{order3-p}, multiplying \eqref{order3-n} with $V$ and then adding them to \eqref{order3-1},
we obtain a linearized inhomogeneous ZKE for $n^{(2)}$:
\begin{equation}\label{lin-2}
\begin{split}
\partial_tn^{(2)}+\partial_{x_1}(n^{(1)}n^{(2)})+\frac12\partial_{x_1}^3n^{(2)} +\partial_{x_1}\Delta_{\perp}n^{(2)}=\underline{G}^{(1)},
\end{split}
\end{equation}
where $\underline{G}^{(1)}$ is the inhomogeneous term, depending only on $n^{(1)}$. Here, we have used \eqref{equ7}, \eqref{equ8}, \eqref{equ9} and \eqref{equ10}. Furthermore, we also get the coefficients of $\varepsilon^{7/2}$, which depend only on $n^{(2)}$  directly or indirectly:\\
\emph{Coefficients of $\varepsilon^{7/2}$:}
\begin{subequations}\label{order72}
\begin{numcases}{}
\partial_{x_2}(u_2^{(5)}+n^{(1)}u_2^{(3)}+n^{(2)}u_2^{(1)}) +\partial_{x_3}(u_3^{(5)}+n^{(1)}u_3^{(3)}+n^{(2)}u_3^{(1)})=0, \label{order72-n}\\
u_2^{(3)}\partial_{x_2}u_1^{(1)}+u_2^{(1)}\partial_{x_2}u_1^{(2)} +u_3^{(3)}\partial_{x_3}u_1^{(1)}+u_3^{(1)}\partial_{x_3}u_1^{(2)},\label{order72-1}\\
\partial_t u_2^{(3)}-\partial_{x_1}u_2^{(5)}+\textbf{u}^{(2)}\nabla u_2^{(1)} +u_1^{(1)}\partial_{x_1}u_2^{(3)} +u_2^{(1)}\partial_{x_2}u_2^{(2)} +u_3^{(1)}\partial_{x_3}u_2^{(2)} =u_3^{(6)},\label{order72-2}\\
\partial_t u_3^{(3)}-\partial_{x_1}u_3^{(5)}+\textbf{u}^{(2)}\nabla u_3^{(1)} +u_1^{(1)}\partial_{x_1}u_3^{(3)} +u_2^{(1)}\partial_{x_2}u_3^{(2)} +u_3^{(1)}\partial_{x_3}u_3^{(2)} =-u_2^{(6)}.\ \ \ \ \ \ \ \label{order72-3}
\end{numcases}
\end{subequations}

\begin{remark}[Continuation of Remark \ref{rmk2}]\label{rmk3}
Once $n^{(2)}$ is solved from \eqref{lin-2}, then $u_1^{(2)}$,  $\phi^{(2)}$, $u_2^{(3)}$, $u_3^{(3)}$, $u_2^{(4)}$ and $u_3^{(4)}$ are all known. Although the expression for the coefficients of $\varepsilon^3$ and $\varepsilon^{7/2}$ depend on the higher approximations $(n^{(3)},u_1^{(3)},\phi^{(3)})$ and $(u_2^{(5)},u_3^{(5)},u_2^{(6)},u_3^{(6)})$, they can be solved independently. Furthermore, $n^{(i)}$, $u_1^{(i)}$, $\phi^{(i)}$ for $i\leq 2$ and $u_2^{(j)}$ and $u_3^{(j)}$ for $j\leq 4$ will make the systems \eqref{order32}, \eqref{order2} and \eqref{order52} of the coefficients of $\varepsilon$, $\varepsilon^{3/2}$, $\varepsilon^2$, $\varepsilon^{5/2}$ valid exactly.
\end{remark}

\subsubsection{The linearized ZKE for $n^{(k)}$}
Inductively, we can derive all the profiles $n^{(k)},\textbf{u}^{(k)}$ and $\phi^{(k)}$. $n^{(k)}$ for $k\geq 3$ satisfy a linearized ZKE similar to \eqref{lin-2}
\begin{equation}\label{lin-k}
\begin{split}
\partial_tn^{(k)}+\partial_{x_1}(n^{(1)}n^{(k)})+\frac12\partial_{x_1}^3n^{(2)} +\partial_{x_1}\Delta_{\perp}n^{(k)}=\underline{G}^{(k-1)},
\end{split}
\end{equation}
where $\underline{G}^{(k-1)}$ depends only on $n^{(i)}$ for $j\leq k-1$.

\begin{proposition}\label{prop0}
Let $s\geq 9/8$, the Cauchy problem of the linearized inhomogeneous ZKE \eqref{lin-k} for $k\geq 9/8$ is well-posed in $H^s(\Bbb R^3)$.
\end{proposition}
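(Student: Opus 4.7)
The key observation is that \eqref{lin-k} is \emph{linear} in the unknown $n^{(k)}$: by induction on $k$, both the variable coefficient $n^{(1)}$ (via Proposition \ref{prop-ZKE}) and the inhomogeneity $\underline{G}^{(k-1)}$ (which depends only on $n^{(j)}$ for $j\leq k-1$) are already constructed and belong to $H^{s'}$ for any $s'\geq 9/8$ on the time interval $[0,\tau_*)$. Thus the task reduces to proving well-posedness of a linear, dispersive equation with smooth variable coefficient and smooth forcing, namely
\[
\partial_t v + L v + \partial_{x_1}(n^{(1)} v) = F,\qquad L:=\tfrac{1}{2}\partial_{x_1}^3+\partial_{x_1}\Delta_{\perp},
\]
with given data $v_0\in H^s$ and source $F=\underline{G}^{(k-1)}\in L^\infty_t H^s_x$.

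The plan is to build the solution by the classical semigroup/Duhamel scheme. The operator $L$ has purely imaginary Fourier symbol $i(\tfrac{1}{2}\xi_1^3+\xi_1(\xi_2^2+\xi_3^2))$ and is therefore skew-adjoint on $L^2(\mathbb{R}^3)$, so it generates a unitary group $S(t)=e^{-tL}$ on every $H^s$. I would then write
\[
v(t) = S(t) v_0 + \int_0^t S(t-\sigma)\bigl[F(\sigma)-\partial_{x_1}(n^{(1)}(\sigma) v(\sigma))\bigr]\,d\sigma
\]
and close a contraction in $C([0,T];H^s)$ for $T>0$ small. The decisive ingredient is an $H^s$ energy estimate: pairing the equation with $\langle D\rangle^{2s}v$ (or, for integer $s$, summing over $|\alpha|\leq s$ and pairing with $\partial^{\alpha}v$) the skew-adjointness of $L$ kills the dispersive contribution exactly, while the transport term is controlled by the commutator/Moser estimate
\[
\bigl|\langle \partial_{x_1}(n^{(1)}v),v\rangle_{H^s}\bigr|\leq C\|n^{(1)}\|_{H^{s+1}}\|v\|_{H^s}^2.
\]
Combined with $\|F\|_{H^s}\leq C$ from the inductive hypothesis, Gronwall's inequality yields a uniform $H^s$ bound on any subinterval of $[0,\tau_*)$. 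Uniqueness follows from the same estimate applied to the difference of two solutions (with $F\equiv 0$), and existence is obtained by a standard Friedrichs mollification/Galerkin regularization and passage to the limit.

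The only delicate point is that the proposition allows the fractional Sobolev index $s\geq 9/8$, so the integer-order energy method must be upgraded to fractional order. For this I would appeal to the Kato-Ponce commutator estimate applied to $[\langle D\rangle^s,n^{(1)}]\partial_{x_1}v$, which gives the same bound as above with $\|n^{(1)}\|_{H^{s+1}}$ replaced by a norm that is still finite thanks to the $H^s$ regularity ($s\geq 9/8$) of $n^{(1)}$ guaranteed by Proposition \ref{prop-ZKE}. The main obstacle is precisely this low-regularity transport estimate; should the Kato-Ponce route require slightly more regularity of $n^{(1)}$ than is available, one can alternatively work in the Linares-Saut $X^{s,b}$ space from \cite{LS09}, where the bilinear estimate needed to absorb $\partial_{x_1}(n^{(1)}v)$ in the contraction mapping is a direct linear byproduct of their nonlinear bilinear estimate applied with one factor frozen at $n^{(1)}$. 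Either path then gives well-posedness on $[0,\tau_*)$.
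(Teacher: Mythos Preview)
Your proposal is correct and actually far more detailed than what the paper provides. The paper gives no proof whatsoever for Proposition~\ref{prop0}; it is simply stated, and the only justification offered anywhere is the one-line remark preceding the analogous Proposition~\ref{prop-kp4} for the KP case: ``Since \eqref{lin-KPk} is linear in $n^{(k)}$, we easily obtain\ldots''. So the paper's entire argument is the observation you make in your first sentence.

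Your semigroup/Duhamel scheme with energy estimates via skew-adjointness of the dispersive part is the natural way to flesh this out, and your caution about the borderline regularity $s=9/8$ is well placed. Note, however, that for the paper's actual purposes this subtlety never arises: the profiles are always taken with initial data in $H^s$ for $s\geq 4$ (see assumption \eqref{assump} and the hypotheses of Theorem~\ref{th}), so $n^{(1)}$ has more than enough regularity for the straightforward Kato--Ponce route to close. The $X^{s,b}$ alternative you mention from \cite{LS09} is the right tool if one insists on the sharp threshold $s\geq 9/8$, but the paper never needs it.
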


\begin{remark}[Continuation of Remark \ref{rmk3}]\label{rmk4}
In particular, we consider the case of $k=3$. Let $n^{(i)}$ $(i=1,2,3)$ be solved from \eqref{ZKE} and \eqref{lin-k} for $k=2,3$. Then $u_1^{(i)}$ and $\phi^{(i)}$ for $i=1,2,3$ and $u_2^{(j)}$ and $u_3^{(j)}$ for $j=1,\cdots,6$ are all known. They will make the systems of the coefficients up to order of $\varepsilon^{7/2}$ valid exactly.
\end{remark}

\subsection{Remainder equation for ZKE}
To make the previous formal derivation rigorous, we consider the following expansion with remainder term $(n^{\varepsilon}_R, \textbf{u}^{\varepsilon}_R, \phi^{\varepsilon}_R)$,
\begin{subequations}\label{expan}
\begin{numcases}{}
\ n=1+\varepsilon n^{(1)}\ \ \ \ \ \ \ \ \ \ \ \ \ \ \ \ \ \ +\varepsilon^2n^{(2)}\ \ \ \ \ \ \ \ \ \ \ \ \ \ \ \ \ +\varepsilon^3n^{(3)}\ \ \ \ \ \ \ \ \ \ \ \ \ \ \ \ +\varepsilon^2n^{\varepsilon}_R,\label{expan-n}\\
u_1=\ \ \ \ \ \varepsilon u_1^{(1)}\ \ \ \ \ \ \ \ \ \ \ \ \ \ \ \ \ \ \ +\varepsilon^2u_1^{(2)}\ \ \ \ \ \ \ \ \ \ \ \ \ \ \ \ \ +\varepsilon^3u_1^{(3)}\ \ \ \ \ \ \ \ \ \ \ \ \ \ \ \ +\varepsilon^2{u_1}^{\varepsilon}_R,\label{expan-1}\\
\ \phi=\ \ \ \ \ \varepsilon\phi^{(1)}\ \ \ \ \ \ \ \ \ \ \ \ \ \ \ \ \ \ \ +\varepsilon^2\phi^{(2)}\ \ \ \ \ \ \ \ \ \ \ \ \ \ \ \ \ +\varepsilon^3\phi^{(3)}\ \ \ \ \ \ \ \ \ \ \ \ \ \ \ \ +\varepsilon^2\phi^{\varepsilon}_R,\label{expan-p}\\
u_2=\ \ \ \  \varepsilon^{3/2}{u_2}^{(1)} +\varepsilon^2u_2^{(2)} +\varepsilon^{5/2}u_2^{(3)}+\varepsilon^{3}u_2^{(4)}+\varepsilon^{7/2}u_2^{(5)} +\varepsilon^{4}u_2^{(6)}+\varepsilon^2{u_2}^{\varepsilon}_R,\label{expan-2}\\
u_3=\ \ \ \ \ \varepsilon^{3/2}u_3^{(1)}+\varepsilon^2u_3^{(2)} +\varepsilon^{5/2}u_3^{(3)}+\varepsilon^{3}u_3^{(4)}+\varepsilon^{7/2}u_3^{(5)} +\varepsilon^{4}u_3^{(6)}+\varepsilon^2{u_3}^{\varepsilon}_R,\ \ \ \ \label{expan-3}
\end{numcases}
\end{subequations}
where $\textbf{u}^{\varepsilon}_R=({u_1}^{\varepsilon}_R, {u_2}^{\varepsilon}_R, {u_3}^{\varepsilon}_R)$. Here $n^{(1)}$, $n^{(2)}$ and $n^{(3)}$ satisfy \eqref{ZKE}, \eqref{lin-2} and \eqref{lin-k} for $k=3$. The other profiles $u_1^{(i)}$ and $\phi^{(i)}$ for $i=1,2,3$ and $u_2^{(j)}$ and $u_3^{(j)}$ for $j=1,\cdots,6$ are solved from the systems \eqref{order1}, \eqref{order32}, \eqref{order2}, \eqref{order52}, \eqref{order3} and \eqref{order72} of coefficients up to order $\varepsilon^{7/2}$. See Remark \ref{rmk2}, \ref{rmk3} and \ref{rmk4}.

For notational convenience, we denote $\textbf{u}=({u_1},{u_2},u_3)^T$, $\textbf{u}^{\varepsilon}_R=({u_1}^{\varepsilon}_R, {u_2}^{\varepsilon}_R, {u_3}^{\varepsilon}_R)^T$ and
\begin{equation}\label{equ95}
\begin{split}
\widetilde n=&n^{(1)}+\varepsilon n^{(2)}+\varepsilon^2n^{(3)},\ \ \ \ \ \ \  \widetilde{\phi}=\phi^{(1)}+\varepsilon\phi^{(2)} +\varepsilon^2\phi^{(3)},\\
\widetilde{\textbf{u}}=&(\widetilde u_1,\widetilde u_2, \widetilde u_3)^T,\ \ \ \ \ \ \ \ \ \ \ \ \ \ \ \ \ \ \widetilde u_1=u_1^{(1)}+\varepsilon u_1^{(2)}+\varepsilon^2u_1^{(3)},\\
\widetilde u_2=&\varepsilon^{1/2}u_2^{(1)}+\varepsilon u_2^{(2)}+\varepsilon^{3/2}u_2^{(3)}+\varepsilon^2u_2^{(4)} +\varepsilon^{5/2}u_2^{(5)}+\varepsilon^3u_2^{(6)},\\
\widetilde u_3=&\varepsilon^{1/2}u_3^{(1)}+\varepsilon u_3^{(2)}+\varepsilon^{3/2}u_3^{(3)}+\varepsilon^2u_3^{(4)} +\varepsilon^{5/2}u_3^{(5)}+\varepsilon^3u_3^{(6)}.
\end{split}
\end{equation}

\begin{proposition}\label{prop-rem-zk}
Let $(n,\textbf{u},\phi)$ in \eqref{expan} be a solution of the Euler-Poisson system \eqref{equ1}, then $(n^{\varepsilon}_R, \textbf{u}^{\varepsilon}_R, \phi^{\varepsilon}_R)$ satisfy the following remainder system
\begin{subequations}\label{rem}
\begin{numcases}{}
\partial_tn^{\varepsilon}_R-\frac{V\textbf{e}_1-\textbf{u}}{\varepsilon}\cdot\nabla n^{\varepsilon}_R+\frac{n}{\varepsilon}\nabla\cdot\textbf{u}^{\varepsilon}_R +n^{\varepsilon}_R\nabla\cdot\widetilde{\textbf{u}} +\textbf{u}^{\varepsilon}_R\cdot\nabla\widetilde{n}+\varepsilon R_n=0,\label{rem-1}\\
\partial_t\textbf{u}^{\varepsilon}_R-\frac{V\textbf{e}_1-\textbf{u}}{\varepsilon} \cdot\nabla\textbf{u}^{\varepsilon}_R +\textbf{u}^{\varepsilon}_R\nabla\cdot\widetilde{\textbf{u}} +\frac{T_i}{\varepsilon n}\overline{\nabla}n^{\varepsilon}_R\nonumber\\
\ \ \ \ \ \ \ \ \ \ \ \ \ -\frac{T_i \textbf{p}}{\varepsilon n}n^{\varepsilon}_R-\frac{T_i \varepsilon}{n}\textbf{R}_{T} +\varepsilon \textbf{R}_\textbf{u}=-\frac{1}{\varepsilon}\nabla\phi^{\varepsilon}_R +\frac{1}{\varepsilon^{3/2}}\textbf{u}^{\varepsilon}_R\times \textbf{e}_1,\label{rem-2}\\
\varepsilon\Delta\phi^{\varepsilon}_R=\phi^{\varepsilon}_R-n^{\varepsilon}_R +\varepsilon\phi^{(1)}\phi^{\varepsilon}_R +\varepsilon^{3/2}R_{\phi},\label{rem-3}
\end{numcases}
\end{subequations}
where $\textbf{u}^{\varepsilon}_R=({u_1}^{\varepsilon}_R, {u_2}^{\varepsilon}_R,{u_3}^{\varepsilon}_R)$ and $\widetilde n$, $\widetilde{\textbf{u}}$ and $\widetilde\phi$ are given in \eqref{equ95}. Here, $R_n,\textbf{R}_\textbf{u}=(R_{\textbf{u}1}, R_{\textbf{u}2}, R_{\textbf{u}3})$ depend only on $n^{(k)}$, $\textbf{u}^{(k)}$ and $\phi^{(k)}$, and $R_{\phi}$ depends on $\phi^{\varepsilon}_R$ in the form $R_{\phi}=F(\sqrt{\varepsilon}\phi^{\varepsilon}_R)\phi^{\varepsilon}_R +\sqrt{\varepsilon}R'_{\phi}$ for some $R'_{\phi}$ depending only on $n^{(k)}$, $\textbf{u}^{(k)}$ and $\phi^{(k)}$. In \eqref{rem-2}, $\textbf{p}=(p_1,p_2,p_3)$ and $\textbf{R}_T=(R_{T1},R_{T2},R_{T3})$ are finite combinations of $n^{(1)}, n^{(2)}$ and $n^{(3)}$. In \eqref{rem-1}, $\textbf{e}_1=(1,0,0)'$ is a constant vector.
\end{proposition}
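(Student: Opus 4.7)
The plan is to mirror the proof of Proposition \ref{prop-kp3} given in Appendix A, adapted to the three-dimensional rotating system \eqref{equ2}. I would proceed equation-by-equation, substituting the expansion \eqref{expan} into the transformed Euler-Poisson system, then subtracting off the profile cascades \eqref{order1}, \eqref{order32}, \eqref{order2}, \eqref{order52}, \eqref{order3}, \eqref{order72} at orders $\varepsilon, \varepsilon^{3/2}, \varepsilon^2, \varepsilon^{5/2}, \varepsilon^3, \varepsilon^{7/2}$, respectively. What remains, after division by the appropriate power of $\varepsilon$, is precisely \eqref{rem}, with the profile-only leftovers collected into $R_n$, $\textbf{R}_\textbf{u}$, $R_\phi$, $\textbf{p}$, $\textbf{R}_T$.

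For the continuity equation \eqref{equ2-n}, substituting \eqref{expan} yields a polynomial in $\varepsilon^{1/2}$ whose coefficients are linear in the remainder plus polynomial in the profiles. After subtracting the six profile continuity equations and dividing by $\varepsilon^2$, the terms linear in $(n^{\varepsilon}_R, \textbf{u}^{\varepsilon}_R)$ assemble into the transport-divergence structure of \eqref{rem-1}, with $R_n$ absorbing cross-products like $\partial_{x_1}(n^{(i)} u_1^{(j)})$ and $\partial_{x_j}(n^{(i)} u_j^{(l)})$ with $i + j \geq 4$ (and similar higher-order transverse pieces), exactly as in formula \eqref{e50}. For the momentum equation \eqref{equ2-u}, the treatment is identical for the convective and potential terms; the pressure $T_i \nabla n/n$ is expanded using $1/n = 1/(1 + \varepsilon \widetilde n + \varepsilon^2 n^{\varepsilon}_R)$ by the same Taylor manipulation that produced \eqref{e47} in Appendix A, giving the three contributions $T_i \nabla n^{\varepsilon}_R/(\varepsilon n)$, $-T_i \textbf{p} n^{\varepsilon}_R/(\varepsilon n)$ and $-T_i \varepsilon \textbf{R}_T/n$; the Lorentz term $\varepsilon^{-1/2} \textbf{u} \times \textbf{e}_1$ contributes the remainder piece $\varepsilon^{-3/2} \textbf{u}^{\varepsilon}_R \times \textbf{e}_1$ after division by $\varepsilon^2$, its lower-order expansion pieces being precisely what appears on the right of \eqref{order1-2}--\eqref{order72-3}.

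For the Poisson equation \eqref{equ2-p}, I would reuse verbatim the Taylor-with-integral-remainder device from the proof of Proposition \ref{prop-kp3}: write
\[
e^{\phi} = 1 + \phi + \tfrac{1}{2}\phi^2 + \tfrac{1}{6}\phi^3 + \tfrac{1}{6}\Bigl(\int_0^1 e^{\theta \phi}(1-\theta)^3 \, d\theta\Bigr) \phi^4,
\]
set $\phi = \varepsilon \widetilde\phi + \varepsilon^2 \phi^{\varepsilon}_R$, subtract $\varepsilon \cdot \eqref{order1-p} + \varepsilon^2 \cdot \eqref{order2-p} + \varepsilon^3 \cdot \eqref{order3-p}$, and divide by $\varepsilon^2$. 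The resulting equation matches \eqref{rem-3}, with $R_\phi$ of the claimed form $F(\sqrt{\varepsilon}\phi^{\varepsilon}_R)\phi^{\varepsilon}_R + \sqrt{\varepsilon} R'_\phi$: the first summand comes from all quadratic-and-higher powers of $\phi^{\varepsilon}_R$ multiplied by bounded functions of $\sqrt{\varepsilon} \phi^{\varepsilon}_R$, while $R'_\phi$ collects the $\Delta \phi^{(k)}$ remainders and the profile-only nonlinearities at orders $\varepsilon^4$ and higher.

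The main obstacle is not analytical but combinatorial bookkeeping, since the cascade now runs through both integer and half-integer orders of $\varepsilon$ up to $\varepsilon^{7/2}$, and the transverse components $u_2, u_3$ appear twice as often as in the KP-II case because of the magnetic coupling. Crucially, I must verify that subtracting the over-determined profile system leaves a remainder that still has the clean form \eqref{rem}; this is precisely what the consistency checks in the main text (the cancellations showing that \eqref{order52-n} follows from lower orders, and similarly for \eqref{order52-1} and \eqref{order72-n}) ensure. Once those compatibilities are used, each cross-term either matches a profile equation and cancels, or contributes to $R_n$, $\textbf{R}_\textbf{u}$, $R_\phi$ with order at least $\varepsilon^4$ in the original scaling, which becomes the $\varepsilon R_n$, $\varepsilon \textbf{R}_\textbf{u}$, $\varepsilon^{3/2} R_\phi$ prefactors in \eqref{rem}.
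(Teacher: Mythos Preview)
Your approach is correct and matches the paper's own treatment: the paper explicitly states that the derivation of \eqref{rem} ``is similar to that of \eqref{rem-kp} in the KPE limit case'' and ``can be proved exactly as those in Appendix A and hence omitted,'' which is precisely the template you follow. One minor arithmetic slip: the division should be by $\varepsilon^3$, not $\varepsilon^2$ (e.g., the Lorentz contribution $\varepsilon^{-1/2}\cdot\varepsilon^2\textbf{u}^{\varepsilon}_R\times\textbf{e}_1=\varepsilon^{3/2}\textbf{u}^{\varepsilon}_R\times\textbf{e}_1$ becomes $\varepsilon^{-3/2}\textbf{u}^{\varepsilon}_R\times\textbf{e}_1$ only after dividing by $\varepsilon^3$, consistent with the $\varepsilon^3$ division in Appendix A), but this does not affect the structure of your argument.
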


The derivation of such a system for $(n^{\varepsilon}_R, \textbf{u}^{\varepsilon}_R, \phi^{\varepsilon}_R)$ is similar to that of \eqref{rem-kp} in the KPE limit case. \eqref{rem-3} can also be written in the equivalent form of \eqref{e35} in Proposition \ref{prop-kp5}. The remainder term $R_{\phi}$ and $\overline{R}_{\phi}$ satisfy the same estimates of Lemma \ref{lem-A}. These claims can be proved exactly as those in Appendix A and hence omitted.

\section{Commutator estimates}
\setcounter{equation}{0}
We give two important inequalities which are widely used throughout this paper \cite{KP88}.
\begin{lemma}\label{Le-inequ}
Let $\alpha$ be any multi-index with $|\alpha|=k$ and $p\in (1,\infty)$. Then there exists some constant $C>0$ such that
\begin{equation}\label{mul-com}
\begin{split}
\|\partial_x^{\alpha}(fg)\|_{L^p}\leq & C\{\|f\|_{L^{p_1}}\|g\|_{\dot H^{s,p_2}}+\|f\|_{\dot H^{s,p_3}}\|g\|_{L^{p_4}}\},\\
\|[\partial_x^{\alpha},f]g\|_{L^p}\leq & C\{\|\nabla f\|_{L^{p_1}}\|g\|_{\dot H^{k-1,p_2}}+\|f\|_{\dot H^{k,p_3}}\|g\|_{L^{p_4}}\},
\end{split}
\end{equation}
where $f,g\in \mathscr{S}$, the Schwartz class and $p_2,p_3\in (1,+\infty)$ such that
\begin{equation*}
\begin{split}
\frac1p=\frac1{p_1}+\frac1{p_2}=\frac1{p_3}+\frac1{p_4}.
\end{split}
\end{equation*}
\end{lemma}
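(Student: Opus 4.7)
My plan is to prove both inequalities by Littlewood--Paley decomposition combined with Bony's paraproduct calculus, which is the standard route to Kato--Ponce type estimates. Let $\{\Delta_j\}_{j\ge -1}$ denote the usual inhomogeneous Littlewood--Paley projectors and $S_j=\sum_{\ell\le j-1}\Delta_\ell$ the low-frequency cutoff. For any Schwartz $f,g$, Bony's decomposition reads
\begin{equation*}
fg=T_fg+T_gf+R(f,g),\qquad T_fg=\sum_j S_{j-1}f\,\Delta_jg,\qquad R(f,g)=\sum_{|j-j'|\le 2}\Delta_jf\,\Delta_{j'}g.
\end{equation*}
The idea is that only three frequency interactions contribute: low--high (the two paraproducts) and high--high (the remainder), and each one distributes derivatives in a way tailored to the inequalities asked.

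For the multiplicative estimate, I would apply $\partial_x^\alpha$ to each piece. In $T_fg$, the output frequency is comparable to that of $\Delta_j g$, so $\partial_x^\alpha$ falls on $g$; using Bernstein to pass from $S_{j-1}f$ to $L^{p_1}$, the maximal function bound for the paraproduct, and Littlewood--Paley square-function characterization of $\dot H^{k,p_2}$, one gets $\|\partial_x^\alpha T_fg\|_{L^p}\lesssim \|f\|_{L^{p_1}}\|g\|_{\dot H^{k,p_2}}$. The symmetric term $\partial_x^\alpha T_gf$ gives the $\|g\|_{L^{p_4}}\|f\|_{\dot H^{k,p_3}}$ contribution by the same argument with the roles reversed. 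The remainder $R(f,g)$ is handled by distributing $\partial_x^\alpha$ and using H\"older on each diagonal pair $\Delta_j f\,\Delta_{j'}g$ (losing the derivative onto whichever factor is convenient), then summing by Cauchy--Schwarz on the $j$-index and invoking either of the two product norms to absorb the sum.

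The commutator estimate proceeds similarly, but with an important cancellation: the principal diagonal term of $T_f(\partial_x^\alpha g)$ cancels in $[\partial_x^\alpha,f]g=\partial_x^\alpha(fg)-f\,\partial_x^\alpha g$, so the dangerous high-frequency contribution where all derivatives fall on $g$ and $f$ contributes only in its low-frequency part disappears up to an integration-by-parts remainder that carries at least one derivative on $f$. What remains is of the form $T_{\partial_x^\alpha g}f+R(\partial_x^\alpha g,f)+(\text{reordered low--high with }\nabla f)$, which is exactly why the right-hand side of the second line allows $\|\nabla f\|_{L^{p_1}}\|g\|_{\dot H^{k-1,p_2}}$ instead of $\|f\|_{L^{p_1}}\|g\|_{\dot H^{k,p_2}}$. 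Quantitatively this step is cleanest via the Coifman--Meyer multiplier theorem applied to the symbol $\sigma(\xi,\eta)=(\xi+\eta)^\alpha-\eta^\alpha$, which is of order $|\alpha|-1$ in $\eta$ when $|\xi|\lesssim|\eta|$; the resulting bilinear operator is bounded on the claimed product of Lebesgue spaces.

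The main obstacle will be making the paraproduct bookkeeping rigorous in the sense of matching endpoints for the H\"older exponents: one needs $p_2,p_3\in(1,\infty)$ so the vector-valued singular integral bounds underlying Littlewood--Paley theory apply, and then the assumption $\tfrac1p=\tfrac1{p_1}+\tfrac1{p_2}=\tfrac1{p_3}+\tfrac1{p_4}$ lets one combine H\"older with the square-function identification of $\dot H^{k,p}$. Since the result is classical and already available in Kato--Ponce \cite{KP88}, I would finally just invoke that paper for the precise constant and record the statement in the form in which it is used throughout the energy estimates above.
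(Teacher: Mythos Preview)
The paper does not actually prove this lemma: it is stated as a known result and attributed directly to Kato--Ponce \cite{KP88}, with no argument given. Your proposal goes further by sketching the standard Littlewood--Paley/paraproduct proof of the Kato--Ponce product and commutator estimates, which is correct in outline and is indeed the modern way these inequalities are established; your closing remark that one may simply invoke \cite{KP88} matches exactly what the paper does. So there is no discrepancy in approach to flag---you have supplied more than the paper, not less.
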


\paragraph{\emph{\textbf{Acknowledgment}}} This research is supported by NSFC under grant 11001285. The author thanks B. Pausader for pointing out the problem, helpful discussions and encouragement.

\end{CJK*}
\end{document}